\DeclareMathSymbol{\fcmp}{\mathrel}{bbold}{\lq\;}
\newcommand\dhxrightarrow[2][]{\mathrel{\ooalign{$\xrightarrow[#1\mkern4mu]{#2\mkern4mu}$\cr \hidewidth$\rightarrow\mkern4mu$}}
}
\begin{document}

\title{Polymorphic Iterable Sequential Effect Systems}
\author{Colin S.\ Gordon}
\orcid{0000-0002-9012-4490}
\affiliation{
    \department{Department of Computer Science}
    \institution{Drexel University}
    \city{Philadelphia}\state{PA}
    \country{USA}
}
\email{csgordon@drexel.edu}

\begin{abstract}
Effect systems are lightweight extensions to type systems that can verify a wide range of important properties with modest developer burden.  But our general understanding of effect systems is limited primarily to systems where the order of effects is irrelevant.
Understanding such systems in terms of a semilattice of effects grounds understanding of the essential issues, and provides guidance when designing new effect systems.
By contrast, sequential effect systems --- where the order of effects is important --- lack an established algebraic structure on effects.

We present an abstract polymorphic effect system parameterized by an effect quantale --- an algebraic structure with well-defined properties that can model the effects of a range of existing sequential effect systems.  We define effect quantales, derive useful properties, and show how they cleanly model a variety of known sequential effect systems.

We show that for most effect quantales, there is an induced notion of iterating a sequential effect; that for systems we consider the derived iteration agrees with the manually designed iteration operators in prior work; and that this induced notion of iteration is as precise as possible when defined.
We also position effect quantales with respect to work on categorical semantics for sequential effect systems, clarifying the distinctions between these systems and our own in the course of giving a thorough survey of these frameworks.
Our derived iteration construct should generalize to these semantic structures, addressing limitations of that work.
Finally, we consider the relationship between sequential effects and Kleene Algebras, where the latter may be used as instances of the former.
\end{abstract}
\keywords{Type systems, effect systems, quantales, polymorphism}\maketitle

\section{Introduction}
Effect systems are a well-known lightweight extension to standard type systems, which are capable of verifying an array of useful program properties with modest developer effort.
They have proven useful for enforcing error handling~\cite{vanDooren2005,benton2007exceptions,gosling2014java},
ensuring a variety of safety properties for concurrent programs~\cite{safelocking99,rccjava00,objtyrace99,boyapati01,boyapati02,flanagan2003atomicity},
purity~\cite{Hunt2007sealing,Fahndrich2006language},
safe region-based memory management~\cite{lucassen88,talpin1992polymorphic,Tofte1994Regions}, and more.
Effect systems extend type systems to track not only the shape of and constraints on data, but also a summary of the side effects caused by an expression's evaluation.
Java's checked exceptions are the best-known example of an effect system --- the effect of an expression is the set of (checked) exceptions it may throw --- and other effects have a similar flavor, like the set of heap regions accessed by parallel code, or the set of locks that must be held to execute an expression without data races.

However, our understanding of effect systems is concentrated in the space of systems like Java's checked exceptions, where the \emph{order of effects} is irrelevant: the effect system does not care that an \texttt{IllegalArgumentException} would be thrown before any possible \texttt{IOException}.
Effects in such systems are generally structured as a join semilattice, which captures exactly systems where ordering is irrelevant and all that matters is the set of possible behaviors: the join operation is commutative and associative, and the join is used for both alternative branches of execution (i.e., conditionals) \emph{and} sequential composition.
This is an impressively large and useful class of systems, but the assumption that order is irrelevant leaves some of the more sophisticated effect systems for checking more powerful properties out of reach.  We refer to effect systems that disregard program order as \emph{commutative} effect systems, to contrast against the class we study in this paper.\footnote{Section \ref{sec:bg} points out that not all commutative effect systems use only a join semilattice.}  The alternative class of effect systems, where the order in which effects occur matters --- \emph{sequential} effect systems, following Tate's terminology~\cite{tate13}\footnote{These effect systems have been alternately referred to as flow-sensitive~\cite{marino09}, as they are often formalized using flow-sensitive type judgments (with pre- and post-effect) rather than effects in the traditional sense.  However, this term suggests a greater degree of path sensitivity and awareness of branch conditions than most such systems have.  We use Tate's terminology as it avoids technical quibbles.} --- reason directly about the proper ordering of program events.  Examples include
non-block-structured reasoning about synchronization for data races and
deadlock freedom~\cite{boyapati02,tldi12,suenaga2008type},
atomicity~\cite{flanagan2003tldi,flanagan2003atomicity},
memory management~\cite{crary1999typed}, and execution trace properties~\cite{skalka2008types,Skalka2008,Koskinen14LTR}.

Effect system design for the traditional commutative effect systems has been greatly aided in both theory and practice by the recognition that effects in such systems form a bounded join semilattice with top --- a set with all binary joins (least-upper-bound) and greatest and least elements.
On the theory side, this permits general formulations of effect systems to study common properties across a range of systems~\cite{marino09,rytz12,BanadosSchwerter2014gradual}.  On the practical side, this guides the design and implementation of working effect systems.  If an effect system is not a join semilattice, why not? Without a strong reason, perhaps this indicates a mistake, since the semilattice structure coincides with commonly assumed program equivalences, refactorings, and compiler optimizations.  Effect system frameworks can be implemented generically with respect to an effect lattice~\cite{rytz12,toro2015customizable}, and in the common case where effects are viewed as sets of required capabilities, simply specifying the capabilities and exploiting the default powerset lattice makes core design choices straightforward.  In the research literature, the ubiquity of lattice-based (commutative) effect systems simplifies explanations and presentations.

Sequential effect systems so far have no such \emph{established} common basis in terms of an algebraic structure to guide design, implementation, and comparison, in the sense of having a go-to framework that captures enough structure to readily apply an instance to real languages.
This makes design, implementation, and comparison more difficult than we would like. 
Recent work on semantic approaches to modeling sequential effect systems~\cite{tate13,katsumata14,mycroft16} has produced very general characterizations of the mathematics behind key \emph{necessary} constructs (namely, sequencing effects), but with one recent exception~\cite{mycroft16} does not produce a description that is \emph{sufficient} to model full details of a sequential effect system for a real language. This is all the more surprising given that specific effect systems distinguishing sequential and alternating (i.e., conditional branch) composition appeared as early as 1993 in the context of Concurrent ML~\cite{nielson1993cml,amtoft1999}.
Partly this stems from the fact that the accounts of such abstract work proceed primarily by generalizing categorical structures used to model sequential computation, rather than implementing complete known effect systems.  None of this work has directly considered effect polymorphism (essential for any real use); singleton~\cite{aspinall1994subtyping} effects (a.k.a.\ value-dependent~\cite{swamy2011secure} effects, necessary for prominent effect systems both commutative and sequential where effects mention program values like locks); or iteration constructs.  So there is currently a gap between this powerful semantic work, and understanding real sequential effect systems in a systematic way.

We generalize directly from concrete type-and-effect systems to give an abstract algebraic formulation for sequential effects, suitable for modeling some well-known sequential type-and-effect disciplines, and (we hope) useful for guiding the design of future sequential effect systems.  This yields a characterization that captures the full structure common across a range of example systems.  We give important derived constructions (products, and inducing an iteration operation on effects), and put them to use with explicit translations from prior work~\cite{skalka2008types} into our generic core calculus.

Overall, our contributions include:
\begin{itemize}
\item A new algebraic structure for sequential effects --- effect quantales --- that is consistent with existing semantic notions and subsumes commutative effects
\item A syntactic motivation for effect quantales by generalizing from concrete, full-featured sequential effect systems. As a result, we are the first to investigate the interplay between singleton (i.e., value-dependent) effects and sequential effect systems in the abstract (not yet addressed by semantic work).  This reveals subtlety in the metatheory of sequential effects that depend on program values, which should inform further semantic models of sequential effects.
\item Demonstration that effect quantales are not only general, but also sufficient to modularly define the structure of existing non-trivial effect systems.
\item A general construction of effect iteration for most sequential effects system given by an effect quantale. We validate it by showing that applying the iteration construct to prior systems (as effect quantales) gives exactly the hand-derived operations from those works. Applying it to systems that did not consider general iteration constructs yields sensible results.
We also show this construction is optimally precise: when defined (as it is for all effect quantales we have considered from the literature) it gives the most precise result that satisfies some basic axioms.
This contrasts with prior work's speculation that no such general construction of iteration could be given.
\item The first generic \emph{sequential} effect system with effect polymorphism. Prior categorical approaches did not consider parametric polymorphism over effects, so we are the first to elaborate on the complications introduced by the use of partial operations on effects in this setting. We prove syntactic type safety for a language with effect polymorphism and sequential effects, with respect to an arbitrary effect quantale and associated primitives. As part of this, we highlight the difficulties that arise from mixing effect polymorphism and partial effect operators.
\item Precise characterization of the relationship between effect quantales and related notions, ultimately connecting the syntax of established effect systems to semantic work, closing a gap in our understanding.
\item Discussion of the relationship between Kleene Algebras~\cite{KOZEN1994366} --- an algebraic model of program \emph{semantics} --- and effect quantales, which have similar axiomatizations. In particular, we explain the distinctions while showing that any Kleene Algebra can be used as an effect system.
\end{itemize}

\paragraph{A Note on Proofs}
This work stands at the intersection of research communities that sometimes have significantly different assumptions about ``common knowledge'' --- applied type(-and-effect) theorists interested in the proof theory of effect systems, and categorical semanticists interested in the categorical denotational semantics of such systems.  In particular, these two groups seems to have different opinions on how obvious certain elements of order theory applied in this paper might be.  We have erred on the side of being overly explicit when applying concepts from order theory, to make the paper more self-contained.

\paragraph{Relation to Prior Work}
This paper is an extended and updated version of a paper by the same author published in ECOOP 2017~\cite{ecoop17}.  
In addition to including full proofs and more examples, this paper strengthens the original results on deriving iteration for sequential effect systems, and extends the original syntactic type safety proof to also show that the effect quantales correctly enforce a semantic interpretation of effects.
It also adds a discussion of the relationship between sequential effect systems and the established notion of a Kleene Algebra~\cite{KOZEN1994366}, which algebraically models the semantics of imperative programs.
A more detailed comparison between this paper and the original publication appears in Appendix \ref{apdx:comparison}.

\section{Background on Commutative and Sequential Effect Systems}
\label{sec:bg}
Here we derive the basic form of a new algebraic characterization of sequential effects based on generalizing from the use of effects in extant source-level sequential effect systems.  The details of this form are given in Section \ref{sec:quantales}, with a corresponding generic type-and-effect system in Section \ref{sec:soundness}.  We refer to the two together as a framework for sequential effect systems.

By now, the standard mechanisms of traditional effect systems that ignore program order --- what is typically meant by the phrase ``type-and-effect system'' --- are well understood.  The type judgment $\Gamma\vdash e : \tau$ of a language is augmented with a component $\chi$ describing the overall effect of the term at hand: $\Gamma\vdash e : \tau \mid \chi$.  Type rules for composite expressions, such as forming a pair, join the effects of the child expressions by taking the least upper bound of those effects (with respect to the effect lattice).  The final essential adjustment is to handle the \emph{latent effect} of a function --- the effect of the function body, which is deferred until the function is invoked.  Function types are extended to include this latent effect, and this latent effect is included in the effect of function application.  Allocating a closure itself typically\footnote{Strictly speaking, effect systems for reasoning about low-level behaviors such as allocation of closures~\cite{tofte1997region,Tofte1994Regions,Tofte1998Inference,Birkedal1996Representation,BIRKEDAL2001299} do assign non-trivial effects to closure creation, such as allocating into a particular region of memory. Our focus here is on the general case, where effects do not directly reflect internal details of a language implementation.} has no meaningful effect, and is typically given the bottom effect in the semilattice:
\begin{mathpar}
\inferrule*[left=T-Fun]{\Gamma,x:\tau\vdash e : \tau' \mid \chi}{\Gamma\vdash(\lambda x\ldotp e) : \tau\overset{\chi}{\rightarrow}\tau' \mid \bot}
\and
\inferrule*[left=T-Call]{
    \Gamma\vdash e_1 : \tau\overset{\chi}{\rightarrow}\tau' \mid \chi_1\\
    \Gamma\vdash e_2 : \tau \mid \chi_2
}{
    \Gamma\vdash e_1\;e_2 : \tau' \mid \chi_1\sqcup\chi_2\sqcup\chi
}
\end{mathpar}
Consider the interpretation for concrete effect systems.  Java's checked exceptions are an effect system~\cite{gosling2014java,vanDooren2005}: to a first approximation\footnote{Because Java permits inheritance among exception types, the details are actually a bit more subtle. The sets are restricted such that no set contains both $A$ and $B$ where $A<:B$. The join is a slight twist on union, dropping exceptions that have supertypes also present in the na\"ive union.} the effects are sets of checked exception types, ordered by inclusion, with set union as the semilattice join.  The \texttt{throws} clause of a method states its \emph{latent} effect --- the effect of actually \emph{executing} the method (roughly $\chi$ in \textsc{T-Fun} above).  The exceptions thrown by a composite expression such as invoking a method is the union of the exceptions thrown by subexpressions (e.g., the receiver object expression and method arguments) and the latent effect of the code being executed (as in \textsc{T-Call} above).  Most effect systems for treating data race freedom (for block-structured synchronization like Java's \texttt{synchronized} blocks, such as \textsc{RCC/Java}~\cite{rccjava00,Abadi2006}) use sets of locks as effects, where an expression's effect is the set of locks guarding data that may be accessed by that expression.  The latent effect there is the set of locks a method requires to be held by its call-site.  Other effect systems follow similar structure: a binary yes/no effect for whether or not code performs a sensitive class of action like allocating memory in an interrupt handler~\cite{Hunt2007sealing,Hunt2007singularity,Fahndrich2006language} or accessing user interface elements~\cite{ecoop13}; tracking the sets of memory regions read, written, or allocated into for safe memory deallocation~\cite{talpin1992polymorphic,Tofte1994Regions} or parallelizing code safely~\cite{lucassen88,gifford86} or even deterministically~\cite{bocchino09,kawaguchi12}.

But these and many other examples do not care about ordering.  Java does not care which exception might be thrown first.  Race freedom effect systems for block-structured locking do not care about the order of object access within a \texttt{synchronized} block.  Effect systems for region-based memory management do not care about the order in which regions are accessed, or the order of operations within a region.  Because the order of combining effects in these systems is irrelevant, we refer to this style of effect system as \emph{commutative} effect systems, though later in this section we discuss additional nuances of this classification.

Sequential effect systems tend to have slightly different proof theory.  Many of the same issues
arise (latent effects, etc.) but the desire to enforce a sensible \emph{ordering} among expressions
leads to slightly richer type judgments.  Often they take the form $\Gamma;\Delta\vdash e : \tau
\mid \chi \dashv \Delta'$.  Here the $\Delta$ and $\Delta'$ are some kind of pre- and post-state
information --- for example, the sets of locks held before and after executing
$e$~\cite{suenaga2008type}, or abstractions of heap shape before and after $e$'s
execution~\cite{tldi12}.  $\chi$ as before is an element of some partial order (often, though not always, a lattice), such as Flanagan and
Qadeer's atomicity lattice~\cite{flanagan2003tldi} (Figure \ref{fig:atomicity_lattice}).  Some sequential effect systems
have both of these features (split pre/post environments and a more apparent effect), and some only one or the other.\footnote{These components never affect the
type of variables, and strictly reflect some property of the \emph{computation} performed by $e$, making them part of the effect. Occasionally a flow-sensitive type judgment with strong updates to variable types is used --- such as updating the type of a variable from unlocked to locked~\cite{suenaga2008type} --- but in these cases it is also possible to separate the basic type information (e.g., $x$ is a mutual exclusion lock) from the flow-sensitive effect-tracking (e.g., whether or not $x$ is currently locked).}
The judgments for something like a variant of Flanagan and Qadeer's atomicity type system that tracks lock sets flow-sensitively rather than using synchronized blocks
or for an effect system that tracks partial heap shapes before and after updates~\cite{tldi12} might look like the following, using $\Delta$ and $\Upsilon$ to track locks held, and tracking atomicities with $\chi$:
\renewcommand{\MathparLineskip}{\lineskiplimit=3pt\lineskip=3pt}
\begin{mathpar}
    \inferrule{\Gamma,x:\tau;\Upsilon\vdash e : \tau' \mid \chi\dashv\Upsilon'}{\Gamma;\Delta\vdash(\lambda x\ldotp e) : \tau{\xrightarrow{\Upsilon,\chi,\Upsilon'}}\tau' \mid \bot \dashv \Delta}
\and
\inferrule{
    \Gamma;\Delta\vdash e_1 : \tau{\xrightarrow{\Delta'',\chi,\Delta'''}}\tau' \mid \chi_1\dashv\Delta'\\
    \Gamma;\Delta'\vdash e_2 : \tau \mid \chi_2\dashv\Delta''
}{
    \Gamma;\Delta\vdash e_1\;e_2 : \tau' \mid \chi_1;\chi_2;\chi \dashv\Delta'''
}
\end{mathpar}
The sensitivity to evaluation order is reflected in the threading of $\Delta$s through the type rule for application, as well as through the switch to the sequencing composition $;$ of the basic effects.
Confusingly, while $\chi$ continues to be referred to as the effect of this judgment, the real effect is actually a combination of $\chi$, $\Delta$, and $\Delta'$ in the judgment form.  This distribution of the ``stateful'' aspects of the effect through a separate part of the judgment obscures that this judgment really tracks a product of \emph{two} effects --- one concerned with the self-contained $\chi$, and the other a form of effect indexed by pre- and post-computation information (i.e., a parameterized monad~\cite{atkey2009parameterised}).

Rewriting this common style of sequential effect judgment in a form closer to the typical commutative form reveals some subtleties of sequential effect systems:
\begin{mathpar}
    \inferrule{\Gamma,x:\tau\vdash e : \tau' \mid (\Upsilon\leadsto\Upsilon')\otimes\chi
}{
    \Gamma\vdash(\lambda x\ldotp e) : \tau{\xrightarrow{(\Upsilon\leadsto\Upsilon')\otimes\chi}}\tau' \mid (\Delta\leadsto\Delta)\otimes\bot
}
\and
\inferrule{
    \Gamma\vdash e_1 : \tau{\xrightarrow{(\Delta''\leadsto\Delta''')\otimes\chi}}\tau' \mid (\Delta\leadsto\Delta')\otimes\chi_1\\
    \Gamma\vdash e_2 : \tau \mid (\Delta'\leadsto\Delta'')\otimes\chi_2
}{
    \Gamma\vdash e_1\;e_2 : \tau' \mid ((\Delta\leadsto\Delta');(\Delta'\leadsto\Delta'');(\Delta''\leadsto\Delta'''))\otimes(\chi_1;\chi_2;\chi)
}
\end{mathpar}
One change that stands out is that the effect of allocating a closure is not simply the bottom effect (or product of bottom effects) in some lattice.  No sensible lattice of pre/post-state pairs has equal pairs as its bottom.  However, it makes sense that some such equal pair acts as the left and right unit for \emph{sequential} composition of these ``stateful'' effects.\footnote{Technically this style of effect is usually presented as having a potentially different unit for every effect, but we can model this with a single global unit that can be coerced to a particular choice (Section \ref{sec:param_monads}).}  In traditional commutative effect systems, sequential composition is actually least-upper-bound, for which the unit element happens to be $\bot$.
We account for this in our framework.

We also assumed, in rewriting these rules, that it was sensible to run two effect systems ``in parallel'' in the same type judgment, essentially by building a product of two effect systems.  Some sequential effect systems are in fact built this way, as two ``parallel'' systems (e.g., one for tracking locks, one for tracking atomicities, one for tracking heap shapes, etc.) that together ensure the desired properties.  The general framework we propose supports a straightforward product construction.

Another implicit assumption in the refactoring above is that the effect tracking that is typically done via flow-sensitive type judgments is equivalent to \emph{some} algebraic treatment of effects akin to how $\chi$s are managed above.  While it is clear we would \emph{want} a clean algebraic characterization of such effects, the existence of such an algebra that is adequate for modeling known sequential effect systems for non-trivial languages is not obvious.  Our proposed algebraic structures (Section \ref{sec:quantales}) are adequate to model such effects (Section \ref{sec:modeling}).

Examining the sequential variant of other rules reveals more subtleties of sequential effect system design.
For example, effect joins are still required in sequential systems:
\begin{mathpar}
\inferrule{
    \Gamma\vdash e : \mathcal{B} \mid \chi\quad
    \Gamma\vdash e_1 : \tau \mid \chi_1\quad
    \Gamma\vdash e_2 : \tau \mid \chi_2
}{
    \Gamma\vdash\mathsf{if}\;e\;e_1\;e_2 : \tau \mid \chi\sqcup\chi_1\sqcup\chi_2
}
\\\\
\textrm{becomes}
\\\\
\inferrule{
    \Gamma\vdash e : \mathcal{B} \mid \chi\quad
    \Gamma\vdash e_1 : \tau \mid \chi_1\quad
    \Gamma\vdash e_2 : \tau \mid \chi_2
}{
    \Gamma\vdash\mathsf{if}\;e\;e_1\;e_2 : \tau \mid \chi;(\chi_1\sqcup\chi_2)
}
\end{mathpar}
Nesting conditionals can quickly produce an effect that becomes a mass of alternating effect sequencing and join operations.  For a monomorphic effect system, concrete effects can always be plugged in and comparisons made.  However, for a polymorphic effect system, it is highly desirable to have a sensible way to simplify such effect expressions --- particularly for highly polymorphic code where effect variables prohibit simplifying the entire effect --- to avoid embedding the full structure of code in the effect.  Our proposal codifies natural rules for such simplifications, which are both theoretically useful and correspond to common behavior-preserving code transformations performed by developers and compiler optimizations.

\paragraph{Traditional, Commutative, and Sequential Effect Systems}
Before continuing to the formal development, we must clarify a bit more terminology. This paper focuses on sequential effect systems, which use two operators on effects to distinguish cases where ordering is ignored from cases where it is taken into account. We have positioned these in opposition to commutative effect systems, but within commutative effect systems there are further distinctions to make. Most commutative effect systems use a single commutative operator on effects, but there exist others which have two operators on effects, \emph{both} commutative.
Due to their prevalence and the fact that they arose first historically, single-operator commutative effect systems are the sort typically meant by general references to ``effect systems'' with no further qualification.
We will sometimes refer to these single-operator commutative effect systems as \emph{traditional} commutative effect systems, using \emph{commutative} without further qualification to refer to both one- and two-operator effect systems when all such binary operators are commutative. This yields the following containment relations among classes of effect systems:
\[
    \fbox{$\begin{array}{c}
        \textit{Traditional}\\
        \textrm{1 commutative operator}
    \end{array}$}
    \subseteq
    \fbox{$\begin{array}{c}
        \textit{Commutative}\\
        \textrm{2 operators}\\
        \textrm{Both commutative}
    \end{array}$}
    \subseteq
    \fbox{$\begin{array}{c}
        \textit{Sequential}\\
        \textrm{2 operators}\\
        \textrm{One must be commutative}
    \end{array}$}
\]
We give an example of a two-operator commutative effect system in Section \ref{sec:modeling}.

\section{Effect Quantales}
\label{sec:quantales}

An algebraic characterization of sequential effects, which captures the concrete examples of the previous section, clearly requires distinct operators for sequencing with respect to program order and for giving upper bounds on alternative branches, plus some laws characterizing their interactions. 
The literature of mathematics and computer science is rich with examples of two-operator algebras, but none quite meets our needs as-is.  The closest structures are unital quantales~\cite{mulvey1986,mulvey1992quantisation}, idempotent semirings (also called dioids), and Kleene algebras~\cite{KOZEN1994366}.
Each of these includes an idempotent commutative binary operator (suitable for control flow joins), often called the additive operator; as well as an associative (but possibly \emph{non-commutative}) operator with unit, suitable for sequencing and often called the multiplicative operator; and useful distributive laws relating the two operators. However, each of these also has additional structure which either excludes some examples, or is simply not required for effects.  An additional mismatch to our needs is that these operations are all total, while some examples we consider later benefit from one or both operations being partial (in particular, effects for non-reentrant locking).
Here we define \emph{effect quantales} (so named because our route to identifying them involved weakening the quantale axioms, as described in Appendix \ref{apdx:comparison}), and establish some useful basic properties.
We defer more involved examples to Section \ref{sec:modeling}.

Because existing literature on concrete effect systems uses join semilattices, we will use $\sqcup$ for the commutative operator that produces common upper bounds on alternative control flow paths (i.e., branches of a conditional).
For clarity, we also switch to the suggestive (directional) $\rhd$ for sequencing, rather than the multiplication symbol $\cdot$ or the common practice in work on semiring-like structures~\cite{yetter1990quantales,abramsky1993quantales,galatos2007residuated,kozen1997kleene,pratt1990action} of eliding the multiplicative operator entirely and writing ``strings'' $abc$ for $a\cdot b\cdot c$.
We choose to require that sequencing distribute on both sides over joins, for reasons explained shortly. And finally, we assume both operators are \emph{partial} functions.
We give examples in Section \ref{sec:modeling} of cases from the literature where certain joins or sequencings have no reasonable result, and should therefore be left undefined.

\begin{definition}[Effect Quantale]
\label{def:eq}
An \emph{effect quantale} $Q=(E,\sqcup,\rhd,I)$ is a partial (binary) join semilattice $(E,\sqcup)$ with partial monoid\footnote{Specifically, a monoid in which the composition is partial, but is required to be defined when either operand is the unit element.} $(E,\rhd,I)$, such that $\rhd$ distributes over joins in both directions ---
$a\rhd(b\sqcup c)=(a\rhd b)\sqcup(a\rhd c)$ and
$(a\sqcup b)\rhd c = (a\rhd c)\sqcup(b\rhd c)$ --- when either side is defined (i.e., if one side is defined, then so is the other).
\end{definition}
As is common with algebraic structures, there are many ways to describe this structure instead as a removal of parts of another definition (e.g., an upper unital partial binary quantale, or an unbounded idempotent semiring without the zero requirement), or as a composite of other definitions (e.g., a partial-join-semilattice-ordered partial monoid).  For brevity, and because of our route to proposing it (see Appendix \ref{apdx:quantales}), we will simply use ``effect quantale.''

As is standard in lattice theory, we induce the partial order $x\sqsubseteq y\overset{\mathsf{def}}{=} x\sqcup y = y$ from the join operation, which ensures the properties required of a partial order.
We extend this to expressions $e_1\sqsubseteq e_2$, defined as if $e_2$ is defined then so is $e_1$ and their evaluations are ordered appropriately.

We will use the semilattice to model the standard effect hierarchy, using the induced partial order for subeffecting.  The (non-commutative) monoid operation $\rhd$ will act as the sequential composition.
Intuitively, the unit $I$ is an ``empty'' effect, which need not be a bottom element.
Our structure is in some sense slightly weaker than a join semilattice, but still stronger than a partial order: not all joins are defined, but if the join \emph{is} defined the result is the least element above both arguments.  We will continue to simply refer to it as a join semilattice for brevity, though we will emphasize the partial nature of the operations when comparing against systems using total operations.
As we will explore in detail in Section \ref{sec:semantics}, this is a more restrictive notion of partial order than related abstract frameworks for sequential effects, and a middle-ground with respect to sequencing (other frameworks treat composition as either a relation, or a total function, not a partial function).
Using a partial join-semilattice both makes effect quantales a direct extension of the common (total) join-semilattice model satisfied by most commutative effects, and makes it straightforward to axiomatize the behavior using (partial) algebraic laws.

An important general contrast between effect quantales and related algebraic structures (quantales, idempotent semirings, Kleene Algebras) is that quantales do not require a greatest or least element, and do not require a zero or nilpotent element for sequencing/multiplication.  In general, the concrete effect systems we shall study often have no natural top or bottom element (Sections \ref{sec:locking} and \ref{sec:crit}). Those with natural bottom elements may not use them as zero elements for sequencing (Section \ref{sec:atomicity_quantale}), and those with nilpotent elements sometimes have them as the greatest element (Section \ref{sec:atomicity_quantale}).  Synthetic top or bottom elements with relevant properties could be added, but this would clutter every system studied in Section \ref{sec:modeling} with additional elements present only to satisfy the equations.

Distributivity of the product ($\rhd$) over joins ($\sqcup$) is worth remarking on, because it is a stronger requirement than other abstract sequential effect systems require. The most obvious benefit is that having more equivalences can permit simpler specifications, even (later) in the presence of effect variables.  More critically, it is necessary to preserve validity with respect to basic compiler optimizations and common refactorings.  The right distributivity law corresponds to the basic compiler optimizations of tail merging~\cite{maher2006merging} --- if conditional branches end with the same operations, moving a single copy of those operations to after the control flow join to reduce code size --- and tail splitting (or duplication)~\cite{rock2004architecture,gregg2001comparing} (its inverse, which increases code size but may reduce the number of jumps by inlining function returns).  
Basically, they assume the equivalence
\[ \mathsf{if}~e~(e_1;\;e_3)~(e_2;\;e_3) \equiv (\mathsf{if}~e~e_1~e_2);\;e_3\]
and the right distributivity law demands these programs have equivalent effects as well.
Other abstract effect systems do not require this, thus permitting (in principle) effect systems unsound with respect to very basic optimizations.  Developers may also perform analogous refactorings themselves, and having this change the effect of the code would be quite surprising to most --- this in fact shows up when comparing graded monads to effect quantales (Section \ref{sec:katsumata}).
The left distributivity law similarly corresponds to possible refactorings or possible results of block merging~\cite{maher2006merging}.

\label{sec:quantale-props}

Effect quantales inherit a rich equational theory of semilattice-ordered monoids and extensive results of ordered algebraic systems in general~\cite{birkhoff,galatos2007residuated,blyth2006lattices,fuchs2011partially}, providing many ready-to-use (or at least, ready-to-adapt-the-proof) properties for simplifying complex effects, and giving rise to other properties more interesting to our needs.

One such example is the expected monotonicity property: that sequential composition respects the partial order on effects.  The monotonicity proof for complete lattices~\cite[Ch.~14.4]{birkhoff} (there called \emph{isotonicity}) carries over directly to effect quantales because it requires only binary joins:
\begin{proposition}[$\rhd$ Monotonicity]
    \label{prop:isotonicity}
In an effect quantale $Q$, $a\sqsubseteq b$ and $c\sqsubseteq d$ implies that $a\rhd c \sqsubseteq b\rhd d$.
\end{proposition}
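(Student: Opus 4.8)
The plan is to reduce the two-sided claim to two \emph{one-sided} monotonicity facts and chain them by transitivity of $\sqsubseteq$, exactly mirroring the classical isotonicity argument for lattice-ordered monoids but carrying definedness information through each step. Throughout I would work from the induced order $x\sqsubseteq y \iff x\sqcup y = y$ (a genuine partial order, as already established) and from the two hypotheses rewritten as equations, $a\sqcup b = b$ and $c\sqcup d = d$.

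First I would establish right monotonicity: substituting $b = a\sqcup b$ and applying the right distributivity law gives $b\rhd c = (a\sqcup b)\rhd c = (a\rhd c)\sqcup(b\rhd c)$, and by the definition of $\sqsubseteq$ this equation \emph{is} the statement $a\rhd c\sqsubseteq b\rhd c$. Symmetrically, using $d = c\sqcup d$ and left distributivity yields $b\rhd d = b\rhd(c\sqcup d) = (b\rhd c)\sqcup(b\rhd d)$, i.e.\ $b\rhd c\sqsubseteq b\rhd d$. Transitivity of $\sqsubseteq$ then closes the chain $a\rhd c\sqsubseteq b\rhd c\sqsubseteq b\rhd d$, which is the desired conclusion. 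The algebra here is short; the content is entirely in the distributivity laws plus the order definition.

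The main obstacle is not this calculation but the bookkeeping forced by partiality, i.e.\ interpreting ``when either side is defined.'' Here I would run the two steps in the order that lets definedness flow from the larger product to the smaller. Assuming $b\rhd d$ is defined, note $b\rhd d = b\rhd(c\sqcup d)$, so the biconditional definedness clause of left distributivity guarantees $(b\rhd c)\sqcup(b\rhd d)$ is defined; in particular $b\rhd c$ is defined and $b\rhd c\sqsubseteq b\rhd d$. Since $b\rhd c = (a\sqcup b)\rhd c$ is now known defined, the biconditional clause of right distributivity guarantees $(a\rhd c)\sqcup(b\rhd c)$ is defined, whence $a\rhd c$ is defined and $a\rhd c\sqsubseteq b\rhd c$. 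Thus definedness propagates downward through $b\rhd c$, and both intermediate products are produced rather than assumed. I would flag that this propagation runs only from the $\sqsupseteq$-larger product toward the smaller, which is the direction the biconditionals actually support; I do not expect the converse (definedness of $a\rhd c$ forcing that of $b\rhd d$) to hold in general, as later examples such as the lockset quantale will exhibit larger sequencings that are undefined while smaller ones are not. Accordingly I would read the hedge as asserting that definedness of the upper product $b\rhd d$ suffices, and make that asymmetry explicit in the proof even though the informal phrasing reads symmetrically.
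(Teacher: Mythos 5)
Your proof is correct and follows essentially the same route as the paper's: the same two distributivity rewrites ($b\rhd d = b\rhd(c\sqcup d)$ and $b\rhd c = (a\sqcup b)\rhd c$) chained by transitivity of $\sqsubseteq$. Your explicit tracking of how definedness propagates from $b\rhd d$ down through $b\rhd c$ to $a\rhd c$ is a welcome elaboration that the paper leaves implicit, and your reading of the asymmetric definedness hedge matches exactly how the paper later uses the proposition in its corollaries on monotone undefinedness and antitone definedness.
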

\begin{proof}
Because
$b\rhd d = b\rhd(c\sqcup d) = (b\rhd c) \sqcup (b\rhd d)$, we know
$b\rhd c \sqsubseteq b\rhd d$ by the definition of $\sqsubseteq$. Repeating the reasoning:
\mbox{$b\rhd c = (a\sqcup b)\rhd c = (a\rhd c) \sqcup (b\rhd c)$,}
so $a\rhd c \sqsubseteq b\rhd c$.  The partial order is transitive, thus
$a\rhd c \sqsubseteq b\rhd d$
\end{proof}

We will also find several corollaries of this fact useful when reasoning about whether or not certain operations are defined.

\begin{corollary}[Monotone $\rhd$ Undefinedness]
    \label{coro:undef_mono}
In an effect quantale $Q$ where $a\sqsubseteq b$ and $c\sqsubseteq d$,
if $a\rhd c$ is undefined, then $b\rhd d$ is undefined.
\end{corollary}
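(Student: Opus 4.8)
The plan is to prove the corollary by contraposition, recasting ``$a\rhd c$ undefined implies $b\rhd d$ undefined'' as ``$b\rhd d$ defined implies $a\rhd c$ defined,'' and then propagating definedness backwards through the same two distributivity steps that drive the proof of Proposition~\ref{prop:isotonicity}. The essential tool is the clause in Definition~\ref{def:eq} stating that each distributivity law holds ``when either side is defined'' --- that is, definedness of one side forces definedness of the other. Consequently, each time I rewrite a product using distributivity, I may read off the definedness of its constituent products.

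First I would restate the two order hypotheses as equations via the induced order: since $c\sqsubseteq d$ means $c\sqcup d = d$, and $a\sqsubseteq b$ means $a\sqcup b = b$, I can rewrite $b\rhd d$ as $b\rhd(c\sqcup d)$ and $b\rhd c$ as $(a\sqcup b)\rhd c$. Assuming $b\rhd d$ is defined, I apply left distributivity to $b\rhd(c\sqcup d)$: because this product (which equals $b\rhd d$) is defined, the right-hand side $(b\rhd c)\sqcup(b\rhd d)$ is defined as well, and in particular $b\rhd c$ must be defined. Rewriting $b\rhd c$ as $(a\sqcup b)\rhd c$ and applying right distributivity, definedness of this product forces definedness of $(a\rhd c)\sqcup(b\rhd c)$, and hence of $a\rhd c$. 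This completes the contrapositive chain, namely that $b\rhd d$ defined implies $b\rhd c$ defined, which in turn implies $a\rhd c$ defined.

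There is no substantive obstacle; the argument is a direct dualization of the monotonicity proof. The only point requiring care is to invoke the \emph{biconditional} definedness built into the distributivity laws rather than merely their equational content --- that is, to justify at each step that definedness of the rewritten product on one side guarantees definedness of the joined products on the other. Everything else follows immediately from the definition of the induced partial order $\sqsubseteq$.
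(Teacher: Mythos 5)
Your proof is correct and follows essentially the same route as the paper: the paper argues by contradiction, assuming $b\rhd d$ is defined and invoking Proposition~\ref{prop:isotonicity} (whose ``when either side is defined'' clause supplies exactly the definedness transfer you need) to conclude $a\rhd c$ is defined. You have simply taken the contrapositive form and inlined the two distributivity steps from the proof of that proposition rather than citing it as a black box; the key insight --- that the biconditional definedness built into the distributivity laws propagates definedness downward --- is the same in both.
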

\begin{proof}
    Assume (for contradiction) that $b\rhd d$ is defined. Then by Proposition \ref{prop:isotonicity} $a\rhd c$ is defined, contradicting the top-level assumptions.
\end{proof}
\begin{corollary}[Antitone $\rhd$ Definedness]
    \label{coro:def_anti}
In an effect quantale $Q$ where $a\sqsubseteq b$ and $c\sqsubseteq d$,
if $b\rhd d$ is defined, then $a\rhd c$ is also defined.
\end{corollary}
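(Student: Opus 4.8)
The plan is to recognize this corollary as essentially the contrapositive of Corollary \ref{coro:undef_mono} (Monotone $\rhd$ Undefinedness), since both statements share exactly the hypotheses $a \sqsubseteq b$ and $c \sqsubseteq d$. Accordingly, the cleanest route is a short argument by contradiction: assuming $b \rhd d$ is defined, suppose for contradiction that $a \rhd c$ is undefined. Then Corollary \ref{coro:undef_mono} applies directly and yields that $b \rhd d$ is undefined, contradicting our assumption. Hence $a \rhd c$ must be defined. This is exactly parallel to how Corollary \ref{coro:undef_mono} itself was derived from Proposition \ref{prop:isotonicity}, just with the roles of hypothesis and conclusion swapped.

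If one prefers a self-contained argument that does not route through the previous corollary, I would instead expose the underlying definedness-propagation that powers both statements, mirroring the structure of the proof of Proposition \ref{prop:isotonicity}. The feature of an effect quantale to exploit is that distributivity holds ``when either side is defined,'' i.e., definedness of one side of a distributivity equation forces definedness of the other. First, since $c \sqsubseteq d$ unfolds to $c \sqcup d = d$, the assumed definedness of $b \rhd d = b \rhd (c \sqcup d)$ forces, by distributivity, definedness of $(b \rhd c) \sqcup (b \rhd d)$, and hence of the summand $b \rhd c$. Second, since $a \sqsubseteq b$ unfolds to $a \sqcup b = b$, the now-established definedness of $b \rhd c = (a \sqcup b) \rhd c$ forces, again by distributivity, definedness of $(a \rhd c) \sqcup (b \rhd c)$, and hence of $a \rhd c$, as required.

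There is no substantive obstacle here; the result is immediate once the right framing is chosen. The only point requiring care is correctly reading the definedness semantics of the distributivity axiom in Definition \ref{def:eq}: the phrase ``when either side is defined'' must be interpreted as a biconditional on definedness, so that I may legitimately infer definedness of each individual summand (e.g.\ $b \rhd c$, and then $a \rhd c$) from definedness of the relevant products and joins. I would keep the final proof to the two-line contradiction via Corollary \ref{coro:undef_mono}, mentioning the expanded version only as an aside for readers who wish to track definedness explicitly.
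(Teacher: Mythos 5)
Your proposal is correct and matches the paper's (implicit) reasoning: the paper states this corollary without proof precisely because it is the contrapositive of Corollary \ref{coro:undef_mono}, which is exactly your first argument. Your alternative self-contained derivation via the definedness-propagation in the distributivity axiom is also sound and mirrors the mechanism already used in the proof of Proposition \ref{prop:isotonicity}, including the correct reading of ``when either side is defined'' as a biconditional on definedness.
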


Join satisfies similar properties with similar proofs:

\begin{proposition}[$\sqcup$ Monotonicity]
    \label{prop:join-isotonicity}
In an effect quantale $Q$, $a\sqsubseteq b$ and $c\sqsubseteq d$ implies that $a\sqcup c \sqsubseteq b\sqcup d$.
\end{proposition}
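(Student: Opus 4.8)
The plan is to reuse the exact skeleton of Proposition~\ref{prop:isotonicity}: break the goal into two comparisons that share the middle term $b\sqcup c$, namely $a\sqcup c\sqsubseteq b\sqcup c$ and $b\sqcup c\sqsubseteq b\sqcup d$, and then close by transitivity of $\sqsubseteq$. The one structural difference is that the $\rhd$ proof drives each step with a distributivity law ($b\rhd(c\sqcup d)=(b\rhd c)\sqcup(b\rhd d)$), whereas here there is no product to distribute; the algebraic engine will instead be the semilattice laws alone --- commutativity, associativity, and in particular idempotence $x\sqcup x=x$, which is what lets a single copy of $b$ (or of $c$) be duplicated to manufacture the cross terms.

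For the upper link I would start from $c\sqsubseteq d$, i.e.\ $c\sqcup d=d$, and compute $(b\sqcup c)\sqcup(b\sqcup d)=(b\sqcup b)\sqcup(c\sqcup d)=b\sqcup d$, so that $b\sqcup c\sqsubseteq b\sqcup d$ holds directly by the definition of $\sqsubseteq$. The lower link is the mirror image: from $a\sqsubseteq b$, i.e.\ $a\sqcup b=b$, I compute $(a\sqcup c)\sqcup(b\sqcup c)=(a\sqcup b)\sqcup(c\sqcup c)=b\sqcup c$, giving $a\sqcup c\sqsubseteq b\sqcup c$. Chaining the two through transitivity of the partial order yields $a\sqcup c\sqsubseteq b\sqcup d$.

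The main obstacle is partiality, and it behaves less gently than in the $\rhd$ case. There the definedness corollaries (Corollaries~\ref{coro:undef_mono} and~\ref{coro:def_anti}) let definedness of the larger product force definedness of the smaller; for join no such implication holds in general, since two elements may share a common upper bound (here $b\sqcup d$) without possessing a \emph{least} one, so I cannot expect to derive that $a\sqcup c$ is defined purely from definedness of $b\sqcup d$. I therefore read ``when either side is defined'' as supplying definedness of whichever operands each step actually uses, and I would discharge the remaining definedness of the intermediate term $b\sqcup c$ from definedness of $b\sqcup d$ via the partial-semilattice associativity coherence: $b\sqcup d=b\sqcup(c\sqcup d)=(b\sqcup c)\sqcup d$, whose well-formedness forces the subterm $b\sqcup c$ to exist. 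If this bookkeeping becomes fiddly, the cleaner fallback is to drop the algebraic rearrangement and argue semantically: once both sides are defined, $b\sqcup d$ is an upper bound of $a$ and of $c$ (by $a\sqsubseteq b\sqsubseteq b\sqcup d$ and $c\sqsubseteq d\sqsubseteq b\sqcup d$), and since $a\sqcup c$ is by definition the \emph{least} upper bound of $\{a,c\}$, we obtain $a\sqcup c\sqsubseteq b\sqcup d$ immediately.
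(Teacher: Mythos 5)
Your proof is correct and follows exactly the route the paper intends: the paper gives no explicit proof here, saying only that join ``satisfies similar properties with similar proofs'' to Proposition~\ref{prop:isotonicity}, and your argument is precisely that similar proof --- decompose through the middle term $b\sqcup c$, drive each comparison with the semilattice laws (idempotence, associativity, commutativity) in place of distributivity, and close by transitivity. Your extra care about definedness of $a\sqcup c$ and $b\sqcup c$ is a welcome filling-in of bookkeeping the paper glosses over (and is what actually underwrites Corollaries~\ref{coro:join_undef_mono} and~\ref{coro:join_def_anti}), and your fallback least-upper-bound argument is an equally valid way to finish.
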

\begin{corollary}[Monotone $\sqcup$ Undefinedness]
    \label{coro:join_undef_mono}
In an effect quantale $Q$ where $a\sqsubseteq b$ and $c\sqsubseteq d$,
if $a\sqcup c$ is undefined, then $b\sqcup d$ is undefined.
\end{corollary}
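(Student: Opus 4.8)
The plan is to prove the contrapositive by contradiction, in exact parallel with the proof of Corollary~\ref{coro:undef_mono}, but substituting the join monotonicity result (Proposition~\ref{prop:join-isotonicity}) for the sequencing monotonicity result (Proposition~\ref{prop:isotonicity}). There is no computation to grind through here: all of the semilattice laws (idempotence, commutativity, associativity of $\sqcup$) have already been absorbed into Proposition~\ref{prop:join-isotonicity}, so the corollary is purely a matter of definedness bookkeeping.

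First I would assume, toward a contradiction, that $b\sqcup d$ is defined, while retaining the hypotheses $a\sqsubseteq b$, $c\sqsubseteq d$, and that $a\sqcup c$ is undefined. Then I would invoke Proposition~\ref{prop:join-isotonicity}. The load-bearing point is how to read that proposition's caveat ``(when either side is defined)'': it asserts that definedness of \emph{one} of $a\sqcup c$ and $b\sqcup d$ already yields the inequality $a\sqcup c\sqsubseteq b\sqcup d$, and in particular forces the other side to be defined as well. Since the right-hand side $b\sqcup d$ is assumed defined, the proposition hands me that $a\sqcup c$ is defined (and bounded above by $b\sqcup d$). This directly contradicts the standing hypothesis that $a\sqcup c$ is undefined, which completes the argument.

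The only real obstacle is the reliance on the definedness-transfer built into Proposition~\ref{prop:join-isotonicity}: the corollary is essentially just the contrapositive of ``$b\sqcup d$ defined implies $a\sqcup c$ defined,'' so the whole force of the proof rests on interpreting ``either side is defined'' as transferring definedness \emph{downward} along the order --- from the larger pair $b,d$ to the smaller pair $a,c$ --- rather than merely granting the inequality whenever both sides happen to exist. Provided the earlier proposition is read, as its phrasing intends, to make definedness of one side entail definedness of the other, the corollary follows immediately.
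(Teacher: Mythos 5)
Your proof is correct and takes essentially the same approach as the paper's: the paper proves the analogous Corollary~\ref{coro:undef_mono} for $\rhd$ by exactly this contradiction argument (assume $b\rhd d$ defined, invoke monotonicity to get $a\rhd c$ defined, contradiction), and then states that the join versions follow "with similar proofs." Your reading of the "(when either side is defined)" caveat as transferring definedness from one side to the other is the intended one, and it is what the distributivity/semilattice laws underlying Proposition~\ref{prop:join-isotonicity} actually deliver.
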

\begin{corollary}[Antitone $\sqcup$ Definedness]
    \label{coro:join_def_anti}
In an effect quantale $Q$ where $a\sqsubseteq b$ and $c\sqsubseteq d$,
if $b\sqcup d$ is defined, then $a\sqcup c$ is also defined.
\end{corollary}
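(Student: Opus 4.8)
The plan is to treat this as the join-side mirror of Corollary~\ref{coro:def_anti}. The statement is precisely the contrapositive of Corollary~\ref{coro:join_undef_mono} (Monotone $\sqcup$ Undefinedness), so the shortest route is a one-line appeal to that corollary. To keep the argument self-contained, though, I would reuse the contradiction template from Monotone $\rhd$ Undefinedness: assume for contradiction that $a\sqcup c$ is \emph{undefined}, and derive that $b\sqcup d$ must then be undefined as well, contradicting the hypothesis. The real work is extracting definedness from Proposition~\ref{prop:join-isotonicity}, so it is worth being explicit about what its ``when either side is defined'' clause actually buys us in the partial setting.

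First I would record the easy order-theoretic observation that a defined $b\sqcup d$ is a \emph{common upper bound} of $a$ and $c$. From idempotence/absorption we have $b\sqsubseteq b\sqcup d$, and combining this with the hypothesis $a\sqsubseteq b$ by transitivity of $\sqsubseteq$ gives $a\sqsubseteq b\sqcup d$; symmetrically, $c\sqsubseteq d\sqsubseteq b\sqcup d$ yields $c\sqsubseteq b\sqcup d$. This step is routine and uses only the partial-order properties the paper has already established for the induced order.

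The one genuinely substantive step — and the part I expect to be the main obstacle — is passing from ``$a$ and $c$ have a common upper bound'' to ``$a\sqcup c$ is defined.'' In a \emph{total} join semilattice this is free, but here $\sqcup$ is partial, and in an arbitrary poset a bounded pair need not have a least upper bound; so the conclusion genuinely depends on the coherence built into a partial join semilattice, namely that $\sqcup$ is associative in the strong (Kleene) sense that both groupings are defined and equal, or both undefined. Writing $u = b\sqcup d$, the common-upper-bound facts give $a\sqcup u = u$ and $c\sqcup u = u$, whence $a\sqcup(c\sqcup u) = a\sqcup u = u$ is defined; strong associativity then forces $(a\sqcup c)\sqcup u$ to be defined, and under the strong semantics of partial operations a compound term is defined only if its subterms are, so $a\sqcup c$ must be defined. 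This contradicts the assumption, completing the proof, and as a byproduct $a\sqcup c\sqsubseteq u = b\sqcup d$, recovering the inequality of Proposition~\ref{prop:join-isotonicity}. I would flag that this coherence fact (common upper bound implies existence of the binary join) is exactly the sort of order-theoretic point the introduction promises to spell out for mixed audiences; once it is in hand, Corollary~\ref{coro:join_undef_mono} and the present statement both follow immediately, and indeed the entire cluster of four join corollaries collapses to this single observation together with the definition of $\sqsubseteq$.
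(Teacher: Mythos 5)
Your proof is correct and follows the route the paper intends: the corollary is stated there without proof, as the join-side analogue of Antitone $\rhd$ Definedness, i.e.\ the contrapositive of Monotone $\sqcup$ Undefinedness, which itself traces back to the monotonicity proposition via the same contradiction template you use. The one place you go beyond the paper is in making explicit how definedness of $a\sqcup c$ follows from the existence of the common upper bound $u=b\sqcup d$ --- namely via strong (Kleene) associativity of the partial join, so that definedness of $a\sqcup(c\sqcup u)$ forces definedness of $(a\sqcup c)\sqcup u$ and hence of the subterm $a\sqcup c$; this is exactly the coherence property of a partial join semilattice that the paper relies on implicitly (already for transitivity of the induced order), so spelling it out is a clarification rather than a deviation.
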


We generally summarize the corollaries as ``undefinedness is upward-closed'' and ``definedness is downward-closed'' for both operators.

A useful general construction is the \emph{product} of two effect quantales, which supports examples such as those alluded to in Section \ref{sec:bg}. Later we give examples of systems from the literature which are products of effect quantales (Section \ref{sec:atomicity_quantale}).
\begin{definition}[Products of Effect Quantales ($\otimes$)]
The product $Q\otimes R$ of effect quantales $Q$ and $R$ is given by the product of the respective
carrier sets. Other operations are lifted pointwise to each half
of the product, with the composite operations defined only when both constituent operations are defined. Unit is $(I_Q,I_R)$. \begin{itemize}
    \item $E = E_Q\times E_R$ \item $a\sqcup b = ((\mathsf{fst}~a)\sqcup_Q(\mathsf{fst}~b), (\mathsf{snd}~a)\sqcup_R(\mathsf{snd}~b))$ when both component joins are defined.
\item $a\rhd b = ((\mathsf{fst}~a)\rhd_Q(\mathsf{fst}~b), (\mathsf{snd}~a)\rhd_R(\mathsf{snd}~b))$ when both component sequencings are defined.
\item $I=(I_Q,I_R)$
\end{itemize}
\end{definition}

\subsection{Subsumption of Traditional Commutative Effect Systems}
\label{sec:subsume}
An important litmus test for a general model of sequential effects is that it should subsume traditional single-operator commutative effects as a special case.
This not only implies consistency of effect quantales with traditional effect systems, but ensures implementation frameworks for sequential effects (based on effect quantales) would be adequate for implementing commutative systems as well.
\begin{lemma}[Subsumption of Traditional Commutative Effects]
\label{lem:subsume}
For every bounded join semilattice with top $L=(E,\vee,\top,\bot)$, there is an effect quantale defined by $Q=(E,\vee,\vee,\bot)$.
\end{lemma}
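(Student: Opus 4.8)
The plan is to verify each clause of Definition~\ref{def:eq} directly, exploiting the fact that in a bounded join semilattice the join $\vee$ is a \emph{total} operation. This means every partiality side condition in the definition is vacuously discharged: both sides of every equation are always defined, so I only need to check that the stated equalities hold everywhere.

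First I would observe that $(E,\sqcup)=(E,\vee)$ is already a (total) join semilattice by hypothesis, so the first component of $Q$ requires no further work; in particular the induced order $\sqsubseteq$ coincides with the semilattice order of $L$, and the top element $\top$ plays no role in the effect-quantale axioms (effect quantales require neither a top nor a bottom as structure).

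Next I would check that $(E,\rhd,I)=(E,\vee,\bot)$ is a monoid. Associativity of $\rhd$ is exactly associativity of $\vee$, which holds in any semilattice. For the unit laws, $\bot\rhd a=\bot\vee a=a$ and symmetrically $a\rhd\bot=a$, since $\bot$ is the least element of $L$ and hence absorbed by every join. This is the one place where the choice $I=\bot$ is forced, and where the boundedness of $L$ is actually used.

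The remaining, and only mildly substantive, obligation is the two-sided distributivity law. Since $\rhd=\sqcup=\vee$, both directions collapse to the single equation $a\vee(b\vee c)=(a\vee b)\vee(a\vee c)$, which I would prove using idempotence together with commutativity and associativity: the right-hand side equals $a\vee a\vee b\vee c=a\vee b\vee c$, matching the left, and the symmetric law follows by the identical computation. I expect this to be the conceptual crux of the lemma --- not because it is hard, but because it isolates \emph{why} a commutative effect system may reuse its join as its sequencing operator: it is precisely idempotence of $\vee$ that makes the product distribute over itself. Having verified the semilattice, monoid, and distributivity clauses, and noting that totality renders the definedness side conditions trivial, I would conclude that $Q$ is an effect quantale.
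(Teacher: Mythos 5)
Your proof is correct and follows essentially the same route as the paper's (much terser) argument: the semilattice is trivially a monoid with unit $\bot$, and the distributivity law reduces to the self-distributivity of $\vee$, which follows from idempotence, commutativity, and associativity. Your additional observation that totality of $\vee$ discharges all the partiality side conditions is a worthwhile explicit remark that the paper leaves implicit.
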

\begin{proof}
$Q$ easily satisfies all laws of the effect quantale definition, as the join semilattice structure is trivially a monoid, and joins distribute over themselves.
\end{proof}

We make no further direct use of this construction (Lemma \ref{lem:subsume}) in this paper, aside from showing in Section \ref{sec:closure_operator} that our construction of iteration operators for sequential effects applies sensibly to this construction as well.  But it implies that the primary results (type safety in Section \ref{sec:soundness}) extend to treatment of traditional commutative effects as well as a special case of effect quantales.

Effect quantales also include the possibility of two-operator commutative effect systems, where both operators are commutative yet they are distinct operations, unlike the traditional case above.  We do not prove an analogous subsumption of these commutative effect systems, because there is no broadly established characterization of these either; they exist but are relatively rare, so it is difficult to generalize from examples. Section \ref{sec:must} shows the main example is an effect quantale.  We posit that \emph{commutative effect quantales} (those with distinct operators, both commutative, akin to commutative semirings) offer a suitable characterization, but it is difficult to evaluate without a richer body of concrete two-operator commutative effect systems for comparison.

\section{Modeling Prior Sequential Effect Systems with Effect Quantales}
\label{sec:modeling}
Many of the axioms of effect quantales are not particularly surprising given prior work on
sequential effect systems; one of this paper's contributions is recognizing and demonstrating that these axioms are
sufficiently general to capture many prior instances of sequential effect systems. We show here a number of prominent examples, ranging from relatively small algebras to rich behavioral effects.

\subsection{Locking with Effect Quantales}
\label{sec:locking}
A common class of effect systems is that for reasoning about synchronization --- which locks are held at various points in the program.  Most effect systems for this problem rely on scoped synchronization constructs, for which a bounded join semilattice is adequate --- the runtime semantics match lock acquire and release operations. (\textsc{RCC/Java} is the classic example of this approach~\cite{Abadi2006}.)  Here, we give an effect quantale for flow-sensitive tracking of lock sets including recursive acquisition.
The main idea is to use a multiset of locks (modeled by $\mathcal{M}(S)=S\rightarrow\mathbb{N}$, where the multiplicity of a lock is the number of claims a thread has to holding the lock --- the number of times it has acquired said lock) for the locks held before and after each expression.  We use $\emptyset$ to denote the empty multiset (where all multiplicities are 0).  We use join ($\vee$) on multisets to produce least upper bounds on multiplicities, union ($\cup$) to perform addition of multiplicities, and subtraction ($-$) for zero-bounded subtraction (least multiplicity is 0).

\begin{definition}[Synchronization Effect Quantale $\mathcal{L}$]
    \label{def:locking}
An effect quantale $\mathcal{L}$ for lock-based synchronization with explicit mutex acquire and release primitives is given by:
\begin{itemize}
\item $E=(\mathcal{M}(L)\times\mathcal{M}(L))$ for a set $L$ of possible locks (the lock claims before and after execution).
\item $(a,a')\sqcup (b,b') = (a\vee b,a'\vee b')$ when both effects acquire and release the same set of locks the same number of times: ${b-b'}={a-a'}$ and ${b'-b}={a'-a}$.  Otherwise, the join is undefined.
\item $(a,a')\rhd(b,b')$ is $(c,c')$ where
\begin{itemize}
    \item $c=a\cup(b-a')$ ($c$ is the lock holdings presumed by $a$, plus those presumed by $b$ but not provided by $a'$)
    \item $c'=(((c-(a-a'))\cup(a'-a))-(b-b'))\cup(b'-b)$ ($c'$ is $c$ less the locks released by the first action, plus the locks acquired by the first action, less the locks released by the second action, plus the locks acquired by the second action)
\end{itemize}
\item $I=(\emptyset,\emptyset)$
\end{itemize}
\end{definition}
Intuitively, the pair represents the sets of lock claims before and after some action, which models lock acquisition and release.
$\sqcup$ intuitively requires each ``alternative'' to acquire/release the same locks, while the set of locks held for the duration may vary (and the result assumes enough locks are held on entry --- enough times each --- to validate either element).
This can be intuitively justified by noticing that most effect systems for synchronization require, for example, that each branch of a conditional may access different memory locations, but reject cases where one branch changes the set of locks held while the other does not (otherwise the lock set tracked ``after'' the conditional will be inaccurate for one branch, regardless of other choices).
The resulting partial order on effects essentially allows adding the same (multi)set of lock acquisitions to be added to the pre- and post-condition lock multisets.  Note that there is no sensible join of effects that acquire and release different sets of locks, making the join partial.
Sequencing two lock actions, roughly, pushes the locks required by the second action to the precondition of the compound action (unless such locks were released by the first action, i.e.\ in $a-a'$), and pushes locks held after the first action through the second --- roughly a form of bi-abduction~\cite{calcagno2011compositional}.

With this scheme, lock acquisition for some lock $\ell$ would have (at least) effect $(\emptyset,\{\ell\})$, indicating that it requires no locks to execute safely, and terminates holding lock $\ell$.  A release of $\ell$ would have swapped components --- $(\{\ell\},\emptyset)$ --- indicating it requires a claim on $\ell$ to execute safely, and gives up that claim.  Sequencing the acquisition and release would have effect $(\emptyset,\{\ell\})\rhd(\{\ell\},\emptyset)=(\emptyset,\emptyset)$.  Sequencing acquisitions for two locks $\ell_1$ and $\ell_2$ would have effect $(\emptyset,\{\ell_1\})\rhd(\emptyset,\{\ell_2\})=(\emptyset,\{\ell_1,\ell_2\})$, propagating the extra claim on $\ell_1$ that is not used by the acquisition of $\ell_2$.  This is true even when $\ell_1=\ell_2=\ell$ --- the overall effect would represent the recursive acquisition as two outstanding claims to hold $\ell$: $(\emptyset,\{\ell,\ell\})$.

A slightly more subtle example is the acquisition of a lock $\ell_2$ just prior to releasing a lock $\ell_1$, as would occur in the inner loop of hand-over-hand locking on a linked list: $(\textsf{acquire}\;\ell_2;\textsf{release}\;\ell_1)$ has effect $(\emptyset,\{\ell_2\})\rhd(\{\ell_1\},\emptyset)=(\{\ell_1\},\{\ell_2\})$.  The definition of $\rhd$ propagates the precondition for the release through the actions of the acquire; it essentially computes the minimal lock multiset required to execute both actions safely, and computes the final result of both actions' behavior on that multiset.

While use of sets rather than multisets would be appealing, and variants of this are possible (we see an example shortly), the interaction of true sets with substitution is quite subtle.
We explore this subtlety further in Section \ref{sec:relwork}, but for now consider what happens when supplying the same actual lock argument for two formal lock parameters that are both acquired.  With sets, substitution can behave poorly (substituting the same lock for multiple variables loses information about the number of lock acquisitions when tracked as a mere set), while with multisets the multiplicities can simply be added.

\subsection{An Effect Quantale for Atomicity}
\label{sec:atomicity_quantale}
One of the best-known sequential effect systems is Flanagan and Qadeer's extension of \textsc{RCC/Java} to reason about atomicity~\cite{flanagan2003atomicity}, based on Lipton's theory of \emph{reduction}~\cite{lipton75} (called \emph{movers} in the paper).
The full details of the movers would be a substantial and lengthy digression from our purpose, but the essential ideas were developed for a simpler language and effect system in an earlier paper~\cite{flanagan2003tldi}, for which we give an effect quantale.

\begin{figure}[t]
\vspace{-0.5cm}
\begin{center}
$
\begin{minipage}{0.14\textwidth}
\begin{tikzpicture}
\node(Top){$\top$};
\node(A)[below of=Top]{$A$};
\node(R)[below right of=A]{$R$};
\node(L)[below left of=A]{$L$};
\node(B)[below right of=L]{$B$};
\draw(B)--(L);
\draw(B)--(R);
\draw(R)--(A);
\draw(L)--(A);
\draw(A)--(Top);
\end{tikzpicture}
\end{minipage}
\qquad
\begin{array}{|c|ccccc|}
\hline
; & B & L & R & A & \top\\
\hline
B & B & L & R & A & \top\\
R & R & A & R & A & \top\\
L & L & L & \top & \top & \top\\
A & A & A & \top & \top & \top\\
\top & \top & \top & \top & \top & \top\\
\hline
\end{array}
$
\end{center}
\caption{Atomicity effects~\cite{flanagan2003tldi}: lattice and sequential composition.}
\label{fig:atomicity_lattice}
\end{figure}

The core idea is that in a well-synchronized (i.e., data race free) execution, each action of one thread can be categorized by how it commutes with actions of other threads: a left ($L$) mover commutes left (earlier) with other threads' actions (e.g., a lock release), a right $R$ mover commutes later (e.g., lock acquire), a both $B$ mover commutes either direction (e.g., a well-synchronized field access).  A sequence of right movers, then both-movers, then left-movers \emph{reduces} to an atomic action ($A$).  Repeating the process wrapping movers around an atomic action can again reduce to an atomic action, verifying atomicity for even non-trivial code blocks including multiple lock acquisitions.  As a regular expression, any sequence of movers matching the regular expression $(R^*B^*)^*A(B^*L^*)^*$ reduces to an atomic action.  Effect trace fragments of this form demarcate expressions that evaluate as if they were physically atomic.

\begin{definition}[Atomicity Effect Quantale $\mathcal{A}$]
    \label{def:atomicity}
The effect quantale $\mathcal{A}$ for Flanagan and Qadeer's simpler system~\cite{flanagan2003tldi} can be given as:
\begin{itemize}
\item $E=\{B,L,R,A,\top\}$. \item $a\sqcup b$ is defined according to the lattice given by \citet{flanagan2003atomicity} (Figure \ref{fig:atomicity_lattice}). \item $a\rhd b$ is defined according to Flanagan and Qadeer's $;$ operator (Figure \ref{fig:atomicity_lattice}). \item $I=B$
\end{itemize}
\end{definition}

Flanagan and Qadeer also define an iterator operator on atomicities, used for ascribing effects to loops whose bodies have a particular atomicity.
We defer discussion of iteration until Section \ref{sec:iteration}, and will revisit this example there.

Of course the atomicity effect quantale alone is insufficient to ensure atomicity, because atomicity
depends on correct synchronization.  The choice of effect for each program expression is not
insignificant, but full atomicity checking requires the {product} of the synchronization and
atomicity effect quantales, to track locking and atomicity together.

\subsection{Trace Sets}
\label{sec:trace_sets}
Koskinen and Terauchi study the power of sequential effect systems to both verify safety properties as well as propagate liveness properties from an oracle~\cite{Koskinen14LTR}.  The main idea is to track a pair of possibly-infinite sets: one set of finite traces over an alphabet $\Sigma$ of events (elements of $\Sigma^*$), and a set of infinite traces over $\Sigma$ (elements of $\Sigma^\omega$).  This also carries the structure of an effect quantale:

\begin{definition}[Koskinen-Terauchi Trace Set Effect Quantales]
    \label{def:traceseq}
    A Koskinen-Terauchi \emph{trace set effect quantale} over a set $A$ of events --- written $\mathcal{KT}(A)$ --- is given by the following:
    \begin{itemize}
        \item $E=\mathcal{P}(A^*)\times\mathcal{P}(A^\omega)$
        \item $(a,b)\sqcup(c,d)=(a\cup c, b\cup d)$
        \item $(a,b)\rhd(c,d)=(a\cdot c,b\cup(a\cdot d))$
\item $I=(\{\epsilon\},\emptyset)$
    \end{itemize}
    where the pairwise concatenation of two sets $-\cdot-$ is given by $X\cdot Y=\{xy \mid x\in X\land y\in Y\}$.
    It is easy to verify that this satisfies the effect quantale laws.
\end{definition}
Koskinen and Terauchi also make use of a meet operator for intersection on effects, which is uncommon among effect systems, and only necessary in their system to combine effects derived from core typing rules with effects derived from liveness oracles.

This is an important example for three reasons.  First, it demonstrates that effect quantales admit very powerful effect systems.  
The finite traces are adequate to express any safety property over the events in question.
Liveness properties --- and the infinite trace sets --- are more subtle; a key part of Koskinen and Terauchi's contributions where they propose the effects above~\cite{Koskinen14LTR} is establishing how the effect system above can \emph{propagate} information from a liveness oracle (without some sort of termination analysis, effect systems remain too weak to prove liveness on their own).

Second, it provides a counter-example to a conjecture made in an earlier version of this work~\cite{ecoop17}. That paper conjectured that a slightly different iteration operator was defined for all meaningful effect quantales, but that was incorrect.
Appendix \ref{apdx:comparison} gives more detailed comparison, but briefly the original proposal required properties that were too strong for behavioral effects like traces, which expose \emph{internal} behavior of computations.  Section \ref{sec:iteration} refines the original iteration construction (essentially removing an unnecessary requirement) so it is defined in cases like this as well.

Third, in Section \ref{sec:iter_examples} we will use this to demonstrate the difference between handling only safety or also liveness properties when considering iteration.
Our formal development in Section \ref{sec:soundness} only addresses safety properties enforced by effect quantale, but it is important to note that effect systems for liveness still satisfy the axioms for effect quantales.

\subsection{History Effects}
\label{sec:history_effects}
\citet{skalka2008types,Skalka2008} study \emph{history effects} over a set of events.  The syntax of the effects resembles a process algebra (in a sense the papers on the topic make precise), and the effects have a denotation as sets of finite traces (essentially the first component of the trace set effects above, though proposed much earlier).  These effects also yield an effect quantale:

\begin{definition}[History Effect Quantale]
    \label{def:histeq}
    The \emph{history effect quantale} over a set of events $A$ --- written $\mathcal{H}(A)$ --- is given by the following:
    \begin{itemize}
        \item Effects $E$ are the well-formed (effect variables appear bound by $\mu$) elements generated by the following grammar~\cite{skalka2008types}: \[H ::= \epsilon \mid \textsf{ev}[i] \mid (H;H) \mid (H|H) \mid \mu h\ldotp H \mid h\]
        The history effects include empty effects, the occurrence of event $i$ (written $\textsf{ev}[i]$ assuming $i\in A$), sequential composition, non-deterministic choice, a least fixed point operator, and effect variables.
        \item $a\sqcup b = (a|b)$
        \item $a\rhd b = (a;b)$
\item $I=\epsilon$
    \end{itemize}
    Technically the set $H$ is taken quotiented by an equivalence relation $\approx$ that relates history effects that denote the same sets of finite traces ---, i.e., $a|b\approx b|a$ so they would be considered equivalent in operations.  It is with respect to this quotienting that the operations above satisfy the effect quantale axioms.
\end{definition}
The definition of operators given here does not make use of recursive history effects ($\mu$), but we will make use of them in Section \ref{sec:iteration} when considering iteration.

The primitive events $\textsf{ev}[i]$ are intended as security-related events.  The original intent of this class of systems was to use the history effects to bound the behavior of some code, and then apply a model checker to validate that the code obeyed a certain security policy.
Later versions of the approach have used slightly different sets of events~\cite{Skalka2008,skalka2020types} (such as distinguishing privilege checks from actions requiring privileges), which are also expressible via effect quantales.

Unlike the other examples in this section, effect polymorphism for history effects has been studied, including both a mostly-traditional prenex quantification form to support Hindley-Milner-style inference~\cite{skalka2008types}, as well as more sophisticated forms to support use in object-oriented languages~\cite{Skalka2008} in a way that decouples effect overrides from inheritance mechanisms.  We translate the former into our framework in Section \ref{sec:modeling2} (setting aside the fixpoint combinator), while we do not consider the latter in this work.
Note that like other prior work mixing sequential effects and effect polymorphism, the above effect operators are total.

\subsection{Finite Trace Effects}
\label{sec:finite_traces}
Because trace sets may be harder to build intuition from, and understanding of history effects may be obscured by lack of familiarity with the syntax, we give here a simpler effect quantale over an alphabet $\Sigma$:
\begin{definition}[Finite Trace Effects]
    \label{def:fintraceeq}
    Effects tracking sets of (only) finite event traces can be described by an effect quantale with
    \begin{itemize}
        \item $E=\mathcal{P}(\Sigma^*)$ \item $X\rhd Y = X\cdot Y$ (pairwise concatenation of sets, again $X\cdot Y=\{xy \mid x\in X\land y\in Y\}$)
        \item $X\sqcup Y=X\cup Y$
        \item $I=\{\epsilon\}$
\end{itemize}
    For an alphabet $\Sigma$, we will overload the notation $\mathcal{P}(\Sigma^*)$ to mean not only sets of finite words over $\Sigma$, but also the corresponding effect quantale.
\end{definition}

This is closely related to the previous two behavioral effect quantales.
This is the first projection of the trace set effect quantale $\mathcal{KT}(\Sigma)$.
For a fixed set $A$ of events of interest, we can define an alphabet that can represent groundings of Skalka et al.'s history effects $\mathcal{H}(A)$ as the set of primitive event effects: $\Sigma_{H(A)}=\{\textsf{ev}[i] \mid i\in A\}$.
Skalka et al.~\cite{Skalka2008} actually give a denotation $\llbracket-\rrbracket : H \rightarrow \mathcal{P}(\Sigma_{H(A)}^*)$.  
The details are out of scope for us, but this embedding is an effect quantale homomorphism (defined formally in Definition \ref{def:morphism}).

\subsection{Concurrent ML}
\label{sec:cml}
Setting aside parallel composition (which we do not study), Nielson and Nielson's communication
effect system for Concurrent ML~\cite{nielson1993cml} (later elaborated with Amtoft~\cite{amtoft1999}) is similar to Skalka's, though conceived much earlier.  Their \emph{behaviors}
act as trace set abstractions, with sequencing and non-deterministic choice (union) acting as an
effect quantale's monoid and join operations. 
(They also include a separate parallel composition of behaviors we do not model, discussed in Section \ref{sec:joinoids}.)  
Their subtyping rules for behaviors imply the
required distributivity laws.  The definition is sufficiently similar to history effects that we omit a formal definition to avoid both redundancy and notational confusion.

\subsection{Non-reentrant Critical Sections}
\label{sec:crit}

In proposing a maximally general framework for sequential effect systems, \citet{tate13} gives a motivating example of an effect system that tracks ownership of one global non-reentrant lock. He calls this effect system \textsf{Crit} (for critical section), which we can describe as an effect quantale:
\begin{definition}[Crit Effects]
    \label{def:crit-effects}
    The motivating example \citet{tate13} gives for the generality of productors and effectors (consequently, productoids and effectoids) corresponds to the following effect quantale:
    \begin{itemize}
        \item $E=\{\varepsilon,\mathsf{locking},\mathsf{unlocking},\mathsf{critical},\mathsf{entrant}\}$
        \item $\sqcup$ is given according to the very sparse Hasse diagram in Figure \ref{fig:crit-effects}. The ordering is the reflexive closure of $\varepsilon<\mathsf{critical}$ and $\varepsilon<\mathsf{entrant}$; $\mathsf{locking}$ and $\mathsf{unlocking}$ are incomparable with all other effects.
        \item $\rhd$ is given in Figure \ref{fig:crit-effects}.
        \item $I=\varepsilon$
    \end{itemize}
\end{definition}
\begin{figure}[t]
\begin{center}
\begin{tikzpicture}[node distance=1.5cm]
\node(neutral){\fbox{$\varepsilon$}};
\node(critical)[above left of=neutral]{\fbox{$\mathsf{critical}$}};
\node(locking)[left of=critical]{\fbox{$\mathsf{locking}$}\hspace{2em}};
\node(entrant)[above right of=neutral]{\fbox{$\mathsf{entrant}$}};
\node(unlocking)[right of=entrant]{${}\qquad{}$\fbox{$\mathsf{unlocking}$}};
\draw(neutral)--(critical);
\draw(neutral)--(entrant);
\end{tikzpicture}

$
\begin{array}{|c|ccccc|}
\hline
\rhd & \mathsf{locking} & \mathsf{unlocking} & \mathsf{critical} & \mathsf{entrant} & \varepsilon \\
\hline
\mathsf{locking} & - & \mathsf{entrant} & \mathsf{locking} & - & \mathsf{locking} \\
\mathsf{unlocking} & \mathsf{critical} & - & - & \mathsf{unlocking} & \mathsf{unlocking} \\
\mathsf{critical} & - & \mathsf{unlocking} & \mathsf{critical} & - & \mathsf{critical} \\
\mathsf{entrant} & \mathsf{locking} & - & - & \mathsf{entrant} & \mathsf{entrant} \\
\varepsilon & \mathsf{locking} & \mathsf{unlocking} & \mathsf{critical} & \mathsf{entrant} & \varepsilon \\
\hline
\end{array}
$

\end{center}
\caption{Sequencing for Tate's critical section effects. $-$ represents an undefined result for composition.}
\label{fig:crit-effects}
\end{figure}

Tate's \textsf{Crit} effect structure rejects acquiring a held lock or releasing an un-held lock --- forcing some sequencings to be undefined in Figure \ref{fig:crit-effects}.  Acquiring and releasing \emph{almost} cancel out: their composition is \emph{not} the unit element $\varepsilon$, because this would violate associativity:
$\mathsf{locking}\rhd\mathsf{locking}\rhd\mathsf{unlocking}$
must give the same result regardless of which $\rhd$ is simplified first: if $\mathsf{locking}\rhd\mathsf{unlocking}$ were equal to $\varepsilon$, then reducing the right sequencing first would yield $\mathsf{locking}\rhd\varepsilon=\mathsf{locking}$, while reducing the left first would be undefined.
This is why the effect quantale and Tate's original effectoid requires distinguishing effects for code that uses the lock completely (both acquiring and releasing) from code that does not use the lock at all.

While Tate's framework is more general than ours (we discuss the relationship further in Section \ref{sec:tate}), and \textsc{Crit} does not use the full flexibility of his framework, the fact that we can give his motivating example in our framework suggest that relatively little expressivity useful for concrete effect systems is lost by using effect quantales.
We discuss the differences further in Section \ref{sec:tate}, but note that Tate used this example in part because despite its simplicity, it could not be expressed in prior generic models of effect systems~\cite{atkey2009parameterised,wadler2003marriage,Filinski1999,Filinski2010}.

\subsection{Deadlock Freedom via Lock Levels}
\label{sec:dlf}

Suenaga gives a sequential effect system for ensuring deadlock freedom in a language with unstructured locking primitives~\cite{suenaga2008type}. This is the closest example we know of to our new lockset effect quantale (Definition \ref{def:locking}).  However, Suenaga's lock tracking is structured a bit differently from ours: he tracks the state of a lock as either explicitly present but unowned (by the current thread), or owned by the current thread, thus not reasoning about recursive lock acquisition. 

Suenaga uses a flow-sensitive type judgment $\Gamma\vdash s \Rightarrow \Gamma' \& \mathit{lev}$, which tracks both type-invariant information (e.g., $x$ is a lock with static lock level~\cite{safelocking99} $\alpha$) as well as information corresponding to an effect (whether the lock is held before and after the program executes).  Effectively, his effects track the pre- and post-state lock ownership as well as a lower-bound on the locks acquired by an expression ($\mathit{lev}$ is this lower bound, drawn from a total order $\mathsf{Level}$).  Sequencing an effect that acquires locks at level $\alpha$ and above after an effect that still holds locks at levels $\alpha$ and above is undefined.

Suenaga characterizes only the lower bound on levels of acquired locks as the effect, but as we argue in Section \ref{sec:bg}, some of the flow-sensitive tracking should properly be considered an effect as well.
To actually apply the intuition for pushing flow-sensitive information from typing judgments into effects, we must push a subset of the information from the type environment into effects, possibly duplicating some along the way.
We can view information about which locks are held or not (by the current thread) as effects, removing that information from the types of variables referring to locks, leaving lock types as tracking only levels.  However, as part of controlling which effects are sensible, lock level information beyond the lower bound ($\mathit{lev}$) from the type judgment should also be moved into effects.
Rearranging the flow-sensitive lock tracking into part of an effect as suggested above, we can give an effect quantale capturing the core structure of his approach:

\begin{definition}[Suenaga's Deadlock Freedom Effects $\mathcal{DL}(L)$]
    \label{def:dlf_suenaga}
    The core structure of the deadlock freedom effect system of \citet{suenaga2008type} can be formulated as an effect quantale $\mathcal{DL}(L)$ for some set of locks $L$:
    \begin{itemize}
        \item $E=\{ (X,L,Y)\in(L\rightharpoonup(\mathsf{Lvl}\times\mathsf{Ob}))\times\mathsf{Lvl}\times(L\rightharpoonup(\mathsf{Lvl}\times\mathsf{Ob})) \mid \mathsf{dom}(X)=\mathsf{dom}(Y)\land\mathsf{Levels}(X)=\mathsf{Levels}(Y)\land\mathsf{max}(\mathsf{Levels}_{\mathsf{held}}(X))< L \land\mathsf{UniqueHeldLevels}(X)\land\mathsf{UniqueHeldLevels}(Y)\}$
            where 
            \begin{itemize}
                \item $\mathsf{Lvl}=\mathbb{N}\uplus\{\infty\}$
                \item $\mathsf{Ob}=\mathbb{1}|\mathsf{held}$
                \item $\mathsf{UniqueHeldLevels}(A)=\neg\exists x,y,l\ldotp x\neq y\land A(x)=A(y)=(l,\mathsf{held})$
                \item $\mathsf{Levels}_{\mathsf{held}}(A)=\{l \mid \exists x\ldotp A(x)=(l,\mathsf{held}) \}$
            \end{itemize}
        \item $(X,L,Y)\sqcup(X',L',Y')=(X\cup X', \mathsf{min}(L,L'), Y\cup Y')$ when well-formed and $X\triangle Y=X'\triangle Y'$ (i.e., locks tracked in one side of the join argument are neither acquired nor released), taking $\cup$ as compatible union of partial functions (which must agree where both are defined)
        \item $(X,L,Y)\rhd(X',L',Y')=(X\uplus(X'|_{\mathsf{dom}(X')\setminus\mathsf{dom}(X)}), \mathsf{min}(L,L'), Y|_{\mathsf{dom}(Y)\setminus\mathsf{dom}(X)}\uplus Y')$
            when:
            \begin{itemize}
                \item the result is well-formed,
                \item $Y|_{\mathsf{dom}(Y)\cap\mathsf{dom}(X')}=X'|_{\mathsf{dom}(Y)\cap\mathsf{dom}(X')}$
                \item $\mathsf{max}(\mathsf{Levels}_\mathsf{held}(Y\setminus X'))< L'$
                \item $\mathsf{max}(\mathsf{Levels}_\mathsf{held}(X'\setminus Y)) < L$
            \end{itemize}
        \item $I=(\emptyset,\infty,\emptyset)$
    \end{itemize}
\end{definition}
In prose, Suenaga's effects are equivalent to triples of pre- and post-state information of which locks are held, along with a lower bound on the level of locks acquired during an expression's execution.  These triples are only valid if the pre- and post-state information assumes any simultaneously held locks have different levels, and the lower bound on acquired-lock-levels is larger than any locks assumed held initially.
The join simply unions the state information (each side may mention locks unknown to the other) and takes the minimum lock level, but this is only valid if the result satisfies the general constraints on well-formed effects, and neither effect acquires or releases locks unknown to the other (checked by comparing symmetric difference of the partial functions).
A consequence of this join is that there is a least element $(\emptyset,0,\emptyset)$.
Sequencing performs some of the bi-abduction~\cite{calcagno2011compositional} as in $\mathcal{L}(L)$ (moving unmodified variables forward and backward through effects that do not need them), additionally checking that for variables tracked by both effects, the postcondition $Y$ equals the precondition $X'$, that the second effect acquires locks at levels strictly larger than those locks held after the first effect but unused by the second, and similarly that any locks assumed held by the second effect but unused by the first are safe to hold when acquiring (have lower levels than) locks acquired by the first.

\subsection{`Must' Effects}
\label{sec:must}
\citet{mycroft16} give the following example of an effect system which is commutative (is disregards program order), but still has need for multiple operators.
\begin{definition}[Must Effects]
    \label{def:must_effects}
    For a set $X$ of events of interest, define the effect quantale $\mathsf{Must}(X)$ as:
    \begin{itemize}
        \item $E=\mathcal{P}(X)$
        \item $a\rhd b = a\cup b$
        \item $a\sqcup b = a\cap b$
        \item $I=\emptyset$
    \end{itemize}
\end{definition}
These effects capture the behaviors in $X$ that a program \emph{definitely} performs: union collects actions performed in sequence into a set (ignoring program order) but the join used to combine branches is intersection, retaining only the actions performed by \emph{both} branches. In particular, the notion of subeffecting in this case is reversed from the typical use of sets: $a\sqsubseteq b\Leftrightarrow a\supseteq b$, making $X$ the least element.
The distributivity laws follow from the powerset being a distributive lattice under union and intersection.

\section{Iteration}
\label{sec:iteration}

Many sequential effect systems include a notion of iteration, used for constructs like explicit
loops.  The operator for this, usually written as a postfix superscript ${}^*$, gives the overall effect of any
(finite) number of repetitions of an effect.
Our attention is focused on safety properties, which require only finite iteration, though future work exploring soundness for liveness properties would likely use this as a stepping stone to define infinite iteration, as occurs in Kleene $\omega$-algebras~\cite{esik2015continuous,cohen2000separation,laurence2011omega}.

The iteration construct must follow from some fixed point construction on the semilattice.
However, the most obvious approach --- using a least fixed point theorem on effect quantales with a bottom element --- requires a least element, which not all effect quantales have.  The definition of an effect quantale could be changed, and the examples from Section \ref{sec:quantales} could have synthetic bottom elements added, but this complicates the axiomatization and turns out to be unnecessary.

Instead, we detail an approach based on \emph{closure operators} on partially ordered sets in
Section \ref{sec:closure_operator}, which applies to any effect quantale satisfying some mild
restrictions and coincides with manual iteration definitions for prior work.  First, in Section \ref{sec:iter_props}, we motivate a number of required properties for any derived notion of iterating an effect.

\subsection{Properties Required of an Iteration Operator}
\label{sec:iter_props}
Iteration operators must satisfy a few simple but important properties to be useful.  We first list, then explain these properties.

\[\begin{array}{rl}
    \textrm{Extensive} & \textrm{if $e^*$ is defined, then}\;e \sqsubseteq e^*\\
    \textrm{Idempotent} & \textrm{if $e^*$ is defined, then $(e^*)^*$ is also defined, and}\; (e^*)^*=e^*\\
    \textrm{Monotone} & \textrm{if $e^*$ and $f^*$ are defined, then}\; e\sqsubseteq f \Rightarrow e^*\sqsubseteq f^*\\
    \textrm{Foldable} & \textrm{if $e^*$ is defined, then so is $(e^*)\rhd( e^*)$, and}\; (e^*)\rhd(e^*)\sqsubseteq e^*\\
    \textrm{Possibly-Empty} & \textrm{if $e^*$ is defined, then}\; I \sqsubseteq e^*
\end{array}\]

An \emph{extensive} iteration operator ensures one iteration of a loop body is a lower bound on the effect of multiple iterations.  Similarly, iterating an effect should be an upper bound on 0 iterations (it should be \emph{possibly-empty}) and should be an upper bound on extra iterations before or after (it should be \emph{foldable}).  Iteration must also be \emph{monotone} and \emph{idempotent} in the usual ways.  Note that the foldable requirement extends to the case of individual repetitions before or after: because iteration is also extensive and sequencing is monotone in both arguments, these axioms imply $\forall e\ldotp e\rhd e^* \sqsubseteq e^*$ and $\forall e\ldotp e^*\rhd e\sqsubseteq e^*$ (the original requirements from \citet{ecoop17}).

The requirements that iteration should be possibly-empty, extensive, and foldable, are directly related to the dynamic execution of a loop (which may execute 0, 1, or more times), and play a role in our syntactic soundness proof.
Monotonicity is required to interact appropriately with subtyping.  Being idempotent is not absolutely necessary, but is both assumed in prior sequential  effect systems, and naturally true of the constructions we give below.
Finally, the operator is assumed to be partial: as with the join and sequencing operations, sometimes iteration simply doesn't make sense for particular input. For example, it is unclear what it means to iterate a lock acquisition an indeterminate (finite) number of times --- there is no way to match this with an equal number of unlocks.

These considerations lead us the following class of effect quantales:

\begin{definition}[Iterable Effect Quantales]
    \label{def:iterable}
    An \emph{iterable} effect quantale $(E,\sqcup,\rhd,I,*)$ is an effect quantale equipped with an additional operator $(-)^* : E \rightharpoonup E$ that is extensive, idempotent, monotone, foldable, and possibly-empty.
\end{definition}

Trivially, every effect quantale is also an iterable effect quantale: the iteration operator that is undefined everywhere satisfies the laws above.
Thus when we speak of iterable effect quantales, the key element is a choice of a particular iteration operator of interest.
For an effect quantale to be \emph{usefully} iterable, we require a non-trivial iteration operator.  The remainder of this section describes a class of effect quantales where a non-trivial operator that is as precise as possible (in a manner defined shortly) is uniquely determined, and where that determined iteration operator coincides with hand-designed iteration of sequential effects when applied to examples from Section \ref{sec:modeling}.

\subsection{Iteration via Closure Operators}
\label{sec:closure_operator}
For a general notion of iteration, we will work with the notion of a \emph{closure operator} on a poset:
\begin{definition}[Closure Operator~\cite{birkhoff,blyth2006lattices,saraswat1991}]
A (total) closure operator on a poset $(P,\sqsubseteq)$ is a function $f : P \rightarrow P$ that is
\[\begin{array}{rl}
    \textrm{Extensive}& \forall e, e \sqsubseteq f(e) \\
    \textrm{Idempotent}& \forall e, f(f(e)) = f(e)\\
    \textrm{Monotone}& \forall e,e'\ldotp e\sqsubseteq e'\Rightarrow f(e)\sqsubseteq f(e')
\end{array}\]
\end{definition}
Closure operators have several particularly useful properties~\cite{birkhoff,blyth2006lattices,saraswat1991}. Writing ${x\uparrow}=\{ y \mid y\in P \land x \le y \}$ to denote the principal up-set of $x$:
\begin{itemize}
\item Idempotence implies that the range of a closure operator is also the set of fixed points of the operator.
\item Closure operators $f$ on a poset $(P,\le)$ are in bijective correspondence with their ranges $\{f(x) \mid x\in P\}$.  In particular, from the range we can recover the original closure operator by mapping each element $x$ of the poset to the least element of the range that is above that input: \[x\mapsto \mathsf{min}(\{f(x) \mid x\in P\}\cap \{y \mid y\in P\land x\le y\})\]
The properties of closure operators ensure that this is defined (in particular, that the least element used in the definition above exists and is unique, which may not be the case if $f$ is not a closure operator).
\item A given subset $X$ of a poset $(P,\le)$ is the range of a closure operator --- called a \emph{closure subset} --- if and only if for every element $x\in P$ in the poset, $X\cap({x\uparrow})$ has a least element~\cite[Theorem 1.8]{blyth2006lattices}.  (The left direction of the iff is in fact proven by constructing the closure operator as described above.)
\end{itemize}
This means that if we can identify the desired range of our iteration operation (the results of the iteration operator) and show that it meets the criteria to be a closure subset, the construction above will yield an appropriate closure operator, which we can take directly as our iteration operation.  

We will use \emph{partial closure operators}, where the result of iteration may be undefined, because we will construct it for effect quantales where certain combinations may be undefined, such as Tate's Crit (Definition \ref{def:crit-effects}), or where certain upper bounds on repetition do not exist, such as the locking effect quantale (Definition \ref{def:locking}).  The useful closure operator properties above carry over to the partial case. In particular, the bijection between operators and ranges holds for the elements where the operator is defined, and a subset $X$ is a \emph{partial closure subset} if for every $x\in P$, $X\cap({x\uparrow})$ has a least element or is empty, in which case the operator is undefined on $x$.
\citet{saraswat1991} observe the useful intuition that a partial closure operator is a (total) closure operator on a downward-closed subset: if the operator is undefined for some $x$ it is undefined for all greater elements, while if it is defined for some $x$ then it is also defined for all lesser elements.
The partial analogues of the properties above are exactly the extensive, idempotent, and monotone properties we require for iteration. As in the total case, identifying a suitable partial closure subset will determine an operator for iteration.

The natural choice is the set of elements for which sequential composition is (strictly) idempotent.
\begin{definition}[(Strictly) Idempotent Elements]
The set of (strictly) idempotent elements $\mathsf{Idem}(Q)$ of an effect quantale $Q$ is defined as
$\mathsf{Idem}(Q) = \{ a\in Q \mid a\rhd a=a\}$ (the set of idempotent elements of $Q$).
\end{definition}

This is the set targeted in earlier versions of this work~\cite{ecoop17}.
But it turns out to be useful later to relax this slightly:

\begin{definition}[Subidempotent Elements]
The set of \emph{subidempotent} elements $\mathsf{SubIdem}(Q)$ of an effect quantale $Q$ is defined as $\mathsf{SubIdem}(Q) = \{ a \in Q \mid a\rhd a\sqsubseteq a\}$ (the set of \emph{subidempotent} elements of $Q$).
\end{definition}

Note that there are parts of the literature on ordered semigroups and ordered monoids that use the term \emph{idempotent} to mean what we call here \emph{subidempotent}.  We follow the more established terminology going as far back as to Birkhoff~\cite{birkhoff}, who defines an element $a$ as idempotent when $a\rhd a=a$ and subidempotent when $a\rhd a\sqsubseteq a$.
For the remainder of this paper, however, we mostly avoid the term ``idempotent'' to avoid confusion with the desired property of our iteration operator.

Not all subidempotent elements necessarily satisfy the requirement that iteration be possibly-empty --- that all elements in the range of our closure operator must be greater than $I$. So we require the closure operator range to be a subset of $I\uparrow$.
This suggests the following subset as an idealized target range (partial closure subset) of the partial closure operator:
\newcommand{\iter}[1]{\ensuremath{\mathsf{Iter}(#1)}}
\begin{definition}[Iterable Elements]
    The set of \emph{iterable} elements $\mathsf{Iter}(Q)$ of an effect quantale $Q$ is defined as $\mathsf{Iter}(Q)=\{ a \in Q \mid a\rhd a\sqsubseteq a \land I \sqsubseteq a \}$, the set of subidempotent elements above unit.
    This can also be written as the intersection of $I$'s principal up-set and the subidempotent elements: $({I\uparrow})\cap\mathsf{SubIdem}(Q)$.
\end{definition}
Indeed, if all elements of an effect quantale are iterable, the identity function gives an operator satisfying the 5 requirements for iteration.  However, all of the examples from Section \ref{sec:modeling} include additional elements, and we must define the behavior of an operator on those elements as well.  To do so it will be useful to distinguish another key class of elements:
\newcommand{\subiter}[1]{\ensuremath{\mathsf{SubIter}(#1)}}
\begin{definition}[Subiterable Elements]
    The set of \emph{subiterable} elements \subiter{Q} of an effect quantale $Q$ is defined as $\subiter{Q}=\{a \in Q \mid \exists b\in\iter{Q}\ldotp a\sqsubseteq b\}$ (the set of subeffects of iterable elements).
\end{definition}
This is the key subset of elements where iteration's behavior is not immediately obvious (or more precisely, $\subiter{Q}\setminus\iter{Q}$, since the iterable elements are subiterable).  For elements that are not subiterable, iteration must be undefined: the iteration operator must be extensive but also always return iterable elements where it is defined, and only subiterable elements have iterable elements above them.

These considerations, along with the standard result that closure operators are in bijective correspondence with their ranges, suggests the following operator as the definition for our underlying partial closure operator:
\[ x \mapsto \mathsf{min}({x\uparrow}\cap\iter{Q}) \]
We map x to the minimum subidempotent element greater than both $x$ and $I$ --- the minimum iterable element greater than $x$ --- when the intersection is non-empty and has a unique least element with respect to $\sqsubseteq$.\footnote{$\mathsf{min}$ here is the partial operator which returns the unique least element if it exists, and is otherwise undefined. We do not require the effects to have minima, or for even binary meets to exist: we merely require that $\mathsf{min}$ returns an element $y$ if and only if $\forall z\in{x\uparrow}\cap\iter{Q}\ldotp y\sqsubseteq z \land (z\sqsubseteq y\Rightarrow z=y)$.}
For this to define a valid partial closure operator, we must show that the mapping above is defined (in particular, the minimum exists) or is undefined \emph{because the intersection is empty} for each element. The latter case (the intersection is empty) can occur when there is no join of $x$ and $I$, or there are no subidempotent elements greater than (or equal to) $x\sqcup I$.  
This partial function could fail to be a partial closure operator if there exists some $y$ for which ${y\uparrow}\cap{I\uparrow}\cap\mathsf{SubIdem}(Q)$ is non-empty but does not have a unique least element (i.e., it has two or more minimal elements, or an infinite descending chain of subidempotents).
Attempting to use such a partial function as a closure operator would let the minimal elements be their own iteration result when present (which we desire), but the lack of a minimum for subidempotents above $y$ would mean the function was not everywhere-defined on a downward-closed subset (hence, not a partial closure operator).

The requirement that when the intersection is non-empty there is a least element is not necessarily true in all effect quantales, as it is not implied by the effect quantale axioms.  For this reason we identify the subclass of all effect quantales for which the mapping above is a partial closure operator: the \emph{principally iterable} effect quantales.\footnote{Appendix \ref{apdx:comparison} explains the difference between this definition and the narrower range of effect quantales studied in the early version of this work~\cite{ecoop17}.}

\begin{definition}[Principally Iterable Effect Quantale]
An effect quantale $Q$ is \emph{principally iterable} if for each subiterable element there is a unique least iterable element greater than it.
\end{definition}

For effect quantales with this property, the partial function above is a partial closure operator:

\begin{proposition}[Closure for Principally Iterable Effect Quantales]
\label{prop:closure_iterable}
For any principally iterable effect quantale $Q$, \iter{Q} is a partial closure subset.
\end{proposition}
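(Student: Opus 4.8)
The plan is to verify directly that the set ${I\uparrow}\cap\mathsf{SubIdem}(Q)$ --- call it $X$ --- meets the defining criterion of a partial closure subset of the poset $(E,\sqsubseteq)$: that for every $x\in E$, the set $X\cap{x\uparrow}$ is either empty or possesses a least element. First I would do the (trivial) set-theoretic bookkeeping, rewriting $X\cap{x\uparrow}$ via commutativity and associativity of intersection as ${x\uparrow}\cap{I\uparrow}\cap\mathsf{SubIdem}(Q)$. This is exactly the set named in the definition of a laxly iterable effect quantale, so the required disjunction for this particular $x$ follows immediately from the hypothesis that $Q$ is laxly iterable. Since $x$ was arbitrary, $X$ is a partial closure subset.

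To confirm that the induced partial function $x\mapsto\mathsf{min}(X\cap{x\uparrow})$ is a genuine partial closure operator rather than merely an everywhere-undefined stub, I would additionally observe --- following the characterization of \citet{saraswat1991} --- that its domain of definedness is automatically downward-closed. Concretely, if $x\sqsubseteq y$ and $X\cap{y\uparrow}$ is nonempty, then any witness $z\in X$ with $y\sqsubseteq z$ also satisfies $x\sqsubseteq z$ by transitivity of $\sqsubseteq$, so $z\in X\cap{x\uparrow}$ and the operator is defined at $x$ as well. The extensive, idempotent, and monotone properties then transfer directly from the general (partial) closure-operator theory already cited in the surrounding text, so no separate verification of those three is needed here.

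The honest assessment is that there is no substantive obstacle to this proposition: the definition of \emph{laxly iterable} was engineered precisely so that its condition coincides with the partial-closure-subset criterion applied to ${I\uparrow}\cap\mathsf{SubIdem}(Q)$. The only place demanding a little care is keeping the logical form of the ``empty or has a least element'' disjunction intact through the identification, together with the harmless reordering of the triple intersection --- neither of which poses any real difficulty. The content of the result lies entirely in having set up the right notion of laxly iterable beforehand, rather than in the verification itself.
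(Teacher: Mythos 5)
Your proof is correct and is essentially identical to the paper's: both observe that the partial-closure-subset criterion applied to ${I\uparrow}\cap\mathsf{SubIdem}(Q)$ is, after reordering the intersection, literally the defining condition of lax iterability. The extra remark about downward-closure of the domain of definedness is accurate but not needed here (the paper makes that observation separately, after Proposition \ref{prop:free_closure_operator}).
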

\begin{proof}
If for every $x$, ${x\uparrow}\cap\mathsf{Iter}(Q)$ is empty or has a least element, $\mathsf{Iter}(Q)$ is a partial closure subset. 
For elements that are not subiterable (those with no iterable elements above them) the intersection is necessarily empty. For an element $x$ that is subiterable, because $Q$ is principally iterable the unique least iterable element greater than $x$ will be the least element of the intersection.
\end{proof}

\begin{proposition}[Principal Iteration]
\label{prop:free_closure_operator}
For every principally iterable effect quantale $Q$, the partial function
$ x \mapsto \mathsf{min}({x\uparrow}\cap\mathsf{Iter}(Q)) $
is a partial closure operator satisfying our desired properties.
\end{proposition}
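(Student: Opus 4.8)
The plan is to separate the statement into two layers: first that the given partial function is a genuine partial closure operator (hence extensive, idempotent, and monotone), and second that it additionally meets the two iteration-specific requirements, foldable and possibly-empty. For the first layer I would do essentially no direct work, instead invoking Proposition~\ref{prop:closure_iterable} together with the general correspondence between partial closure operators and partial closure subsets recalled in the bullet list above. Since $Q$ is laxly iterable, that proposition tells us $R := {I\uparrow}\cap\mathsf{SubIdem}(Q)$ is a partial closure subset. The standard recovery of a closure operator from its range sends each $x$ to the least range element above it, i.e.\ $x\mapsto\mathsf{min}(R\cap{x\uparrow})$, and since $R\cap{x\uparrow}={x\uparrow}\cap{I\uparrow}\cap\mathsf{SubIdem}(Q)$ this is exactly the candidate operator. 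The cited theory guarantees this map is a partial closure operator, so extensiveness ($X\sqsubseteq X^*$), idempotence ($(X^*)^*=X^*$, and defined), and monotonicity ($X\sqsubseteq Y\Rightarrow X^*\sqsubseteq Y^*$ when both are defined) hold immediately.

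For the second layer I would verify foldable and possibly-empty directly from the shape of $R$. Whenever $X^*$ is defined, by construction $X^*\in R$, so $X^*\in{I\uparrow}$, which is precisely $I\sqsubseteq X^*$ (possibly-empty). Likewise $X^*\in\mathsf{SubIdem}(Q)$, and membership there is by definition the assertion that $X^*\rhd X^*$ is defined and $X^*\rhd X^*\sqsubseteq X^*$; this is exactly foldable. Both properties thus fall out of the two halves of the intersection defining $R$, with no further calculation.

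The main obstacle is not computational but conceptual bookkeeping: I must be sure the three closure-operator axioms coincide with the identically named iteration axioms under the partial reading. Following Saraswat's observation that a partial closure operator is a total closure operator on a downward-closed domain, I would confirm that ``defined on a downward-closed subset'' is compatible with each iteration property's ``if $X^*$ is defined'' guard, and in particular that the monotonicity hypothesis (both $X^*$ and $Y^*$ defined, $X\sqsubseteq Y$) is handled by the up-set reversal argument: $X\sqsubseteq Y$ gives ${Y\uparrow}\subseteq{X\uparrow}$, so $Y^*\in R\cap{Y\uparrow}\subseteq R\cap{X\uparrow}$, whence $X^*=\mathsf{min}(R\cap{X\uparrow})\sqsubseteq Y^*$. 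The one genuinely load-bearing reading is that $\mathsf{SubIdem}(Q)$ membership asserts \emph{definedness} of $a\rhd a$ (not merely an inequality that holds vacuously), since that is what turns foldable into a definedness-plus-bound claim rather than an empty statement.
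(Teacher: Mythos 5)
Your proposal is correct and follows essentially the same route as the paper's proof: derive the partial-closure-operator status (hence extensiveness, idempotence, monotonicity) from Proposition~\ref{prop:closure_iterable} together with the range-to-operator construction, then read off foldability from membership in $\mathsf{SubIdem}(Q)$ and possible-emptiness from membership in ${I\uparrow}$. Your extra care about the definedness guards under the partial reading is a welcome elaboration but does not change the argument.
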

\begin{proof}
That this is well-defined (i.e., the \textsf{min} in the definition exists when its argument is non-empty) and that it is a partial closure operator follows immediately from Proposition \ref{prop:closure_iterable} and the partial closure operator analogue of Blyth's construction~\cite[Theorem 1.8]{blyth2006lattices} of a closure operator from its range.
The mapping above is extensive, idempotent, and monotone because it is a partial closure operator~\cite{blyth2006lattices,birkhoff,saraswat1991}.
It is foldable because the range of the closure operator consists of only subidempotent elements, and foldability is exactly subidempotence.
It is possibly-empty because the closure subset is constructed using only elements of $I$'s principal up-set ($I\uparrow$).
\end{proof}
To add a bit of additional intuition surrounding the partial nature of this operator, we can appeal again to \citet{saraswat1991}'s observation that partial closure operators are (total) closure operators on downward closed subsets.  Let $D=\{ x \mid {x\uparrow}\cap\mathsf{Iter}(Q) \neq \emptyset \}$, and notice that $D$ is downward-closed: if $a\sqsubseteq b$ and $b\in D$, then since ${b\uparrow}\subseteq {a\uparrow}$ and there exist subidempotent elements above $b\sqcup I$, there also exist subidempotent elements greater than $a\sqcup I$, and $a\in D$. Moreover, notice that principal iterability implies a minimum exists for each such set, so $(-)^*$ defines a total closure operator on $D$, which is a downward-closed subset of $Q$: $D$, which is the set of subiterable elements, is the domain of definition for $(-)^*$ in a principally iterable effect quantale.
The characterization in terms of total closure on a downward-closed subset implies that the iteration operators satisfy ``downward-closed definedness'' and ``upward-closed undefinedness'' analogues of the Corollaries \ref{coro:undef_mono}, \ref{coro:def_anti}, \ref{coro:join_undef_mono}, and \ref{coro:join_def_anti} for $\rhd$ and $\sqcup$.

Not all effect quantales are principally iterable (see Section \ref{sec:nonprincipal}), but all principally iterable effect quantales induce an iteration operator via Proposition \ref{prop:free_closure_operator}.
We have yet to encounter an effect quantale with a known interpretation (i.e., which is meaningful for programs) that is not principally iterable; all concrete sequential effect systems in the literature we are aware of correspond to principally iterable effect quantales, so this property appears sensible.  
Later (Section \ref{sec:precise_iteration}) we show that the definition above gives the most precise possible notion of iteration for principally iterable effect quantales (though other less precise iteration operators may be useful for computational reasons like efficiency or inference, and useful iteration operators may exist for non-principally-iterable effect quantales).
  So in
practice these requirements to induce a closure-operator-based iteration appear
unproblematic.

In Section \ref{sec:iter_examples} we show that a number of specific sequential effect systems from the literature have principally iterable effect quantales.  But while we cover a range of examples, it would be preferable to have at least some guarantees about certain \emph{classes} of effect quantales being principally iterable.
Here we give two broad classes of effect quantales which are all principally iterable and together contain all of our examples: finite effect quantales, and effect quantales where the elements above $I$ ($I\uparrow$) have all non-empty meets.

For the first, the following lemma will be useful:
\begin{lemma}[Finite Powers of Subidempotent Elements above Unit Exist]
    \label{lem:finpow_subidem_exist}
    For any subidempotent element $x$ greater than $I$ and natural number $n$, $x^n$ is defined and $I\sqsubseteq x^n\sqsubseteq x$.
\end{lemma}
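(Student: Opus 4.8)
The plan is to proceed by induction on $n$, taking $n \geq 1$ (note that for $n = 0$ we would have $x^0 = I$ and the claim $I \sqsubseteq x$ need not hold for an arbitrary subidempotent $x$, so the statement is really about positive powers). I would use the recursive reading $x^{n+1} = x^n \rhd x$, justified by associativity of the partial monoid. The base case $n = 1$ is immediate: $x^1 = x$ is defined, and $x \sqsubseteq x$ by reflexivity of the induced order.

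For the inductive step, assume $x^n$ is defined and $x^n \sqsubseteq x$; I must establish that $x^{n+1} = x^n \rhd x$ is both \emph{defined} and \emph{bounded above} by $x$. Definedness is the delicate point, since $\rhd$ is partial. Here I would exploit subidempotence directly: the assertion $x \rhd x \sqsubseteq x$ presupposes that $x \rhd x$ is defined, because an inequality $a \sqsubseteq b$ unfolds to $a \sqcup b = b$, which requires $a$ to denote an actual element. Given that $x \rhd x$ is defined, together with $x^n \sqsubseteq x$ (the inductive hypothesis) and $x \sqsubseteq x$, Corollary~\ref{coro:def_anti} (Antitone $\rhd$ Definedness) yields that $x^n \rhd x$ is defined. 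For the bound, I would apply Proposition~\ref{prop:isotonicity} ($\rhd$ Monotonicity) to $x^n \sqsubseteq x$ and $x \sqsubseteq x$ to get $x^n \rhd x \sqsubseteq x \rhd x$ (both sides defined, as just argued), then chain with subidempotence $x \rhd x \sqsubseteq x$ and transitivity of $\sqsubseteq$ to conclude $x^{n+1} \sqsubseteq x$, completing the induction.

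The only genuinely non-routine step is the definedness argument: in a total monoid $x^n \rhd x$ would be defined for free, but in the partial setting it must be justified, and the antitone-definedness corollary—with subidempotence supplying a defined ``witness'' product $x \rhd x$ sitting above $x^n \rhd x$ in the order—is exactly the right tool. Everything else reduces to reflexivity, monotonicity, and transitivity of the induced partial order.
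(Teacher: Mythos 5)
Your proof is correct and follows essentially the same route as the paper's: induction on $n$, with Corollary~\ref{coro:def_anti} (via the defined witness $x\rhd x$) supplying definedness of $x^{n}\rhd x$ and Proposition~\ref{prop:isotonicity} plus subidempotence and transitivity supplying the bound. Your observation about $n=0$ is a genuine (if minor) catch: the lemma as stated quantifies over all naturals, yet $x^0=I\sqsubseteq x$ fails for subidempotents not above unit, and the paper's own proof silently skips this; restricting to $n\geq 1$ as you do (or reading the lemma as only asserting definedness at $n=0$) is the right fix, and is harmless since the lemma is only ever applied to elements above $I$.
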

\begin{proof}
    By induction on $n$. For $n=0$, $x^0=I=x$, and by assumption $I\sqsubseteq x$.
    For $n>0$: by the inductive hypothesis $x^{n-1}$ is defined and $I\sqsubseteq x^{n-1}\sqsubseteq x$.
    Then since $x\rhd x\sqsubseteq x$ (it is subidempotent), by Corollary \ref{coro:def_anti}, since $x\rhd x$ is defined, $x^{n-1}\rhd x = x^n$ is defined, and by Proposition \ref{prop:isotonicity} 
    $I\rhd I\sqsubseteq x^{n-1}\rhd x\sqsubseteq x\rhd x\sqsubseteq x$.
\end{proof}
Note that this lemma applies to all effect quantales, without assuming principal iterability.

\begin{proposition}[Finite Effect Quantales are Principally Iterable]
    \label{prop:finite_iterable}
    If $Q$ is finite, then for any $x\in Q$, ${x\uparrow}\cap{\mathsf{Iter}(Q)}$ has a unique least element or is empty.
\end{proposition}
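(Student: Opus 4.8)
The goal is to show that $S := {x\uparrow}\cap{I\uparrow}\cap\mathsf{SubIdem}(Q)$ is either empty or has a least element. If $S$ is empty we are finished, so assume $S\neq\emptyset$. Since $Q$ is finite, $S$ is a finite nonempty subset of the poset $(E,\sqsubseteq)$ and therefore has at least one minimal element. The plan is to reduce the claim to \emph{uniqueness} of this minimal element: in a finite poset a nonempty subset has a least element exactly when it is downward-directed, i.e.\ when any two of its members admit a common lower bound that again lies in the subset. So it would suffice to show that for $a,b\in S$ there is some $c\in S$ with $c\sqsubseteq a$ and $c\sqsubseteq b$, whence minimality of $a$ and $b$ forces $a=c=b$.

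To manufacture such witnesses I would exploit that the fixed points of the inflationary map $g(y)=y\sqcup(y\rhd y)$ are exactly the subidempotent elements (since $y\rhd y\sqsubseteq y$ is equivalent to $g(y)=y$), together with Lemma \ref{lem:finpow_subidem_exist}: in a finite monoid every element has an idempotent power, and the lemma guarantees those powers are defined and bounded above by the element. The intended mechanism is that, starting from an element below $a$ and $b$ but still above $x$ and $I$, iterating $g$ (or passing to a suitably high power) produces an ascending chain that, by finiteness, stabilizes at a subidempotent fixed point; monotonicity of $g$ together with $\rhd$-monotonicity (Proposition \ref{prop:isotonicity}) would then keep this fixed point below every subidempotent upper bound already in hand, placing it in $S$ below both $a$ and $b$. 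The same construction applied to the common lower structure of all of $S$ would directly exhibit $\min S$ as the least subidempotent above $x$ and $I$.

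The main obstacle is the \emph{partiality} of $\sqcup$ and $\rhd$: unlike in a complete lattice, a pair with a common upper (or lower) bound need not have a join (or meet), so neither the starting element of the climb nor the joins appearing in each application of $g$ are automatically defined, and I cannot simply take a meet of the subidempotents $a,b$ or a join of powers. I would discharge this using the definedness corollaries, whose contrapositives state that definedness of $\rhd$ and $\sqcup$ is downward-closed (Corollaries \ref{coro:def_anti} and \ref{coro:join_def_anti}): fixing a witness $s\in S$, from $s\rhd s\sqsubseteq s$ being defined one propagates definedness of the requisite operations down to the smaller elements produced along the chain, so that every $g$-step stays defined and every intermediate term remains bounded. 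The delicate point --- and where finiteness is truly essential --- is arguing that the resulting subidempotent is genuinely the \emph{least} element of $S$ rather than merely \emph{some} subidempotent upper bound of $x$ and $I$; here the finiteness of $Q$ (bounding the length of the ascending chain and guaranteeing idempotent powers in the relevant interval) must be combined with the monotonicity of $g$ to rule out the pathology of two incomparable minimal subidempotents with no common subidempotent lower bound above $x\sqcup I$.
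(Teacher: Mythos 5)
Your argument is correct, but it takes a genuinely different route from the paper's. The paper builds an explicit candidate from finite semigroup theory: when all powers of $x\sqcup I$ are defined, finiteness yields an $n$ with $(x\sqcup I)^{2n}=(x\sqcup I)^n$, and this idempotent power is shown to be least via $(x\sqcup I)^n\sqsubseteq q^n\sqsubseteq q$ for every subidempotent $q$ above $x$ and $I$; when some power is undefined, Corollary \ref{coro:undef_mono} together with Lemma \ref{lem:finpow_subidem_exist} forces the set $S$ to be empty. You instead iterate the inflationary map $g(y)=y\sqcup(y\rhd y)$ starting from $x\sqcup I$ (which is defined because any $s\in S$ is a common upper bound of $x$ and $I$ and $s\sqcup s$ is defined, by Corollary \ref{coro:join_def_anti}), bootstrapping all definedness from the assumed witness $s\in S$. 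This buys a cleaner treatment of partiality---your nonemptiness assumption eliminates the paper's case split on whether powers of $x\sqcup I$ are defined---at the cost of a less explicit description of the least element. Two observations would tighten your write-up. First, the detour through downward-directedness and minimal elements is unnecessary: the chain $x\sqcup I\sqsubseteq g(x\sqcup I)\sqsubseteq g^2(x\sqcup I)\sqsubseteq\cdots$ is canonical (it does not depend on which bound you pick), and the induction showing each term is defined and $\sqsubseteq q$ for an \emph{arbitrary} $q\in S$ --- using monotonicity of $g$, which requires Proposition \ref{prop:join-isotonicity} as well as Proposition \ref{prop:isotonicity}, plus the fact that $g(q)=q$ for subidempotent $q$ --- already makes the stabilization point a subidempotent above $x$ and $I$ that is a lower bound of all of $S$, hence $\min S$ directly. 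Second, for the same reason, the ``delicate point'' you flag at the end is not actually delicate: leastness is exactly the boundedness invariant you propagate along the chain, so there is no residual pathology of incomparable minimal subidempotents left to rule out.
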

\begin{proof}

    The basic idea is to adapt the well-known fact that in finite semigroups every element has a strictly idempotent power~\cite{almeida2009representation} to the partial monoid of an effect quantale, and then show that taking the least $n$ such that the $n$th power of a specific element is idempotent gives us the required partial closure subset.

    If $x\sqcup I$ is undefined, the set is empty. Otherwise, assume $x\sqcup I$ is defined.

    Next we split depending on whether or not all powers of $x\sqcup I$ are defined.\footnote{Note this can be determined. Since $Q$ is finite, consider the sequence $(x\sqcup I)^1,\ldots,(x\sqcup I)^{|Q|+1}$. If any element of this $|Q|+1$-length sequence is undefined, not all powers are defined. If all $|Q|+1$ elements are defined, then it reveals a cycle in repeated iterations of $(x\sqcup I)$ (since the sequence is longer than the number of elements in $Q$), and further sequencing with $(x\sqcup I)$ will repeat earlier elements of the sequence.}
    If there is some $n$ for which $(x\sqcup I)^n$ is undefined, then then $(x\sqcup I)$ is not subiterable and therefore the set is empty --- if $(x\sqcup I)$ were less than any iterable element $q\in\mathsf{Iter}(Q)$, by Proposition \ref{prop:finite_iterable} $q^{n}$ is defined, and since definedness is antitone (Corollary \ref{coro:def_anti}), that would imply $(x\sqcup I)^{n}$ should be defined, yielding a contradiction.

    Otherwise all finite powers of $x\sqcup I$ are defined. In this case, we show there is a subidempotent element obtained as a power of $x\sqcup I$.

Since the $Q$ itself is finite (and all powers of $(x\sqcup I)$ are defined) the powers of $(x\sqcup I)$ form a finite semigroup, so by standard results on finite semigroups~\cite{almeida2009representation,moore1902definition,clifford1961algebraic}, there exists some least $n>0$ such that $(x\sqcup I)^{2n}=(x\sqcup I)^n$ (since only finitely many elements can be generated from $x\sqcup I$, there must be some power of $x\sqcup I$ which is strictly idempotent, and since the naturals are totally ordered there is a least such power).  We show this is the least subidempotent element above $x\sqcup I$. For any subidempotent $q$ greater than both $x$ and $I$ --- i.e., $(x\sqcup I)\sqsubseteq q$ --- monotonicity of $\rhd$ establishes that $(x\sqcup I)^n\sqsubseteq q^n$.  Because $q$ is subidempotent and greater than $I$, $q^n\sqsubseteq q$ (Lemma \ref{lem:finpow_subidem_exist}), so transitively $(x\sqcup I)^n\sqsubseteq q$.  
    Since this is true for all subidempotents greater than both $x$ and $I$, and $\sqsubseteq$ is a partial order, $(x\sqcup I)^n$ is necessarily the unique least subidempotent greater than both. This means every subiterable element has a unique least iterable element greater than it.
\end{proof}

\begin{proposition}[Effect Quantales with Non-empty Meets Above Unit are Principally Iterable]
    \label{prop:complete_iterable}
    If all non-empty meets of elements in ${I\uparrow}$ are defined, then for any $x\in Q$, ${x\uparrow}\cap{\iter{Q}}$ is empty or has a unique least element.
\end{proposition}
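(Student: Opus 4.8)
The plan is to exhibit, for each $x$ with ${x\uparrow}\cap({I\uparrow}\cap\mathsf{SubIdem}(Q))$ non-empty, a specific element of that set and show it is the least. The natural candidate is the meet of the entire set. First I would dispense with the empty case: if $x\sqcup I$ is undefined, or if no subidempotent element lies above $x\sqcup I$, then the intersection is empty and there is nothing to prove. So assume the set $S = {x\uparrow}\cap{I\uparrow}\cap\mathsf{SubIdem}(Q)$ is non-empty. Every element of $S$ lies in ${I\uparrow}$, so $S$ is a non-empty subset of ${I\uparrow}$, and by hypothesis its meet $m = \bigwedge S$ exists (as an element of $Q$; I would note it also lies in ${I\uparrow}$ since each member does, so $I\sqsubseteq m$).

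The crux is to verify that this meet $m$ actually belongs to $S$, i.e.\ that it is itself subidempotent and lies above $x$; once that is known, $m$ is automatically the least element of $S$ by the defining property of the meet (it is a lower bound in $S$ and below every member). That $x\sqsubseteq m$ is immediate, since every element of $S$ is above $x$, so $x$ is a lower bound and hence $x\sqsubseteq\bigwedge S = m$. The genuinely delicate step is \textbf{subidempotence of the meet}: I must show $m\rhd m\sqsubseteq m$. The idea is that for each $q\in S$ we have $m\sqsubseteq q$, so by $\rhd$-monotonicity (Proposition~\ref{prop:isotonicity}) $m\rhd m\sqsubseteq q\rhd q$, and since $q$ is subidempotent $q\rhd q\sqsubseteq q$; transitively $m\rhd m\sqsubseteq q$ for every $q\in S$. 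Thus $m\rhd m$ is a lower bound of $S$, and so $m\rhd m\sqsubseteq\bigwedge S = m$, which is exactly subidempotence.

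The main obstacle I anticipate is \textbf{definedness of $m\rhd m$}, which is a prerequisite for even invoking monotonicity to derive $m\rhd m\sqsubseteq q\rhd q$. Here the partiality of $\rhd$ bites: I cannot blindly write $m\rhd m$ without knowing it is defined. The rescue is Corollary~\ref{coro:def_anti} (antitone definedness): since $S$ is non-empty, pick any $q\in S$; then $q\rhd q$ is defined (indeed $q\rhd q\sqsubseteq q$ holds, so in particular it is defined), and because $m\sqsubseteq q$ on both arguments, Corollary~\ref{coro:def_anti} gives that $m\rhd m$ is defined as well. With definedness secured, the monotonicity chain above goes through. A further subtlety worth flagging is that the hypothesis guarantees meets only for \emph{non-empty} subsets of ${I\uparrow}$, so I must keep the non-emptiness of $S$ explicit throughout; this is precisely why the empty-intersection case is handled separately at the start rather than folded in.

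Assembling these pieces: $m=\bigwedge S$ exists by the non-empty-meet hypothesis, lies in ${I\uparrow}$ and above $x$, and is subidempotent by the argument above (using Corollary~\ref{coro:def_anti} for definedness and Proposition~\ref{prop:isotonicity} for the ordering), hence $m\in S$; and being a lower bound of $S$ that also belongs to $S$, it is the least element of $S$. This establishes that ${x\uparrow}\cap({I\uparrow}\cap\mathsf{SubIdem}(Q))$ is empty or has a least element for every $x$, which is exactly the condition for $Q$ to be laxly iterable.
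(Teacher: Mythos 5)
Your proof is correct and matches the paper's argument in all essentials: both take the meet $m=\bigwedge S$ of the non-empty intersection (which exists by the non-empty-meets hypothesis) and use $\rhd$-monotonicity together with antitone definedness (Corollary~\ref{coro:def_anti}) to conclude that $m\rhd m$ is defined and is a lower bound of $S$, hence $m\rhd m\sqsubseteq m$. The only difference is presentational — you argue directly that $m$ belongs to $S$ and is therefore its least element, while the paper assumes no least element exists and derives a contradiction from the same chain $(I\sqcup x)\rhd(I\sqcup x)\sqsubseteq q\rhd q\sqsubseteq r\rhd r\sqsubseteq r$; your direct formulation is, if anything, slightly cleaner.
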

\begin{proof}
    Consider some element $x$.
    In the case that the intersection is empty (i.e., $x$ is not subiterable), the result is direct.

    Otherwise $x$ is subiterable, so the intersection ${x\uparrow}\cap{\iter{Q}}$ is non-empty. Because all non-empty meets exist, 
    define $q=\bigsqcap({x\uparrow}\cap\iter{Q})$. 
    Because the infinimum was taken over a non-empty subset of ${x\uparrow}\cap{I\uparrow}=(x\sqcup I)\uparrow$, we have that $x\sqcup I$ is a lower bound on the infinimum: $x\sqcup I \sqsubseteq q$.
    If $q$ is itself subidempotent, then $q$ is the unique least element of ${x\uparrow}\cap{\iter{Q}}$.
    Otherwise, this means $q \sqsubset q\rhd q$ strictly.  In this case, consider
    for $r\in {x\uparrow}\cap\mathsf{Iter}(Q)$ (which exist because $x$ is subiterable):
        \[I\sqcup x \sqsubseteq q \sqsubseteq r\]
    and then because $r\rhd r$ is defined (it is subidempotent), sequential composition of any lesser elements is also defined (Corollary \ref{coro:def_anti}), giving:
        \[(I\sqcup x)\rhd(I\sqcup x) \sqsubseteq q \rhd q\sqsubseteq r\rhd r \sqsubseteq r\]
    Thus $q\rhd q$ is a lower bound of any $r\in {x\uparrow}\cap\mathsf{Iter}(Q)$, but \emph{strictly} greater than the purported greatest lower bound $q$, yielding a contradiction, implying that $q$ must be subidempotent.

    Because every element is either not subiterable, or subiterable with a unique least iterable element above it, $Q$ is principally iterable.
\end{proof}
Note that the proof above relies critically on all possibly-infinite (non-empty) meets above $I$ being defined, because one way to fail to have a minimum of the intersection in the closure operator definition (Proposition \ref{prop:free_closure_operator}) is for it to have an infinite descending chain above $I$\@.  In the case of an effect quantale satisfying the descending chain condition (i.e., no infinite descending chain exists), binary (hence finite non-empty) meets would be adequate (since the condition would imply only the existence of multiple incomparable minimal elements would need ruling out), for which a similar argument to that above could be made.

Proposition \ref{prop:finite_iterable} implies that the atomicity (Definition \ref{def:atomicity}) and non-reentrant critical section (Definition \ref{def:crit-effects}) effects are principally iterable.
Proposition \ref{prop:complete_iterable} implies that the history effect (Definition \ref{def:histeq}), trace set (Definition \ref{def:traceseq}), finite trace effect (Definition \ref{def:fintraceeq}), Concurrent ML (Section \ref{sec:cml}), the lock-multiset (Definition \ref{def:locking})\footnote{Note that because there is no top element in $\mathcal{L}$, there is no meet of the empty set, so requiring only non-empty meets is key to applying Proposition \ref{prop:complete_iterable} to $\mathcal{L}$.}, deadlock freedom (Definition \ref{def:dlf_suenaga})\footnote{The meets are slightly non-obvious, but take the maximum lock level and intersection of the loop-invariant lock tracking sets (which therefore discards inconsistent assumptions about particular locks' levels).}, and must analysis (Definition \ref{def:must_effects}) effect quantales are all principally iterable.

These proofs tell us how to find $x^*$ in an effect quantale covered by the respective lemmas --- either trying iterated powers of $x\sqcup I$ until an idempotent power is found or taking the infinimum of ${x\uparrow}\cap{\iter{Q}}$.
In each effect quantale we are aware of, the specific behavior is fairly straightforward: see Section \ref{sec:iter_examples}.
The value of these proofs is in showing the iteration operators we give below (Section \ref{sec:iter_examples}) are not just serendipitously defined, but that our iteration construction has some claim to generality, and can tell us the construction works for certain effect quantales without immediately getting into the details.  

Before continuing with demonstrations of the derived iteration applied to the effect quantales from Section \ref{sec:modeling}, we first prove one additional property about iteration operators in effect quantales:
\begin{lemma}[Iterable Elements are Strictly Idempotent]
    \label{lem:iter_strict_idem}
    In any iterable effect quantale $Q$, every iterable element $x\in\iter{Q}$ is strictly idempotent: $x\rhd x = x$.
\end{lemma}
\begin{proof}
    Every iterable element $x\in\iter{Q}$ is both greater than unit ($I\sqsubseteq x$) and subidempotent ($x\rhd x\sqsubseteq x$). 
    However, applying a unit law and monotonicity of $\rhd$:
    $ x = x \rhd I \sqsubseteq x\rhd x $.
    Since $x\sqsubseteq x\rhd x$ (above) and $x\rhd x \sqsubseteq x$ (since $x$ is iterable) in a partial order, $x\rhd x = x$.
\end{proof}

This means that the effect quantale iteration axioms imply a strengthening of the foldable property, to $x^*\rhd x^*=x^*$ whenever $x^*$ is defined.

\subsection{Iterating Concrete Effects}
\label{sec:iter_examples}
We briefly compare our derived iteration operation to those previously proposed for specific effect systems in the literature. Generally the induced iteration either exactly matches that from prior concrete effect systems or matches intuition for what an iteration operator \emph{should} do for a given system.

\begin{example}[Iteration for Atomicity]
The atomicity quantale $\mathcal{A}$ (Definition \ref{def:atomicity}) is principally iterable, so principal iteration (Proposition \ref{prop:free_closure_operator}) models iteration in that quantale.  The result is an operator that is the identity everywhere except for the atomic effect $A$, which is lifted to $\top$ (i.e., no longer atomic) when repeated.
This is precisely the manual definition Flanagan and Qadeer gave for iteration.  In Section \ref{sec:atomicity_quantale}, we claimed any trace fragment matching a particular regular expression evaluated as if it were physically atomic --- a property proven by Flanagan and Qadeer.  In terms of effect quantales, this is roughly equivalent to the claim that $(R^*\rhd B^*)^*\rhd A\rhd(B^*\rhd L^*)^*\sqsubseteq A$.  With our induced iteration operator matching Flanagan and Qadeer's original, this has a straightforward proof:
\[
\begin{array}{rll}
(R^*\rhd B^*)^*\rhd A\rhd(B^*\rhd L^*)^* & =(R\rhd B)^*\rhd A\rhd(B\rhd L)^* & \textrm{since $R^*=R$, $B^*=B$, $L^*=L$}\\
                                         & =R^*\rhd A\rhd L^* & \textrm{$B$ is unit for $\rhd$}\\
                                         & =R\rhd A\rhd L  & \textrm{since $R^*=R$, $B^*=B$}\\
                                         & = A & \textrm{by definition of $\rhd$}
\end{array}
\]
\end{example}

\begin{example}[Iteration for Traditional Commutative Effect Quantales]
For any bounded join semilattice with top, we have by Lemma \ref{lem:subsume} a corresponding effect quantale that reuses join for sequencing (and thus, $\bot$ for unit), making the sequencing operation commutative.  For purposes of iteration, this immediately makes all instances of this effect quantale principally iterable, as idempotency of join ($x\sqcup x=x$) makes all effects subidempotent.  The resulting iteration operator then simply joins its effect with unit --- and because the unit in these systems is a bottom element, this makes iteration the identity function. This exactly models the standard type rule for imperative loops in traditional commutative effect systems, where unit is a bottom element, so they can reuse the effect of the body as the effect of the loop:
\[
\inferrule{\Gamma\vdash e_1 : \mathsf{bool} \mid \chi_1\\\Gamma\vdash e_2 : \mathsf{unit} \mid \chi_2}{\Gamma\vdash \mathsf{while}(e_1)\{ e_2 \} : \mathsf{unit} \mid \chi_1\sqcup\chi_2 }
\]
For quantales where sequencing is merely the join operation on the semilattice, the above standard rule can be derived from our rule in Section \ref{sec:soundness} by simplifying the result effect:
\[
\chi_1\rhd(\chi_2\rhd\chi_1)^*=\chi_1\rhd(\chi_2\rhd\chi_1)=\chi_1\sqcup(\chi_2\sqcup\chi_1)=\chi_1\sqcup\chi_2
\]
\end{example}

\begin{example}[Loop Invariant Locksets]
For the lockset effect quantale $\mathcal{L}$, the subidempotent elements are all actions that do not acquire or release any locks --- those of the form $(a,a)$ for some multiset $a$, or the diagonal relation on the multisets.
Since $I$ is $(\emptyset,\emptyset)$ (which has no elements below it),
$ I\uparrow\cap\mathsf{SubIdem}(\mathcal{L})=I\uparrow\cap\{ (a,b) \mid a=b \}=\{ (a,b) \mid a=b \}$.
The subidempotent elements above unit have all non-empty meets defined (the minimum multiplicity in the input set for each lock), so by Proposition \ref{prop:complete_iterable} $\mathcal{L}$ is principally iterable.
The resulting closure operator is the identity on the subidempotent elements, and undefined elsewhere.  This is exactly what intuition suggests as correct --- the iterable elements are those that hold the same locks before and after each loop iteration, and attempts to repeat other actions an indeterminate number of times should be invalid.
\end{example}

\begin{example}[Iterating Trace Sets]
    \label{ex:iter_trace_sets}
    In trace sets, the induced iteration operator has the behavior
    \[(A,B)^*=\left((\bigcup_{i\in\mathbb{N}} A^i), \bigcup_{i\in\mathbb{N}} A^i\cdot B\right)\]
    That is, iteration takes effects with finite traces $A$ and infinite traces $B$ to an effect whose finite traces are the concatenation of any sequence of traces in $A$, and whose infinite effects are any repetition of finite traces followed by an infinite trace from $B$.  
    This is clearly the least idempotent element greater than $(A,B)$.
    Since $A^0=\{\epsilon\}$, $I\sqsubseteq(A,B)^*$ as well, making this effect quantale principally iterable.\footnote{Readers of the original paper~\cite{ecoop17} on effect quantales should note this example does \emph{not} satisfy the more restrictive criteria for iteration in that paper. Appendix \ref{apdx:comparison} articulates how this paper's notion of principally iterable generalizes the notion in the original paper.}
\end{example}

An attentive reader may be wondering why the infinite component of Example \ref{ex:iter_trace_sets} does not include $A^\omega$ --- an infinite number of executions of the finite behaviors, corresponding to the loop repeating for an infinite number of iterations, with each individual iteration having finite execution.  This example highlights that additional work is required to treat liveness properties of sequential effect systems in a general manner.  Our results are valid for safety properties, and therefore apply to \emph{finite prefixes} of executions (which may include finite prefixes of infinite executions).  An operator for infinite execution would require a separate notion of infinite iteration --- $\omega$-iteration, as appears in Kleene $\omega$-Algebras~\cite{cohen2000separation,esik2015continuous}. 
While a corresponding extension to the theory of sequential effect systems has clear semantic appeal, from the perspective of applying such effect systems to programs in a conventional programming language like Java, it becomes unclear which operator (finite or infinite) to use for the effect of iteration constructs; all static sequential effect systems from the literature that contain iteration (rather than fixed point) constructs treat only safety properties and use a finite notion of iteration for this reason.
Even those considering liveness properties in general require information outside the type system for this: \citet{Koskinen14LTR} assume an oracle for assigning liveness aspects of effects.

\begin{example}[Iterating History Effects]
    In history effects, the induced iteration operator has the behavior $h^*=\mu s\ldotp (\epsilon|(s;h)|(h;s))$.  Essentially, iterating an effect $h$ may proceed by doing nothing ($\epsilon=I$), or performing $h$ before or after repetition.  Technically only one of the uses of sequencing with $h$ is required as $\mu s\ldotp \epsilon|(s;h) \approx \mu s\ldotp \epsilon|(h;s)$, but this form makes it easier to see that folding holds.
\end{example}
\begin{example}[Iterating Tate's Crit Effects]
    In the \textsf{Crit} effect system given by \citet{tate13} (recalled in Definition \ref{def:crit-effects}), the subidempotent elements are \textsf{critical}, \textsf{entrant}, and $\varepsilon$.  Iterating any of these effects gives the same effect, while iterating \textsf{locking} or \textsf{unlocking} is undefined.
\end{example}
\begin{example}[Iterating Suenaga's Deadlock Effects]
    In Suenaga's $\mathcal{DL}(L)$ (Definition \ref{def:dlf_suenaga}), the subidempotent elements correspond to loop-invariant locks as in our locking effect quantale.  Here, with the extra lock level information, these are all effects of the form $(X,L,X)$.  Since the only valid effects of this form are those where the maximum level of lock held in $X$ is strictly less than $L$, this corresponds to effects describing loop bodies with loop-invariant lock sets, where if the body acquires locks, the new locks are at levels $L$ or greater (in increasing level order), \emph{and those locks are then released} before the end of the body (since the locks held after are exactly those held initially).
\end{example}

\subsection{The Precision of Principal Iteration}
\label{sec:precise_iteration}
The iteration operator given by Proposition \ref{prop:free_closure_operator} is available for interesting classes of effect quantales (Propositions \ref{prop:finite_iterable} and \ref{prop:complete_iterable}), and gives intuitive results for known examples (Section \ref{sec:iter_examples}), including recovering previous manually-designed iteration operators.  
Here we characterize part of why this operator seems so well-behaved: when defined, it is the most precise closure operator satisfying the iteration properties.

\begin{proposition}[Principal Iteration on Principally Iterable Effect Quantales is Maximally Precise]
    \label{prop:maxprecise}
For a principally iterable effect quantale $Q$, the principal iteration operator $(-)^*$ on $Q$ (Proposition \ref{prop:free_closure_operator}) is the most precise operator satisfying the five iteration axioms, in the sense that any other iteration operator $f$ on $Q$ must return coarser results: $\forall x\ldotp f(x)~\textrm{defined}\Rightarrow x^*~\textrm{defined}\land x^*\sqsubseteq f(x)$
\end{proposition}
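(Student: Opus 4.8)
The plan is to show that the five iteration axioms force any competing operator $cl'$ to send each point into exactly the set whose minimum defines $x^*$, so that $x^*$ must be defined and must lie below $cl'(x)$ whenever the latter is defined. Concretely, I would fix an arbitrary iteration operator $cl'$ (i.e.\ one that is extensive, idempotent, monotone, foldable, and possibly-empty) together with a point $x$ for which $cl'(x)$ is defined, and then prove the two conjuncts of the goal in turn.

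First I would establish the membership $cl'(x) \in {x\uparrow} \cap {I\uparrow} \cap \mathsf{SubIdem}(Q)$. Extensiveness of $cl'$ gives $x \sqsubseteq cl'(x)$, so $cl'(x) \in {x\uparrow}$. The possibly-empty axiom gives $I \sqsubseteq cl'(x)$, so $cl'(x) \in {I\uparrow}$. The foldable axiom gives that $cl'(x)\rhd cl'(x)$ is defined and $cl'(x)\rhd cl'(x) \sqsubseteq cl'(x)$, which is precisely the statement that $cl'(x)$ is subidempotent, so $cl'(x) \in \mathsf{SubIdem}(Q)$. Note that monotonicity and idempotence of $cl'$ are not even needed for this direction; only three of the five axioms do any work.

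From this membership the two conjuncts follow immediately. Since $cl'(x)$ witnesses that ${x\uparrow} \cap {I\uparrow} \cap \mathsf{SubIdem}(Q)$ is non-empty, lax iterability of $Q$ guarantees this set has a least element; by the definition of the free closure operator (Proposition~\ref{prop:free_closure_operator}) that least element is exactly $x^*$, so $x^*$ is defined. And because $x^*$ is the \emph{least} element of the set while $cl'(x)$ is merely \emph{some} element of it, we conclude $x^* \sqsubseteq cl'(x)$, as required.

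The argument is short because the iteration axioms were chosen precisely to pin down membership in the defining set; the only thing that requires care is confirming the exact correspondence between the stated axioms and the three conditions cutting out ${x\uparrow} \cap {I\uparrow} \cap \mathsf{SubIdem}(Q)$ — in particular that the foldable axiom, instantiated at $cl'(x)$ rather than at a generic element, coincides with subidempotence of $cl'(x)$. I anticipate no genuine obstacle: lax iterability is exactly the hypothesis that upgrades ``non-empty'' to ``has a least element,'' so no separate existence argument for the minimum is needed.
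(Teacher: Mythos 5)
Your proof is correct and uses essentially the same argument as the paper: the extensive, possibly-empty, and foldable axioms place $cl'(x)$ in ${x\uparrow}\cap{I\uparrow}\cap\mathsf{SubIdem}(Q)$, so lax iterability yields a least element of that set, which is $x^*$ by construction. The paper merely packages the same facts as two proofs by contradiction (one for definedness, one for minimality), whereas you argue directly; the mathematical content is identical.
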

\begin{proof}
    Assume an additional closure operator $f$ on $Q$ that is extensive, idempotent, monotone, foldable, and possibly-empty.
    We will separately prove that $(-)^*$ is defined everywhere $f$ is, and then that where defined $f$ is less precise than $(-)^*$. 

    All valid iteration operators must be undefined on elements which are not subiterable.  $(-)^*$ is defined on all subiterable elements, which is the largest possible domain of definition for iteration operators, so if $f$ is defined for some $x\in Q$ then $(-)^*$ must be defined there as well.

    Consider some $x$ for which $f(x)$ is defined. The axioms for iteration imply that $x\sqcup I\sqsubseteq f(x)$, and that $f(x)$ is subidempotent --- i.e., an iterable element greater than $x$. By the reasoning above, $x^*$ is defined, and by construction is the unique \emph{least} iterable element greater than $x$, so $x^*\sqsubseteq f(x)$.
\end{proof}

Proposition \ref{prop:maxprecise}, along with the fact that principal iteration recovers the manual constructions of prior work, implies that those systems used the most precise iteration operators possible.

\subsection{A Non-Principally Iterable Effect Quantale}
\label{sec:nonprincipal}
We know that not all effect quantales are principally iterable because the axioms do not imply it, but it is illustrative (and later useful) to exhibit an effect quantale that is not principally iterable.

\begin{example}[A Non-Principally Iterable Effect Quantale]
    \label{ex:non-principal}
    Define $Q_{NP}=(Q,\sqcup,\rhd,I)$ by:\\
    \begin{minipage}{0.7\textwidth}
        \begin{itemize}
        \item $Q = \{ a_i \mid i\in\mathbb{N}\}\uplus\{ b_i \mid i\in\mathbb{N}\}$
        \item $a_i\sqcup a_j = a_{\mathsf{max}(i,j)}$, $b_i\sqcup b_j=b_{\mathsf{min}(i,j)}$, $a_i\sqcup b_j = b_j$. Put differently, the partial order is the total order consisting of two copies of the natural numbers, with one ordered by $\le$ (the $a_i$), the other ordered in reverse ($b_i$), with the reverse order copy greater than the standard order, per the abbreviated Hasse diagram to the right.
        \item $a_i\rhd a_j = a_{i+j}$, $a_i\rhd b_j = b_j = b_j \rhd a_i$, and $b_i\rhd b_j = b_{\mathsf{min}(i,j)}$
        \item $I=a_0$
        \end{itemize}
    \end{minipage}
    \begin{minipage}{0.2\textwidth}
        \[
            \begin{array}{c}
                b_0 \\ \mid \\ b_1 \\ \mid \\ \vdots \\ \mid \\ a_1 \\ \mid \\ a_0
            \end{array}
        \]
    \end{minipage}
\end{example}
In $Q_{NP}$, the subidempotent elements are $\{ b_i \mid i\in\mathbb{N}\}\uplus\{a_0\}$.  However, $Q_{NP}$ is not principally iterable.  \emph{All} of $Q_{NP}$ is subiterable, but elements of the subset $P=\{ a_i \mid i > 0 \}$ lack \emph{least} subiterable elements above them. The reason is that any iterable element above an element of $P$ must be $b_i$ for some $i$, however for any choice of $i$, $b_{i+1}$ is a \emph{lesser} iterable element still greater than all of $P$. 
In particular, $Q_{NP}$ has non-iterable elements ($P$) below an infinite strictly descending chain in \iter{Q_{NP}}. Any effect quantale with a similar structure embedded within it will also fail to be principally iterable.

However, $Q_{NP}$ has an infinite number of valid effect quantale iteration operators:
\begin{proposition}[Infinite Set of Iteration Operators for $Q_{NP}$]
    \label{prop:desc_chain_iter}
    For any $n\in\mathbb{N}$, the following definition of $(-)^{*n}$ on $Q_{NP}$ satisfies the required iteration axioms:
    \[
        \begin{array}{r@{\;=\;}l@{\quad}l}
            a_0^{*n} & a_0 &\\
            a_i^{*n} & b_n & \textrm{for $i>0$}\\
            b_i^{*n} & b_n & \textrm{for $i > n$}\\
            b_i^{*n} & b_i & \textrm{for $i\le n$}
        \end{array}
    \]
\end{proposition}
\begin{proof}
    The operator essentially chooses some $b_n$, and maps all elements below $b_n$ to $b_n$ (except $a_0$), and is the identity mapping above that point.
    This is clearly extensive, monotone, possibly-empty, idempotent, and since every element in the domain is strictly idempotent, also foldable.
\end{proof}
Notice that this describes an infinite descending chain of increasingly precise iteration operators: $\forall n\ldotp (-)^{*n+1}\le (-)^{*n}$.
These are not the only valid iteration operators: any subset of $\{ b_i \mid i\in\mathbb{N}\}$ \emph{possessing a least element} yields a partial closure subset, to which the standard closure operator construction applies: $(-)^{*n}$ is simply the result of applying this construction to the partial closure subset $\{a_0\}\cup\{b_i \mid i \le n\}$.

\section{Effect Polymorphism with Partial Effect Composition}
\label{sec:effpoly}
In our upcoming type safety proof, we include parametric polymorphism over types and effects, using a variant of System F extended with separate kinds for datatypes and effects --- an idea reaching back to the earliest polymorphic effect system~\cite{lucassen88}. Here we recall why we care about effect polymorphism, and highlight some subtleties in integrating it into a calculus with partial effect operators, before developing some machinery to add variables to any given effect quantale.

Effect polymorphism is an essential aspect of code reuse in static
effect systems~\cite{lucassen88,talpin1992polymorphic,rytz12,ecoop13}.  It permits writing functions
whose effects depend on the effect of higher-order arguments.  For example, consider the atomicity
of fully applying the annotated term
\[
    \mathcal{T} = \lambda\ell:\mathsf{lock}\ldotp\Lambda\gamma::\mathcal{E}\ldotp\lambda
    f:\mathsf{unit}\overset{\gamma}{\rightarrow}\mathsf{unit}\ldotp 
        \left(\mathsf{acquire}\;\ell; f(); \mathsf{release}\;\ell \right)
\]
(where $\mathcal{E}$ is the kind of effects).
The atomicity of a full application of term $\mathcal{T}$ (i.e., application to a choice of effect
and appropriately typed function term) depends on the (latent) atomicity of $f$.  For the moment,
assume we track only atomicities (not lock ownership).  
The type of $\mathcal{T}$ is intuitively
\[
\Pi\ell:\mathsf{lock}\overset{B}{\rightarrow}
    \forall\gamma::\mathcal{E}\overset{B}{\rightarrow}
        (\mathsf{unit}\overset{\gamma}{\rightarrow}\mathsf{unit})
            \xrightarrow{R\rhd\gamma\rhd L} \mathsf{unit}
\]
though we have not yet shown how effect variables (i.e., $\gamma$) can be used in effect expressions.
If $f_1$ performs
only local computation, its latent effect can have static atomicity $B$, making the atomicity of
$\mathcal{T}~\ell~[B]\;f_1$ atomic ($A$).  If $f_2$ acquires \emph{and releases} locks multiple times, its static effect
must be $\top$ (valid but non-atomic), making the atomicity of $\mathcal{T}~\ell~[\top]\;f_2$
also valid but non-atomic.

This sort of extension is straightforward when operators on effects are total, as in prior systems with effect polymorphism.  However, effect quantale operators are \emph{partial} in general, so not all substitutions of an effect for an effect variable will be meaningful, because some instantiations of a variable will lead to sequencing or joining two effects whose result is undefined. 

To support effects mentioning effect variables but otherwise drawing on base effects from a given effect quantale $Q$, we must provide a way to extend any effect quantale with a given set of effect variables.  We write $\Xi$ for a set of distinct effect variables $\alpha$, $\beta$, etc., which we can generally take to be the set of effect variables used in a program of interest.

\newcommand{\unsqcup}{\mathbin{\underline{\sqcup}}}

To this end, we take the syntactic language of effect quantale expressions over iterable $Q$ and $\Xi$:
\[
    S ::= Q \mid \Xi \mid S \rhd S \mid S \sqcup S \mid S^*
\]

This provides us with a syntax for writing effects combining elements of $Q$ with undetermined effect variables.  
Note the star superscript here is taken as concrete syntax, not an abuse of the Kleene star in defining the grammar. Recall that all effect quantales are iterable if iteration is taken to always be undefined, so requiring $Q$ to be iterable is not a limitation.  
We use $\chi$ to range over elements of $S$, and specify the grammar above for a particular $Q$ as $S_Q$.
We consider $\Xi$ fixed for any given development.
Lacking binders within syntactic effects, na\"ive syntactic substitution of effect variables is adequate for our needs.  We write the substitution of an element $\chi'$ of $S_Q$ for an effect variable $\alpha\in\Xi$ into some effect $\chi$ as $\chi[\chi'/\alpha]$.
On occasion we will need to distinguish the syntactic operators from those in $Q$, in which case the operators from $Q$ will carry subscripts, or an expression using (only) $Q$'s operators will be postfixed with $\in Q$.

Our formalization will use these \emph{syntactic} effects with variables in the type-and-effect judgment. 
Sequencing and join of effects in the type system will then simply be \emph{syntactic} join and sequencing --- building terms of $S_Q$.  However, this alone will yield large and unwieldy effects when type-checking programs, and loses simplifications from $Q$, so we permit effects to be simplified according to the equivalence in Figure \ref{fig:approx} (or more precisely, the type system will include effect subsumption, which includes equivalence). The rules encode the base effect quantale axioms, plus the iteration axioms (though using Lemma \ref{lem:iter_strict_idem}'s result instead of the foldable axiom), and also reflects simplifications from $Q$ into equivalences on syntactic effects.
From these, we can derive subeffecting on syntactic effects as $\chi\sqsubseteq\chi'\Leftrightarrow\chi\sqcup\chi'\equiv\chi'$.

\begin{figure}
    \begin{mathpar}
    \fbox{$ \chi \equiv \chi$}
    \and
    \inferrule[Eq-Refl]{  }{ \chi \equiv \chi}
    \and
    \inferrule[Eq-Sym]{\chi\equiv\gamma}{\gamma\equiv\chi}
    \and
    \inferrule[Eq-$\rhd$-Cong]{ \chi \equiv \chi' \\  \gamma \equiv \gamma' }{ \chi\rhd\gamma \equiv \chi'\rhd\gamma'}
    \and
    \inferrule[Eq-$*$-Cong]{ \chi \equiv \chi'}{ \chi^* \equiv \chi'^*}
    \and
    \inferrule[Eq-$\sqcup$-Cong]{ \chi \equiv \chi' \\  \gamma \equiv \gamma' }{ \chi\sqcup\gamma \equiv \chi'\sqcup\gamma'}
    \and
    \inferrule[Eq-$\sqcup$-Idem]{ }{ \chi\sqcup\chi\equiv\chi}
    \and
    \inferrule[Eq-Comm]{ }{ \gamma\sqcup\chi \equiv \chi\sqcup\gamma}
    \and
    \inferrule[Eq-$\rhd$-Assoc]{ }{ \chi\rhd(\chi'\rhd\chi'') \equiv (\chi\rhd\chi')\rhd\chi''}
    \and
    \inferrule[Eq-$\sqcup$-Assoc]{ }{ \chi\sqcup(\chi'\sqcup\chi'') \equiv (\chi\sqcup\chi')\sqcup\chi''}
    \and
    \inferrule[Eq-UnitL]{ }{ I\rhd\chi\equiv\chi}
    \and
    \inferrule[Eq-UnitR]{ }{ \chi\rhd I\equiv\chi}
    \and
    \inferrule[Eq-$\rhd$-Simp]{ q\rhd_Q q'= q''}{ q\rhd q'\equiv q''}
    \and
    \inferrule[Eq-$\sqcup$-Simp]{ q\sqcup_Q q'= q''}{ q\sqcup q'\equiv q''}
    \and
    \inferrule[Eq-$*$-Simp]{ q^*= q'\in Q}{  q^*\equiv q'}
    \and
    \inferrule[Eq-DistL]{ }{ \chi\rhd(\chi'\sqcup\chi'') \equiv (\chi\rhd\chi')\sqcup(\chi\rhd\chi'')}
    \and
    \inferrule[Eq-DistR]{ }{ (\chi\sqcup\chi')\rhd\chi'' \equiv (\chi\rhd\chi'')\sqcup(\chi'\rhd\chi'')}
    \and
    \inferrule[Eq-$*$-Idemp]{ }{(\chi^*)^*\equiv\chi^*}
    \and
    \inferrule[Eq-$*$-Extensive]{ }{ \chi\sqcup(\chi^*)\equiv\chi^*}
    \and
    \inferrule[Eq-$*$-Mono]{ \chi\sqcup\gamma\equiv\gamma}{\chi^*\sqcup\gamma^*\equiv\gamma^*}
    \and
    \inferrule[Eq-$*$-Empty]{ }{ I\sqcup\chi^*\equiv\chi^*}
    \and
    \inferrule[Eq-$*$-Fold]{ }{ (\chi^*\rhd\chi^*)\sqcup\chi^*\equiv\chi^*}
    \and
    \inferrule[Eq-Trans]{\chi\equiv\chi' \\ \chi'\equiv\chi''}{\chi\equiv\chi''}
    \end{mathpar}
    \caption{The equivalence relation $\equiv$ on elements of $S_Q$.}
    \label{fig:approx}
\end{figure}

\paragraph{Invalid Effects}
This syntax and equivalence, however, risks losing one advantage of effect quantales, which is that they directly model the cases where effect combinations or operations are invalid. 
Using syntactic effects without checking for validity in $Q$, and using the \emph{syntactic} join and sequencing operators hides partiality: even if $a\rhd_Q b$ is undefined in $Q$, $a\rhd b$ is an element of $S_Q$.
It may also be that for some $a$, $\alpha$, and $b$, there is \emph{no} substitution for $\alpha$ into $a\rhd\alpha\rhd b$ that will yield a valid combination in $Q$.
The syntactic addition of effect variables leads to four distinct classes of syntactic effects:
\begin{itemize}
    \item Trivially valid effects, which contain no effect variables and only perform syntactic operations corresponding to operations defined in the effect quantale --- where all operations are defined for their inputs.  Each of these is equivalent (via $\equiv$) to an element of the original effect quantale. We prove this in Lemma \ref{lem:closed_concrete_or_invalid}.
    \item Effects that are clearly invalid (undefined) in a given effect quantale (such as joining lock-acquiring and lock-releasing effects).  We call these effects \emph{trivially} invalid not because they are necessarily trivial, but because they are invalid according to the rules of the underlying effect quantale itself in a way apparent from the effect quantale alone --- i.e., those which contain a sequencing, join, or iteration on concrete effect arguments for which the corresponding semantic operation is undefined.
    \item Effects that are not trivially invalid, but would be invalid for any possible substitution of effect variables, but which are not equivalent to any trivially-invalid effect.
    Suenaga's deadlock freedom system (Definition \ref{def:dlf_suenaga}) admits this possibility: consider sequencing two effects acquiring a lock $x$ on either side of an effect variable, as in $(x\mapsto(3,\mathbb{1}), 3, x\mapsto(3,\mathsf{held}))\rhd\alpha\rhd(x\mapsto(1,\mathbb{1}),1,x\mapsto(1,\mathsf{held}))$. This effect is invalid, because $x$'s lock level changes, and effects are required to preserve levels.
    We call these \emph{opaquely} invalid, because they are definitely not valid, but checking this requires substantial knowledge of the effect quantale's internals (unlike trivially invalid effects, which require only applying the operators to see if they are defined).
    \item Effects that may be invalid, or not, depending on the instantiation of an effect variable, which we call \emph{possibly-valid}.  For example, the effect $(\emptyset,\{\ell\})\sqcup\alpha$, after substituting a concrete choice of effect for $\alpha$, may become invalid (e.g., substituting an effect that releases a lock), or may be valid (e.g., substituting another effect that acquires only $\ell$ exactly once). These are distinguished from opaquely invalid effects by the fact that there are some substitutions where the result would be valid. 
Trivially valid effects are a subset of possibly-valid effects.  
\end{itemize}

While we do not wish to ascribe programs invalid effects, removing invalid effects from this syntax by construction is complex.  Consider an alternative that makes syntactic effects that are trivially invalid impossible to represent: this makes substitution of effect variables a partial function, which in turn makes substitution of effects into terms partial, significantly complicating syntactic type safety.

Instead, we can make guarantees about syntactic effects relating to validity, particularly the trivially invalid effects.
It is clear they are invalid, but checking for them requires no specific knowledge of the specific effect quantale (unlike opaquely invalid effects) and the existence of possibly-valid effects already requires additional checks after instantiating effect variables.  Because we permit the formation of polymorphic types that cannot be properly instantiated with any argument (i.e., opaquely invalid effects), we informally refer to this approach as \emph{lazy} effect validation.  Because deriving a top-level effect for a closed program requires eventually instantiating all effect variables (our typing judgment will enforce this), any effect of a closed program is either non-trivial and closed (and therefore valid) or trivially invalid. We prove this after first giving a precise definition.

We can define trivially invalid effects formally using $\equiv$:
\begin{definition}[Trivially Invalid Effects]
    An element $\chi\in S_Q$ is \emph{trivially invalid} if there exists some $\chi'\in S_Q$ such that $\chi\equiv \chi'$ and:
    \begin{itemize}
        \item $\chi'$ contains as a subexpression $q \rhd q'$ where $q\rhd_Q q'$ is undefined in $Q$, or
        \item $\chi'$ contains as a subexpression $q \sqcup q'$ where $q\sqcup_Q q'$ is undefined in $Q$, or
        \item $\chi'$ contains as a subexpression $q^*$, where $q^*$ is undefined in $Q$.
    \end{itemize}
    The predicate that an effect is trivially invalid is $\mathsf{TriviallyInvalid}(-)$.
    For convenience, we define $\mathsf{NonTrivial}(X)=\neg\mathsf{TriviallyInvalid}(X)$.
\end{definition}
\textsf{TriviallyInvalid} and \textsf{NonTrivial} are by construction invariant under $\equiv$.

This is a decidable property of syntactic effects: by avoiding the use of possibly-expansive equivalence rules\footnote{\textsc{Eq-UnitL}, \textsc{Eq-UnitR}, \textsc{Eq-$*$-Idemp}, \textsc{Eq-$*$-Extensive}, \textsc{Eq-$*$-Fold}, \textsc{Eq-$*$-Empty}, \textsc{Eq-$\sqcup$-Idem}, \textsc{Eq-$\rhd$-Simp}, \textsc{Eq-$\sqcup$-Simp}, and \textsc{Eq-$*$-Simp}; the other rules either preserve the number of operators, or can only be applied a certain number of times depending on the shape of the expression (for the distributivity rules).} in a way that introduces additional operators, there are only finitely many syntactically equivalent effects.  If any of them contains a join, sequence, or iteration operation applied to only concrete inputs, where the corresponding operation in the effect quantale is undefined, it is trivially invalid, and therefore fails to be nontrivial.

Note that sequencing with, joining with, or iterating trivially invalid effects always produces a trivially invalid effect.

We will later use the kinding relation to restrict which effect variables are valid, which ensures that well-typed top-level programs contain no effect variables in their effects, and are therefore closed. 
Such variable-free effects either simplify to the equivalence class of a ground effect from $Q$, or are trivially invalid:
\begin{lemma}[Closed Syntactic Effects are Concrete or Trivially Invalid]
    \label{lem:closed_concrete_or_invalid}
    For any syntactic effect $\chi\in S_Q$ which is closed ($\mathsf{FV}(\chi)=\emptyset$), then either $\mathsf{TriviallyInvalid}(\chi)$ or there exists a unique $q\in Q$ such that $\chi\equiv q$.
\end{lemma}
\begin{proof}
    By induction on the structure of $\chi$.
    It is direct when $\chi$ is an element of $Q$, vacuous when $\chi$ is an effect variable (since that is not closed). We show the sequencing case, with other cases proceeding similarly.  In this case, $\chi=\chi'\rhd\chi''$. By the inductive hypothesis, both $\chi'$ and $\chi''$ are trivially invalid or are equivalent to a unique element each of $Q$. If either or both are trivially invalid, then so is $\chi$. When $\chi'\equiv q'$ and $\chi''\equiv q''$, consider $q'\rhd_Q q''$.  If the result is undefined, then $\chi$ is trivially invalid. Otherwise there exists a $q\in Q$ such that $q'\rhd_Q q''=q$, so by \textsc{Eq-$\rhd$-Simp} $\chi\equiv q$.
\end{proof}

\subsection{Alternative Approaches to Polymorphism and Partial Effect Operators}
An alternative approach would be to use \emph{bounded} parametric polymorphism.  Since definedness of operators is downward-closed,
checking that an effect is valid assuming the variable is instantiated to its upper bound is sufficient to ensure any lesser instantiation of the variable is also defined.  Bounded effect polymorphism generally increases the expressivity of a polymorphic type system, so this is appealing.  Unfortunately it introduces new complications we view as better suited to more focused exploration in future work.  First, na\"ively introducing bounded effect polymorphism rules out cases like the following type-annotated example because effect quantales generally lack a valid greatest effect:
\[\Lambda\chi::\mathcal{E}\ldotp\Lambda\chi'::\mathcal{E}\ldotp\Lambda\chi''::\mathcal{E}\ldotp\lambda f:\mathsf{unit}\xrightarrow{\chi}\mathsf{unit}\ldotp\lambda g:\mathsf{unit}\xrightarrow{\chi'}\mathsf{unit}\ldotp\lambda h:\mathsf{unit}\xrightarrow{\chi''}\mathsf{unit}\ldotp f(); g(); h()\]
While sequencing three arbitrary unit-to-unit functions of different effects may seem contrived, it is not generally uncommon for code to be parameterized by multiple external pieces of code whose effects interact.

Even if we were willing to discard highly-abstracted code, it presents a practical problem for even simple uses of effect polymorphism in some effect quantales: $\mathcal{L}$ has infinite height for multiple reasons: it has no effect that bounds all lock-preserving effects from above, no effect that bounds all lock-releasing effects (for a particular set of releases) from above, and no effect that bounds all lock-acquiring effects (for a particular set of acquisitions) from above, so any choice of bound would impose arbitrary restrictions on the effect with which a polymorphic function could be used.
So with partial effect operators, having (only) bounded effect polymorphism does not strictly increase expressive power, as it also rules out some programs accepted by our system.
Taking things a step further and using parametric polymorphism with HM(X)-style constraints~\cite{pottier-icfp-98,odersky1999type} related to definedness of certain operations might alleviate this second issue, but brings in the challenges of reasoning about opaquely invalid effects.

The design space for polymorphic effect systems in general is enormous and under-explored, and it is not our goal to thoroughly explore those choices here, but only highlight the challenge, and to demonstrate one compatible and reasonably permissive approach and some of its nuances.
This is why we have only addressed parametric effect polymorphism here, despite examples in the literature of effect polymorphism using bounding, constraints~\cite{Grossman2002Cyclone}, relative effect declarations~\cite{vanDooren2005,rytz12}, qualifier-based effects~\cite{ecoop13}, and other richer forms of effect polymorphism~\cite{Skalka2008}.  Our focus is demonstrating compatibility of effect quantales with effect polymorphism, rather than to produce the final word on the topic.  Such an endeavor would require further work unifying these various forms of polymorphism: while constraint-based non-prenex quantification subsumes most classic forms of parametric polymorphism, relative effect polymorphism and related techniques~\cite{vanDooren2005,rytz12,ecoop13,Skalka2008} have similarity to path-dependent types~\cite{amin2012dependent,rompf2016type} that requires further work to precisely characterize.  

\section{Value-Dependent Effect Quantales}
\label{sec:valdep}
To support value-dependent effects, we must develop machinery to allow effect quantales to be indexed by an arbitrary set of program values, and show that it interacts sensibly with the machinery developed for polymorphic effects.
For this purpose this section develops the theory behind \emph{indexed} effect quantales, for which we require the corresponding notion of an effect quantale homomorphism.

\begin{definition}[Effect Quantale Homomorphism]
    \label{def:morphism}
An \emph{effect quantale homomorphism} $m$ between two effect quantales $Q$ and $R$ is a total function between the carrier sets which:
\begin{itemize}
    \item refines partial monoid composition: $\forall x,y,z\in Q\ldotp x\rhd y = z\Rightarrow {m(x)}\rhd {m(y)}\sqsubseteq m(z)$
    \item refines partial join: $\forall x,y,z\in Q\ldotp x\sqcup y = z\Rightarrow {m(x)}\sqcup {m(y)}\sqsubseteq m(z)$
    \item coarsens unit: $I_R\sqsubseteq m(I_Q)$
\end{itemize}
\end{definition}

Effect quantale homomorphisms essentially model the idea of embedding one effect system into another, which may have a more granular partial order than the original.
The requirement that morphisms map unit to something possibly greater than the unit in the target effect quantale may be surprising when compared to their refinement of join and sequencing. This is required for many inclusions to be morphisms: consider an extension $\mathcal{A}_N$ of the atomicity effect quantale $\mathcal{A}$ (Definition \ref{def:atomicity}) that adds a new least element $N$ for actions that are both-movers by means of being pure (e.g., reducing a function application) as opposed to those that are both movers because they are well-synchronized.  $N$ becomes the new unit.  In this case, we would expect the inclusion $\mathcal{A}\hookrightarrow\mathcal{A}_N$ to be a morphism --- but this requires morphisms to allow mapping the unit of $\mathcal{A}$ to something above the unit of $\mathcal{A}_N$.

Some of our definitions are given more concisely in terms of a category:
\begin{definition}[Category of Effect Quantales]
    The category \textbf{EQ} has as objects effect quantales, and as morphisms effect quantale homomorphisms.
\end{definition}
Because we aim to make our work comprehensible to those familiar with only syntactic methods, we will give further definitions both with and without categories.

\begin{definition}[Indexed Effect Quantale]
    \label{def:indexed_eq}
An \emph{indexed effect quantale} is an assignment $Q$ of an effect quantale $Q(X)$ to each set $X$, which functorially assigns an effect quantale morphism $Q(f) : Q(X) \rightarrow Q(Y)$ to any function $f:X\rightarrow Y$ between index sets, satisfying:
\[
    Q(\mathit{id}_X)=id_{Q(X)}
    \quad
    \forall g\in X\rightarrow Y, f\in Y\rightarrow Z\ldotp Q(f\circ g)=Q(f)\circ Q(g)
\]
Alternatively, an indexed effect quantale is a covariant functor $Q : \textbf{Set}\rightarrow\textbf{EQ}$.
\end{definition}

The lock set effect quantale $\mathcal{L}$ we described earlier is in fact an indexed effect quantale, parameterized by the set of lock names to consider.  Likewise, $\mathcal{KT}(A)$ and $\mathcal{H}(A)$ are indexed effect quantales, indexed by a set of events.
Their functorial behavior on functions between index sets is given by applying the function to the index elements mentioned in each effect.  For example, for a function $f:L\rightarrow L'$ between lock sets, $\mathcal{L}(L)(f)(\{\ell\},\{\ell\})=(\{f(\ell)\},\{f(\ell)\})\in\mathcal{L}(L')$ (note that because these are really multisets, this is well defined for any $f$).

Because we are typically interested in effect quantales indexed by sets of runtime values (i.e., singletons), and because the set of well-typed values changes during program execution, we will need to transport terms well-typed under one use of the quantale into another use of the quantale, under certain conditions.  The first is the introduction of new well-typed values (e.g., from allocating a new heap cell), requiring a form of inclusion between indexed effect quantales.  The second is due to substitution: our call-by-value language considers variables to be values, but during substitution some variable may be replaced by another value that was already present in the set.  This essentially collapses what statically appears as two (or more) values into a single value, thus \emph{shrinking} the set of values distinguished inside the quantale.  
We must ensure the effect quantale homomorphisms we use support these cases.
The latter is already ensured by the functoriality condition on indexed effect quantales.

To support inclusion, we require one natural refinement of indexed effect quantales and their induced homomorphisms:
\begin{definition}[Monotone Indexed Effect Quantale]
    \label{def:monotone_ieq}
An indexed effect quantale $Q$ is called \emph{monotone} when for two sets $S$ and $T$ where $S\subseteq T$, the homomorphism $Q(\iota)$ resulting from the inclusion function $\iota:S\hookrightarrow T$ is itself an inclusion $Q(\iota):Q(X)\hookrightarrow Q(Y)$.
Alternatively, an indexed effect quantale $Q$ is monotone if it restricts to a covariant functor $\textbf{Set}_{\mathit{incl}}\hookrightarrow\textbf{EQ}_{\mathit{incl}}$ between the inclusion-only subcategories of sets and effect quantales.
\end{definition}

To support value-dependent effects, our soundness framework will work with monotone indexed effect quantales --- in particular, indexing by syntactic values.

\section{Syntactic Type Safety for Generic Sequential Effects}
\label{sec:soundness}
In this section we give a purely syntactic proof that effect quantales are adequate for syntactic type safety proofs of sequential type-and-effect systems.  We treat only safety, and as mentioned earlier (Section \ref{sec:iteration}) we do not treat liveness in this paper. For the growing family of algebraic characterizations of sequential effects, this is the first soundness proof we know of that is (1) purely syntactic, (2) explicitly treats the indexed versions of the algebra required for singleton effects, (3) explicitly addresses effect polymorphism (which is complicated by partial effect operators), and (4) includes direct iteration constructs.  This development both more closely mirrors common type soundness developments for applied effect systems than the category theoretic approaches discussed in Section \ref{sec:semantics}, and demonstrates machinery which would need to be developed in an analogous way for syntactic proofs using those concepts.  We select a syntactic type safety technique in hopes of making the proof, and use of abstract effect systems, more broadly approachable (most researchers using semantic techniques can read syntactic proofs, but many researchers use only syntactic techniques in work on effect systems).

We give this type safety proof for an \emph{abstract} effect system --- primitive operations, the notion of state, and the overall effect systems are all abstracted by a set of parameters (operational semantics for primitives that may affect the chosen state).
This alone requires relatively little mechanism at the type level, but we wish to not only
demonstrate that effect quantales are sound, but also that they are adequate for non-trivial
existing sequential effect systems.  In order to support such modeling (Section \ref{sec:modeling2}), the type system includes:
\begin{itemize}
    \item parametric polymorphism over types and effects as different kinds
    \item singleton types~\cite{aspinall1994subtyping} --- also known as value-dependent types~\cite{swamy2011secure} --- commonly used for reference types with region tags or lock names, to allow type dependency in the presence of effectful computation~\cite{pedrot2019fire}
    \item (assumed) effect constructors, for constructing effects (e.g., effects mentioning particular locks).
\end{itemize}
The language we study includes no built-in means to introduce a non-trivial (non-unit) effect, relying instead on the supplied primitives.
The language also includes only purely parametric effect polymorphism, as discussed in Section \ref{sec:effpoly}.

We stage the presentation to first focus on core constructs related to effect quantales, then briefly recap machinery from Systems F and F$\omega$ (and small modifications beyond what is standard), before proving type safety.  Section \ref{sec:modeling2} demonstrates how to instantiate the framework to model prior effect systems.

\paragraph{A Note on Abstract Soundness}
\label{para:abstract_soundness}
Before we proceed, it is worth emphasizing that there are \emph{two} commonly used notions of soundness for \emph{abstract} effect systems.
Filinski~\cite{Filinski2010} neatly summarizes a distinction between two branches of work on effects. \emph{Denotational} approaches (e.g., parameterized monads~\cite{atkey2009parameterised}, productors~\cite{tate13}, graded monads~\cite{katsumata14,fujii2016towards}, and graded joinads~\cite{mycroft16}) describe the semantics of effectful computation. \emph{Restrictive} approaches take ambient computational effects for granted, and focus on using effect systems to reason about where subsets of the ambient effects may occur (most applications of effect systems to Java- or ML-like languages~\cite{tldi12,ecoop13,tofte1997region,bocchino09,nielson1993cml,amtoft1999,flanagan2003atomicity,Abadi2006,objtyrace99}).  One could view the restrictive approach as a means of approximating, for a program written in a general monad with many effects, a way of modeling the program in a more nuanced (multi-)monadic semantics.  In this latter branch --- where we would place this work --- the primary concern is not with exact characterization, but with sound \emph{bounding} of possible behaviors.  

The restrictive approach assumes all effects are possible, so soundness results there almost always center on a syntactic consistency criterion: that if an expression with a given effect reduces to another expression, there is a relationship between the dynamically-invoked effects of the reduction and the static effect of the remaining expression that justifies the original static effect as a reasonable over-approximation. (Each system makes this appropriately precise.)  This class of soundness proofs does not necessarily entail that an effect system enforces the \emph{semantic} properties it intends to (e.g., that the atomicity effects accurately characterize atomicity, or that the locking effects accurately characterize the locking behavior of the program).  For concrete sequential effect systems in this group, there are typically additional proofs relating the guarantees about reduction sequences to the actual semantic property intended.  For abstract approaches of this sort~\cite{marino09,ecoop17}, the semantic intent is ignored beyond what is implied directly by connection to instrumented operational semantics. Moreover, the soundness proofs are about soundness in the sense of Wright and Felleisen's ``Syntactic Approach to Type Soundness''~\cite{wright1994syntactic}, which really proves type \emph{safety}, and thus does not address liveness properties (hence our emphasis on type safety rather than soundness throughout the paper).  We present such a proof here in Section \ref{sec:syntactic_safety}.

On the other hand, denotational approaches~\cite{atkey2009parameterised,tate13,katsumata14,mycroft16} inherently capture the semantics: the semantic property is enforced exactly when the denotational semantics guarantee it.  We can compare classes of algebraic structures like join semilattices or effect quantales to these systems (as we do in Section \ref{sec:semantics}) because these are typically given as \emph{pairs} of constructs: an algebraic structure (akin to effect quantales) describing what sorts of equations and coercions should exist between semantic components, \emph{and} the semantics themselves that must behave accordingly (and which intrinsically address matters of safety or liveness).

We prove type safety syntactically, which is adequate to cover what most work on effect systems applied to mainstream languages like Java would consider their soundness concerns.  But in Section \ref{sec:interp} we show how to extend this to certain classes of semantic soundness (for safety properties), without leaving the syntactic framework most familiar to many type system designers.

We describe the parameters to our setup prior to describing the primary type system, but because some parameters depend on parts of the type system,
Figure \ref{fig:judgments} summarizes the various judgments involved in our system, and where they are defined.

\begin{figure}
    \begin{tabular}{|l|l|l|}
        \hline
        Judgment Form & Purpose & Defined in\\
        \hline \hline
        $\Sigma\vdash\Gamma$ & Well-Formed Type Environments & Figure \ref{fig:syntax}\\ \hline
        $\Sigma;\Gamma\vdash e : \tau \mid \chi$ & Expression Typing (assuming $\Sigma$ and $\Gamma$, $e$ has type $\tau$ and effect $\chi$) & Figure \ref{fig:syntax}\\\hline
        $\Sigma;\Gamma\vdash \tau :: \kappa$ & Kinding (of types, and effects as a subset of types) & Figure \ref{fig:kinding-equiv}\\\hline
        $\Sigma;\Gamma\vdash \chi\equiv\chi$ & Syntactic Effect Equivalence & Figure \ref{fig:kinding-equiv}\\\hline
        $\Sigma;\Gamma\vdash \chi\sqsubseteq\chi$ & Subeffecting & Figure \ref{fig:kinding-equiv}\\\hline
    \end{tabular}
    \caption{Judgments in the type system formalization.}
    \label{fig:judgments}
\end{figure}

\subsection{Parameters to the Language}
\label{sec:params}
We parameterize our core language by a number of external features.
First among these is a monotone indexed effect quantale $Q$.  
Since we will index our effect quantales by syntactic values, monotonicity ensures effects remain valid under the creation of new syntactic values (e.g., allocation of new heap cells).
Stability under substitution of variables (syntactic values in a call-by-value calculus) is ensured by the fact that all effect quantale homomorphisms map defined compositions to defined compositions.
Any constant (i.e., non-indexed) effect quantale trivially lifts to a monotone indexed effect quantale that ignores its arguments.  The product construction $\otimes$ lifts in the expected way for two indexed effect quantales sharing a single index set.

We must also abstract over notation for elements of the indexed effect quantale.  We assume a (possibly infinite) family of effect constructors of the form $E(\overline{x}^n)$ where each constructor takes a specified number of elements of the indexing set (which in our framework will always be a set of syntactic values).  Syntactic substitution into these effects is then a matter of mapping substitution across the arguments to the constructor --- in the case of our core language, mapping syntactic substitution of values across the values given as arguments.
For example, for the locking effect quantale (Definition \ref{def:locking}) there may be a range of effect constructors $\mathsf{Locking}^n_m(\ldots)$ that accept $n+m$ arguments, where the first $n$ arguments specify the locks held in the precondition, and the last $m$ specify the locks held in the postcondition (possibly including multiples), such as $\mathsf{Locking}^1_0(x)$ to indicate the effect of code that requires lock $x$ held once and if terminating finishes having released the lock.  Substituting $x$ by a concrete lock location $\ell$ would yield $\mathsf{Locking}^1_0(\ell)$.
Our proof assumes all effects of the given indexed effect quantale can be written in this way (including $I$), but does not require case analysis or (co)induction on effects specified this way.

The remaining \emph{language parameters} beyond $Q$ include choices of program state, primitives (including new values), their types and semantics, and various properties of those parameters.  This added complexity allows proving type safety for a large class of languages with sequential effect systems with a single proof.
Because the syntax of terms and types, dynamic semantics, and static type judgments are all extended by these language parameters, we must carefully manage the dependencies between the core definitions and parameters to avoid circularity.
Such circularity is manageable with sophisticated tools in the ambient logic~\cite{Delaware2013MLC,birkedal2013intensional,atkey2013productive,mogelberg2014type}, but we prefer to avoid them for now.

The language parameters include (ordered such that later items depend only on strictly earlier items):
\begin{itemize}
\item A set $\mathsf{State}$ representing an abstract notion of state, usually denoted by $\sigma\in\mathsf{State}$.  For a pure calculus \textsf{State} might be a singleton set, while other languages might instantiate it to a set representing heaps, accumulators of program history, etc.
\item A set of primitives $P$, with specified arities.  $p\in P$ refers only to names (which extend the term language). 
This includes both operations whose semantics (below) will operate on terms and \textsf{State}s, as well as additional values (next item) that do not interact directly with general terms (e.g., references).
We refer to the arity of a primitive $p$ by $\mathsf{Arity}(p)$
\item A subset $C\subset P$ of additional constants (primitives with arity 0), including run-time values such as heap locations. We will refer to these with the metavariable $c$.
\item A set of type families (type constructors) $T$ for describing the types of primitives.
\item A function $K : T \rightarrow \mathsf{Kind}$, for ascribing a kind to each type in $T$, restricted so the final result is always the kind for a type (not an effect). 
      (Thus, reference types may be modeled by ascribing an appropriate kind to type constructor \textsf{ref}.)
\item A set $\mathsf{StateEnv}$ of partial functions from primitives to types ($\mathsf{StateEnv}\subseteq P\rightharpoonup \mathsf{Type}$), which plays the role of giving types to states in $\mathsf{State}$. $\mathsf{StateEnv}$ must be a partial order $(\mathsf{StateEnv},\le)$ with respect to the natural ordering on partial functions ($f \le g$ if the functions agree where both are defined, and $g$ may be defined on additional points).
The functions are partial because there may be elements of $P$ that are invalid at certain program points, such as unallocated heap locations.
Elements of $\mathsf{StateEnv}$ are constrained such that for primitives whose types are function or quantified types,
the latent effects prior to the full set of arguments being supplied must all be $I$, and only primitives with arity greater than 0 are given functional or quantified types.
To state this precisely, we adopt the notation of $B(x)$ to represent any binder of $x$ ($\forall x::\kappa$ or $\Pi x:\tau$). Then
if a primitive $p$ has a type $\delta(p)$ of the form $B_0(x_0)\xrightarrow{\chi_0}\ldots B_n(x_n)\xrightarrow{\chi_n} \tau$ with $\mathsf{Arity}(p)=n$, then
$\forall i<n\ldotp \chi_i=I$.  This corresponds to ensuring that only when a primitive is fully-applied does any non-trivial computational effect occur (specifically, $\chi_n$ may be non-trivial), and the result must have type $\tau$ (which may itself be the type of a closure).
We use $\Sigma$ to range over elements of \textsf{StateEnv}, following common use of $\Sigma$ for store types in syntactic type safety proofs.
We later refer to these penultimate effects and types: we write $\mathsf{LastEffect}(\Sigma(p))$ for $\chi_n$, and $\mathsf{LastResult}(\Sigma(p))$ for $\tau$.
\item A least element $\delta\in\mathsf{StateEnv}$ for ascribing a type to some primitive that is independent of the state --- i.e., source-level primitive operations and values (but not store references).
\item A \emph{partial} primitive semantics $\llbracket-\rrbracket : \mathsf{Term}\rightarrow\mathsf{State}\rightharpoonup \mathsf{Term}\times Q\times\mathsf{State}$ specifying for some terms and states, the resulting (i) term, (2) runtime effect, and (3) resulting state of reducing that term (intended for and restricted to reducing full applications of primitives). As with elements of $\mathsf{StateEnv}$, the semantics is partial because some syntactic values may be invalid in certain program states (such as dereferencing an unallocated heap location).
\end{itemize}
For type safety, we will also need an additional language parameter and additional requirements on those already mentioned:
\begin{itemize}
\item A set $B\subseteq P$ of primitives which may block, called blocking primitives
\item Primitives of non-zero arity always appear fully-applied.  If a primitive $p\in P$ has specified $\mathsf{Arity}(p)=n$, then every syntactic occurrence of $p$ in the program occurs inside $n$ nested (term or type) applications. (If currying is desired, the primitive's full application can always be wrapped in $\lambda$s and $\Lambda$s.)
\item There is a relation $\vdash \sigma : \Sigma$ for well-typed states.
\item $\llbracket-\rrbracket$ is defined only on full applications of primitive operations, judged according to the current state environment. The term $p~\overline{v}$ is fully-applied in the current state environment $\Sigma$ if $\epsilon;\Sigma\vdash p~\overline{v} : \tau \mid \chi$, the arity of $p$ is equal to $|\overline{v}|$, and $\epsilon;\Sigma\vdash \tau :: \star$.  
In contrast, if the arity of $p$ is \emph{greater} than $|\overline{v}|$, we call the application \emph{incomplete}, and require $\llbracket p~\overline{v}\rrbracket$ be undefined. Together with the term restriction that primitives are always fully applied in source programs, this ensures there is no ambiguity about when to reduce a primitive application.
If the primitive application $p~\overline{v}$ is fully applied under $\Sigma$, then for any $\sigma$ such that $\vdash\sigma:\Sigma$, $\llbracket p~\overline{v}\rrbracket(\sigma)$ must be defined or $p\in B$ ($p$ is a designated blocking operation).  We call this property \emph{primitive progress}.
\item Types produced by $\delta$ must be well-formed in the empty environment, and must not be closed base types (e.g., the primitives cannot add a third boolean, which would break the canonical forms lemma).
\item Effects produced by $\llbracket-\rrbracket$ are valid for the indexed effect quantale parameterized by the values at the call site (i.e., the dynamic effects depend only on the values at the call).
\item When the primitive semantics are applied to well-typed primitive applications and a well-typed state, the resulting term is well-typed (in the empty environment) with argument substitutions applied, and the resulting state is well-typed under some ``larger'' state type:\\
$\begin{array}{l}
\epsilon;\Sigma\vdash p_i\;\overline{v} : \tau \mid \gamma \land \vdash \sigma:\Sigma \land \llbracket p_i\;\overline{v}\rrbracket(\sigma)=(v',\gamma',\sigma') \\
\qquad\Rightarrow \exists \Sigma'\ldotp \Sigma\le\Sigma' \land \epsilon;\Sigma'\vdash v' : \tau[\overline{v}/\mathsf{args}(\delta(p_i))] \mid I \land \vdash\sigma':\Sigma' \land \epsilon;\Sigma\vdash\gamma'\sqsubseteq\gamma
\end{array}$\\
We call this property \emph{primitive preservation}.  Intuitively, it says that under reasonable assumptions, the result value, effect, and new state produced by the primitive semantics agree with the the result type and effect predicted by $\Sigma$, and the new state is updated appropriately.
\end{itemize}

The language parameters and language components are stratified as follows:
\begin{itemize}
    \item The syntax of kinds is closed.
    \item The core language's syntax for terms and types is mutually defined (the language contains explicit type application and singleton types), parameterized by $T$ and $P$.  The latter parameters are closed sets, so the mutual definition is confined to the core.
    \item The type judgment depends on (beyond terms, types, and kinds) $\delta$, $K$, and $\mathsf{StateEnv}$, which in turn depend on the now-defined syntax.
    \item $\mathsf{State}$ may depend on terms, types, and kinds.
    \item The dynamic semantics will depend on terms, types, kinds, \textsf{State}, and $\llbracket-\rrbracket$ (which cannot refer back to the main dynamic semantics).
    \item Primitive preservation and primitive progress depend on the typing relation and state typing.
    \item The type soundness proof will rely on all core typing relations, state typing (which may be defined in terms of source typing), and the primitive preservation and progress properties.
\end{itemize}
Ultimately this leads to a well-founded set of dependencies for the soundness proof, as shown in Figure \ref{fig:deps}.  Note that there are no circularities between the framework and language parameters.

\begin{figure}
    \includegraphics[width=0.75\textwidth]{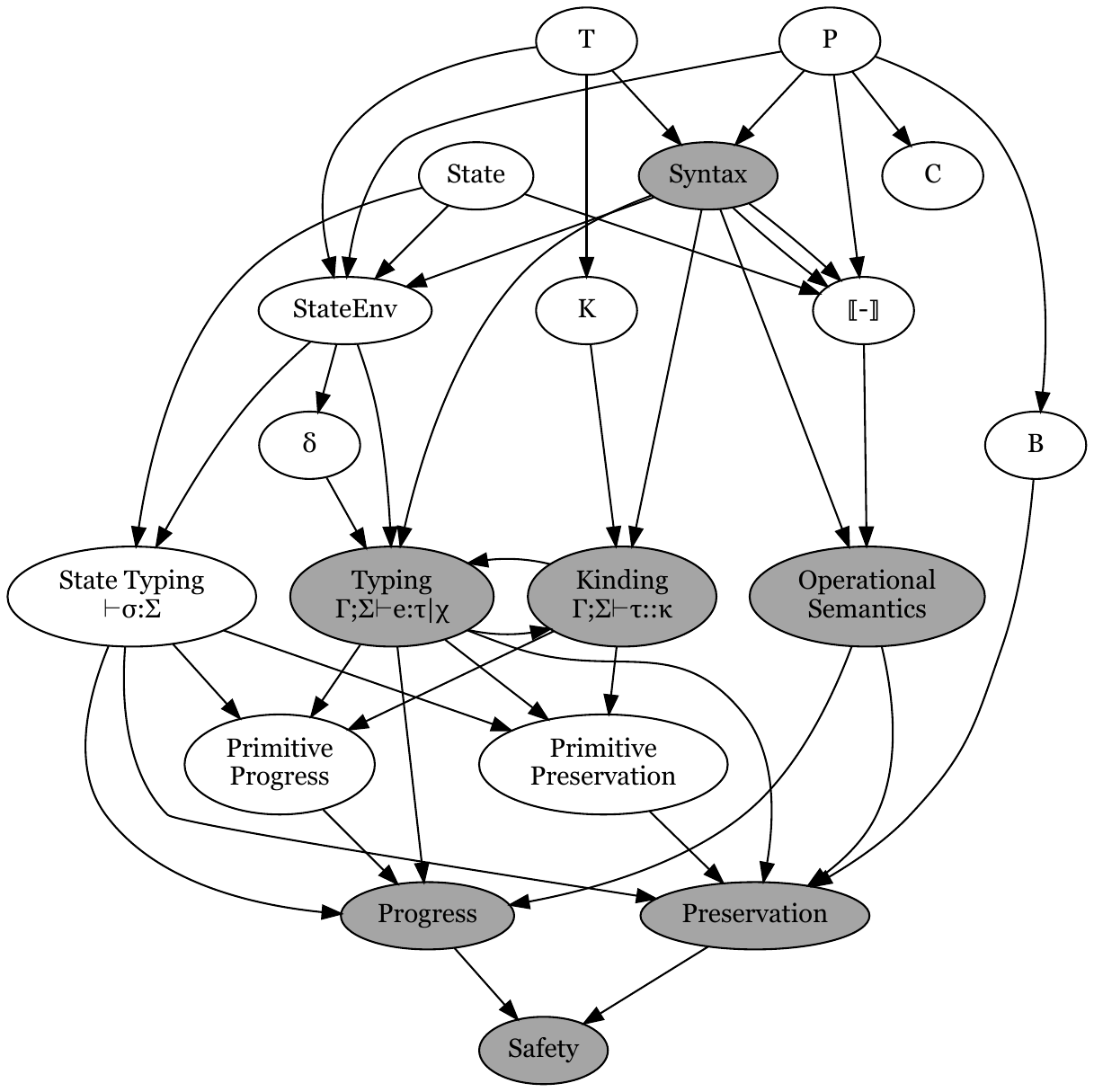}
    \caption{Dependencies between the framework and its parameters, omitting the indexed effect quantale $Q$. Shaded nodes are provided by the framework. Unshaded nodes are parameters.}
    \label{fig:deps}
\end{figure}

\subsection{The Core Language, Formally}

\begin{figure}[t!]
\small
\[\begin{array}{lrcl}
\textsf{Kinds} & \kappa & ::= & \star \mid \mathcal{E} \mid \kappa\Rightarrow\kappa\\
\textsf{Effects} & \gamma,\chi & =& S_{Q(\textsf{Values})} \\ \textsf{Types} & \tau & ::= & T \mid \tau\;\tau \mid \chi \mid \Pi x:\tau\overset{\chi}{\rightarrow}\tau \mid \alpha \mid \mathsf{bool} \mid \forall\alpha::\kappa\overset{\chi}{\rightarrow}\tau \mid \mathsf{unit} \mid \mathcal{S}(v) \\
\mathsf{Terms} & e & ::= & p \mid (\lambda x\ldotp e) \mid e\;e\mid x \mid \mathsf{true} \mid \mathsf{false} \mid \mathsf{if}\;e\;e\;e \mid \mathsf{while}\;e\;e \mid (\Lambda \alpha::\kappa\ldotp e) \mid e[\tau] \mid ()\\
\mathsf{TypeEnv} & \Gamma & ::= & \epsilon \mid \Gamma,x:\tau \mid \Gamma,\alpha::\kappa\\
\mathsf{Values} & v & ::= & c \mid (\lambda x\ldotp e) \mid (\Lambda \alpha::\kappa\ldotp e) \mid x \mid \mathsf{true} \mid \mathsf{false}
\end{array}\]
\begin{mathpar}
\fbox{$\Sigma\vdash\Gamma$}
\and
\inferrule*[left=WF-Emp]{ }{\Sigma\vdash\epsilon}
\and
\inferrule*[left=WF-Val]{\Sigma\vdash\Gamma\\\Gamma;\Sigma\vdash\tau::\star\\x\not\in\Gamma}{\Sigma\vdash\Gamma,x:\tau}
\and
\inferrule*[left=WF-Ty]{\Sigma\vdash\Gamma\\\alpha\not\in\Gamma}{\Sigma\vdash\Gamma,\alpha::\kappa}
\\
\fbox{$\Gamma;\Sigma\vdash e : \tau \mid \gamma$}
\and
\inferrule[T-Prim]{ }{\Gamma;\Sigma\vdash p_i : \Sigma(p_i) \mid I}
\and
\inferrule[T-Var]{\Gamma(x)=\tau}{\Gamma;\Sigma\vdash x : \tau \mid I}
\and
\inferrule[T-Lam]{
    \Gamma;\Sigma\vdash\tau::\star\\
    \Gamma,x:\tau;\Sigma\vdash e : \tau' \mid \gamma
}{
    \Gamma;\Sigma\vdash(\lambda x\ldotp e) : \Pi x:\tau\overset{\gamma}{\rightarrow}\tau' \mid I
}
\and
\inferrule*[left=T-App]{
    \Gamma;\Sigma\vdash e_1 : \Pi x:\tau\overset{\gamma}{\rightarrow}\tau' \mid \gamma_1\\
    \Gamma;\Sigma\vdash e_2 : \tau \mid \gamma_2\\
    x\not\in\mathsf{FV}(\gamma,\tau')\vee\mathsf{Value}(e_2)\\
}{
    \Gamma;\Sigma\vdash e_1\;e_2 : \tau'[e_2/x] \mid \gamma_1\rhd\gamma_2\rhd\gamma[e_2/x]
}
\and
\inferrule*[left=T-If]{
    \Gamma;\Sigma\vdash c : \mathsf{bool} \mid \gamma_c\\
    \Gamma;\Sigma\vdash e_1 : \tau \mid \gamma_1\\
    \Gamma;\Sigma\vdash e_2 : \tau \mid \gamma_2\\
}{
    \Gamma;\Sigma\vdash\mathsf{if}\;c\;e_1\;e_2 : \tau \mid \gamma_c\rhd(\gamma_1\sqcup\gamma_2)
}
\and
\inferrule*[left=T-While]{
    \Gamma;\Sigma\vdash c : \mathsf{bool} \mid \gamma_c\\
    \Gamma;\Sigma\vdash e : \tau \mid \gamma_b\\
}{
    \Gamma;\Sigma\vdash\mathsf{while}\;c\;e : \mathsf{unit} \mid \gamma_c\rhd(\gamma_b\rhd\gamma_c)^*
}
\and
\inferrule*[left=T-Forall]{
    \Gamma,\alpha::\kappa;\Sigma\vdash e : \tau \mid \gamma
}{
    \Gamma;\Sigma\vdash(\Lambda\alpha::\kappa\ldotp e) : \forall\alpha::\kappa\overset{\gamma}{\rightarrow}\tau \mid I
}
\and
\inferrule[T-TyApp]{
    \Gamma;\Sigma\vdash e : \forall\alpha::\kappa\overset{\gamma}{\rightarrow}\tau \mid \gamma_e\\
    \Gamma;\Sigma\vdash \tau' :: \kappa\\
    \Gamma;\Sigma\vdash \tau[\tau'/\alpha] :: \star\\
}{
    \Gamma;\Sigma\vdash e[\tau'] : \tau[\tau'/\alpha] \mid \gamma_e\rhd\gamma[\tau'/\alpha]
}
\and
\inferrule[T-Bool]{b\in\{\mathsf{true},\mathsf{false}\}}{\Gamma;\Sigma\vdash b : \mathsf{bool} \mid I}
\and
\inferrule[T-Unit]{ }{\Gamma;\Sigma\vdash () : \mathsf{unit} \mid I }
\and
\inferrule[T-Sub]{
    \Gamma;\Sigma\vdash e : \tau \mid \chi\\
    \Gamma;\Sigma\vdash \chi\sqsubseteq\chi'\\
    \Gamma;\Sigma\vdash \chi' :: \mathcal{E}
}{
    \Gamma;\Sigma\vdash e : \tau \mid \chi'
}
\\
\fbox{$\sigma,e\rightarrow^{\gamma}_Q \sigma',e'$}
\quad
\inferrule[E-App]{ }{\sigma,(\lambda x\ldotp e)\;v\rightarrow^I_Q \sigma,e[v/x]}
\quad
\inferrule[E-TyApp]{ }{\sigma,(\Lambda\alpha::\kappa\ldotp e)[\tau]\rightarrow^I_Q \sigma,e[\tau/\alpha]}
\quad
\inferrule[E-PrimApp]{
    \llbracket p_i\;\overline{v}\rrbracket(\sigma)=(e',\gamma,\sigma')
}{
    \sigma,p_i\;\overline{v}\rightarrow^{\gamma}_Q \sigma',e'
}
\and
\inferrule[E-IfTrue]{ }{\sigma,\mathsf{if}\;\mathsf{true}\;e_1\;e_2\rightarrow^I_Q \sigma,e_1}
\quad
\inferrule[E-IfFalse]{ }{\sigma,\mathsf{if}\;\mathsf{false}\;e_1\;e_2\rightarrow^I_Q \sigma,e_2}
\quad
\inferrule[E-While]{ }{\sigma,\mathsf{while}\;e\;e_b\rightarrow^I_Q \sigma,\mathsf{if}\;e\;(e_b;\mathsf{while}\;e\;e_b)\;()}
\and
\inferrule[E-PrimArg]{
    \llbracket p_i~\overline{v}\rrbracket(\sigma)\uparrow\\
    \sigma,e\rightarrow^\gamma_Q\sigma',e'
}{
    \sigma,p_i~\overline{v}~e\rightarrow^\gamma_Q\sigma',p_i~\overline{v}~e'
}
\\
\fbox{$\sigma,e\dhxrightarrow{\gamma}_Q \sigma',e'$}
\and
\inferrule{ }{\sigma,e\dhxrightarrow{I}_Q\sigma,e}
\and
\inferrule{
    \sigma,e\dhxrightarrow{\gamma}_Q \sigma',e'\\
    \sigma',e'\rightarrow^{\gamma'}_Q \sigma'',e''
}{\sigma,e{\dhxrightarrow{\gamma\rhd\gamma'}}_Q\sigma'',e''}
\end{mathpar}
\caption{A generic core language for sequential effects, omitting straightforward structural rules from the operational semantics.}
\label{fig:syntax}
\end{figure}

Figure \ref{fig:syntax} gives the (parameterized) syntax of kinds, types, and terms for the core language.  Most of the structure should be familiar from standard effect systems and Systems F and F$\omega$ (with multiple kinds, as in the original polymorphic effect calculus~\cite{lucassen88}), plus standard while loops and conditionals with effects sequenced as in Section \ref{sec:bg}.  We focus on the differences.
Syntactic effects are elements of $S_{Q(\mathsf{Values})}$ --- the syntactic effect quantale expressions over $Q(\mathsf{Values})$, the indexed effect quantale over syntactic values. This provides for value-dependent effects and effect variables, further elaborated shortly.
The standard term syntax is extended by the syntactic language parameters, specifically extending types by the type constructors $T$ and terms by primitives $p\in P$.
One small matter important to the soundness proof: for any value $v$, the effect of $v$ is the unit effect $I$.

\paragraph{Runtime Typing}
Figure \ref{fig:syntax} gives the runtime type system of a syntactic soundness proof with state.
Recall that this permits values ``made valid'' at runtime (like newly allocated heap locations) to be well-typed.  
These typing judgments assume a type context $\Gamma$ containing value and type variable bindings, as well as a $\Sigma\in\mathsf{StateEnv}$ used for typing primitives.
To type check source programs, $\delta$ would be used as the choice of $\mathsf{StateEnv}$.

\paragraph{Value-Dependent Types and Effects}
The language includes a (value-)dependent product (function) type~\cite{swamy2011secure,pedrot2019fire}, which permits program values to be used in types and effects.  This is used primarily through effects --- elements of an indexed effect quantale may mention elements of the index set --- and through the singleton type constructor $\mathcal{S}(-)$, which associates a type with each well-typed program value. These singleton types are slightly unusual in that they do not classify any values, but instead are used to allow framework instantiations to have value-indexed types (such as reference types indexed by a particular guarding lock).  Use of the dependent function space is restricted to syntactic values (which includes variables in our call-by-value language) --- the application rule requires that either the argument is a syntactic value, or the function type's named argument does not appear in the effect or result type.  In the latter case, for concrete types we will use the standard $\tau\overset{\gamma}{\rightarrow}\tau'$ notation.  A minor item of note is that dependent function types and quantified types bind their argument in the function's effect as well as in the result type.  This permits uses such as a function acquiring the lock passed as an argument.

As mentioned above, the language is parameterized by an indexed effect quantale $Q$ to permit value-dependent effects. We assume the intended choice of index set for elements of $Q$ is the set of well-typed values in the current type environment --- including the current state environment.
By $Q(\Gamma;\Sigma)$ we denote $Q$ instantiated with the set of well-typed values under $\Gamma$ and $\Sigma$.
For some $\gamma\in Q(\Gamma,x:\tau;\Sigma)$, we define value substitution into the effect $\gamma[v/x]$ via the effect quantale morphism arising from the term substitution function that replaces $x$ with $v$.  Then  \mbox{$\gamma\in Q(\Gamma,x:\tau;\Sigma)\rightarrow \gamma[v/x]\in Q(\Gamma;\Sigma)$}. This extends naturally to the syntactic indexed effects of $S_{Q(\Gamma;\Sigma)}$.

We work with elements of ${Q}(\mathsf{Values})$ --- $Q$ instantiated with the set of all syntactic values --- rather than ${Q}(\Gamma;\Sigma)$, to avoid problematic circularity\footnote{A careful reader may notice that values may still contain effects (via type application inside a function), so there is still some kind of circularity.  But it is well-founded.  $Q$ is a \emph{co}variant functor, so only uses the index set positively.  And there is no direct recursion at the value level so for any value containing an effect that mentions another value, the value in the effect will be (informally) smaller than the outer value containing the effect.  This could be formalized using sized types~\cite{abelminiagda}, though requiring an extension with addition on sizes~\cite{barthe2008type} in order to formalize substitution (when substituting a value for a variable in a term, the size of the result is the size of the value plus the size of the term that would contain it post-substitution).}. 
Despite relaxing the syntax to work with the full set of syntactic values rather than only well-typed syntactic values, the type system (specifically, kinding) enforces that the effects written correspond to the ${Q}(\Gamma;\Sigma)$ subset of ${Q}(\mathsf{Values})$ by checking that each argument to an effect constructor is a value that is additionally well-typed under $\Gamma$ and $\Sigma$.  Later we show that well-kinded closed non-trivial effects are in fact elements of $Q(\Gamma;\Sigma)$. 

\paragraph{Operational Semantics}
The operational semantics is mostly standard: a labeled transition system over pairs of states and terms, where the label is the effect of the basic step.  For brevity we omit the common structural rules that simply reduce a subexpression and propagate state changes and the effect label in the obvious way.
The only other subtlety of the single-step relation is handling primitives, for which we give both the actual reduction rule \textsc{E-PrimApp} and the structural rule \textsc{E-PrimArg}.
\textsc{E-PrimApp} reduces a fully-applied primitive \emph{the arguments to which are all values or types} by using the primitive semantics $\llbracket-\rrbracket$.
\textsc{E-PrimArg} is the structural rule for reducing primitive arguments to values, rather than requiring primitives to be applied to only syntactic values.  For example, if the standard heap allocation primitive $\mathsf{ref}$ were introduced to the language we would prefer to allow writing $\mathsf{ref}\;e$ rather than requiring expansion to $\mathsf{let}\;x=e\;\mathsf{in}\;(\mathsf{ref}\;x)$.
In both rules, $p\;\overline{v}$ is actually an abuse of notation to mean a series of nested application and type application expressions, where application is only used with value arguments ---  the grammar:
\[Papp ::= p \mid (Papp\;v) \mid Papp[\tau]\]
The operational rule \textsc{E-PrimArg} uses the notation $\llbracket p\;\overline{v}\rrbracket\uparrow$ indicating that the primitive semantics are not defined on that primitive application because it has too few arguments reduced to values.

We also give a reflexive transitive reduction relation $\dhxrightarrow{\gamma}_Q$ which accumulates the effects of each individual step.

\paragraph{Effect Syntax, Equivalence, Subeffecting, and Kinding}
As already discussed, the combination of partial effect operators and effect variables leads to some difficulties avoided in prior generic approaches to sequential effects, which did not address effect polymorphism.  Without effect polymorphism, simply using $Q$ would be sufficient, as derivations requiring undefined compositions would simply be ill-formed.
Section \ref{sec:effpoly} suggests addressing this by using syntactic effect expressions including variables, and this generally works well.  The equivalence relation of Figure \ref{fig:approx} also offers a way to derive subeffecting ($\chi\sqsubseteq\chi'\Leftrightarrow\chi\sqcup\chi'\equiv\chi'$). However, three additional restrictions must be imposed. First, kinding must ensure only bound effect variables are used. Second, kinding must ensure only well-typed values are used in effects, in order to actually work with the $Q(\Gamma;\Sigma)$ subset of $Q(\mathsf{Values})$.
Finally, we must restrict equivalence to only those derivations using well-kinded elements of $S_{Q(\mathsf{Values})}$.

The results of these restrictions appear in Figure \ref{fig:kinding-equiv}. Kinding for effects (kind $\mathcal{E}$) is mostly straightforward structural rules, additionally checking in \textsc{K-Concrete} that all values used in value-dependent effects are well-typed in the current environment (i.e., in the $Q(\Gamma;\Sigma)$ subset of $Q(\mathsf{Values})$), and checking in \textsc{K-Var} that effect variables are bound at the correct kind (or more accurately, any effect expression containing a variable will eventually constrain the variable to be of kind $\mathcal{E}$, which is only possible if $\Gamma$ binds the variable as an effect variable).
Finally, the equivalence relation of Figure \ref{fig:kinding-equiv} (used also to derive subeffecting) carries the type environment and state type through to the simplification rules, which only equates operations of $Q(\mathsf{Values})$ where all parameters to $Q$'s operators are valid in the sense of \textsc{K-Concrete}. This ensures equivalence derivations \emph{only} use well-kinded elements of $Q(\mathsf{Values})$, and since the equivalence rules cannot introduce or hide effect variables, equivalence preserves kinding.

\begin{lemma}[Equivalence Preserves Effect Variables]
    \label{lem:kind_preserves_effvars}
    If $\Gamma;\Sigma\vdash \chi\equiv \chi'$, the set of effect variables appearing $\chi$ and $\chi'$ is the same.
\end{lemma}
\begin{proof}
    By induction on the equivalence derivation.
\end{proof}
Of particular note, if two effects are equivalent and one is closed, this implies the other is also closed.

\begin{lemma}[Kinding of Equivalent Syntactic Effects]
    \label{lem:kind_equiv}
    For any two syntactic effects $\chi$ and $\chi'$ such that $\Gamma;\Sigma\vdash\chi\equiv\chi'$, we have that $\Gamma;\Sigma\vdash\chi::\mathcal{E}\Leftrightarrow\Gamma;\Sigma\vdash\chi'::\mathcal{E}$.
\end{lemma}
\begin{proof}
    By induction on the equivalence derivation. The non-trivial base cases are the simplification rules, where the constraints on the values appearing in the concrete effects correspond to \textsc{K-Concrete}, immediately proving well-kinding of those concrete effects.
\end{proof}
Intuitively, the only way kinding of an effect could break under equivalence would be if equivalence could introduce new unbound effect variables (which would fail \textsc{K-Var}), or new ground effects mentioning syntactic values not known to be well-typed (which would fail \textsc{K-Concrete}).
Equivalence strictly preserves the set of effect variables (Lemma \ref{lem:kind_preserves_effvars}), and this does not appear explicitly in the proof of Lemma \ref{lem:kind_equiv} because no rules directly manipulate or inspect effect variables.
While equivalence may expand an element of $Q(\mathsf{Values})$ into other elements which compose or iterate to produce the original (taking a right-to-left reading of the simplification rules plus symmetry), the antecedents of those rules enforce adequate restrictions to prove kinding using \textsc{K-Concrete} for the ``newly introduced'' elements.

Note that subeffecting \emph{could} introduce new concrete effects or effect variables: if $\Gamma;\Sigma\vdash\chi\sqsubseteq\chi'$, $\chi'$ may contain effect variables and concrete effects not present in $\chi$. This is why the type rule for effect subsumption (\textsc{T-Sub}) explicitly checks that the new upper bound is well-kinded.

\begin{figure}
\begin{mathpar}
\fbox{$\Gamma;\Sigma\vdash \tau :: \kappa$}
\and
\inferrule[K-Param]{T_i\in T }{\Gamma;\Sigma\vdash T_i :: K(T_i)}
\and
\inferrule[K-Var]{\Gamma(\alpha)=\kappa}{\Gamma;\Sigma\vdash\alpha::\kappa}
\and
\inferrule[K-TyCon]{
    \Gamma;\Sigma\vdash\tau::\kappa\Rightarrow\kappa'\\
    \Gamma;\Sigma\vdash\tau'::\kappa
}{
    \Gamma;\Sigma\vdash\tau\;\tau'::\kappa'
}
\and
\inferrule[K-$\Pi$]{
    \Gamma;\Sigma\vdash\tau::\star\\
    \Gamma,x:\tau;\Sigma\vdash\gamma::\mathcal{E}\\
    \Gamma,x:\tau;\Sigma\vdash\tau'::\star
}{
    \Gamma;\Sigma\vdash(\Pi x:\tau\overset{\gamma}{\rightarrow}\tau') :: \star
}
\and
\inferrule[K-Bool]{ }{\Gamma;\Sigma\vdash\mathsf{bool}::\star}
\and
\inferrule[K-Unit]{ }{\Gamma;\Sigma\vdash\mathsf{unit}::\star}
\and
\inferrule*[left=K-Sing]{\Gamma;\Sigma\vdash v : \tau \mid I}{\Gamma\vdash \mathcal{S}(v) :: \star}
\and
\inferrule*[left=K-$\forall$]{
    \Gamma,\alpha::\kappa;\Sigma\vdash\gamma::\mathcal{E}\\
    \Gamma,\alpha::\kappa;\Sigma\vdash\tau :: \star
}{
    \Gamma;\Sigma\vdash \forall\alpha::\kappa\overset{\gamma}{\rightarrow}\tau :: \star
}
\and
\inferrule*[left=K-Concrete]{
    \forall v\in\overline{v}\ldotp \exists \tau\ldotp \Gamma;\Sigma\vdash v:\tau\mid I
}{\Gamma;\Sigma\vdash E(\overline{v}) :: \mathcal{E}}
\and
\inferrule*[left=K-$\rhd$]{
    \Gamma;\Sigma\vdash \chi_1 :: \mathcal{E}\\
    \Gamma;\Sigma\vdash \chi_2 :: \mathcal{E}
 }{\Gamma;\Sigma\vdash \chi_1\rhd\chi_2:: \mathcal{E}}
\and
\inferrule*[left=K-$\sqcup$]{
    \Gamma;\Sigma\vdash \chi_1 :: \mathcal{E}\\
    \Gamma;\Sigma\vdash \chi_2 :: \mathcal{E}
 }{\Gamma;\Sigma\vdash \chi_1\sqcup\chi_2:: \mathcal{E}}
\and
\inferrule*[left=K-$*$]{\Gamma;\Sigma\vdash \chi :: \mathcal{E}}{\Gamma;\Sigma\vdash \chi^* :: \mathcal{E}}
\\
    \fbox{$\Gamma;\Sigma\vdash \chi \equiv \chi$}
    \and
    \inferrule[Eq-Refl]{\Gamma;\Sigma\vdash \chi :: \mathcal{E} }{\Gamma;\Sigma\vdash \chi \equiv \chi}
    \and
    \inferrule[Eq-Sym]{\Gamma;\Sigma\vdash\chi\equiv\gamma}{\Gamma;\Sigma\vdash\gamma\equiv\chi}
    \and
    \inferrule[Eq-$\rhd$-Cong]{\Gamma;\Sigma\vdash \chi \equiv \chi' \\ \Gamma;\Sigma\vdash \gamma \equiv \gamma' }{\Gamma;\Sigma\vdash \chi\rhd\gamma \equiv \chi'\rhd\gamma'}
    \and
    \inferrule[Eq-$*$-Cong]{\Gamma;\Sigma\vdash \chi \equiv \chi'}{\Gamma;\Sigma\vdash \chi^* \equiv \chi'^*}
    \and
    \inferrule[Eq-$\sqcup$-Cong]{\Gamma;\Sigma\vdash \chi \equiv \chi' \\ \Gamma;\Sigma\vdash \gamma \equiv \gamma' }{\Gamma;\Sigma\vdash \chi\sqcup\gamma \equiv \chi'\sqcup\gamma'}
    \and
    \inferrule[Eq-$\sqcup$-Idem]{ }{\Gamma;\Sigma\vdash \chi\sqcup\chi\equiv\chi}
    \and
    \inferrule[Eq-Comm]{ }{\Gamma;\Sigma\vdash \gamma\sqcup\chi \equiv \chi\sqcup\gamma}
    \and
    \inferrule[Eq-$\rhd$-Assoc]{ }{\Gamma;\Sigma\vdash \chi\rhd(\chi'\rhd\chi'') \equiv (\chi\rhd\chi')\rhd\chi''}
    \and
    \inferrule[Eq-$\sqcup$-Assoc]{ }{\Gamma;\Sigma\vdash \chi\sqcup(\chi'\sqcup\chi'') \equiv (\chi\sqcup\chi')\sqcup\chi''}
    \and
    \inferrule[Eq-UnitL]{ }{\Gamma;\Sigma\vdash I\rhd\chi\equiv\chi}
    \and
    \inferrule[Eq-UnitR]{ }{\Gamma;\Sigma\vdash \chi\rhd I\equiv\chi}
    \and
    \inferrule[Eq-$\rhd$-Simp]{
        \forall v\in\overline{v}\cup\overline{v'}\cup\overline{v''}\ldotp \exists \tau\ldotp \Gamma;\Sigma\vdash v:\tau\mid I\\\\
        E(\overline{v})\rhd_Q E'(\overline{v'})= E''(\overline{v''})}{\Gamma;\Sigma\vdash E(\overline{v})\rhd E'(\overline{v'})\equiv E''(\overline{v''})}
    \and
    \inferrule[Eq-$\sqcup$-Simp]{
        \forall v\in\overline{v}\cup\overline{v'}\cup\overline{v''}\ldotp \exists \tau\ldotp \Gamma;\Sigma\vdash v:\tau\mid I\\\\
        E(\overline{v})\sqcup_Q E'(\overline{v'})= E''(\overline{v''})}{\Gamma;\Sigma\vdash E(\overline{v})\sqcup E'(\overline{v'})\equiv E''(\overline{v''})}
    \and
    \inferrule[Eq-$*$-Simp]{
        \forall v\in\overline{v}\cup\overline{v'}\ldotp \exists \tau\ldotp \Gamma;\Sigma\vdash v:\tau\mid I\\
         E(\overline{v})^*= E'(\overline{v'})\in Q(\mathsf{Values})}{\Gamma;\Sigma\vdash  E(\overline{v})^*\equiv E'(\overline{v'})}
    \and
    \inferrule[Eq-DistL]{ }{\Gamma;\Sigma\vdash \chi\rhd(\chi'\sqcup\chi'') \equiv (\chi\rhd\chi')\sqcup(\chi\rhd\chi'')}
    \and
    \inferrule[Eq-DistR]{ }{\Gamma;\Sigma\vdash (\chi\sqcup\chi')\rhd\chi'' \equiv (\chi\rhd\chi'')\sqcup(\chi'\rhd\chi'')}
    \and
    \inferrule[Eq-$*$-Idemp]{ }{\Gamma;\Sigma\vdash(\chi^*)^*\equiv\chi^*}
    \and
    \inferrule[Eq-$*$-Extensive]{ }{\Gamma;\Sigma\vdash \chi\sqcup(\chi^*)\equiv\chi^*}
    \and
    \inferrule[Eq-$*$-Mono]{\Gamma;\Sigma\vdash \chi\sqcup\gamma\equiv\gamma}{\Gamma;\Sigma\vdash\chi^*\sqcup\gamma^*\equiv\gamma^*}
    \and
    \inferrule[Eq-$*$-Empty]{ }{\Gamma;\Sigma\vdash I\sqcup\chi^*\equiv\chi^*}
    \and
    \inferrule[Eq-$*$-Fold]{ }{\Gamma;\Sigma\vdash (\chi^*\rhd\chi^*)\sqcup\chi^*\equiv\chi^*}
    \and
    \inferrule[Eq-Trans]{\Gamma;\Sigma\vdash\chi\equiv\chi' \\ \Gamma;\Sigma\vdash\chi'\equiv\chi''}{\Gamma;\Sigma\vdash\chi\equiv\chi''}
    \\\\
    \fbox{$\Gamma;\Sigma\vdash \chi \sqsubseteq\chi$}
    \and
    \inferrule*[left=Sub-Def]{\Gamma;\Sigma\vdash \chi\sqcup\gamma\equiv\gamma}{\Gamma;\Sigma\vdash\chi\sqsubseteq\gamma}
\end{mathpar}
    \caption{Kinding, equivalence, and subeffecting
    }
    \label{fig:kinding-equiv}
\end{figure}

Kinding for types (of kind $\star$) is standard, including checking that arguments to the singleton type former are well typed values (standard for languages with singleton/value-dependent types~\cite{aspinall1994subtyping,swamy2011secure}).

\paragraph{Kinding and $Q(\Gamma;\Sigma)$}
Earlier we mentioned that while the syntax of effects uses elements of $Q(\mathsf{Values})$ to avoid circularity, the type system really only works with the subset $Q(\Gamma;\Sigma)\subseteq Q(\mathsf{Values})$.  This is informally apparent from \textsc{K-Concrete}, but we can now prove this.
\begin{proposition}[Well-Kinded Closed Effects]
    \label{prop:closed_kinded_eff_Q}
    For any type environment $\Gamma$, state environment $\Sigma$, and effect $\chi$, if $\Gamma;\Sigma\vdash \chi :: \mathcal{E}$, and $FV(\chi)=\emptyset$, and $\mathsf{NonTrivial}(\chi)$, there exists a $E(\overline{v})$ 
    such that $\Gamma;\Sigma\vdash\chi\equiv E(\overline{v})$ and $E(\overline{v})\in Q(\Gamma;\Sigma)$.
\end{proposition}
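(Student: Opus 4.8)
The plan is to proceed by structural induction on $\chi$ using the grammar of effects. Because $FV(\chi)=\emptyset$, the variable case $\chi=\alpha$ cannot arise, leaving the base cases $I$ and $E(\overline{v})$ and the three operator cases $\chi_1\rhd\chi_2$, $\chi_1\sqcup\chi_2$, and $\chi_1^*$. In every case I would first invert the kinding derivation $\Gamma;\Sigma\vdash\chi::\mathcal{E}$ (through \textsc{K-$\rhd$}, \textsc{K-$\sqcup$}, \textsc{K-$*$}, or \textsc{K-Concrete}) to recover well-kindedness of the immediate subexpressions, note that they remain closed, and then appeal to the induction hypothesis. For the base cases, $I$ is representable as a constructor lying in $Q(\Gamma;\Sigma)$ by the standing assumption that every effect (including $I$) can be written as some $E(\overline{v})$; and for $\chi=E(\overline{v})$, inversion of \textsc{K-Concrete} gives that each argument is a value well-typed under $\Gamma;\Sigma$, so $E(\overline{v})$ is by definition an element of $Q(\Gamma;\Sigma)$ and is equivalent to itself by \textsc{Eq-Refl}.

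The heart of the argument is the inductive case, say $\chi=\chi_1\rhd\chi_2$, which I would organize into three moves. First, I would show $\mathsf{NonTrivial}$ descends to subexpressions: if $\chi_1$ were trivially invalid — equivalent to some $\chi_1'$ containing an undefined concrete subexpression — then \textsc{Eq-$\rhd$-Cong} gives $\chi_1\rhd\chi_2\equiv\chi_1'\rhd\chi_2$, which still contains that subexpression, contradicting $\mathsf{NonTrivial}(\chi)$; hence $\mathsf{NonTrivial}(\chi_1)$ and $\mathsf{NonTrivial}(\chi_2)$. Second, the induction hypothesis yields concrete forms $\chi_1\equiv E_1(\overline{v_1})\in Q(\Gamma;\Sigma)$ and $\chi_2\equiv E_2(\overline{v_2})\in Q(\Gamma;\Sigma)$, so by congruence $\chi\equiv E_1(\overline{v_1})\rhd E_2(\overline{v_2})$. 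Third, $\mathsf{NonTrivial}(\chi)$ forces the concrete product $E_1(\overline{v_1})\rhd E_2(\overline{v_2})$ to be defined in $Q$, since otherwise $\chi$ would be equivalent to an effect with an undefined top-level concrete $\rhd$ and thus trivially invalid; because both operands lie in $Q(\Gamma;\Sigma)$ and that quantale is closed under $\rhd$ where defined, the result is an element of $Q(\Gamma;\Sigma)$, representable as some $E_3(\overline{v_3})$ whose arguments are drawn from the values well-typed under $\Gamma;\Sigma$. Then \textsc{Eq-$\rhd$-Simp} gives $E_1(\overline{v_1})\rhd E_2(\overline{v_2})\equiv E_3(\overline{v_3})$, and transitivity of $\equiv$ closes the case. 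The $\sqcup$ and $*$ cases are structurally identical, substituting \textsc{Eq-$\sqcup$-Cong}/\textsc{Eq-$\sqcup$-Simp} and \textsc{Eq-$*$-Cong}/\textsc{Eq-$*$-Simp}, and invoking closure of $Q(\Gamma;\Sigma)$ under $\sqcup$ and under the partial operator $-^*$ (whose possible undefinedness is again discharged by $\mathsf{NonTrivial}$).

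The main obstacle I anticipate lies in the interplay between $\mathsf{NonTrivial}$ and closure of $Q(\Gamma;\Sigma)$ under its operations, which has two delicate points. The first is establishing that $\mathsf{NonTrivial}$ descends to subexpressions; this rests essentially on the congruence rules, which guarantee that a bad subexpression appearing in one operand survives as a bad subexpression of the whole compound effect, so that triviality of a part entails triviality of the whole. The second is arguing that the semantic result of a defined operation on elements of $Q(\Gamma;\Sigma)$ is itself representable by a constructor all of whose arguments are well-typed under $\Gamma;\Sigma$ — here the assumption that every quantale element can be written as $E(\overline{v})$, applied with the index set taken to be exactly the values well-typed in the current environment, is doing the real work, since it is what connects membership in $Q(\Gamma;\Sigma)$ back to a well-kinded syntactic form via \textsc{K-Concrete}. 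Once these two facts are in place, $\mathsf{NonTrivial}$ precisely certifies that each top-level operation we perform is defined, and the simplification rules translate every such defined semantic operation back into a single syntactic constructor, completing the induction.
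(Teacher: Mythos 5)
Your proof is correct and follows essentially the same route as the paper's (induction on the kinding derivation, with \textsc{K-Concrete} as the key base case placing constructor arguments among the values well-typed under $\Gamma;\Sigma$). You supply considerably more detail than the paper's one-line argument — in particular the descent of $\mathsf{NonTrivial}$ to subexpressions via the congruence rules and the use of the \textsc{Eq-$\rhd$-Simp}/\textsc{Eq-$\sqcup$-Simp}/\textsc{Eq-$*$-Simp} rules to collapse defined concrete operations back to a single constructor — all of which is implicit in, and consistent with, the paper's intended argument.
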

\begin{proof}
    By induction on the structure of $\chi$, similar to the proof of Lemma \ref{lem:closed_concrete_or_invalid}, picking up membership in $Q(\Gamma;\Sigma)$ from uses of \textsc{K-Concrete}, and exploiting the fact that since $Q$ is monotone, all operations must be defined even for the index set containing only the values appearing in inputs to sequencing, join, or iteration, so must not introduce new syntactic values that are not already well-kinded.
\end{proof}
All effects in a typing conclusion are well-kinded:
\begin{proposition}
    \label{prop:wk_eff_concl}
    For any type environment $\Gamma$, state environment $\Sigma$, term $e$, type $\tau$ and effect $\chi$, if $\Gamma;\Sigma\vdash e : \tau \mid \chi$ then $\Gamma;\Sigma\vdash \chi :: \mathcal{E}$.
\end{proposition}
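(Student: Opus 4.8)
The plan is to proceed by induction on the derivation of $\Gamma;\Sigma\vdash e:\tau\mid\chi$, following the syntactic shape of the effect appearing in each rule's conclusion. The rules \textsc{T-Prim}, \textsc{T-Var}, \textsc{T-Lam}, \textsc{T-Forall}, \textsc{T-Bool}, and \textsc{T-Unit} all conclude with effect $I$, which is well-kinded by \textsc{K-Concrete} once we recall (as stated in the language parameters) that $I$ is expressible as the nullary effect constructor. The genuinely structural cases \textsc{T-If} and \textsc{T-While} are then immediate: apply the induction hypothesis to each subderivation to kind the component effects $\gamma_c,\gamma_1,\gamma_2$ (resp.\ $\gamma_c,\gamma_b$), then reassemble the conclusion effect using \textsc{K-$\sqcup$}, \textsc{K-$*$}, and \textsc{K-$\rhd$} exactly mirroring its syntax — for \textsc{T-While}, kind $\gamma_b\rhd\gamma_c$ by \textsc{K-$\rhd$}, then $(\gamma_b\rhd\gamma_c)^*$ by \textsc{K-$*$}, then prepend $\gamma_c$ by \textsc{K-$\rhd$}.

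The interesting cases are \textsc{T-App} and \textsc{T-TyApp}, where the conclusion effect contains the \emph{latent} effect of the operator's type after substitution, e.g.\ $\gamma_1\rhd\gamma_2\rhd\gamma[e_2/x]$ in \textsc{T-App}. The induction hypothesis handles $\gamma_1$ and $\gamma_2$ (resp.\ $\gamma_e$) directly, but $\gamma$ sits inside the \emph{type} of the operator subexpression rather than being the conclusion effect of any subderivation, so the hypothesis does not reach it. Here I would appeal to a regularity property — that $\Gamma;\Sigma\vdash e:\tau\mid\chi$ implies $\Gamma;\Sigma\vdash\tau::\star$ — applied to the operator: then $\Pi x:\tau\overset{\gamma}{\rightarrow}\tau'$ (resp.\ $\forall\alpha::\kappa\overset{\gamma}{\rightarrow}\tau$) is well-kinded, and inverting \textsc{K-$\Pi$} (resp.\ \textsc{K-$\forall$}) yields $\gamma::\mathcal{E}$ in the extended context. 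A kinding-substitution lemma — well-kinded effects stay well-kinded under value (resp.\ type) substitution, using that \textsc{T-App} forces $e_2$ to be a value wherever $x$ occurs free — then gives $\gamma[e_2/x]::\mathcal{E}$ (resp.\ $\gamma[\tau'/\alpha]::\mathcal{E}$) back under $\Gamma;\Sigma$, and \textsc{K-$\rhd$} assembles the full effect. For \textsc{T-Sub}, the conclusion effect $\chi'$ satisfies $\chi\sqsubseteq\chi'$, which by \textsc{Sub-Def} unfolds to $\chi\sqcup\chi'\equiv\chi'$; invoking that effect equivalence implies well-kindedness of both sides gives $\chi\sqcup\chi'::\mathcal{E}$, and inverting \textsc{K-$\sqcup$} delivers $\chi'::\mathcal{E}$.

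The main obstacle is the mutual dependence exposed by the application cases: the effect statement leans on regularity, yet regularity's own \textsc{T-Lam} and \textsc{T-Forall} cases require precisely that the body's effect be well-kinded (so that the latent effect of the synthesized function/quantified type kinds via \textsc{K-$\Pi$}/\textsc{K-$\forall$}). The clean resolution is to prove this proposition and regularity by a single \emph{simultaneous} induction on the typing derivation: each cross-reference — regularity on the operator in \textsc{T-App}/\textsc{T-TyApp}, and this statement on the body in \textsc{T-Lam}/\textsc{T-Forall} — lands on a strictly smaller subderivation, so the induction is well-founded. The remaining supporting obligations are routine but should be discharged first: that effect equivalence preserves well-kindedness (an induction over the equivalence rules, whose only nonroutine steps are the reflection rules \textsc{Eq-$\rhd$-Simp}, \textsc{Eq-$\sqcup$-Simp}, and \textsc{Eq-$*$-Simp}, which need the effect-quantale operations to return effects expressible with well-typed constructor arguments), and that substitution preserves effect kinding (which bottoms out at \textsc{K-Concrete} and hence at a standard value-substitution lemma for the typing judgment).
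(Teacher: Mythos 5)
Your top-level strategy --- induction on the typing derivation, discharging the substituted latent effects in \textsc{T-App} and \textsc{T-TyApp} via the kinding clauses of the substitution lemmas --- is exactly the paper's, whose entire proof is the single sentence ``by induction on the typing derivation, relying on substitution lemmas \ldots\ showing that term and type substitution preserves kinding.'' You go considerably further than the paper in the application cases, and you are right to: the induction hypothesis on the operator only kinds $\gamma_1$ (resp.\ $\gamma_e$), not the latent effect $\gamma$ buried in the operator's \emph{type}, so a regularity statement ($\Gamma;\Sigma\vdash e:\tau\mid\chi$ implies $\Gamma;\Sigma\vdash\tau::\star$) is genuinely needed before the substitution lemma can fire; and your observation that regularity's \textsc{T-Lam}/\textsc{T-Forall} cases in turn need this very proposition on the body --- forcing a single simultaneous induction --- correctly surfaces machinery the paper leaves entirely implicit. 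The base, conditional, and loop cases are handled exactly as the paper intends, and your note that $e_2$ must be a value (hence typeable at effect $I$) wherever $x$ occurs free is the right hook into the term-substitution lemma.

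The one step that does not go through as written is \textsc{T-Sub}. You want to conclude $\chi\sqcup\chi'::\mathcal{E}$ from the derivable equivalence $\Gamma;\Sigma\vdash\chi\sqcup\chi'\equiv\chi'$ by ``equivalence implies well-kindedness of both sides,'' but that lemma is false for the judgment as given: rules such as \textsc{Eq-Comm}, \textsc{Eq-$\rhd$-Assoc}, and the unit and distributivity axioms carry no kinding premises and relate arbitrary syntactic effects, well-kinded or not. The salvageable version --- if one side of an equivalence is well-kinded then so is the other --- does not rescue the case either: it reduces $\chi\sqcup\chi'::\mathcal{E}$ to the conjunction of $\chi::\mathcal{E}$ (known from the induction hypothesis) and $\chi'::\mathcal{E}$, which is precisely the goal, so the argument is circular. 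Closing this case requires either a kinding premise on \textsc{Sub-Def} (or on the axiom-style equivalence rules), or a direct analysis of which derivations of $\chi\sqcup\chi'\equiv\chi'$ are possible; the paper's one-line proof does not address this case at all, so the gap is as much the paper's as yours, but it is a gap in the proposal as stated.
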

\begin{proof}
    By induction on the typing derivation, relying on substitution lemmas given shortly showing that term and type substitution preserves kinding.
\end{proof}
This implies that if a typing judgment gives an expression some closed non-trivial effect, then that effect actually is from $Q(\Gamma;\Sigma)$, even though it is constructed in $Q(\mathsf{Values})$:
\begin{proposition}
    \label{prop:closed_eff_concl_Q}
    For any type environment $\Gamma$, state environment $\Sigma$, term $e$, type $\tau$ and effect $\chi$, if $\Gamma;\Sigma\vdash e : \tau \mid \chi$ and $FV(\chi)=\emptyset$ and $\mathsf{NonTrivial}(\chi)$, there exists a $E(\overline{v})$ 
such that $\Gamma;\Sigma\vdash\chi\equiv E(\overline{v})$ and $E(\overline{v})\in Q(\Gamma;\Sigma)$.
\end{proposition}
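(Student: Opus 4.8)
The plan is to obtain this proposition as a direct consequence of the two immediately preceding results. Proposition \ref{prop:closed_kinded_eff_Q} already delivers exactly the desired conclusion --- the existence of a concrete $E(\overline{v})$ with $\Gamma;\Sigma\vdash\chi\equiv E(\overline{v})$ and $E(\overline{v})\in Q(\Gamma;\Sigma)$ --- but it requires three hypotheses: $\Gamma;\Sigma\vdash\chi::\mathcal{E}$, $FV(\chi)=\emptyset$, and $\nontriv{\chi}$. The second is assumed in the statement at hand, and the first follows immediately from Proposition \ref{prop:wk_eff_concl} applied to the given derivation $\Gamma;\Sigma\vdash e:\tau\mid\chi$. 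So the only genuine work is to establish $\nontriv{\chi}$ for the effect $\chi$ appearing in a typing conclusion.

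I would discharge $\nontriv{\chi}$ by inspecting the last rule of the typing derivation; no induction is needed, since nontriviality is a property of $\chi$ alone. Examining the typing rules of Figure \ref{fig:syntax}, every rule falls into one of two cases. Either the conclusion's effect is the unit $I$ (rules \textsc{T-Prim}, \textsc{T-Var}, \textsc{T-Lam}, \textsc{T-Forall}, \textsc{T-Bool}, and \textsc{T-Unit}), or the rule carries a premise whose argument is \emph{syntactically identical} to the conclusion's effect (rules \textsc{T-App}, \textsc{T-If}, \textsc{T-While}, \textsc{T-TyApp}, and \textsc{T-Sub}, each of which lists $\nontriv{\chi}$ for precisely the $\chi$ it concludes). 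In the latter family the premise is literally what we need, so there is nothing further to prove. For the $I$ case I would verify $\nontriv{I}$ directly from the definition: $I$ fails to be nontrivial only if it is equivalent to some effect exposing an undefined sequencing, join, or iteration on concrete arguments, but $I$ denotes the (defined) unit of the effect quantale, and every form reachable from $I$ via the equivalence rules denotes that same defined element (for instance $I\rhd I=I$, $I\sqcup I=I$, and $I^*=I$, the last because $I$ is subidempotent and already above $I$). Hence no equivalent of $I$ contains an undefined concrete operation, so $\nontriv{I}$ holds.

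With $\nontriv{\chi}$ in hand alongside $\Gamma;\Sigma\vdash\chi::\mathcal{E}$ and the assumed $FV(\chi)=\emptyset$, I would then simply invoke Proposition \ref{prop:closed_kinded_eff_Q} to produce the required $E(\overline{v})$, finishing the argument. The only (minor) obstacle is the $I$ case of nontriviality: one must confirm that the unit cannot be rewritten into a form containing an undefined operator, which relies on the fact --- implicit in the decidability remark following the definition of nontriviality --- that the finitely many equivalence-reachable forms of $I$ all denote the defined unit element.
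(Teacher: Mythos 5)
Your proposal is correct and matches the paper's own proof, which likewise reduces the statement to Propositions \ref{prop:wk_eff_concl} and \ref{prop:closed_kinded_eff_Q} and notes that the latter applies because of the \textsf{NonTrivial} hypotheses in the rules whose concluding effect is not $I$. Your only additions are the (accurate) observation that a case analysis on the last rule suffices in place of a full induction, and an explicit check of $\nontriv{I}$ that the paper leaves implicit.
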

\begin{proof}
    By Propositions \ref{prop:wk_eff_concl} and \ref{prop:closed_kinded_eff_Q}.
\end{proof}

We will employ Propositions \ref{prop:closed_kinded_eff_Q} and \ref{prop:closed_eff_concl_Q} as coercions from syntactic effects to semantic effects when they are known to be closed and non-trivial.

They allow us to further relate syntactic equivalence and subtyping to semantic equivalence and subtyping.

\begin{lemma}[Closed Syntactic Effect Equivalence]
    \label{lem:closed_syneff_equiv}
    For any two closed non-trivial syntactic effects $\chi$ and $\chi'$ where $\Gamma;\Sigma\vdash\chi :: \mathcal{E}$ and $\Gamma;\Sigma\vdash\chi'::\mathcal{E}$, if $\Gamma;\Sigma\vdash\chi\equiv\chi'$ then $\chi$ and $\chi'$ are semantically equivalent as elements of $Q(\Gamma;\Sigma)$.
\end{lemma}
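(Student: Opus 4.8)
The plan is to prove this by exhibiting a partial interpretation $\llbracket-\rrbracket$ of syntactic effects as elements of $Q(\Gamma;\Sigma)$, defined by structural recursion ($\llbracket I\rrbracket=I$, $\llbracket E(\overline{v})\rrbracket=E(\overline{v})$, and homomorphically on $\rhd$, $\sqcup$, and $-^*$ using the corresponding semantic operators, with the result undefined whenever a semantic operator is), and then showing that $\Gamma;\Sigma\vdash\chi\equiv\chi'$ implies $\llbracket\chi\rrbracket=\llbracket\chi'\rrbracket$ for closed, non-trivial, well-kinded $\chi,\chi'$. The statement's ``semantically equivalent as elements of $Q(\Gamma;\Sigma)$'' is exactly this equation, and this interpretation agrees with the $E(\overline{v})$ produced by Proposition \ref{prop:closed_kinded_eff_Q}. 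The first thing I would record is that the domain of $\llbracket-\rrbracket$ contains every closed, non-trivial, well-kinded effect: this is the recursive refinement of Proposition \ref{prop:closed_kinded_eff_Q}, by induction on kinding, where at each operator node non-triviality forbids the corresponding concrete operation from being undefined (otherwise the whole effect would be $\equiv$-equivalent to a trivially invalid one via \textsc{Eq-$\rhd$-Simp}/\textsc{Eq-$\sqcup$-Simp}/\textsc{Eq-$*$-Simp} and congruence). I also need the auxiliary facts that $\equiv$ preserves free variables, preserves kinding at $\mathcal{E}$, and (consequently) preserves membership in the non-trivial class; the last holds because non-triviality is defined purely in terms of the equivalence class. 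Together these guarantee the ``closed, non-trivial, well-kinded'' invariant is maintained as I descend a derivation.

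The main argument is induction on the derivation of $\Gamma;\Sigma\vdash\chi\equiv\chi'$, maintaining the invariant that both related effects are closed, non-trivial, and well-kinded at $\mathcal{E}$. Reflexivity, symmetry, and (derivable) transitivity are immediate from equality being an equivalence relation, using that intermediate effects stay in the invariant class. The congruence rules follow from the induction hypotheses plus the fact that the semantic operators are functions (equal, defined inputs yield equal outputs), with definedness of the composite guaranteed by non-triviality of the conclusion. The purely algebraic rules each name an effect-quantale axiom: \textsc{Eq-$\sqcup$-Idem}, \textsc{Eq-Comm}, and \textsc{Eq-$\sqcup$-Assoc} are the join-semilattice laws; \textsc{Eq-$\rhd$-Assoc}, \textsc{Eq-UnitL}, and \textsc{Eq-UnitR} are the monoid laws; and \textsc{Eq-DistL}, \textsc{Eq-DistR} are the distributivity laws of Definition \ref{def:eq}, whose ``when either side is defined'' proviso is discharged because both sides, being closed and non-trivial, lie in the domain of $\llbracket-\rrbracket$. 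The simplification rules \textsc{Eq-$\rhd$-Simp}, \textsc{Eq-$\sqcup$-Simp}, and \textsc{Eq-$*$-Simp} hold definitionally, since their premises are the very semantic computations defining $\llbracket-\rrbracket$.

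The remaining cases are the iteration rules, matched against the iterable-effect-quantale axioms of Definition \ref{def:iterable} (reading $\sqsubseteq$ as join): \textsc{Eq-$*$-Idemp} is idempotence, \textsc{Eq-$*$-Extensive} is extensiveness, \textsc{Eq-$*$-Empty} is possibly-emptiness, and \textsc{Eq-$*$-Fold} is foldability. The one rule needing care is \textsc{Eq-$*$-Mono}: its premise $\chi\sqcup\gamma\equiv\gamma$ mentions $\chi\sqcup\gamma$, which is \emph{not} a subexpression of the conclusion $\chi^*\sqcup\gamma^*\equiv\gamma^*$, so I argue separately that $\chi\sqcup\gamma$ is closed (its subterms are) and non-trivial (being $\equiv$-related to the non-trivial $\gamma$), hence in the domain of $\llbracket-\rrbracket$ and covered by the induction hypothesis. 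That hypothesis gives $\llbracket\chi\rrbracket\sqsubseteq\llbracket\gamma\rrbracket$, and monotonicity of the semantic $-^*$ (available because $\chi^*$ and $\gamma^*$ are non-trivial, so both iterations are defined) yields $\llbracket\chi^*\rrbracket\sqsubseteq\llbracket\gamma^*\rrbracket$, which is exactly the conclusion.

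I expect the main obstacle to be bookkeeping around definedness rather than any deep algebra: nearly every step invokes an effect-quantale axiom that holds only ``when defined,'' so the crux is the auxiliary totality result for $\llbracket-\rrbracket$ on closed, non-trivial, well-kinded effects together with the stability of that class under $\equiv$ and under passage to the subexpressions (and, for \textsc{Eq-$*$-Mono}, the non-subexpression premise) appearing in the rules. Once those are established, each equivalence rule collapses to its corresponding (iterable) effect-quantale law, and ``semantic equivalence'' is literally equality of the interpreted elements.
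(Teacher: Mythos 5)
Your proof is correct and follows the same overall strategy as the paper's: induction on the derivation of $\Gamma;\Sigma\vdash\chi\equiv\chi'$, using closedness, well-kindedness, and non-triviality to discharge the ``when defined'' provisos so that each equivalence rule collapses to the corresponding (iterable) effect quantale axiom, with Proposition~\ref{prop:closed_kinded_eff_Q} supplying the coercion to semantic elements. Making the interpretation $\llbracket-\rrbracket$ explicit and recording its totality on closed, non-trivial, well-kinded effects up front is a presentational refinement of what the paper does implicitly; the content is the same.

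The one place you genuinely diverge is \textsc{Eq-$*$-Mono}, which is also the case the paper singles out as subtle. Both proofs agree that the difficulty is that the premise's $\chi\sqcup\gamma$ is not a subexpression of the conclusion, so its non-triviality (hence semantic definedness) does not follow by the usual descent. The paper resolves this semantically: from non-triviality of the conclusion it gets that $\chi^*$ and $\gamma^*$ are defined, uses extensiveness to place $\chi\sqsubseteq\chi^*$ and $\gamma\sqsubseteq\gamma^*$, and then invokes Corollary~\ref{coro:join_def_anti} (definedness of join is downward-closed) to conclude $\chi\sqcup\gamma$ is semantically defined. You instead observe that \textsf{NonTrivial} is by definition invariant under $\equiv$, so the rule's own premise $\chi\sqcup\gamma\equiv\gamma$ together with non-triviality of $\gamma$ (a genuine subexpression of the conclusion) immediately yields non-triviality of $\chi\sqcup\gamma$. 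Your route is more elementary --- it needs neither the extensiveness axiom nor the antitone-definedness corollary at this point --- and it generalizes to any rule whose side premise is $\equiv$-related to a subexpression of the conclusion; the paper's route has the mild advantage of working directly with semantic definedness without appealing to transitivity of the syntactic equivalence judgment (which is only derivable, not primitive). Either argument suffices.
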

\begin{proof}
    By induction on the equivalence derivation. Most cases follow from the fact that if $\chi$ and $\chi'$ are well-kinded, then when cases break them into compositions of smaller effects we can obtain kinding and non-triviality for those, which are then equivalent to semantic effects by Proposition \ref{prop:closed_kinded_eff_Q}.  By non-triviality of $\chi$ and $\chi'$,
     the relevant semantic compositions are defined on each side, and equivalent because each rule axiomatizes a semantic equivalence up to definition.
    Only congruence rules use equivalence inductively, and in those cases the inductive hypothesis serves to obtain strict equalities. We show just a couple representative cases here.
    \begin{itemize}
        \item Case \textsc{Eq-$\rhd$-Cong}: Here $\chi=\chi_1\rhd\chi_2$ and $\chi'=\gamma_1\rhd\gamma_2$, where $\Gamma;\Sigma\vdash\chi_1\equiv\gamma_1$ and $\Gamma;\Sigma\vdash\chi_2\equiv\gamma_2$.  Because $\chi$ and $\chi'$ are well-kinded and non-trivial, by inversion on the kinding assumptions we obtain that $\chi_1$, $\chi_2$, $\gamma_1$, and $\gamma_2$ are well-kinded, and by definition of non-triviality we obtain that they are also non-trivial, and that the semantic equivalents of $\chi_1\rhd\chi_2$ and $\gamma_1\rhd\gamma_2$ are defined. Using the inductive hypotheses on the assumed equivalences, we obtain that semantically $\chi_1=\gamma_1$ and $\chi_2=\gamma_2$, and since the semantic $\rhd$ is a partial function, applying it to pairs of equal arguments will yield equal results.
        \item Case \textsc{Eq-$*$-Mono}: Note that this rule syntactically axiomatizes the monotonicity requirement that $x\sqsubseteq y\Rightarrow x^*\sqsubseteq y^*$ when the iterations are defined, using the definition of $\sqsubseteq$ in terms of $\sqcup$.
        In this case, $\chi=\gamma^*\sqcup\gamma'^*$ and $\chi'=\gamma'^*$. As before we can obtain that $\gamma$ and $\gamma'$ are well-kinded and non-trivial, and that their iterations, and join of iterations are semantically defined.
        If we can obtain the equivalence corresponding to the antecedent of \textsc{Eq-$*$-Mono}, this is an ordering relationship we can use, along with knowledge that both sides of the conclusion are defined and the monotonicity of $(-)^*$, to prove the semantic ordering equivalent to the conclusion. But we must first know that $\gamma\sqcup\gamma'$ is non-trivial, which does not follow directly from non-triviality of $\gamma^*\sqcup\gamma'^*$, because the former is not a ``subexpression'' of the latter.  Fortunately, because the latter is non-trivial and closed, we know $\gamma^*$ and $\gamma'^*$ are semantically defined, so we can conclude that semantically $\gamma\sqsubseteq\gamma^*$ and likewise for $\gamma'$ because the iteration operator is extensive.  Then because join's definition is downward-closed (Corollary \ref{coro:join_def_anti}), we know that $\gamma\sqcup\gamma'$ must also be semantically defined, allowing us to conclude the antecedent equivalence holds semantically, thus by monotonicity of $(-)^*$ the conclusion holds semantically.
        \item Case \textsc{Eq-Trans}: Here Lemma \ref{lem:kind_equiv} implies the intermediate effect is also well-kinded, and Lemma \ref{lem:kind_preserves_effvars} implies the intermediate effect is also closed, so two uses of the inductive hypothesis give that the $\chi$ and the intermediate element are equal in $Q(\Gamma;\Sigma)$, and likewise the intermediate element is equal to $\chi'$ in $Q(\Gamma;\Sigma)$, so both $\chi$ and $\chi'$ are equal.
    \end{itemize}
\end{proof}
\begin{corollary}[Closed Syntactic Subeffecting]
    \label{lem:closed_syneff_sub}
    For any two closed non-trivial syntactic effects $\chi$ and $\chi'$ where $\Gamma;\Sigma\vdash\chi :: \mathcal{E}$ and $\Gamma;\Sigma\vdash\chi'::\mathcal{E}$, if $\Gamma;\Sigma\vdash\chi\sqsubseteq\chi'$ then considering $\chi$ and $\chi'$ as elements of $Q(\Gamma;\Sigma)$, $\chi\sqsubseteq\chi'$.
\end{corollary}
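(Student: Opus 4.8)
The plan is to reduce the corollary to the preceding equivalence lemma (Lemma \ref{lem:closed_syneff_equiv}) by exploiting the definition of syntactic subeffecting in terms of join and equivalence. The only rule producing $\Gamma;\Sigma\vdash\chi\sqsubseteq\chi'$ is \textsc{Sub-Def}, whose premise is $\Gamma;\Sigma\vdash\chi\sqcup\chi'\equiv\chi'$, so first I would invert the subeffecting hypothesis to recover this syntactic equivalence. I would also recall that the induced semantic order on $Q(\Gamma;\Sigma)$ is defined by $x\sqsubseteq y \iff x\sqcup y = y$. The goal then becomes showing that $\chi\sqcup\chi'\equiv\chi'$ transports to the semantic equality $\chi\sqcup\chi' = \chi'$ in $Q(\Gamma;\Sigma)$, which yields $\chi\sqsubseteq\chi'$ semantically by definition.

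To invoke Lemma \ref{lem:closed_syneff_equiv} on $\Gamma;\Sigma\vdash\chi\sqcup\chi'\equiv\chi'$, I must discharge its hypotheses for both sides: closedness, well-kindedness, and non-triviality. Closedness of $\chi\sqcup\chi'$ is immediate from closedness of $\chi$ and $\chi'$; well-kindedness follows from \textsc{K-$\sqcup$} applied to the two kinding assumptions; and $\chi'$ already satisfies all three by assumption. The only obligation requiring care is non-triviality of $\chi\sqcup\chi'$.

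The key observation here --- mirroring the delicate \textsc{Eq-$*$-Mono} case of Lemma \ref{lem:closed_syneff_equiv} --- is that non-triviality is an invariant of the equivalence class. Indeed, $\mathsf{NonTrivial}(a)$ is defined as the absence of any $b$ with $a\equiv b$ where $b$ contains a concrete sub-operation that is semantically undefined; since $\equiv$ is symmetric and transitive, $a\equiv a'$ forces $\{b \mid a\equiv b\} = \{b \mid a'\equiv b\}$ and hence $\mathsf{NonTrivial}(a)\Leftrightarrow\mathsf{NonTrivial}(a')$. Applying this to the inverted equivalence $\chi\sqcup\chi'\equiv\chi'$, non-triviality of $\chi'$ (assumed) transfers directly to $\chi\sqcup\chi'$. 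I expect this to be the main subtlety: one must resist trying to establish non-triviality of the join compositionally from non-triviality of $\chi$ and $\chi'$ alone (which need not hold in general), and instead obtain it for free from the equivalence already in hand.

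With all hypotheses discharged, Lemma \ref{lem:closed_syneff_equiv} gives semantic equality of $\chi\sqcup\chi'$ and $\chi'$ as elements of $Q(\Gamma;\Sigma)$. In particular, via Proposition \ref{prop:closed_kinded_eff_Q} (which interprets closed, non-trivial, well-kinded effects in $Q(\Gamma;\Sigma)$, and under which $\chi$ and $\chi'$ themselves become the elements referenced in the conclusion), the semantic join of $\chi$ and $\chi'$ is defined and equals $\chi'$, i.e.\ $\chi\sqcup\chi' = \chi'$. By the definition of the induced semantic order this is exactly $\chi\sqsubseteq\chi'$, completing the argument. The whole proof is short; the only step demanding attention is confirming that non-triviality propagates along the equivalence rather than needing to be re-derived structurally.
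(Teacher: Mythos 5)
Your proposal is correct and follows essentially the same route as the paper's proof: invert \textsc{Sub-Def} to obtain $\Gamma;\Sigma\vdash\chi\sqcup\chi'\equiv\chi'$, transfer non-triviality from $\chi'$ to $\chi\sqcup\chi'$ along that equivalence (the paper phrases this as ``$\chi'$ is not syntactically equivalent to a join whose semantic join is undefined''), and then apply Lemma \ref{lem:closed_syneff_equiv} together with Proposition \ref{prop:closed_kinded_eff_Q} and the definition of the induced order. Your explicit observation that non-triviality is an invariant of the $\equiv$-class, rather than something to be derived compositionally, is exactly the point the paper's terser phrasing relies on.
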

\begin{proof}
    By inversion on the syntactic subeffecting judgment, $\Gamma;\Sigma\vdash\chi\sqcup\chi'\equiv\chi'$. By Lemma \ref{prop:closed_kinded_eff_Q} both $\chi$ and $\chi'$ are equivalent to (possibly different) elements of $Q(\Gamma;\Sigma)$.
    Because $\chi'$ is non-trivial, it is not syntactically equivalent to the syntactic join of two effects whose join is semantically undefined, thus $\chi\sqcup\chi'$ is also semantically defined.
    By Lemma \ref{lem:closed_syneff_equiv}, $\chi\sqcup\chi'$ and $\chi'$ are semantically equivalent, which means that semantically $\chi\sqsubseteq\chi'$.
\end{proof}

\subsection{Syntactic Safety}
\label{sec:syntactic_safety}
Syntactic type safety proceeds in the normal manner (for a language with mutually-defined types and terms), with only a few wrinkles due to effect quantales.
Here we give the major lemmas involved in the type safety proof, with outlines of the proofs themselves.

Systems with any kind of uncontrolled subsumption (e.g., arbitrary subtyping, or our rule for ascribing an expression an arbitrary larger effect) introduce some extra complexity in cases where inversion on a typing derivation is desired, because typing is no longer syntax-directed --- every na\"ive inversion yields two subgoals: one specific, and one corresponding to a use of subtyping.  This can be managed more cleanly and uniformly with a helpful lemma:

\begin{lemma}[Derivations Ending Without Subsumption]
\label{lem:without}
If $\Gamma;\Sigma\vdash e : \tau \mid \chi$, then there exists a derivation $\Gamma;\Sigma\vdash e : \tau \mid \chi'$ whose last inference is not due to subsumption, where $\Gamma;\Sigma\vdash\chi'\sqsubseteq\chi$.
\end{lemma}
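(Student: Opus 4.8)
The plan is a routine structural induction on the typing derivation $\Gamma;\Sigma\vdash e : \tau \mid \chi$, exploiting the key observation that the subsumption rule \textsc{T-Sub} leaves the term $e$ and the type $\tau$ untouched, altering only the effect component (and only in the $\sqsubseteq$-increasing direction). The two ingredients I would rely on are reflexivity and transitivity of the syntactic subeffecting judgment $\Gamma;\Sigma\vdash-\sqsubseteq-$; both are noted as derivable in the caption of Figure~\ref{fig:kinding-equiv} (reflexivity is immediate from \textsc{Sub-Def} and \textsc{Eq-$\sqcup$-Idem}, which carries no premises).

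For the induction, I would split on the last rule used. If the derivation ends in any rule other than \textsc{T-Sub}, then it already ends without subsumption, so I take $\chi'=\chi$ and close the goal with reflexivity, $\Gamma;\Sigma\vdash\chi\sqsubseteq\chi$. The only interesting case is when the derivation ends in \textsc{T-Sub}: here its premise is a strictly smaller derivation $\Gamma;\Sigma\vdash e:\tau\mid\chi_0$ (with the \emph{same} $e$ and $\tau$) together with $\Gamma;\Sigma\vdash\chi_0\sqsubseteq\chi$ and $\nontriv{\chi}$. Applying the induction hypothesis to this premise yields a derivation $\Gamma;\Sigma\vdash e:\tau\mid\chi'$ whose last inference is not subsumption and which satisfies $\Gamma;\Sigma\vdash\chi'\sqsubseteq\chi_0$. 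Transitivity of subeffecting then gives $\Gamma;\Sigma\vdash\chi'\sqsubseteq\chi$, and this $\chi'$ together with its subsumption-free derivation is exactly what the statement demands.

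I do not anticipate a genuine obstacle; the lemma is essentially the fact that chains of \textsc{T-Sub} collapse into a single coarsening of the effect. The one point requiring a moment's care is that \textsc{T-Sub} is the unique non-syntax-directed rule and that it modifies neither $e$ nor $\tau$, so the induction hypothesis applies to its premise with identical term and type, and the recovered $\chi'$ remains a valid effect for the same $e:\tau$. No \textsf{NonTrivial} side conditions must be re-established, since we are merely asserting the existence of the already-constructed subsumption-free derivation rather than building a fresh subsumption step on top of it.
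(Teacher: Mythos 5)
Your proof is correct and follows essentially the same route as the paper's: induction on the typing derivation, taking $\chi'=\chi$ with reflexivity in every non-subsumption case, and in the \textsc{T-Sub} case applying the induction hypothesis to the premise and closing with transitivity of subeffecting. Your additional remarks about \textsc{T-Sub} preserving $e$ and $\tau$ and about not needing to re-establish \textsf{NonTrivial} are accurate but not needed beyond what the paper records.
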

\begin{proof}
    By induction on the derivation.  In all cases that specialize to a particular syntactic element, the result is immediate by applying the relevant typing rule again, and letting $\chi'=\chi$.  In the case of subsumption, we are left with a derivation of $\Gamma;\Sigma\vdash e : \tau \mid \gamma$ and $\Gamma;\Sigma\vdash\gamma\sqsubseteq\chi$.  We do not know if this new typing derivation ends with a use of subsumption.  But by the inductive hypothesis, for some $\chi'$, $\Gamma;\Sigma\vdash e : \tau \mid \chi'$ and $\Gamma;\Sigma\vdash\chi'\sqsubseteq\gamma$.  Since subeffecting is transitive, $\Gamma;\Sigma\vdash\chi'\sqsubseteq\chi$.
\end{proof}

Substitution lemmas are proven by induction on the expression's type derivation, exploiting the fact that all values' effects before subeffecting are $I$:
\begin{lemma}[Term Substitution]
\label{lem:term_subst}
If $\Gamma;\Sigma\vdash v : \tau \mid I$, then:
\begin{itemize}
\item if $\Gamma,x:\tau;\Sigma\vdash e : \tau' \mid \gamma$ then $\Gamma;\Sigma\vdash e[v/x] : \tau'[v/x] \mid \gamma[v/x]$
\item if $\Gamma,x:\tau;\Sigma\vdash \tau' :: \kappa$ then $\Gamma;\Sigma\vdash\tau'[v/x] :: \kappa$
\end{itemize}
\end{lemma}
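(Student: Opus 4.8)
The plan is to prove both clauses simultaneously by mutual induction on the derivations of $\Gamma,x:\tau;\Sigma\vdash e:\tau'\mid\gamma$ and $\Gamma,x:\tau;\Sigma\vdash\tau'::\kappa$. The mutual structure is essential rather than cosmetic: the rule \textsc{K-Sing} for $\mathcal{S}(w)$ has the value-typing judgment $\Gamma,x:\tau;\Sigma\vdash w:\tau_w\mid I$ as a premise, so the kinding clause genuinely depends on the (smaller) typing clause, and symmetrically \textsc{K-$\Pi$}, \textsc{K-$\forall$}, and \textsc{K-Concrete} recur into effect and value subderivations. Before starting I would fix two companion facts used throughout: a commutation lemma for substitutions with disjoint binders, $(\theta[e_2/z])[v/x]=(\theta[v/x])[e_2[v/x]/z]$ for both types and effects (with $z\neq x$ and $z\notin\mathsf{FV}(v)$, after $\alpha$-renaming), and the fact recorded in Section~\ref{sec:params} that $\gamma[v/x]$ is computed by the effect quantale homomorphism $Q(f)$ arising from the value-level map $f$ sending $x\mapsto v$, so $\gamma\in Q(\Gamma,x:\tau;\Sigma)$ yields $\gamma[v/x]\in Q(\Gamma;\Sigma)$.

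Most cases are then routine. The base cases \textsc{T-Prim}, \textsc{T-Bool}, \textsc{T-Unit}, \textsc{K-Bool}, \textsc{K-Unit}, and \textsc{K-Param} concern closed data unaffected by $[v/x]$, and \textsc{K-Var} is immediate because value substitution does not touch type variables. For \textsc{T-Var} I would split on whether the variable is $x$: if it is, then $e[v/x]=v$, $\tau'=\tau$, $\gamma=I$, and since $x$ is bound last we have $x\notin\mathsf{FV}(\tau)$ and $I[v/x]=I$ (units are preserved by homomorphisms), so the goal is exactly the hypothesis $\Gamma;\Sigma\vdash v:\tau\mid I$; if the variable is some $y\neq x$, its type in $\Gamma$ precedes $x$ and so cannot mention $x$. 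The binder cases \textsc{T-Lam}, \textsc{T-Forall}, \textsc{T-TyApp}, \textsc{K-$\Pi$}, and \textsc{K-$\forall$} follow from the induction hypotheses under the extended context plus the commutation lemma, after $\alpha$-renaming bound names away from $x$ and $\mathsf{FV}(v)$. \textsc{K-Sing} and \textsc{K-Concrete} close by appeal to the typing clause of the induction hypothesis on the embedded values (which carry effect $I$, preserved as $I$), and the structural rules \textsc{K-$\rhd$}, \textsc{K-$\sqcup$}, \textsc{K-$*$} are immediate since kinding treats the effect formers as total and never inspects definedness.

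The genuinely effect-quantale-specific work, and the step I expect to be the main obstacle, is discharging the \textsf{NonTrivial} side conditions of \textsc{T-App}, \textsc{T-If}, \textsc{T-While}, and \textsc{T-TyApp} (and the subeffecting premise of \textsc{T-Sub}) after substitution. For the purely inductive cases I never need equivalence: each conclusion effect is rebuilt syntactically from the substituted subeffects by the same operator, so I simply reapply the rule once I know the substituted effect is still nontrivial. The leverage is that $Q(f)$ is an effect quantale homomorphism, so it \emph{reflects} undefinedness and therefore \emph{preserves definedness} of $\rhd$ and $\sqcup$: any concrete sub-sequencing or sub-join that was defined in $Q(\Gamma,x:\tau;\Sigma)$ remains defined in $Q(\Gamma;\Sigma)$ after substituting into constructor arguments, which is exactly what \textsf{NonTrivial} demands. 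The iteration operator, however, is \emph{not} part of the homomorphism contract, so the concrete iteration $(E(\overline{w}))^{*}$ arising in \textsc{T-While} needs a dedicated argument. Writing $a=E(\overline{w})$ and $m=Q(f)$, I would argue that whenever $a^{*}$ is defined, $m(a^{*})$ is itself subidempotent --- since $m(a^{*})\rhd m(a^{*})\sqsubseteq m(a^{*}\rhd a^{*})\sqsubseteq m(a^{*})$ by refinement, monotonicity, and subidempotence of $a^{*}$ --- and dominates $m(a)\sqcup I=m(a)\sqcup m(I)\sqsubseteq m(a\sqcup I)\sqsubseteq m(a^{*})$. Thus the set ${m(a)\uparrow}\cap{I\uparrow}\cap\mathsf{SubIdem}$ is non-empty, and lax iterability of the target forces $(m(a))^{*}=(E(\overline{w}[v/x]))^{*}$ to be defined.

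The remaining delicate point is \textsc{T-Sub}: from $\Gamma,x:\tau;\Sigma\vdash\chi\sqsubseteq\chi'$ I must recover $\Gamma;\Sigma\vdash\chi[v/x]\sqsubseteq\chi'[v/x]$ together with $\textsf{NonTrivial}(\chi'[v/x])$. Here I would prove a companion lemma that value substitution preserves syntactic subeffecting, being careful that a homomorphism only \emph{refines} the concrete simplification equalities underlying \textsc{Eq-$\rhd$-Simp}, \textsc{Eq-$\sqcup$-Simp}, and \textsc{Eq-$*$-Simp}; consequently substitution may make an effect strictly more precise, so the companion lemma should be stated for subeffecting (the inequational form) rather than for equivalence. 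Monotonicity of $Q(f)$ --- a consequence of join-refinement together with the fact that joins are upper bounds --- transports the ordering in the required direction, and for the closed-effect fragments I can descend to the semantics via Proposition~\ref{prop:wk_eff_concl} and Corollary~\ref{lem:closed_syneff_sub}, using the downward-closed definedness of $\sqcup$ (Corollary~\ref{coro:def_anti}) to keep the relevant joins defined. I expect this interplay between the refining (rather than preserving) behavior of collapsing substitutions and the partiality of the operators to be where the bulk of the care is concentrated.
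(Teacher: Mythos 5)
Your proof takes essentially the same route as the paper's: a simultaneous induction on the typing and kinding derivations whose only genuinely quantale-specific step is the \textsc{K-Concrete} case, where substitution into the arguments of a ground effect constructor is carried out by the effect quantale homomorphism $Q(f)$ induced by the indexed structure, exactly as the paper does. The paper's recorded proof stops there and leaves the \textsf{NonTrivial} side conditions, the definedness of concrete iterations after substitution, and the \textsc{T-Sub} case implicit, so your elaboration of those points is added care along the same route rather than a departure. One caveat: your argument that $(E(\overline{w}[v/x]))^{*}$ remains defined leans on \emph{lax iterability} of the target quantale (so that the existence of a subidempotent element above $m(a)\sqcup I$ forces the minimum to exist), an assumption the parameter list in Section~\ref{sec:params} never states --- it fixes only a monotone indexed effect quantale, and the homomorphism contract of Definition~\ref{def:morphism} is silent on $-^{*}$ --- so you should either add that hypothesis explicitly or note that for a general iterable effect quantale this step does not follow from the stated axioms.
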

\begin{proof}
By simultaneous induction on the typing and kinding relations.
The only subtle case is substitution of a variable occurring in a ground effect where $\Gamma,x:\tau;\Sigma\vdash E(\overline{v'}) :: \mathcal{E}$.
By inversion (via \textsc{K-Concrete}), the arguments $\overline{v'}$ are well-typed under $\Gamma,x:\tau;\Sigma$.
In this case, uses of $x$ in arguments to the effect constructor are being replaced by $v$, which we can do using the effect quantale homomorphism induced by $Q$ being indexed, which must necessarily yield $E(\overline{v'[v/x]})$ replacing uses of $x$ in the constructor arguments with $v$. Since by the term-substitution assumption the resulting arguments are well-typed under $\Gamma;\Sigma$, \textsc{K-Concrete} applies.
\end{proof}

\begin{lemma}[Type Substitution]
\label{lem:type_subst}
If $\Gamma;\Sigma\vdash \tau' :: \kappa$, then:
\begin{itemize}
\item if $\Gamma,\alpha::\kappa;\Sigma\vdash e : \tau \mid \gamma$, then $\Gamma;\Sigma\vdash e[\tau'/\alpha] : \tau[\tau'/\alpha] \mid \gamma[\tau'/\alpha]$
\item 
if $\Gamma,\alpha::\kappa;\Sigma\vdash \tau :: \kappa'$, then $\Gamma;\Sigma\vdash\tau[\tau/\alpha]::\kappa'$
\end{itemize}
\end{lemma}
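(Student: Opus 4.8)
The plan is to mirror the proof of Term Substitution (Lemma~\ref{lem:term_subst}): proceed by simultaneous induction on the typing derivation $\Gamma,\alpha::\kappa;\Sigma\vdash e:\tau\mid\gamma$ and the kinding derivation $\Gamma,\alpha::\kappa;\Sigma\vdash\tau::\kappa'$, since the two relations are mutually referential (typing invokes kinding in \textsc{T-TyApp}, \textsc{K-Sing}, etc., and kinding is self-referential through its binders). For each rule I would push the substitution $[\tau'/\alpha]$ through the premises, apply the inductive hypotheses, and reassemble the conclusion with the same rule, relying on the standard substitution-commutation identity $\tau_1[\tau_2/\beta][\tau'/\alpha]=\tau_1[\tau'/\alpha][\tau_2[\tau'/\alpha]/\beta]$ (for $\beta\neq\alpha$, $\beta\notin\mathsf{FV}(\tau')$) and a Barendregt-style freshness convention so that the bound variables of \textsc{T-Lam}, \textsc{T-Forall}, \textsc{K-}$\Pi$, and \textsc{K-}$\forall$ avoid $\alpha$ and the free variables of $\tau'$.

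The only genuinely non-mechanical base case is \textsc{K-Var}: when $\tau=\alpha$ we have $\tau[\tau'/\alpha]=\tau'$ and $\kappa'=\kappa$, so the goal $\Gamma;\Sigma\vdash\tau'::\kappa$ is exactly the lemma's hypothesis; when $\tau=\beta\neq\alpha$ the type is unchanged and the premise $\Gamma(\beta)=\kappa'$ survives removal of $\alpha$ from the context. Because effect variables are simply type variables of kind $\mathcal{E}$, this same case also covers substitution of a concrete effect for an effect variable; at the effect-operator kinding rules (\textsc{K-}$\rhd$, \textsc{K-}$\sqcup$, \textsc{K-}$*$) and at \textsc{K-Concrete} the argument is a routine congruence, with \textsc{K-Concrete} again using the indexed-quantale homomorphism (as in Lemma~\ref{lem:term_subst}) to recompute the value arguments of an effect constructor — noting that substituting a type for $\alpha$ rewrites values appearing inside those constructors only through the type applications embedded in value positions.

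I expect the main obstacle to be preservation of the $\mathsf{NonTrivial}$ side-conditions attached to \textsc{T-App}, \textsc{T-If}, \textsc{T-While}, \textsc{T-TyApp}, and \textsc{T-Sub}. Unlike kinding, these conditions are not automatically stable under substitution: replacing an effect variable $\alpha$ by a concrete effect can turn a possibly-valid effect into a \emph{trivially invalid} one (e.g.\ instantiating $\alpha$ in $\alpha\rhd E(\overline v)$ by a concrete effect whose composition with $E(\overline v)$ is undefined in $Q$). To re-derive a rule after substitution I must re-establish its $\mathsf{NonTrivial}$ premise for the substituted effect. The plan here is to prove an auxiliary sublemma that a subexpression of a nontrivial effect is itself nontrivial — which follows because the congruence rules let any bad concrete subexpression be lifted back to the whole, contradicting its nontriviality — and then combine it with the downward-closure of definedness (Corollaries~\ref{coro:def_anti} and~\ref{coro:join_def_anti}) so that whenever the \emph{whole} instantiated effect is nontrivial, every constituent composition arising in the inductive reconstruction is defined and hence nontrivial. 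In other words, the lemma is designed to be consumed at the point of instantiation, where the single top-level check $\mathsf{NonTrivial}(\gamma_e\rhd\gamma[\tau'/\alpha])$ of \textsc{T-TyApp} discharges all the interior checks; I would make this dependency explicit and verify that the simplification and equivalence rules cannot manufacture a trivially invalid subexpression out of a nontrivial whole.
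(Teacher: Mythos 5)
Your core strategy---simultaneous induction on the typing and kinding derivations, mirroring Lemma~\ref{lem:term_subst}, with \textsc{K-Var} as the one interesting base case where the hypothesis $\Gamma;\Sigma\vdash\tau'::\kappa$ is consumed---is exactly the paper's proof, which is stated in one line as ``by simultaneous induction on the typing and kinding relations, similar to Lemma~\ref{lem:term_subst}.'' On that level your proposal is fine, and your observation that effect variables are just type variables of kind $\mathcal{E}$ so that effect instantiation is subsumed by the \textsc{K-Var} case is the right way to see it.

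Where you go beyond the paper is in worrying about the $\mathsf{NonTrivial}$ side-conditions, and you are right to worry: substituting a concrete effect for $\alpha$ can turn a possibly-valid effect such as $\alpha\sqcup E_2(\overline{v})$ into a trivially invalid one, at which point the interior rule instance cannot be re-derived and the conclusion of the lemma is not obtainable. The paper's one-line proof simply does not engage with this. However, your proposed repair---discharging all interior checks from the single top-level check $\mathsf{NonTrivial}(\gamma_e\rhd\gamma[\tau'/\alpha])$ at the instantiation site, via a sublemma that subexpressions of nontrivial effects are nontrivial---has a hole: it only reaches effects that occur as subexpressions of the term's \emph{overall} effect. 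The latent effect of a nested $\lambda$ or $\Lambda$ inside $e$ does not appear there at all, because \textsc{T-Lam} and \textsc{T-Forall} conclude with effect $I$ and bury the body's effect in the \emph{type}. So a body like $\Lambda\alpha\ldotp\lambda y\ldotp\mathsf{if}\;c\;e_1\;e_2$, whose inner conditional has effect $\gamma_c\rhd(\alpha\sqcup E_2(\overline v))$, passes the \textsc{T-TyApp} check (which sees only $I\rhd I$) yet becomes untypeable after a substitution that makes the join undefined. Your subexpression sublemma and the downward-closure corollaries (which are about semantic definedness of single compositions, not about the syntactic $\mathsf{NonTrivial}$ predicate) cannot bridge this. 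To make the lemma actually go through you would need a hypothesis quantifying over \emph{all} effects occurring in the derivation (including latent effects under binders) after substitution, or you must accept a weaker conclusion; as stated, neither your argument nor the paper's establishes the lemma in full generality.
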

\begin{proof}
By simultaneous induction on the typing and kinding relations, similar to Lemma \ref{lem:term_subst}.
\end{proof}
\begin{lemma}[Canonical Forms]
\label{lem:canonical}
If $\epsilon;\Sigma\vdash v : \tau \mid \gamma$ then:
\begin{itemize}
\item If $\tau=\Pi x:\tau'\overset{\gamma'}{\rightarrow}\tau''$, then $v$ is a primitive ($c$) or $v$ is of the form $(\lambda y\ldotp e)$ and $\epsilon;\Sigma\vdash I\sqsubseteq\gamma$.
\item If $\tau=\forall\alpha::\kappa\overset{\gamma'}{\rightarrow}\tau'$ then $v$ is a primitive or $v$ is of the form $(\Lambda\alpha::\kappa\ldotp e)$ and $\epsilon;\Sigma\vdash I\sqsubseteq\gamma$.
\item If $\tau=\mathsf{bool}$, $v=\mathsf{true}\vee v=\mathsf{false}$ and $\epsilon;\Sigma\vdash I\sqsubseteq\gamma$.
\item If $\tau=\mathsf{unit}$, $v=()$ and $\epsilon;\Sigma\vdash I\sqsubseteq\gamma$.
\end{itemize}
\end{lemma}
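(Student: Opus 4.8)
The plan is to follow the standard canonical-forms argument, with the one complication that the effect-only subsumption rule \textsc{T-Sub} makes typing non-syntax-directed. First I would invoke Lemma~\ref{lem:without} to obtain a derivation of $\epsilon;\Sigma\vdash v : \tau \mid \chi'$ whose final inference is \emph{not} \textsc{T-Sub}, and for which $\epsilon;\Sigma\vdash\chi'\sqsubseteq\gamma$. Crucially, \textsc{T-Sub} alters only the effect, never the type or the term, so both $v$ and $\tau$ are unchanged by this normalization; the whole argument can then proceed by inverting the subsumption-free derivation. Because the subsumption-free rules are syntax-directed on the term, the last rule is determined by the form of $v$, and the only rules that can conclude a typing for a \emph{value} are \textsc{T-Prim}, \textsc{T-Lam}, \textsc{T-Forall}, \textsc{T-Bool}, and \textsc{T-Unit}. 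Since the context is empty, \textsc{T-Var} is inapplicable, so the bare-variable value form is immediately excluded.

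Next I would perform the case analysis, organized by the target type $\tau$. If $\tau$ is a dependent function type $\Pi x:\tau'\overset{\gamma'}{\rightarrow}\tau''$, the only subsumption-free rules whose conclusion has this shape are \textsc{T-Lam} (forcing $v=(\lambda y\ldotp e)$) and \textsc{T-Prim} (forcing $v$ to be a constant $c$ with $\Sigma(c)$ a function type); both assign the effect $I$, so $\chi'=I$ and hence $\epsilon;\Sigma\vdash I\sqsubseteq\gamma$ by the inequality supplied by Lemma~\ref{lem:without}. The quantified case $\tau=\forall\alpha::\kappa\overset{\gamma'}{\rightarrow}\tau'$ is symmetric, with \textsc{T-Forall} and \textsc{T-Prim} the only possibilities and again effect $I$. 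For $\tau=\mathsf{bool}$ the rule \textsc{T-Bool} yields $v\in\{\mathsf{true},\mathsf{false}\}$, and for $\tau=\mathsf{unit}$ the rule \textsc{T-Unit} yields $v=()$.

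The one point requiring genuine care --- and the step I expect to be the main obstacle --- is excluding \textsc{T-Prim} in the two base-type cases. A priori a primitive could be assigned an arbitrary type by $\Sigma$, which would wrongly admit a primitive value at type $\mathsf{bool}$ or $\mathsf{unit}$ and destroy the disjointness of canonical forms. Here I would appeal to the framework's constraints on $\Sigma$: source primitives (those typed by the least environment $\delta$) are required to have types that are well-formed in the empty environment and are never closed base types, and the only primitives added in passing from $\delta$ to a runtime $\Sigma$ are store constants carrying reference (non-base) types. Consequently no primitive is ever given type $\mathsf{bool}$ or $\mathsf{unit}$, so \textsc{T-Prim} cannot conclude either base type and the corresponding value forms are exactly $\{\mathsf{true},\mathsf{false}\}$ and $()$. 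This is precisely the side condition flagged among the language parameters (forbidding primitives from, e.g., adding a third boolean). The remaining effect bound $\epsilon;\Sigma\vdash I\sqsubseteq\gamma$ in the function and quantifier cases then follows purely by transitivity of subeffecting from $\chi'=I$ and $\chi'\sqsubseteq\gamma$.
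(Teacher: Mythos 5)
Your proposal is correct and follows essentially the same route as the paper's proof, which simply says the argument is standard except that the $\mathsf{bool}$ and $\mathsf{unit}$ cases rely on the restriction that $\mathsf{StateEnv}$ never assigns primitives closed base types. You correctly identified that restriction as the one non-routine step, and your use of Lemma~\ref{lem:without} to strip trailing subsumption before inversion is exactly the intended mechanism.
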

\begin{proof}
    Standard, with the exception that the boolean and unit cases rely on the restrictions imposed on the codomain of $\mathsf{StateEnv}$ to not give any primitives those closed base types.
\end{proof}
\begin{lemma}[Value Typing]
\label{lem:valtyping}
    If $\Gamma;\Sigma\vdash v : \tau \mid \gamma$, then $\Gamma;\Sigma\vdash v: \tau \mid I$ and $\Gamma;\Sigma\vdash I \sqsubseteq \gamma$.
\end{lemma}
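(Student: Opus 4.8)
The plan is to proceed by a short induction on the typing derivation $\Gamma;\Sigma\vdash v : \tau \mid \gamma$, exploiting the fact that subsumption in this system (rule \textsc{T-Sub}) acts only on the effect and leaves the type $\tau$ fixed, so the only way a value can acquire an effect other than $I$ is by enlarging it through \textsc{T-Sub}.

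The argument hinges on one purely syntactic case analysis: the last rule of any derivation concluding a typing judgment for a \emph{value} $v$ (one of $c$, $x$, $\lambda x.\,e$, $\Lambda\alpha::\kappa.\,e$, $\mathsf{true}$, $\mathsf{false}$, or the unit value $()$) must be either a value-introduction rule --- \textsc{T-Prim}, \textsc{T-Var}, \textsc{T-Lam}, \textsc{T-Forall}, \textsc{T-Bool}, \textsc{T-Unit} --- or the subsumption rule \textsc{T-Sub}. None of the composite/elimination rules \textsc{T-App}, \textsc{T-TyApp}, \textsc{T-If}, \textsc{T-While} can conclude a typing of a syntactic value, since their subjects are not values. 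In each value-introduction base case the concluding effect is literally $I$, so $\gamma = I$: the judgment $\Gamma;\Sigma\vdash v : \tau \mid I$ is precisely the derivation already in hand, and $\Gamma;\Sigma\vdash I \sqsubseteq \gamma = I$ holds by reflexivity of subeffecting (which follows from \textsc{Sub-Def} together with the equivalence $I\sqcup I\equiv I$).

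For the \textsc{T-Sub} case we have $\gamma = \chi'$ with a premise $\Gamma;\Sigma\vdash v : \tau \mid \chi$ and $\Gamma;\Sigma\vdash\chi\sqsubseteq\chi'$. The inductive hypothesis applied to this premise (a typing of the same value $v$ at the same type $\tau$) yields both $\Gamma;\Sigma\vdash v : \tau \mid I$ --- giving the first conclusion directly --- and $\Gamma;\Sigma\vdash I \sqsubseteq \chi$; transitivity of subeffecting then gives $\Gamma;\Sigma\vdash I \sqsubseteq \chi' = \gamma$, completing this case. Alternatively one could bypass the explicit induction entirely by invoking Lemma \ref{lem:without}: it produces a subsumption-free derivation $\Gamma;\Sigma\vdash v : \tau \mid \chi'$ with $\Gamma;\Sigma\vdash \chi'\sqsubseteq\gamma$, whose final rule must (by the same case analysis) be a value-introduction rule and hence force $\chi' = I$, after which both conclusions are immediate.

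I expect essentially no obstacle here; the proof is routine. The only point demanding care is confirming the exhaustiveness of the case analysis --- in particular remembering to include the unit value $()$, typed by \textsc{T-Unit} and treated as a value by the Canonical Forms lemma even though it is easy to overlook in the value grammar --- and double-checking that no composite rule has a value as its subject.
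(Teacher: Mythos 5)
Your proof is correct and essentially matches the paper's: the paper invokes Lemma~\ref{lem:without} to strip a trailing subsumption and then inverts the resulting derivation, which is exactly the alternative route you describe, and your primary route (explicit induction with \textsc{T-Sub} discharged by the inductive hypothesis plus transitivity) is just that same argument unfolded. Your attention to the exhaustiveness of the case analysis, including the unit value, is a nice touch but does not change the substance.
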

\begin{proof}
    By Lemma \ref{lem:without}, $\Gamma;\Sigma\vdash v : \tau \mid \chi$ and $\Gamma;\Sigma\vdash\chi \sqsubseteq \gamma$.
    Proceed by inversion on the new typing derivation that does not end with subsumption).  For core values (functions, booleans, unit) this is direct from the type rule.  For primitives, Section \ref{sec:params} required $\Sigma\in\mathsf{StateEnv}$ assign effect $I$ to values.
\end{proof}

\begin{lemma}[Primitive Application Typing]
    \label{lem:primappty}
    For any well-typed primitive application $p\;\overline{v}$ of at most $\mathsf{Arity}(p)$ depth, if $\Gamma;\Sigma\vdash p\;\overline{v} : \tau \mid \chi$, then 
    \begin{itemize}
        \item $\tau=\mathsf{LastResult}(\Sigma(p))[\overline{v}/\mathsf{args}(p)]$
        \item $\Gamma;\Sigma\vdash \mathsf{LastEffect}(\Sigma(p))[\overline{v}/\mathsf{args}(p)] \sqsubseteq \chi$
        \item $\Gamma;\Sigma\vdash p\;\overline{v} : \tau \mid \mathsf{LastEffect}(\Sigma(p))[\overline{v}/\mathsf{args}(p)]$
    \end{itemize}
    where $-[\overline{v}/\mathsf{args}(p)]$ is the iterated substitution of argument values and types.
\end{lemma}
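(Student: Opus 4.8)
The plan is to prove a slightly more general statement by induction on the application depth $k \le \mathsf{Arity}(p) = n$, since the conclusions as stated only make literal sense at $k = n$ (a partial application still carries a function or quantified type, not $\mathsf{LastResult}$). For each $k$, let $\tau_k$ and $\delta_k$ denote the type and the exact (subsumption-free) effect obtained by stripping $k$ leading binders from $\Sigma(p)$ and applying the substitution $[v_1/x_1]\cdots[v_k/x_k]$. By the language parameter requiring every latent effect of a primitive's type before full application to be $I$, we have $\delta_k = I$ for all $k < n$, while $\delta_n = \mathsf{LastEffect}(\Sigma(p))[\overline{v}/\mathsf{args}(p)]$ and $\tau_n = \mathsf{LastResult}(\Sigma(p))[\overline{v}/\mathsf{args}(p)]$. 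The generalized claim is that any derivation of $\Gamma;\Sigma\vdash p\,v_1\cdots v_k : \tau \mid \chi$ forces $\tau = \tau_k$, admits $\delta_k \sqsubseteq \chi$, and can be re-derived with the exact effect $\delta_k$; the lemma is then the $k = n$ instance.

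For the base case $k=0$ I would apply Lemma \ref{lem:without} to obtain a subsumption-free derivation, which must end in \textsc{T-Prim}, giving $\tau = \Sigma(p) = \tau_0$, exact effect $I = \delta_0$, and $I \sqsubseteq \chi$. For the inductive step on $p\,v_1\cdots v_{k+1}$, I again invoke Lemma \ref{lem:without} to remove any trailing subsumption, leaving a final inference by \textsc{T-App} or \textsc{T-TyApp}. Inversion exposes a subderivation typing the head $p\,v_1\cdots v_k$ at a function or quantified type with effect $\gamma_1$, plus (for \textsc{T-App}) a subderivation typing the value argument $v_{k+1}$ with effect $\gamma_2$. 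The inductive hypothesis pins the head's type to $\tau_k$ (so the latent effect exposed by this application is precisely the next latent effect of $\Sigma(p)$ after substitution, namely $I$ when $k+1 < n$ and $\mathsf{LastEffect}[\cdots]$ when $k+1 = n$) and supplies a re-derivation of the head with exact effect $\delta_k$, which is $I$ whenever $k < n$; Value Typing (Lemma \ref{lem:valtyping}) gives $\Gamma;\Sigma\vdash v_{k+1} : \tau_a \mid I$ together with $I \sqsubseteq \gamma_2$. Composing via \textsc{T-App} with head and argument both at effect $I$ then yields exactly $\delta_{k+1}$, type tracking gives $\tau = \tau_{k+1}$, and monotonicity of $\rhd$ (Proposition \ref{prop:isotonicity}) with $I \sqsubseteq \gamma_1$ and $I \sqsubseteq \gamma_2$ gives $\delta_{k+1} = I \rhd I \rhd \gamma[v_{k+1}/x] \sqsubseteq \gamma_1 \rhd \gamma_2 \rhd \gamma[v_{k+1}/x] \sqsubseteq \chi$, where the last step uses Lemma \ref{lem:without}. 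The \textsc{T-TyApp} case is analogous, carrying over the type argument's well-kindedness premises unchanged and with no argument effect to account for.

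The main obstacle I anticipate is discharging the \textsf{NonTrivial} side condition when re-deriving with the smaller exact effect $\delta_{k+1}$: the original \textsc{T-App}/\textsc{T-TyApp} step required $\mathsf{NonTrivial}(\gamma_1 \rhd \gamma_2 \rhd \gamma[v_{k+1}/x])$, whereas the re-derivation needs $\mathsf{NonTrivial}(I \rhd I \rhd \gamma[v_{k+1}/x])$. I would dispatch this using two structural facts about \textsf{NonTrivial}. First, it is invariant under $\equiv$ (its definition quantifies over all $\equiv$-equivalent effects), so by the unit laws $\mathsf{NonTrivial}(I \rhd I \rhd \gamma[v_{k+1}/x])$ reduces to $\mathsf{NonTrivial}(\gamma[v_{k+1}/x])$. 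Second, nontriviality of a sequential composition descends to each factor: any trivially-invalid concrete subexpression witnessing a violation inside $\gamma[v_{k+1}/x]$ would survive as a subexpression of $\gamma_1 \rhd \gamma_2 \rhd \gamma[v_{k+1}/x]$, contradicting the latter's nontriviality. Hence nontriviality of the original effect descends to $\gamma[v_{k+1}/x]$ and thus to $\delta_{k+1}$, licensing the re-derivation. A little extra care is needed to confirm the intermediate subeffecting comparisons are between \emph{defined} effects (using that definedness is downward-closed, Corollary \ref{coro:def_anti}), but this follows from the originals being well-formed.
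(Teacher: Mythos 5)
Your proof follows essentially the same route as the paper's: a strengthened induction on the application depth $k$ (the paper phrases it as an inner induction on the nesting depth $n$ up to $\mathsf{Arity}(p)$), using Lemma \ref{lem:without} to strip subsumption, inversion on \textsc{T-App}/\textsc{T-TyApp}, the parameter constraint that all latent effects before the last are $I$, and value typing plus monotonicity to collapse the composed effect to $\delta_{k+1}$. Your explicit discharge of the \textsf{NonTrivial} side condition for the re-derivation (via $\equiv$-invariance and descent to subexpressions) is a detail the paper's proof glosses over, and is a welcome addition rather than a divergence.
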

\begin{proof}
    By induction on $\mathsf{Arity}(p)$.
    In the case $\mathsf{Arity}(p)=0$, $p$ is a constant, there are no arguments, and the result follows from Lemma \ref{lem:without}.
    In the case $\mathsf{Arity}(p) > 0$: Let $n$ be the nesting depth of $p\;\overline{v}$, bounded by $\mathsf{Arity}(p)$.  Then this follows from proving a strengthened inductive hypothesis for different numbers of applications.
    To state this we need mild generalizations of $\mathsf{LastEffect}$ and $\mathsf{LastResult}$ (Section \ref{sec:params}).
    Let $\mathsf{ResultType}(i,\tau)$ return the codomain type for some (functional) type $\tau=B_0(x_0)\xrightarrow{\chi_0}\ldots B_n(x_n)\xrightarrow{\chi_n}\tau'$ after supplying $n$ arguments --- i.e., 
    \[\mathsf{ResultType}(i,B_0(x_0)\xrightarrow{\chi_0}\ldots B_n(x_n)\xrightarrow{\chi_n}\tau')=B_i(x_i)\xrightarrow{\chi_{i}}\ldots B_n(x_n)\xrightarrow{\chi_n}\tau'\]
    Intuitively, $\mathsf{ResultType}(i,\tau)$ peels off $n$ binders and latent effects from the functional effect $\tau$, returning the result. It is undefined for non-functional types, or when $i$ exceeds the number of curried arguments.
    Then let $\mathsf{Effect}(i,\tau)$ select the $i$-th latent effect $\chi_i$ of such types.
    To see how this generalizes $\mathsf{LastResult}$ and $\mathsf{LastEffect}$, note that
    $\mathsf{LastEffect}(\Sigma(p))=\mathsf{Effect}(\mathsf{Arity}(p),\Sigma(p))$,
    $\mathsf{LastResult}(\Sigma(p))=\mathsf{ResultType}(\mathsf{Arity}(p),\Sigma(p))$.
    Note also that $\mathsf{LastResult}(0,\tau)=\tau$.

    We can now state a strengthened inductive hypothesis
    \[ \begin{array}{l}
        \forall n\le\mathsf{Arity}(p)\ldotp \Gamma;\Sigma\vdash p\;\overline{v}^n : \tau \mid \chi \Rightarrow\\
        \qquad \tau=\mathsf{ResultType}(n,\Sigma(p))[\overline{v}^n/\mathsf{args}(n,\Sigma(p))] \land \Gamma;\Sigma\vdash\mathsf{Effect}(n,\Sigma(p))[\overline{v}^n/\mathsf{args}(n,\Sigma(p))] \sqsubseteq \chi 
        \\\qquad\land\Gamma;\Sigma\vdash p\;\overline{v}^n : \tau \mid \mathsf{Effect}(n,\Sigma(p))[\overline{v}^n/\mathsf{args}(n,\Sigma(p))]
    \end{array}\] 
    and prove the lemma by specializing to the case where $n=\mathsf{Arity}(p)$.
    Note that since we required $\Sigma$ to assign latent effect $I$ for all non-full applications of primitives, the use of $\mathsf{Effect}(-,-)$ in the lemma above returns $I$ for all $i<\mathsf{Arity}(p)$. Because $n$ is bounded by the arity of $p$, the constraints on $\Sigma$ from Section \ref{sec:params} ensure all uses of $\mathsf{Effect}$ and $\mathsf{ResultType}$ are defined.

    This generalization itself is proven by induction on $n$.  When $n=0$ the result follows as above.  When $n>0$, $p\;\overline{v}^n=(p\;\overline{v}^{n-1})\;v'$ or $p\;\overline{v}^n=(p\;\overline{v}^{n-1})[\tau']$.

    In the case of the value application, Lemma \ref{lem:without} followed by inversion on typing and use of the inductive hypothesis gives a (dependent) function type for $p\;\overline{v}^{n-1}$, whose result types is 
    $\mathsf{ResultType}(n-1,\Sigma(p))[\overline{v}^{n-1}/\mathsf{args}(n-1,\Sigma(p))]$, and whose effect $\chi'$ is lower-bounded by $\mathsf{Effect}(n-1,\Sigma(p))[\overline{v}^{n-1}/\mathsf{args}(n-1,\Sigma(p))]$.  Since by assumption all latent effects of primitive types except the last are $I$, $\Gamma;\Sigma\vdash I\sqsubseteq\chi'$.
    It also gives that the effect $\chi$ is an upper bound on \[\chi'\rhd\mathsf{Effect}(n,\Sigma(p))[\overline{v}^{n-1}/\mathsf{args}(n-1,\Sigma(p))][v'/\mathsf{arg}(n,\Sigma(p))]\]
    which is itself an upper bound on 
    \[I\rhd\mathsf{Effect}(n,\Sigma(p))[\overline{v}^{n-1}/\mathsf{args}(n-1,\Sigma(p))][v'/\mathsf{arg}(n,\Sigma(p))]\]
    which is exactly \[\mathsf{Effect}(n,\Sigma(p))[\overline{v}^n/\mathsf{args}(n,\Sigma(p))]\]  Similarly, the result type of this application will be 
    \[\begin{array}{l}
        \mathsf{ResultType}(n,\Sigma(p))[\overline{v}^{n-1}/\mathsf{args}(n-1,\Sigma(p))][v'/\mathsf{arg}(n,\Sigma(p))] \\
        \qquad = \mathsf{ResultType}(n,\Sigma(p))[\overline{v}^n/\mathsf{args}(n,\Sigma(p))]
    \end{array}\]
    The new typing derivation is obtained from the inductive hypothesis and the fact that all effects but the last are $I$ for a primitive, and thus using the stated $n$th latent effect as the effect of the application.

    The case of type application is similar, but with a type lambda and substitution for a type-level variable instead of a value variable.
\end{proof}

We give type preservation below.  
\begin{lemma}[One Step Type Preservation]
\label{lem:onestep_preservation}
For all $Q$, $\sigma$, $e$, $e'$, $\Sigma$, $\tau$, $\gamma$, and $\gamma'$, if
\[\epsilon;\Sigma\vdash e : \tau \mid \gamma
\qquad\textrm{and}\qquad
\vdash \sigma : \Sigma
\qquad\textrm{and}\qquad
\sigma,e\rightarrow^{\gamma'}_Q \sigma',e'
\qquad\textrm{and}\qquad
\mathsf{NonTrivial}(\gamma)
\]
then there exist $\Sigma'\in\mathsf{StateEnv}$, $\gamma''\in Q(\epsilon;\Sigma)$ such that
\[\epsilon;\Sigma'\vdash e' : \tau \mid \gamma''
\quad\textrm{and}\quad
\vdash\sigma':\Sigma'
\quad\textrm{and}\quad
\Sigma\le\Sigma'
\quad\textrm{and}\quad
\gamma'\rhd_Q\gamma''\sqsubseteq_Q\gamma
\quad\textrm{and}\quad
\mathsf{NonTrivial}(\gamma'')
\]
\end{lemma}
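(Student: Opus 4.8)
The plan is to proceed by case analysis on the reduction rule that derives $\sigma,e\rightarrow^{\gamma'}_Q\sigma',e'$ (equivalently, by induction on the step derivation, so as to also cover the omitted congruence rules). A uniform first move in every case is to apply Lemma \ref{lem:without} to the hypothesis $\epsilon;\Sigma\vdash e:\tau\mid\gamma$, replacing it with a derivation ending \emph{without} subsumption whose effect $\chi'$ satisfies $\Gamma;\Sigma\vdash\chi'\sqsubseteq\gamma$. I would prove the effect bound against $\chi'$ and then recover the bound against $\gamma$ by transitivity of $\sqsubseteq$; this lets me invert the subsumption-free derivation to read off exactly one typing rule per reduction step. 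In each case the residual effect $\gamma''$ is taken to be the natural effect the reduct receives, and the obligation collapses to exhibiting $\Sigma'$ and verifying a single inequality $\gamma'\rhd\gamma''\sqsubseteq\gamma$ in the quantale.

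For the $\beta$-rules \textsc{E-App} and \textsc{E-TyApp} I would take $\Sigma'=\Sigma$ and note $\gamma'=I$. Inverting \textsc{T-App} and \textsc{T-Lam} (resp.\ \textsc{T-TyApp} and \textsc{T-Forall}) and applying the substitution lemma (Lemma \ref{lem:term_subst}, resp.\ Lemma \ref{lem:type_subst}) shows the substituted body $e[v/x]$ carries the substituted latent effect $\gamma_0[v/x]$, where the original effect is $\gamma=\gamma_1\rhd\gamma_2\rhd\gamma_0[v/x]$ with $\gamma_1,\gamma_2$ the function- and argument-position effects. Value Typing (Lemma \ref{lem:valtyping}) forces $I\sqsubseteq\gamma_1$ and $I\sqsubseteq\gamma_2$, so taking $\gamma''=\gamma_0[v/x]$ and using monotonicity of $\rhd$ (Proposition \ref{prop:isotonicity}) gives $I\rhd\gamma_0[v/x]\sqsubseteq\gamma_1\rhd\gamma_2\rhd\gamma_0[v/x]=\gamma$, with definedness of the left side supplied by antitone definedness (Corollary \ref{coro:def_anti}) since $\gamma$ is a well-kinded effect. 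The rules \textsc{E-IfTrue}/\textsc{E-IfFalse} are analogous, using that the kept branch's effect lies below the join $\gamma_1\sqcup\gamma_2$ and that the condition, now a value, contributes an effect above $I$. \textsc{E-PrimApp} is discharged by combining Primitive Application Typing (Lemma \ref{lem:primappty}) with the \emph{primitive preservation} assumption, which directly yields $\Sigma'\ge\Sigma$, a well-typed resulting state, the reduct typed with effect $I$, and the inequality $\gamma'\sqsubseteq\gamma$ on the runtime effect; taking $\gamma''=I$ closes the case. The congruence rule \textsc{E-PrimArg} (and the elided structural rules) follow from the induction hypothesis after weakening the typing of the already-reduced prefix to the larger $\Sigma'$.

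The substantive case is \textsc{E-While}, where $e=\mathsf{while}\;c\;e_b$ steps (with $\gamma'=I$) to $\mathsf{if}\;c\;(e_b;\mathsf{while}\;c\;e_b)\;()$. Writing $\gamma_c,\gamma_b$ for the condition and body effects and $S=(\gamma_b\rhd\gamma_c)^*$, the original effect is $\gamma=\gamma_c\rhd S$, while the unfolded term types with effect $\gamma_c\rhd((\gamma_b\rhd(\gamma_c\rhd S))\sqcup I)$. I would distribute the outer $\gamma_c$ over the join and bound the two summands separately: the constant branch gives $\gamma_c\rhd I=\gamma_c\sqsubseteq\gamma_c\rhd S$ since $I\sqsubseteq S$ (possibly-empty) and $\rhd$ is monotone, while the looping branch gives $\gamma_b\rhd\gamma_c\rhd S\sqsubseteq S\rhd S\sqsubseteq S$ by extensiveness and foldability of $-^*$, hence $\gamma_c\rhd\gamma_b\rhd\gamma_c\rhd S\sqsubseteq\gamma_c\rhd S$. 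Because the join is a least upper bound, the whole unfolded effect lies below $\gamma_c\rhd S=\gamma$, and the \textsf{NonTrivial} side-condition on the reconstructed \textsc{T-If} derivation holds because the bounding effect $\gamma$ is itself valid (antitone definedness again). This is the one case where the argument genuinely depends on the iteration axioms of Section \ref{sec:iter_props} rather than on generic monoid reasoning.

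I expect the principal difficulty to lie not in any single case but in the pervasive bookkeeping forced by \emph{partiality}. Each inequality between composite effects requires a separate justification that both sides are defined, via repeated appeals to Propositions \ref{prop:isotonicity} and \ref{prop:join-isotonicity} and the definedness corollaries; and the statement itself straddles the syntactic effects appearing in typing derivations and the \emph{semantic} quantale elements of $Q(\epsilon;\Sigma)$ in which $\sqsubseteq$ is ultimately asserted, so I must coerce between the two using Proposition \ref{prop:closed_eff_concl_Q}, Lemma \ref{lem:closed_syneff_equiv}, and Corollary \ref{lem:closed_syneff_sub} at the appropriate points. Keeping every reconstructed derivation's \textsf{NonTrivial} premises satisfied, and confirming that the dynamic effect $\gamma'$ delivered by $\llbracket-\rrbracket$ is read in the correct index set, is where the care will concentrate.
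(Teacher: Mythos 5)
Your proposal is correct and follows essentially the same route as the paper's proof: induction on the step relation, a uniform appeal to Lemma \ref{lem:without} before inversion, the same choices of $\Sigma'$ and $\gamma''$ in each case, the same supporting lemmas (substitution, value typing, primitive application typing plus primitive preservation), and the same reliance on the possibly-empty, extensive, and foldable axioms in the \textsc{E-While} case. The only cosmetic difference is that in the while case you distribute $\gamma_c$ over the join and bound the two summands separately, whereas the paper rewrites monotonically under the join before discharging the $\sqcup I$; these are interchangeable given the quantale's distributivity law.
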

Before giving the proof, we point out a couple small subtleties of the statement related to the distinction between syntactic and semantic effects.  First, note that because this lemma applies to the empty type environment, and all effects concluding a type judgment with the empty type environment are necessarily closed, Proposition \ref{prop:closed_eff_concl_Q} applies, implying that $\gamma$ is equivalent to an element of $Q(\epsilon;\Sigma)$, and $\gamma''$ is equivalent to an element of $Q(\epsilon;\Sigma')$.  Because the operational semantics only emit elements of $Q(\mathsf{Values})$ that are $I$ or an element restricted by primitive preservation to be valid for the environments of interest, we can assume a similar restraint there.
Thus in the penultimate conjunct of the conclusion --- $\gamma'\rhd_Q\gamma''\sqsubseteq_Q\gamma$ we abuse notation and take this to mean the relationship holds with $\gamma'$, $\gamma''$, and $\gamma$ considered as elements of $Q(\epsilon;\Sigma')$ (using the inclusion $Q(\epsilon;\Sigma)\hookrightarrow Q(\epsilon;\Sigma')$ due to $Q$ being monotone) without explicitly employing Proposition \ref{prop:closed_eff_concl_Q} (that is, $\gamma$ and $\gamma''$ are still syntactic effects from $S_{Q(\mathsf{Values})}$).
This requires the proof to ensure the claimed composition is defined in the effect quantale.
\begin{proof}
By induction on the reduction relation.
We present only the non-trivial reducts (i.e., we omit structural rules that simply reduce a subterm and propagate state and effect labels).
\begin{itemize}
\item Case \textsc{E-App}: Here we know $e=(\lambda x\ldotp e_b)\;v$, $e'=e_b[v/x]$, $\sigma=\sigma'$ and $\gamma'=I$.  
By Lemma \ref{lem:without}, there is a derivation not ending with subsumption for some effect $\chi$ where $\epsilon;\Sigma\vdash \chi\sqsubseteq\gamma$.
By inversion on the new typing derivation and consequences of the effect ordering:
\[\begin{array}{l@{\qquad}l}
\epsilon;\Sigma\vdash(\lambda x\ldotp e_b):\Pi x:\tau_{arg}\overset{\gamma_f}{\rightarrow}\tau_{res} \mid \gamma_a
&
\tau=\tau_{res}[v/x]
\\
\epsilon;\Sigma\vdash v : \tau_{arg}\mid\gamma_v
&
x\not\in\mathsf{FV}(\gamma_f,\tau_{res})\vee\mathsf{Value}(v)
\\
\gamma_a\rhd\gamma_v\rhd(\gamma_f[v/x])=\chi
&
\mathsf{NonTrivial}(\gamma_a\rhd\gamma_v\rhd(\gamma_f[v/x]))
\end{array}
\]
By value typing (Lemma \ref{lem:valtyping}), the function and argument are each typeable with effect $I$, and $I\sqsubseteq\gamma_a$ and $I\sqsubseteq\gamma_v$.
So $\epsilon;\Sigma\vdash (\gamma_f[v/x])\equiv I\rhd I\rhd(\gamma_f[v/x])\sqsubseteq\gamma_a\rhd\gamma_v\rhd(\gamma_f[v/x])=\chi$.
From the inversion on the function's type derivation, $\epsilon,x:\tau_{arg};\Sigma\vdash e_b : \tau_{res} \mid \gamma_f$.
By term substitution (Lemma \ref{lem:term_subst}) we then have $\epsilon;\Sigma\vdash e_b[v/x] : \tau_{res}[v/x] \mid \gamma_f[v/x]$.
We then know $\epsilon;\Sigma\vdash I\rhd\gamma_f[v/x]\sqsubseteq\gamma_f[v/x]\sqsubseteq\chi\sqsubseteq\gamma$, and as the state did not change, let $\Sigma'=\Sigma$ and the state remains well-typed.
By the consequences above, $\mathsf{NonTrivial}(\gamma_f[v/x])$.
It remains to show that $\epsilon;\Sigma\vdash I\rhd\gamma_f[v/x]\sqsubseteq\gamma$, which follows from transitivity with $\chi$ as a mid-point.
By Proposition \ref{prop:closed_eff_concl_Q} and the underlying effect quantale $I\rhd_Q\gamma_f[v/x]\sqsubseteq_Q\gamma\in Q(\epsilon;\Sigma')$.
\item Case \textsc{E-TApp}: Here $e=(\Lambda\alpha::\kappa\ldotp e_b)[\tau_\alpha]$, $\gamma'=I$, $e'=e_b[\tau_\alpha/\alpha]$, $\sigma=\sigma'$.
By Lemma \ref{lem:without} and inversion on the non-subsuming typing assumption:
\[\begin{array}{l@{\qquad}l}
\epsilon;\Sigma\vdash(\Lambda\alpha::\kappa\ldotp e_b) : \forall\alpha::\kappa\overset{\gamma_f}{\rightarrow}\tau_{res}\mid\gamma_e
&
\tau=\tau_{res}[\tau_\alpha/\alpha]
\\
\epsilon;\Sigma\vdash \tau_\alpha :: \kappa
&
\chi=\gamma_e\rhd\gamma_f[\tau_\alpha/\alpha]\\
\mathsf{NonTrivial}(\gamma_e\rhd\gamma_f[\tau_\alpha/\alpha]) & 
\epsilon;\Sigma\vdash\chi\sqsubseteq\gamma\\
\gamma_e\rhd\gamma_f[\tau_\alpha/\alpha]=\chi
\end{array}\]
By value typing, $\epsilon;\Sigma\vdash I\sqsubseteq\gamma_e$, so 
\[\epsilon;\Sigma\vdash  \gamma_f[\tau_\alpha/\alpha]\equiv I\rhd\gamma_f[\tau_\alpha/\alpha]\sqsubseteq\gamma_e\rhd\gamma_f[\tau_\alpha/\alpha] =\chi\sqsubseteq\gamma
\]
Also from the inversion on the function's type derivation, $\epsilon,\alpha::\kappa\mathbin{;}\Sigma\vdash e_b : \tau_{res} \mid \gamma_f$.
By type substitution (Lemma \ref{lem:type_subst}) we then have $\epsilon;\Sigma\vdash e_b[\tau_\alpha/\alpha] : \tau_{res}[\tau_\alpha/\alpha] \mid \gamma_f[\tau_\alpha/\alpha]$, establishing $\gamma''=\gamma_f[\tau_\alpha/\alpha]$.  We also know $\epsilon;\Sigma\vdash I\rhd\gamma_f[\tau_\alpha/\alpha]\equiv\gamma_f[\tau_\alpha/\alpha]\sqsubseteq\chi\sqsubseteq\gamma$, and as the state did not change, $\Sigma'=\Sigma$ and the state remains well-typed.
$\gamma''$ is nontrivial because it is less than the assumed-nontrivial $\gamma$.
By Proposition \ref{prop:closed_eff_concl_Q}, $I\rhd_Q\gamma_f[\tau_\alpha/\alpha]\sqsubseteq_Q\gamma\in Q(\epsilon;\Sigma')$.
\item Case \textsc{E-Prim}: Here $e=p\;\overline{v}$, and $\llbracket p\;\overline{v}\rrbracket(\sigma)=(e',\gamma',\sigma')$.
Because $\llbracket p\;\overline{v}\rrbracket$ is defined, the initial term is a series of nested applications and type applications rooted at primitive $p$, and this is a full application of $p$.
So by Lemma \ref{lem:primappty}, $\tau=\mathsf{LastResult}(\Sigma(p))[\overline{v}/\mathsf{args}(\Sigma(p))]$, and $\epsilon;\Sigma\vdash\mathsf{LastEffect}(\Sigma(p))[\overline{v}/\mathsf{args}(\Sigma(p))]\sqsubseteq\gamma$, and moreover:
\[
    \epsilon;\Sigma\vdash p\;\overline{v} : \tau \mid \mathsf{LastEffect}(\Sigma(p))[\overline{v}/\mathsf{args}(\Sigma(p))]
\]
Primitive preservation then gives that $\gamma'\sqsubseteq \mathsf{LastEffect}(\Sigma(p))[\overline{v}/\mathsf{args}(\Sigma(p))]$, and that $\epsilon;\Sigma'\vdash v' : \tau \mid I$
for some $\Sigma'\ge\Sigma$ such that $\vdash\sigma':\Sigma'$.
Since $\gamma'$ is less than the assumed-non-trivial $\gamma$, it is also non-trivial.
Proposition \ref{prop:closed_eff_concl_Q} ensures the effects from the type judgments can be viewed as elements of $Q(\epsilon;\Sigma')$, and since the dynamic effects generated by $\llbracket-\rrbracket$ are restricted to depend only on values passed at the call site, and since those values were well typed under $\epsilon;\Sigma$, the dynamic effect can be as well.  Since the dynamic effect is less than the static effect (again, by primitive preservation) and both are concrete elements, then since $\gamma''=I$, the requirement that $\gamma'\rhd_Q\gamma''\sqsubseteq_Q\gamma$ is satisfied.
\item Case \textsc{E-IfTrue}: Here $e=\mathsf{if}\;\mathsf{true}\;e_1\;e_2$, $e'=e_1$, $\sigma=\sigma'$, $\gamma'=I$.  By Lemma \ref{lem:without} and inversion on typing:
\[\begin{array}{l@{\qquad}l}
\epsilon;\Sigma\vdash\mathsf{true}:\mathsf{bool}\mid\gamma_{\mathsf{true}}
&
\epsilon;\Sigma\vdash e_1 : \tau \mid \gamma_1
\\
\epsilon;\Sigma\vdash e_2 : \tau \mid \gamma_2
&
\chi=\gamma_{\mathsf{true}}\rhd(\gamma_1\sqcup\gamma_2)
\\
\mathsf{NonTrivial}(\gamma)
&
\mathsf{NonTrivial}(\gamma_{\mathsf{true}}\rhd(\gamma_1\sqcup\gamma_2))
\\
\epsilon;\Sigma\vdash \chi\sqsubseteq\gamma
\end{array}\]
By value typing, $\epsilon;\Sigma\vdash I\sqsubseteq\gamma_\mathsf{true}$.
By local hypothesis, immediately $\epsilon;\Sigma\vdash e_1 : \tau \mid\gamma_1$, and by effect quantale laws $\epsilon;\Sigma\vdash I\rhd\gamma_1\sqsubseteq\gamma_{\mathsf{true}}\rhd(\gamma_1\sqcup\gamma_2)=\chi\sqsubseteq\gamma$, which holds semantically (i.e., in $Q(\epsilon;\Sigma)$) as well because the effects involved are non-trivial closed effects.
State is unchanged, and remains well-typed under $\Sigma'=\Sigma$.
Thus the results hold for $\gamma''=\gamma_1$, which is non-trivial as a subexpression of a non-trivial effect.
\item Case \textsc{E-IfFalse}: Analogous to \textsc{E-IfTrue}.
\item Case \textsc{E-While}: Here $e=\mathsf{while}\;e_c\;e_b$, $\gamma'=I$, $\sigma=\sigma'$, and $e'=\mathsf{if}\;e_c\;(e_b;(\mathsf{while}\;e_c\;e_b))\;()$.
By Lemma \ref{lem:without} and inversion on typing:
\[\begin{array}{l@{\qquad}l@{\qquad}l@{\qquad}l}
\epsilon;\Sigma\vdash e_c:\mathsf{bool} \mid \gamma_c
&
\epsilon;\Sigma\vdash e_b:\tau_b \mid \gamma_b
&
\chi=\gamma_c\rhd(\gamma_b\rhd\gamma_c)^*
\\
\tau=\mathsf{unit}
&
\mathsf{NonTrivial}(\gamma)
&
\mathsf{NonTrivial}(\gamma_c\rhd(\gamma_b\rhd\gamma_c)^*)
\\
\epsilon;\Sigma\vdash \chi\sqsubseteq\gamma
\end{array}\]
The \textsf{NonTrivial} assumptions imply that all subexpressions of $\chi$ are also non-trivial.
By \textsc{T-If}, \textsc{T-Unit}, desugaring $;$ to function application, and weakening:
\[\epsilon;\Sigma\vdash\mathsf{if}\;e_c\;(e_b;(\mathsf{while}\;e_c\;e_b))\;() : \mathsf{unit} \mid \gamma_c\rhd(((\gamma_b\rhd\gamma_c)\rhd(\gamma_b\rhd\gamma_c)^*)\sqcup I) \]
State remains unchanged, so the final obligation in this case is to prove the effect just given for $e'$ (technically, preceded by $I\rhd$) is semantically a subeffect of $\chi=\gamma_c\rhd(\gamma_b\rhd\gamma_c)^*$ (in turn a subeffect of $\gamma$), which relies crucially on iteration being foldable and possibly-empty:
\[\begin{array}{rcl}
\epsilon;\Sigma\vdash \gamma_c\rhd((\gamma_b\rhd\gamma_c\rhd(\gamma_b\rhd\gamma_c)^*)\sqcup I)
&\sqsubseteq&
\gamma_c\rhd(((\gamma_b\rhd\gamma_c)^*\rhd(\gamma_b\rhd\gamma_c)^*)\sqcup I)\\
&\sqsubseteq&
\gamma_c\rhd(((\gamma_b\rhd\gamma_c)^*)\sqcup I)\\
&\sqsubseteq&
\gamma_c\rhd((\gamma_b\rhd\gamma_c)^*)\\
&\sqsubseteq&
\gamma
\end{array}
\]
\end{itemize}
\end{proof}

\begin{theorem}[Type Preservation]
    \label{thm:preservation}
For all $Q$, $\sigma$, $e$, $e'$, $\Sigma$, $\tau$, $\gamma$, and $\gamma'$, if
\[\epsilon;\Sigma\vdash e : \tau \mid \gamma
\qquad\textrm{and}\qquad
\vdash \sigma : \Sigma
\qquad\textrm{and}\qquad
\sigma,e\dhxrightarrow{\gamma'}_Q \sigma',e'
\qquad\textrm{and}\qquad
\mathsf{NonTrivial}(\gamma)
\]
then there exist $\Sigma'$, $\gamma''$ such that
\[\epsilon;\Sigma'\vdash e' : \tau \mid \gamma''
\quad\textrm{and}\quad
\vdash\sigma':\Sigma'
\quad\textrm{and}\quad
\Sigma\le\Sigma'
\quad\textrm{and}\quad
\gamma'\rhd_Q\gamma''\sqsubseteq_Q\gamma
\quad\textrm{and}\quad
\mathsf{NonTrivial}(\gamma'')
\]
\end{theorem}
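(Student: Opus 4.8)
The plan is to induct on the derivation of the transitive (multi-step) reduction relation, which is generated by the reflexivity rule and a ``snoc''-style transitivity rule that appends a \emph{single} step at the end. This shape is exactly what we want: the inductive hypothesis will handle the multi-step prefix, and Lemma~\ref{lem:onestep_preservation} will handle the final single step. In the base case the derivation is reflexivity, so $\gamma'=I$, $e'=e$, and $\sigma'=\sigma$; we take $\Sigma'=\Sigma$ and $\gamma''=\gamma$, and the obligation $\gamma'\rhd\gamma''=I\rhd\gamma\sqsubseteq\gamma$ holds by the left-unit law and reflexivity of $\sqsubseteq$, while the typing and state-typing hypotheses are exactly the assumptions.

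For the inductive case the final rule composes a multi-step reduction $\sigma,e\Rightarrow\sigma_1,e_1$ with label $\gamma_1$ and a single step $\sigma_1,e_1\to^{\gamma_2}_Q\sigma',e'$, so that $\gamma'=\gamma_1\rhd\gamma_2$. First I would apply the inductive hypothesis to the prefix (using the original $\epsilon;\Sigma\vdash e:\tau\mid\gamma$ and $\vdash\sigma:\Sigma$), obtaining $\Sigma_1$ and $\gamma_1''$ with $\epsilon;\Sigma_1\vdash e_1:\tau\mid\gamma_1''$, $\vdash\sigma_1:\Sigma_1$, $\Sigma\le\Sigma_1$, and $\gamma_1\rhd\gamma_1''\sqsubseteq\gamma$. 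Note the type $\tau$ is carried through unchanged. Then I would feed this into Lemma~\ref{lem:onestep_preservation} on the last step, yielding $\Sigma'$ and $\gamma_2''$ with $\epsilon;\Sigma'\vdash e':\tau\mid\gamma_2''$, $\vdash\sigma':\Sigma'$, $\Sigma_1\le\Sigma'$, and $\gamma_2\rhd\gamma_2''\sqsubseteq\gamma_1''$. Taking the theorem's witnesses to be this $\Sigma'$ and $\gamma''=\gamma_2''$, the typing and state-typing conclusions are immediate, and $\Sigma\le\Sigma'$ follows by transitivity of the $\mathsf{StateEnv}$ order since $\Sigma\le\Sigma_1\le\Sigma'$.

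The only real work is the final subeffecting obligation $\gamma'\rhd\gamma''=(\gamma_1\rhd\gamma_2)\rhd\gamma_2''\sqsubseteq\gamma$, and the \textbf{main obstacle} is managing definedness of the partial $\rhd$ along the way. Since all effects here are closed, Proposition~\ref{prop:closed_eff_concl_Q} lets me treat them as genuine elements of the indexed effect quantale, so the inequalities are semantic. From $\gamma_1\rhd\gamma_1''\sqsubseteq\gamma$ I get that $\gamma_1\rhd\gamma_1''$ is defined (the left operand of $\sqsubseteq$ must exist), and from $\gamma_2\rhd\gamma_2''\sqsubseteq\gamma_1''$ that $\gamma_2\rhd\gamma_2''$ is defined and below $\gamma_1''$. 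Applying Corollary~\ref{coro:def_anti} with $\gamma_1\sqsubseteq\gamma_1$ and $\gamma_2\rhd\gamma_2''\sqsubseteq\gamma_1''$ shows $\gamma_1\rhd(\gamma_2\rhd\gamma_2'')$ is defined; Proposition~\ref{prop:isotonicity} then gives $\gamma_1\rhd(\gamma_2\rhd\gamma_2'')\sqsubseteq\gamma_1\rhd\gamma_1''$, and transitivity with $\gamma_1\rhd\gamma_1''\sqsubseteq\gamma$ yields $\gamma_1\rhd(\gamma_2\rhd\gamma_2'')\sqsubseteq\gamma$. Finally, partial-monoid associativity (valid since the right re-association is defined) rewrites $(\gamma_1\rhd\gamma_2)\rhd\gamma_2''$ as $\gamma_1\rhd(\gamma_2\rhd\gamma_2'')$, closing the obligation. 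The delicacy throughout is that each use of monotonicity and associativity must be justified against a separately established definedness fact, rather than assumed as in a total algebra.
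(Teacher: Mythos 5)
Your proposal is correct and follows essentially the same route as the paper, which proves this theorem by a one-line appeal to induction on the transitive reduction relation with Lemma~\ref{lem:onestep_preservation} in the inductive case. The extra work you do to justify definedness of the composed effects and the re-association $(\gamma_1\rhd\gamma_2)\rhd\gamma_2''=\gamma_1\rhd(\gamma_2\rhd\gamma_2'')$ via Corollary~\ref{coro:def_anti}, Proposition~\ref{prop:isotonicity}, and partial-monoid associativity is exactly the bookkeeping the paper leaves implicit under the word ``straightforward.''
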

\begin{proof}
By straightforward induction on the transitive reduction relation, applying Lemma \ref{lem:onestep_preservation} in the inductive case.
\end{proof}

\begin{theorem}[One Step Progress]
    \label{thm:one_progress}
For all $\Sigma$, $e$, $\tau$, $\gamma$, $\sigma$, if $\epsilon;\Sigma\vdash e : \tau \mid \gamma$ and $\vdash\sigma:\Sigma$, then either $e$ is a value, $e$ is an incomplete primitive application that is not defined, $e$ is a complete application of a blocking primitive $p\in B$, or there exists some $e'$, $\sigma'$, and $\gamma'$ such that $\sigma,e\rightarrow^{\gamma'}_Q \sigma',e'$.
\end{theorem}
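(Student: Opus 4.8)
The plan is to induct on the typing derivation $\epsilon;\Sigma\vdash e:\tau\mid\gamma$, carrying the state-typing hypothesis $\vdash\sigma:\Sigma$, and to dispatch the elimination forms using Lemma~\ref{lem:canonical} (Canonical Forms). The introduction/value rules are immediate: \textsc{T-Var} is vacuous in the empty context, while \textsc{T-Lam}, \textsc{T-Forall}, \textsc{T-Bool}, \textsc{T-Unit}, and \textsc{T-Prim} applied to an arity-$0$ constant all produce values. \textsc{T-Sub} is handled by invoking the induction hypothesis directly, since subsumption changes only the effect and leaves the term (hence its reducibility) untouched. \textsc{T-While} is the easiest compound case: \textsc{E-While} has no premises and fires unconditionally, so a well-typed $\mathsf{while}$ always steps.

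For \textsc{T-If} and \textsc{T-TyApp} I would apply the induction hypothesis to the immediate subterm under evaluation (the scrutinee $c$, respectively the term $e$ being instantiated). If that subterm steps, the corresponding omitted structural/congruence rule lifts the step to the whole term. If it is a value, Canonical Forms pins down its shape: a value of type $\mathsf{bool}$ is $\mathsf{true}$ or $\mathsf{false}$ (triggering \textsc{E-IfTrue}/\textsc{E-IfFalse}), and a value of a $\forall$-type is a type abstraction (triggering \textsc{E-TyApp}) or a primitive spine, handled with the primitive case below. Because constants have arity $0$ and the parameter constraints forbid giving them functional or quantified types, no closed value of function or $\forall$-type is a bare constant, so these canonical-forms branches really do deliver a $\lambda$ or $\Lambda$ (or a not-yet-complete primitive spine).

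The substantive case is term application \textsc{T-App} with $e=e_1\,e_2$, where I would apply the induction hypothesis to $e_1$ and split three ways. If $e_1$ steps, congruence steps $e$. If $e_1$ is a value, Canonical Forms at its function type yields a $\lambda$, and I then examine $e_2$: if $e_2$ steps, congruence applies; if $e_2$ is a value, \textsc{E-App} fires. If instead $e_1$ is an \emph{incomplete primitive application} $p\,\overline{v'}$, then $e=p\,\overline{v'}\,e_2$ is itself a primitive spine, and I would case on $e_2$: when $e_2$ reduces to a value the spine $p\,\overline{v}$ is all-values, and I check its length against $\mathsf{Arity}(p)$ --- if it is now \emph{fully applied}, the \emph{primitive progress} assumption guarantees $\llbracket p\,\overline{v}\rrbracket(\sigma)$ is defined, so \textsc{E-PrimApp} fires, and otherwise $e$ is reported as an incomplete primitive application (the theorem's second disjunct); when $e_2$ is not yet a value, the induction hypothesis steps it, and since the incompleteness of $p\,\overline{v'}$ means $\llbracket p\,\overline{v'}\rrbracket\uparrow$, the side condition of \textsc{E-PrimArg} is met and $e$ steps.

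I expect the primitive-application machinery to be the main obstacle, for two reasons. First, the progress dichotomy must be enlarged to the three-way split (value / incomplete primitive spine / steps) precisely so the induction can traverse a partial application one argument at a time, reconciling the escape-hatch disjunct in the statement with the $\llbracket-\rrbracket\uparrow$ premise of \textsc{E-PrimArg} and with triggering \textsc{E-PrimApp} exactly at the arity boundary. Second, the one genuinely delicate sub-case --- where both $e_1$ and the next argument $e_2$ are incomplete primitive applications, which would leave $e$ stuck without being a value --- is excluded by the global well-formedness constraint that primitives of non-zero arity always occur fully applied (so an incomplete primitive spine can never sit in an argument position). Making this invariant explicit, and appealing to Lemma~\ref{lem:without} to strip trailing subsumption before each inversion, are the two bookkeeping points that keep the otherwise-routine induction honest.
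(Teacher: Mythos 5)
Your proposal is correct and follows essentially the same route as the paper's proof: induction on the typing derivation, Canonical Forms for the elimination forms, the three-way value / incomplete-primitive-spine / steps disjunction to thread partial applications through \textsc{E-PrimArg} and \textsc{E-PrimApp}, primitive progress at the arity boundary, and the global fully-applied constraint to rule out primitive spines stuck in argument position. The only cosmetic difference is that you make the bookkeeping (stripping subsumption, excluding constants of function type) more explicit than the paper does.
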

\begin{proof}
By induction on the typing derivation.
\begin{itemize}
\item Case \textsc{T-Var} holds vacuously because the empty type environment contains no variable types.
\item Cases \textsc{T-Lambda}, \textsc{T-Bool}, \textsc{T-TyLambda}, \textsc{T-Unit}: These are all immediately values.
\item Case \textsc{T-Subsume}: By inductive hypothesis.
\item Case \textsc{T-Prim}: Either the expression is a primitive value ($c$), or it is a primitive application. Either the application is incomplete, or for complete applications the primitive progress language parameter ensures the application should reduce or is a stuck application of a primitive in $B$.
\item Case \textsc{T-App}: First apply the inductive hypothesis for $e_1$'s typing derivation.
  $e_1$ is either a value, a partial primitive application, or can reduce.  If it can reduce, simply use the first structural rule for reducing the left side of an application.

  Otherwise no further reductions are possible on $e_1$, so apply the inductive hypothesis to $e_2$, which is either a value, an incomplete primitive application with no further arguments (which is impossible, by the assumption that all primitives appear fully applied), or can reduce.  If $e_2$ can reduce, then this application reduces by either the second structural rule for applications or by \textsc{E-PrimArg}.

  This leaves the cases where $e_2$ is a value, and $e_1$ is a partial primitive application or a function.  If $e_1$ is a value, by Canonical Forms (Lemma \ref{lem:canonical}) $e_1=(\lambda y\ldotp e_b)$ for some variable and function body, so the application steps to $e_b[e_2/y]$ with the unit effect by \textsc{E-App}.
  If $e_1=p\;\overline{v}$, the next step depends on whether $\llbracket p\;\overline{v}\;e_2\rrbracket$ (where $e_2$ is a value) is defined. If so, this expression reduces by \textsc{E-Prim}.  Otherwise this is a partial primitive application (which by assumption must occur in a context where more arguments are supplied).
\item Case \textsc{T-If}: Similar to \textsc{T-App} in reducing the condition.  When the condition is a value, Lemma \ref{lem:canonical} produces the result that the condition is either \textsc{true} or \textsc{false}, and either \textsc{E-IfTrue} or \textsc{E-IfFalse} applies.
\item Case \textsc{T-While}: While loops are macro-expanded to conditionals by \textsc{E-While} when they are in reduction position.
\item Case \textsc{T-TyApp}: Similar to $\textsc{T-App}$, but using the type application structural rule, or when no reductions are possible on the nested expression, applying Canonical forms and stepping by \textsc{E-TyApp} when it is a value, or concluding this is a partial primitive application.

\end{itemize}
\end{proof}

\begin{theorem}[Type Safety]
    \label{thm:soundness}
For all $Q$, $\sigma$, $e$, $e'$, $\Sigma$, $\tau$, $\gamma$, and $\gamma'$, if
\[\epsilon;\Sigma\vdash e : \tau \mid \gamma
\qquad\textrm{and}\qquad
\vdash \sigma : \Sigma
\qquad\textrm{and}\qquad
\sigma,e\dhxrightarrow{\gamma'}_Q \sigma',e'\]
then either (a) $e'$ is a value of type $\tau$ (under some $\Sigma'\ge\Sigma$ such that $\vdash\sigma':\Sigma'$) and $\gamma'\sqsubseteq\gamma$, or (b) there exists $\sigma'',\gamma'',e''$ such that $\sigma',e'\longrightarrow^{\gamma''}_Q \sigma'',e''$.
\end{theorem}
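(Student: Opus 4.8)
The plan is to prove type safety in the standard Wright--Felleisen style, as a direct corollary of the already-established Type Preservation (Theorem \ref{thm:preservation}) and One Step Progress (Theorem \ref{thm:one_progress}). First I would apply Theorem \ref{thm:preservation} to the multi-step reduction $\sigma,e\dhxrightarrow{\gamma'}_Q\sigma',e'$, obtaining some $\Sigma'$ and $\gamma''$ with $\epsilon;\Sigma'\vdash e':\tau\mid\gamma''$, $\vdash\sigma':\Sigma'$, $\Sigma\le\Sigma'$, and $\gamma'\rhd\gamma''\sqsubseteq\gamma$ (read semantically, transporting along the inclusion $Q(\epsilon;\Sigma)\hookrightarrow Q(\epsilon;\Sigma')$ guaranteed by monotonicity of $Q$). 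This transports both the typing of $e'$ and the well-typedness of the resulting state to the post-reduction configuration, which is exactly what is needed to invoke progress there.

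Next I would apply Theorem \ref{thm:one_progress} to $\epsilon;\Sigma'\vdash e':\tau\mid\gamma''$ and $\vdash\sigma':\Sigma'$, yielding a three-way disjunction: $e'$ is a value, $e'$ is an incomplete primitive application whose semantics are undefined, or $e'$ can take a step. The third case immediately gives conclusion (b). In the first case I must establish conclusion (a), i.e.\ $\gamma'\sqsubseteq\gamma$: since $e'$ is a value, Value Typing (Lemma \ref{lem:valtyping}) gives $\epsilon;\Sigma'\vdash I\sqsubseteq\gamma''$, and then by monotonicity of $\rhd$ (Proposition \ref{prop:isotonicity}, instantiated with reflexivity on $\gamma'$ and with $I\sqsubseteq\gamma''$) together with $I$ acting as a right unit, $\gamma'=\gamma'\rhd I\sqsubseteq\gamma'\rhd\gamma''\sqsubseteq\gamma$; definedness is free because $\gamma'\rhd\gamma''$ is already known to be defined. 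Transitivity of subeffecting then closes case (a).

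The main obstacle is reconciling the three-way disjunction of progress with the two-way disjunction in the conclusion: I must argue that the middle case --- $e'$ being an incomplete, undefined primitive application --- cannot actually occur for a closed top-level program. This relies on the global well-formedness assumption from Section \ref{sec:params} that every primitive of nonzero arity occurs inside $\mathsf{Arity}(p)$ nested applications; a bare partial application $p\,\overline{v}$ \emph{as the entire term} $e'$ would directly violate this. I therefore expect to need a small auxiliary observation that reduction preserves this full-application well-formedness: substitution never introduces a bare nonzero-arity primitive (such primitives are not values, so only arity-zero constants can be substituted), and the term produced by $\llbracket-\rrbracket$ in \textsc{E-PrimApp} is itself a well-formed result by primitive preservation. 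With that in hand, $e'$ inherits full-application from $e$, the incomplete-primitive branch of progress is vacuous at the top level, and only the value and step cases remain, so the theorem follows.
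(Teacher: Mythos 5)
Your proposal is correct and follows exactly the standard route the paper intends (the paper in fact states Theorem \ref{thm:soundness} without writing out a proof): compose Type Preservation (Theorem \ref{thm:preservation}) with One Step Progress (Theorem \ref{thm:one_progress}) at the post-reduction configuration. The two points you flag as needing care are precisely the right ones --- deriving $\gamma'\sqsubseteq\gamma$ in the value case from Lemma \ref{lem:valtyping}, $\rhd$-monotonicity, and the unit law, and discharging the incomplete-primitive branch of progress via the standing full-application assumption of Section \ref{sec:params} (together with the small auxiliary fact that reduction preserves it) --- so nothing is missing.
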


Note also that with Lemma \ref{lem:subsume} relating the join semilattice model of traditional commutative effect systems to effect quantales, Theorem \ref{thm:soundness} recovers the traditional syntactic type safety for those systems as well.

\subsection{From Syntactic to Semantic Soundness}
\label{sec:interp}
As discussed earlier, the type of syntactic result given in Theorem \ref{thm:soundness} ensures a useful consistency criterion relating effects and reduction, but when effects are intended to imply some semantic properties of execution, this form only proves part of what is desired. This style of type safety result is useful for trace-based systems like history effects, where the semantic property is actually a constraint on the reduction sequence, but less useful for effect systems where effects concern state change or other properties.

We can strengthen the previous result by taking inspiration from Katsumata's framing~\cite{katsumata14} of the relationship between preordered monoids and graded monads: an element of the effect algebra specifies a subset of program semantics. Katsumata phrases the intuition behind this nicely, as an effect acting as a refinement type on general semantics.  Sticking to safety properties, we give a relational interpretation to each effect.  Intuitively, each effect corresponds to a set of possible state modifications --- expressed as binary relations on states --- and the effect quantale operations must agree in a natural way with combinations of these relations.  Formally:

\begin{definition}[Effect Quantale Interpretation]
    We define an \emph{effect quantale interpretation} for an effect quantale $Q$ over a set $\mathsf{State}$ to be a function $\mathcal{I}: Q \rightarrow \mathcal{P}(\mathsf{State}\times\mathsf{State})$ with the following properties:
    \begin{itemize}
        \item $\forall x,y\in Q\ldotp \mathcal{I}({x\sqcup y})=\mathcal{I}({x})\cup\mathcal{I}({y})$
        \item $\forall x,y\in Q\ldotp \mathcal{I}({x\rhd y})=\{(a,c) \mid \exists b\ldotp (a,b)\in\mathcal{I}({x})\land (b,c)\in\mathcal{I}({y})\}$
        \item $\mathcal{I}(I) = \{ (a,a) \mid a \in \mathsf{State} \}$
    \end{itemize}
    A homomorphism between effect quantale interpretations $\mathcal{I}$ for $Q$ over $\mathsf{State}$ and $\mathcal{I'}$ for $Q'$ over $\mathsf{State}'$, is a pair of an effect quantale homomorphism $f$ and a function $s:\mathsf{State}\rightarrow\mathsf{State}'$
    such that $\langle s, s\rangle\circ\mathcal{I} = \mathcal{I'}\circ f$.
\end{definition}
From these we can also prove the desirable property $\forall x,y\in Q\ldotp x \sqsubseteq y \Rightarrow \mathcal{I}({x}) \subseteq \mathcal{I}({y})$. That the interpretation respects the relevant associativity, commutativity, and distributivity laws follows from the fact that binary relations form an effect quantale:
$(\mathsf{Rel}(\mathsf{State}), \cup, \circ, =_{\mathsf{Rel}(\mathsf{State})})$ is itself an effect quantale, so an interpretation is \emph{almost} an effect quantale morphism into this relational effect quantale.  The reason it cannot be for our purposes is that in our proof framework (Section \ref{sec:params}) states are defined separately from effects, not mutually, so we cannot use this directly in the proof framework.  But a more sophisticated framework making use of guarded recursive types perhaps could.
For arbitrary unrelated effect quantales, homomorphisms between their interpretations are not necessarily interesting.

Sequential effects frequently have very straightforward relational interpretations in terms of runtime states.

\begin{example}[History Effect Interpretation]
    \label{ex:history_effect_interp}
    \citet{skalka2008types} prove soundness of their system by using state that accumulates a trace of events. Their notion of soundness is then that the runtime trace is included in the static trace set.  
Recalling their denotation of history effects in terms of sets of finite traces $\llbracket-\rrbracket:H\rightarrow \mathcal{P}(C^*)$ for events drawn from $C$, we can interpret history effects as relations on the accumulating state by specifying that a history effect corresponds to appending \emph{some} finite trace in its denotation to the dynamic trace:
\[ \mathcal{I}(H)=\{(\overline{x},\overline{y}) \mid \exists w\in\llbracket{H}\rrbracket\ldotp \overline{y}=\overline{x}\mathrel{++}w \} \]
\end{example}
Example \ref{ex:history_effect_interp} suggests a natural interpretation for finite trace effects in such a system as well.
Other effect systems have similarly straightforward interpretations, such as $\mathcal{L}(A)$ interpreting pre- and post-multi-sets of locks held as the set of locks held in the pre- and post-states (including counts stored in the state for recursive acquisitions):

\begin{example}[Lock-Multiset Effects]
    For a given index set $A$, there is a natural interpretation of effects from $\mathcal{L}(A)$ as relating the count of claims held on each lock before and after a computation. Assume states are a pair of thread ID and partial map from lock names to ownership information ($\mathsf{LockNames}\rightharpoonup((\mathsf{ThreadID}\times\mathbb{N})~\mathsf{option})$). This is a projection of the actual state used for a single thread's reduction relation in a language with reentrant locks, which typically also includes a mutable heap that is irrelevant to interpreting lock ownership claims. In this case, taking $|a_\ell|$ as the multiplicity of $\ell$ in the multiset $a$, and we can interpret locking effects by:
    \[
        \mathcal{I}(a,a') = \left\{((i,m),(i,m')) \middle| 
            \begin{array}{rl}
                \forall \ell\in A\ldotp & (|a_\ell|>0 \Rightarrow \exists c\ldotp m(\ell)=(i,c)\land c \ge |a_\ell|) \\
                                        & \land(|a'_\ell|>0 \Rightarrow \exists c\ldotp m'(\ell)=(i,c)\land c \ge |a'_\ell|) \\
                &\land (|a_\ell|-|a'_\ell|>0 \Rightarrow \mathsf{ReleasedNTimes}(i,m,m',|a_\ell|-|a'_\ell|)) \\
                &\land (|a'_\ell|-|a_\ell|>0 \Rightarrow \mathsf{AcquiredNTimes}(i,m,m',|a'_\ell|-|a_\ell|)) \\
                & \land (|a_\ell|=|a'_\ell|\land|a_\ell|>0 \Rightarrow m(\ell)=m'(\ell))
            \end{array}\right\}
    \]
    where
    \[
        \mathsf{ReleasedNTimes}(i,m,m',n) = \begin{array}{l}
            (\exists c,c'\ldotp m(\ell)=(i,c)\land m'(\ell)=(i,c')\land c-c'=n)\\
            \vee(m(\ell)=(i,n)\land (m'(\ell)=\mathsf{None}\vee(\exists c,j\ldotp j\neq i\land m'(\ell)=(j,c))))
        \end{array}
    \]
    \[
        \mathsf{AcquiredNTimes}(i,m,m',n) = \begin{array}{l}
            (\exists c,c'\ldotp m(\ell)=(i,c)\land m'(\ell)=(i,c')\land c'-c=n)\\
            \vee((m(\ell)=\mathsf{None}\vee(\exists c,j\ldotp j\neq i\land m(\ell)=(j,c)))\land m'(\ell)=(i,n) )
        \end{array}
    \]
    That is, for any lock in the index set, the multiplicity of that lock's occurrence in the pre-state (resp.\ post-state) of the effect is a lower bound on the number of actual claims held by the current thread, and the change in multiplicities in the multisets matches the change in claims in the semantics (critically also requiring that if the number of static claims was non-zero and unchanged, then the number of semantic claims remains unchanged).
\end{example}

Of course both of these have used particular instantiations of indexed effect quantales, so we must extend our interpretations to specify their interactions with functions between index sets:
\begin{definition}[Indexed Effect Quantale Interpretation]
    We define an \emph{indexed effect quantale interpretation} for an indexed effect quantale $Q$ over a set $\mathsf{State}$ as a mapping that functorially assigns to each set $S$
    an effect quantale interpretation $\mathcal{I}(S)$ of $Q(S)$ over $\mathsf{State}$, and to each function $g:S\rightarrow T$ an effect quantale interpretation homomorphism 
    $\mathcal{I}(g) : \mathcal{I}(S)\rightarrow\mathcal{I}(T)$.

    We call an indexed effect quantale interpretation \emph{well-behaved} if it assigns to $g:S\rightarrow T$ the effect quantale interpretation homomorphism consisting of\footnote{Recall that Definition \ref{def:indexed_eq} gives indexed effect quantales actions on not only sets, but functions between sets.} $Q(g):Q(S)\rightarrow Q(T)$ and $\langle\mathsf{id},\mathsf{id}\rangle$.
\end{definition}
Note that a \emph{well-behaved} indexed effect quantale interpretation for a \emph{monotone} indexed effect quantale assigns an inclusion of interpretations to each inclusion $g:S\hookrightarrow T$: $\langle\mathsf{id},\mathsf{id}\rangle\circ \mathcal{I}(S) = \mathcal{I}(T)\circ Q(g)$, where $Q(g)$ is an inclusion (because $Q$ is monotone).
We will use well-behaved indexed interpretations of monotone effect quantales in the proof to transfer interpretations across state types: whenever $\Sigma\le\Sigma'$, then for the inclusion $f$ between well-typed values under $\epsilon;\Sigma$ and well-typed values under $\epsilon;\Sigma'$,
\[ 
    \mathcal{I}(f) : \mathcal{I}(\epsilon;\Sigma) \rightarrow \mathcal{I}(\epsilon;\Sigma')
\]
and because $Q$ is monotone, this implies that 
\[
    \mathcal{I}(\epsilon;\Sigma)(\gamma)(\sigma,\sigma') \Rightarrow \mathcal{I}(\epsilon;\Sigma')(\gamma)(\sigma,\sigma')
\]

Defining an interpretation expresses the intended meaning of effects, but to prove that the effect quantale leads the effect system to enforce the intended semantics, we must also relate this interpretation to the primitives of the system.

\begin{definition}[Interpretation-Consistent Primitives]
    We say a language parameter set as in Section \ref{sec:params} is \emph{interpretation-consistent} with an indexed interpretation $\mathcal{I}$ if 
    \[ \epsilon;\Sigma\vdash p~\overline{v} : \tau \mid \gamma \land \llbracket{p~\overline{v}}\rrbracket(\sigma)=(v',\gamma',\sigma')\Rightarrow \mathcal{I}(\epsilon;\Sigma)(\gamma')(\sigma,\sigma')\]
\end{definition}
Because earlier constraints on language parameters also ensure $\gamma'\sqsubseteq\gamma$, this along with the requirements on the interpretation itself ensure $\mathcal{I}(\epsilon;\Sigma)(\gamma)(\sigma,\sigma')$.

We actually do \emph{not} need to describe the interaction of interpretation with substitution.  In extending our soundness proof to additionally prove correct interpretation of effects, we will only interpret \emph{grounded} effects --- those which are closed and non-trivial --- because we only interpret dynamically-executed effects, which do not contain variables.  We do however require indexed interpretations to be well-behaved:

\begin{lemma}[Interpreted One-Step Safety]
    \label{lem:interp_one_step}
For all $Q$ (monotone), $\mathcal{I}$ (well-behaved), $\sigma$, $e$, $e'$, $\Sigma$, $\tau$, $\gamma$, $\gamma'$, if
\[
\epsilon;\Sigma\vdash e : \tau \mid \gamma
\qquad
\vdash \sigma : \Sigma
\qquad
\delta\le\Sigma
\qquad
\sigma,e\rightarrow^{\gamma'}_Q \sigma',e'
\qquad
\mathsf{NonTrivial}(\gamma)
\]
then there exist $\Sigma'$, $\gamma''$ such that
\[\epsilon;\Sigma'\vdash e' : \tau \mid \gamma''
\quad
\vdash\sigma':\Sigma'
\quad
\Sigma\le\Sigma'
\quad
\gamma'\rhd\gamma''\sqsubseteq\gamma
\quad
\mathcal{I}(\epsilon;\Sigma')(\gamma')(\sigma,\sigma')
\quad
\mathsf{NonTrivial}(\gamma'')\]
\end{lemma}
\begin{proof}
    This proceeds exactly as Lemma \ref{lem:onestep_preservation} (One Step Type Preservation) (which it extends), additionally applying properties of the interpretation as needed.  In particular, most structural reductions yield single-step effects of $I$ and do not modify the state, application reduction performs substitutions on the effects that preserve the intended interpretations, and the primitive reduction case follows from the assumption that the language parameters are interpretation-consistent with $\mathcal{I}$.
\end{proof}

\begin{lemma}[Interpreted Safety]
    \label{lem:interp_safety}
For all $Q$ (monotone), $\mathcal{I}$ (well-behaved), $\sigma$, $e$, $e'$, $\Sigma$, $\tau$, $\gamma$, and $\gamma'$, if
\[\epsilon;\Sigma\vdash e : \tau \mid \gamma
\qquad
\vdash \sigma : \Sigma
\qquad
\delta\le\Sigma
\qquad
\sigma,e\dhxrightarrow{\gamma'}_Q \sigma',e'
\qquad
\mathsf{NonTrivial}(\gamma)
\]
then there exist $\Sigma'$, $\gamma''$ such that
\[
\epsilon;\Sigma'\vdash e' : \tau \mid \gamma''
\quad
\vdash\sigma':\Sigma'
\quad
\Sigma\le\Sigma'
\quad
\gamma'\rhd\gamma''\sqsubseteq\gamma
\quad
\mathcal{I}(\epsilon;\Sigma')(\gamma')(\sigma,\sigma')
\quad
\mathsf{NonTrivial}(\gamma)
\]
\end{lemma}
\begin{proof}
    By induction on the transitive reduction relation and Lemma \ref{lem:interp_one_step}.
\end{proof}

This development considers the semantic interpretation of effects, but not types.  This is typical of syntactic proofs of type safety for effect systems applied to Java or ML like languages, which often incorporate a notion of a relationship between effects and state into their type safety proof, without going so far as to give semantic interpretations of types. For example, using Lemma \ref{lem:interp_safety} with Example \ref{ex:history_effect_interp} recovers a version of Skalka et al.'s type safety proof relating history effects to their accumulating semantics (see Section \ref{sec:modeling2}).

\section{Relationships to Semantic Notions of Effects}
\label{sec:semantics}

Our notion of an effect quantale is motivated by generalizing directly from the form of
effect-based type judgments (Section \ref{sec:bg}).  In parallel with our work, there has been a line of
semantically-oriented work to generalize monadic semantics to capture sequential effect systems
(indeed, this is where our use of the term ``sequential effect system'' originates).  Here we
compare to several recent developments: Tate's productors (and algebraic presentation as
effectoids)~\cite{tate13}, Katsumata's parametric (later, graded) monads~\cite{katsumata14}, and Mycroft,
Orchard, and Petricek's joinads (and algebraic presentation in terms of joinoids)~\cite{mycroft16}.
We also compare briefly to Atkey's parameterized monads~\cite{atkey2009parameterised}.

Most of this work is done primarily in the setting of category theory, by incrementally considering the categorical semantics of desirable effect combinations (in contrast to our work, working by abstracting actual effect systems).  Each piece of work also couples the semantic development with an algebraic structure that yields an appropriate categorical structure, and we can compare directly with those without detailing the categorical semantics. (Though as outlined in Section \ref{para:abstract_soundness}, this ignores the distinction between the algebraic structure of effects and the desired semantics of effects.)

None of the following systems consider value-dependent effects or effect polymorphism, or give more than a passing mention of iteration, though given the generality of the technical machinery, it is unlikely that any of the following are fundamentally incompatible with these ideas.  Since none of them have considered polymorphism explicitly, the systems with partial effect combinations~\cite{tate13,atkey2009parameterised} did not address the potential for possibly-invalid effects.
One line of work mentions iteration in a discussion of future work, and later assumes the existence of a particularly strong total iteration operator.
We showed (Section \ref{sec:soundness}) that effect quantales are compatible with these ideas.

Overall, Tate's work studies structures which are strict generalizations of effect quantales (i.e., impose fewer constraints than effect quantales), and any effect quantale can be translated directly to Tate's effectoids or (with a minor adjustment) Katsumata's pre-ordered effect monoid.  Tate and Katsumata demonstrate that their structures are \emph{necessary} to capture certain parts of any sequential effect system --- a powerful general claim.  By contrast, we demonstrate that with just a bit more structure than Tate and Katsumata, effect quantales become \emph{sufficient} to formalize a range of real sequential effect systems.
Mycroft et al.'s work does consider a full programming language, but studies different control flow constructs than we do (block-structured parallelism rather than iteration).
Atkey's earlier and influential work~\cite{atkey2009parameterised} covers an important class of sequential effect systems, but does not include examples that fit within the other frameworks, such as atomicity or Tate's \textsc{Crit} effects (Definition \ref{def:crit-effects}).
Figure \ref{fig:frameworksummary} summarizes some of the key aspects of these frameworks, though there are many nuances considered below.

\begin{figure}
\begin{tabular}{|p{2.5cm}|p{1.5cm}|c|p{1.7cm}|p{1cm}|c|c|}
\hline
Framework & Effect \mbox{Order} & Units & Sequencing & Distrib.\ Laws & Iteration & Parallelism\\
\hline
Effectoids / \mbox{Productoids} \cite{tate13} & Preorder & Per-element & Relation & M & $\Circle$ & $\Circle$ \\
\hline
Graded Monads \cite{katsumata14} & Preorder & Global & Total Monoid & M & $\Circle$ & $\Circle$ \\
\hline
Graded Joinads \cite{mycroft16} & Preorder & Global & Total Monoid & M,R & $\LEFTcircle$ & $\CIRCLE$\\
\hline
Parameterized Monads \cite{atkey2009parameterised} & Substate \mbox{Preorder} $\Rightarrow$ \mbox{Preorder} & Per-element & Partial Semigroup & M & $\Circle$ & $\Circle$ \\
\hline
Effect Quantales (this paper) & Partial Join-Semilattice & Global & Partial Monoid & M,L,R & $\CIRCLE$ & $\Circle$\\
\hline
\end{tabular}
\caption{Summary of the major characteristics of various frameworks for sequential effect systems. In the distributive laws column, M indicates sequencing is monotone with respect to the ordering,  R and L indicate the existence of right and/or left distributivity laws with respect to some kind of join.
For iteration and parallelism, $\Circle$ indicates the topic was not considered by the framework, $\LEFTcircle$ indicates incomplete speculation, and $\CIRCLE$ indicates a reasonably complete solution was offered.}
\label{fig:frameworksummary}
\end{figure}

\subsection{Productors and Effectoids}
\label{sec:tate}
Tate~\cite{tate13} sought to design the maximally general semantic notion of sequential composition, proposing a structure called \emph{productors}, and a corresponding algebraic structure for source-level effects called an \emph{effector}.  Effectors, however, include models of analyses that are not strictly modular (i.e., effectors can specialize certain patterns in source code for more precise effects than one would obtain by simply composing effects of subexpressions)~\cite[Section 5]{tate13}.
To model the strictly compositional cases like syntactic type-and-effect systems, he also defines a semi-strict variant called an \emph{effectoid} (using slightly different notation):
\begin{definition}[Effectoid~\cite{tate13}]
An \emph{effectoid} is a set \textsc{Eff} with a unary relation $\mathsf{Base}(-)$, a binary relation $-\le-$, and a ternary relation $-\fcmp-\mapsto-$, satisfying
\begin{itemize}
\item \textsf{Identity}:
$\forall\varepsilon,\varepsilon'\ldotp
	(\exists \varepsilon_\ell\ldotp \mathsf{Base}(\varepsilon_\ell)\wedge \varepsilon_\ell\fcmp\varepsilon\mapsto\varepsilon')
	\Leftrightarrow
	\varepsilon\le\varepsilon'
	\Leftrightarrow
	(\exists \varepsilon_r\ldotp \mathsf{Base}(\varepsilon_r)\wedge \varepsilon\fcmp\varepsilon_r\mapsto\varepsilon')
$
\item \textsf{Associativity}:
$\forall\varepsilon_1,\varepsilon_2\varepsilon_3,\varepsilon\ldotp
	(\exists\overline{\varepsilon}\ldotp \varepsilon_1\fcmp\varepsilon_2\mapsto\overline{\varepsilon}\wedge\overline{\varepsilon}\fcmp\varepsilon_3\mapsto\varepsilon)
	\Leftrightarrow
	(\exists\hat{\varepsilon}\ldotp \varepsilon_2\fcmp\varepsilon_3\mapsto\hat{\varepsilon}\wedge\varepsilon_1\fcmp\hat{\varepsilon}\mapsto\varepsilon)$
\item \textsf{Reflexive Congruence}:
	\begin{itemize}
	\item $\forall\varepsilon\ldotp \varepsilon\le\varepsilon$
	\item $\forall\varepsilon,\varepsilon'\ldotp\mathsf{Base}(\varepsilon)\wedge\varepsilon\le\varepsilon'\Longrightarrow\mathsf{Base}(\varepsilon')$
	\item $\forall\varepsilon_1,\varepsilon_2,\varepsilon,\varepsilon'\ldotp \varepsilon_1\fcmp\varepsilon_2\mapsto\varepsilon\wedge\varepsilon\le\varepsilon'\Longrightarrow\varepsilon_1\fcmp\varepsilon_2\mapsto\varepsilon'$
	\end{itemize}
\end{itemize}
\end{definition}
Intuitively, \textsf{Base} identifies effects that are valid for programs with ``no'' effect --- e.g., pure programs, empty programs (this is a generalization of the unit $I$ in effect quantales).  Tate refers to such effects as \emph{centric}.  The binary relation $\le$ is clearly a pre-order for subeffecting (the axioms do not imply antisymmetry), and $-\fcmp-\mapsto-$ is (relational) sequential composition.  
The required properties imply that the effectoid's sequential composition can be read as a non-deterministic function producing a minimal composed effect \emph{or any supereffect thereof}, given that the sequential composition relation includes left and right units for any effect, and that \textsf{Base} and the last position of composition respect the partial order on effects.  Note the use of ``minimal'' rather than ``minimum'' --- effectoids do not require a least element in any of these.

Given Tate's aim at maximal generality (while retaining enough structure for interesting reasoning about sequential composition), it is perhaps unsurprising that every effect quantale yields an effectoid by flattening the monoid and semilattice structure into the appropriate relations:
\begin{lemma}[Quantale Effectoids]
    \label{lem:quantale_effectoids}
For any nontrivial effect quantale $Q$, there exists an effectoid $E$ with the following structure:
\begin{itemize}
\item $\textsc{Eff}=E_Q$
\item $\mathsf{Base}(a) \overset{def}{=} I \sqsubseteq a$
\item $a \le b \overset{def}{=} a \sqsubseteq b$
\item $a\fcmp b \mapsto c \overset{def}{=} a\rhd b\sqsubseteq c$ \end{itemize}
\end{lemma}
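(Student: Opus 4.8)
The plan is to verify the three effectoid axioms---Reflexive Congruence, Identity, and Associativity---directly against the three definitions, translating each relational claim into an inequality in the induced partial order $\sqsubseteq$. The key reading to keep in mind throughout is that $\rhd$ is partial, so the clause $a\fcmp b\mapsto c$, unfolded to $a\rhd b\sqsubseteq c$, must be understood as ``\,$a\rhd b$ is defined \emph{and} its value is $\sqsubseteq c$\,''; if $a\rhd b$ is undefined the ternary relation simply fails. With that convention fixed, every obligation reduces to combining reflexivity and transitivity of $\sqsubseteq$, the unit laws $I\rhd a=a=a\rhd I$, partial-monoid associativity $(a\rhd b)\rhd c=a\rhd(b\rhd c)$, and Proposition~\ref{prop:isotonicity} ($\rhd$ Monotonicity) paired with its definedness corollaries (Corollaries~\ref{coro:undef_mono} and~\ref{coro:def_anti}).

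First I would dispatch Reflexive Congruence, which is the easy part. Reflexivity of $\le$ is just $a\sqsubseteq a$; upward closure of $\mathsf{Base}$ is $I\sqsubseteq a\land a\sqsubseteq b\Rightarrow I\sqsubseteq b$ by transitivity; and upward closure of composition in its last argument is $a\rhd b\sqsubseteq c\land c\sqsubseteq c'\Rightarrow a\rhd b\sqsubseteq c'$, again transitivity (with $a\rhd b$ already known defined). For the Identity axiom, the canonical witness is always the unit. In the two backward directions I take $\varepsilon_\ell=\varepsilon_r=I$: then $\mathsf{Base}(I)$ holds since $I\sqsubseteq I$, and $I\rhd\varepsilon=\varepsilon\sqsubseteq\varepsilon'$ (resp.\ $\varepsilon\rhd I=\varepsilon$) by the unit law, giving the required composition. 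For the forward directions, from $\mathsf{Base}(\varepsilon_\ell)$ I have $I\sqsubseteq\varepsilon_\ell$; since $\varepsilon_\ell\rhd\varepsilon$ is defined, Corollary~\ref{coro:def_anti} (definedness is downward closed) shows $I\rhd\varepsilon$ is defined, so Proposition~\ref{prop:isotonicity} yields $\varepsilon=I\rhd\varepsilon\sqsubseteq\varepsilon_\ell\rhd\varepsilon\sqsubseteq\varepsilon'$, i.e.\ $\varepsilon\le\varepsilon'$ by transitivity; the right-unit case is symmetric.

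Next I would handle Associativity, collapsing each existential to its least witness. In the forward direction I am given $\varepsilon_1\rhd\varepsilon_2\sqsubseteq\overline{\varepsilon}$ and $\overline{\varepsilon}\rhd\varepsilon_3\sqsubseteq\varepsilon$ (both defined); applying Proposition~\ref{prop:isotonicity} with $\varepsilon_3\sqsubseteq\varepsilon_3$ and using Corollary~\ref{coro:def_anti} on the defined upper term $\overline{\varepsilon}\rhd\varepsilon_3$ shows $(\varepsilon_1\rhd\varepsilon_2)\rhd\varepsilon_3$ is defined and $\sqsubseteq\overline{\varepsilon}\rhd\varepsilon_3\sqsubseteq\varepsilon$. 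Partial-monoid associativity then makes $\varepsilon_1\rhd(\varepsilon_2\rhd\varepsilon_3)$ defined, hence $\varepsilon_2\rhd\varepsilon_3$ defined, so I set $\hat{\varepsilon}=\varepsilon_2\rhd\varepsilon_3$: the first conjunct $\varepsilon_2\fcmp\varepsilon_3\mapsto\hat{\varepsilon}$ holds by reflexivity, and $\varepsilon_1\rhd\hat{\varepsilon}=(\varepsilon_1\rhd\varepsilon_2)\rhd\varepsilon_3\sqsubseteq\varepsilon$ supplies the second. The backward direction is the mirror image, taking $\overline{\varepsilon}=\varepsilon_1\rhd\varepsilon_2$ and using monotonicity in the left argument together with Corollary~\ref{coro:def_anti} to establish that this composition is defined. (Nontriviality of $Q$ plays no essential role beyond guaranteeing a sensible, nonempty carrier; $I$ is available in every effect quantale by definition.)

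The main obstacle is bookkeeping around partiality rather than any deep inequality: because Proposition~\ref{prop:isotonicity} is stated conditionally (``when either side is defined''), I cannot apply it until I have independently certified that the relevant composition exists. The crux is to always run the definedness argument in the correct direction---using that definedness is \emph{downward} closed (Corollary~\ref{coro:def_anti}), so that whenever the larger of two comparable products is defined the smaller one is too. Invoking this before each use of monotonicity, and only then transporting the bound through associativity and the unit laws, is what keeps each case sound; everything remaining is routine order-theoretic reasoning.
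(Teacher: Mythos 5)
Your proof is correct and follows the same route as the paper's (which is only a few sentences long): choose $I$ as the unit witness for the Identity axiom, reduce Reflexive Congruence to reflexivity and transitivity of $\sqsubseteq$, and derive Associativity from partial-monoid associativity plus Proposition~\ref{prop:isotonicity}. The only difference is that you spell out the definedness bookkeeping via Corollary~\ref{coro:def_anti}, which the paper leaves implicit.
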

\begin{proof}
The laws follow almost directly from the effect quantale laws.  In the identity property, both left and right units are always chosen to be $I$.  Associativity follows directly from associativity of $\rhd$ and monotonicity.  The reflexive congruence laws follow directly from the definition (and transitivity) of $\sqsubseteq$.
\end{proof}

Many effectoids directly correspond to effect quantales.
Tate calls out the class of effectoids with a least result for any defined sequential composition, which he calls \emph{principalled}, which are equivalent to congruently preordered partial monoids --- i.e., sequential composition is monotone (when defined) with respect to the partial order.
Principalled effectoids with additional structure on the preorder give rise to effect quantales:
\begin{lemma}[Effect Quantales from Effectoids]
    \label{lem:eq_from_effectoid}
For any principalled effectoid $E$ with a least centric element, where each pair of elements has a least upper bound or no common upper bound (hence a partial join), where sequencing an element (on either side) with a join result is equivalent to joining the results of sequencing, there exists an unique effect quantale $Q$ such that:
\begin{itemize}
\item $E_Q=\textsc{Eff}_E$
\item $\sqcup$ performs the assumed partial binary join where defined (least common upper bound when it exists)
\item $a\rhd b$ produces the least $c$ such that $a\fcmp b\mapsto c$ when defined
 such that $a\fcmp b\mapsto c$ (by assumption, $a\fcmp b$ is undefined or has a least element).
\item $I$ is assumed the least centric element
\end{itemize}
\end{lemma}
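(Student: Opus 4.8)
The plan is to lean on Tate's stated correspondence between principalled effectoids and congruently preordered partial monoids, which already hands us most of the multiplicative structure. Once sequential composition is read as the partial function $a \rhd b = \min\{c \mid a \fcmp b \mapsto c\}$ — well-defined precisely because the effectoid is principalled — that correspondence tells us $\rhd$ is associative and value-monotone with respect to $\le$ (the text states the latter explicitly). I would additionally record that definedness of $\rhd$ is downward closed, which in fact follows from the \textsf{Identity} and \textsf{Associativity} axioms directly: given $a \le a'$, $b \le b'$ and a witness $a' \fcmp b' \mapsto c'$, \textsf{Identity} produces centric $\ell, r$ with $\ell \fcmp a \mapsto a'$ and $b \fcmp r \mapsto b'$, and two applications of \textsf{Associativity} factor these through the composite to yield first $a \fcmp b' \mapsto d$ and then $a \fcmp b \mapsto \bar\varepsilon$ for some $\bar\varepsilon$, so $a \rhd b$ is defined. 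With $\rhd$ thus packaged, the three remaining obligations for Definition \ref{def:eq} are: that the partial lub gives a join semilattice whose induced order agrees with $\le$; that the least centric element $I$ is a two-sided unit; and distributivity. The last is handed to us by the hypothesis that sequencing on either side with a join distributes over the join (read so as to include the definedness biconditional of Definition \ref{def:eq}), so the genuine work is the first two.

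For the semilattice I would set $a \sqcup b$ to be the least upper bound of $\{a, b\}$ when it exists. Commutativity and idempotence are immediate. Associativity is where the hypothesis ``each pair has a least upper bound or no common upper bound'' earns its keep: if $(a \sqcup b) \sqcup c$ is defined then $\{a, b, c\}$ has a common upper bound, hence so does $\{b, c\}$, and therefore $b \sqcup c$ exists by hypothesis; a routine check that the upper bounds of $\{a \sqcup b, c\}$, of $\{a, b, c\}$, and of $\{a, b \sqcup c\}$ all coincide then gives $(a \sqcup b) \sqcup c = a \sqcup (b \sqcup c)$ with matching definedness. Unwinding $x \sqsubseteq y \overset{\mathsf{def}}{=} x \sqcup y = y$ shows $x \sqsubseteq y \iff x \le y$, so the effect quantale's induced order is exactly the effectoid's relation.

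That last equivalence is also the source of the one genuine obstacle. The effectoid axioms make $\le$ only a \emph{preorder}, whereas $\sqsubseteq$ must be a partial order and $\sqcup$ a single-valued function: if $a \le b \le a$ with $a \neq b$, then $\{a, b\}$ has two distinct least upper bounds and $\sqcup$ is ill-defined. I would resolve this by reading the identification $E_Q = \textsc{Eff}_E$ up to the equivalence $a \equiv b \iff a \le b \wedge b \le a$ (equivalently, assuming the effectoid is skeletal); on the quotient $\le$ is antisymmetric and everything above goes through. I expect the bookkeeping of this quotient — checking that $\mathsf{Base}$, $\fcmp\mapsto$, and the least-element claims descend to it — to be the most delicate part of a fully formal argument, even though each individual verification is routine.

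The unit law is the main substantive calculation. First I would record $\mathsf{Base}(a) \iff I \le a$: the reflexive-congruence axiom makes $\mathsf{Base}$ upward closed, so $I \le a \Rightarrow \mathsf{Base}(a)$, while $I$ being the least centric element gives the converse. For the right unit $a \rhd I = a$, the backward direction of \textsf{Identity} applied to $a \le a$ yields a centric $\varepsilon_r$ with $a \fcmp \varepsilon_r \mapsto a$; since $I \le \varepsilon_r$, downward-closed definedness and value-monotonicity of $\rhd$ give that $a \rhd I$ is defined with $a \rhd I \le a \rhd \varepsilon_r \le a$. Conversely, because $I$ is centric, the forward direction of \textsf{Identity} forces every $c$ with $a \fcmp I \mapsto c$ to satisfy $a \le c$, so $a \le a \rhd I$; antisymmetry then gives $a \rhd I = a$, and the left unit is symmetric via the left half of \textsf{Identity}. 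With associativity and monotonicity from the correspondence, the unit laws just established, the semilattice from the lub hypothesis, and distributivity from hypothesis, all clauses of Definition \ref{def:eq} hold and $Q$ is an effect quantale.
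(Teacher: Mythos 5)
Your proposal is correct, and it is considerably more explicit than the paper's own proof, which is a two-sentence sketch making exactly two observations: that the existence of a \emph{least} (not merely minimal, hence unique) upper bound for every pair with a common upper bound forces the preorder $\le$ to be a partial order, and that the remaining hypotheses essentially posit the required distributivity. Everything else --- associativity and monotonicity of $\rhd$ via the congruently-preordered-partial-monoid reading of principalled effectoids, downward-closed definedness of $\rhd$ from \textsf{Identity} and \textsf{Associativity}, the semilattice laws for the partial lub, and especially the two-sided unit calculation for the least centric element --- is left implicit in the paper, and your verifications of these points are sound (the unit argument via the forward and backward directions of \textsf{Identity} is exactly the right calculation). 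The one place you genuinely diverge is the antisymmetry step: you propose quotienting $\textsc{Eff}_E$ by the equivalence $a\le b\wedge b\le a$, which changes the carrier and so conflicts with the lemma's clause $E_Q=\textsc{Eff}_E$. The paper instead resolves this without a quotient: if $a\le b\le a$ with $a\neq b$, then $\{a,b\}$ has a common upper bound but two distinct least upper bounds, so the hypothesis that every such pair has \emph{a} (unique, functionally assigned) least upper bound is violated; hence the hypotheses already force the preorder to be a partial order and the effectoid to be skeletal. You in fact state the key observation yourself (``$\{a,b\}$ has two distinct least upper bounds and $\sqcup$ is ill-defined'') but draw the conclusion that a quotient is needed rather than that the hypothesis excludes the situation --- the latter reading is preferable because it preserves the stated carrier and makes your ``delicate bookkeeping'' of descending $\mathsf{Base}$ and $\fcmp\mapsto$ to the quotient unnecessary.
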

\begin{proof}
The existence of a partial join (returning a least, not minimal, upper bound) forces the preorder to a partial order.
The remaining restrictions essentially require a partial join such that sequencing distributes over join.
\end{proof}

Essentially, effect quantales correspond to effectoids where the preorder is a partial join-semilattice satisfying the distributivity properties (the distributivity properties imply the congruent ordering, so technically the principality requirement above is satisfied for all effect quantales).
Tate notes that principalled (i.e., monotone) effectoids are common (including his motivating example, given as an effect quantale in Definition \ref{def:crit-effects}) because they simplify type checking and inference, so the only ``extra'' requirements imposed above are the global unit, and refining the preorder to a partial join semilattice satisfying the distributivity requirements.
The distributivity requirement does not follow from other requirements: even if the partial order is a (total) lattice or complete lattice, this does not automatically satisfy the distributive laws~\cite{birkhoff}.\footnote{Consider taking the meet of a lattice as sequential composition; then the distributive laws hold only if the lattice is distributive (such as in the case of Must Effects in Definition \ref{def:must_effects}).}

Lemma \ref{lem:eq_from_effectoid}'s restrictions on the effectoids involved may seem severe, and in general they are.  However, we are not necessarily interested in \emph{all} effectoids: recall that Tate's focus was on seeking the most general form of sequential effects that admitted subsumption: actual computational effects and concrete sequential effect systems have considerably more structure than effectoids require, and some additional structure is trivial to add.
Adding a global unit where one does not exist is straightforward, and most known examples of effect systems already correspond to partial join-semilattices (we see shortly this is a common observation).  The distributivity properties hold trivially for traditional one-operation commutative effect systems (which are simply join semilattices), and for all the concrete sequential effect systems we have considered in Section \ref{sec:modeling}, which were proposed prior to the existence of most of these generic frameworks (as outlined in Section \ref{sec:quantales}, the distributivity laws correspond to common code transformations, so an effect system that does not satisfy them is likely to have unexpected behavior on real programs).
This strongly suggests that our generalization from the type judgments of a few specific effect systems, rather than from semantic notions, did not cost much in the way of generality.

\subsection{Parametric Monads, a.k.a.\ Graded Monads}
\label{sec:katsumata}
Katsumata~\cite{katsumata14} pursues an independent notion of general sequential composition, where effects are formalized semantically as a form of type refining monad: a $T\;e\;\sigma$ is a monadic computation producing an element of type $\sigma$, whose effect is bounded by $e$ (which classifies a subset of such computations).  Based on general observations, Katsumata speculates that sequential effects form at least a pre-ordered (total) monoid, and goes on to validate that this leads to a rich and broadly applicable theory (among other interesting results related to the notion of effects as refinements of computations).  Katsumata shows categorically that these \emph{parametric monads} (now discussed as \emph{graded monads}~\cite{fujii2016towards}, to avoid confusion with other forms of indexing and to follow earlier mathematical practice~\cite{smirnov2008graded}) are also a specialization of Tate's productors, exactly when the productor is induced by an effectoid derived from a preordered monoid --- i.e., a \emph{total} principalled effectoid~\cite[Section 6.1]{katsumata14}.

The primary technical differences between graded monads and effect quantales stem from the differences in sequencing and partial order.  Graded monads require a total sequencing operation, while effect quantales admit partial sequencing.  However, it is straightforward to add an extra element to the monoid to represent being ``undefined'' and simply reject programs given this ``extra'' effect, as the original work on effect quantales did~\cite{ecoop17}; so this is not a critical difference.
Graded monads support more relaxed notions of order --- pre-orders --- than the partial join-semilattice required by effect quantales.
As with effectoids, restricting a graded monad to satisfy the effect quantale laws requires the preorder to be a partial join-semilattice satisfying the distributivity laws.  This is a smaller distinction than with effectoids, since graded monads already possess a global unit.
Intuitively, the partial join-semilattice restriction means effect quantales are limited to cases where if there is \emph{any} common upper bound of two effects, there is a \emph{unique, best} (least) such upper bound, while graded monads impose no requirements on upper bounds (and without a join, no distributivity).

Katsumata notes that most interesting examples seem to not require the flexibility of pre-orders, and all examples used by Katsumata are total join semilattices satisfying the effect quantale distributivity laws (which are then expressible as effect quantales).
In addition to not knowing of systems requiring the (full) extra generality,\footnote{History effects (Section \ref{sec:history_effects}) are defined by a pre-order and then quotiented by an equivalence relation to obtain an effect quantale,  But they do not exploit the full generality of pre-orders, since quotienting by equivalence still ensures there is a unique least upper bound on any pair of effects (the equivalence class of joining the two effects with a $|$ operator), which is not necessarily true for all quotients of pre-orders.} joins (rather than general partial orders) play a significant role in most work on type inference, including for the one class of sequential effect systems with established inference results~\cite{skalka2008types,Skalka2008}.

The final major distinction between graded monads and effect quantales is that the former impose no distributivity requirements, while the latter require both distributive laws. The concrete effect systems that motivated effect quantales (Section \ref{sec:bg}) all satisfy the distributive law, but until very recently it was unclear whether there were meaningful systems that would nearly be effect quantales, except for the distributivity requirements.  While this paper was under review, \citet{ivaskovic2020dataflow} demonstrated that dataflow analyses can be modeled as effect systems, modeling a range of standard dataflow analyses as graded monads.  Along the way, they demonstrated an example of a meaningful effect system with a join on effects that is \emph{not} possible to model as an effect quantale: basic constant propagation. 
The reason is that basic constant propagation does not satisfy the effect quantale requirement that sequencing distributes over joins in both directions.  
The example they give~\cite[Figure 2]{ivaskovic2020dataflow} is the basic constant propagation for the programs
\begin{lstlisting}
    if (...) { x = 1; y = 2; } else { x = 2; y = 1; }; z = x + y;
\end{lstlisting}
and
\begin{lstlisting}
    if (...) { x = 1; y = 2; z = x + y; } else { x = 2; y = 1; z = x + y; }
\end{lstlisting}
In the first program, the effects of the individual branches track the values of \lstinline|x| and \lstinline|y| after each branch's execution, but joining those effects per the conditional rule results in an effect mapping both variables to $\top$ (non-constant). This in turn means the program's overall effect considers \lstinline|z| non-constant, as the sum on non-constant values.
In the second program, the assignment to \lstinline|z| is recognized \emph{within each branch} as holding the constant value 3, with both branches' individual effects reflecting this, so the join of those effects maps both \lstinline|x| and \lstinline|y| to $\top$ as before, but additionally maps \lstinline|z| to the constant value 3.

Iva{\v{s}}ković et al.\ take this as an indication that it is better not to require distributivity since it would preclude basic constant propagation, though they note that all of their other (more sophisticated) examples \emph{are} distributive.
However, basic constant propagation is also the canonical example of why non-distributive analyses are often avoided in compilers and static analysis: the results are brittle. The conditional-related refactorings we described to justify the distributivity laws are minor, local, and obviously behavior-preserving. Yet they change the results of non-distributive analyses (as in the example above), which for compiler optimizations can lead to unpredictable performance fluctuations from seemingly trivial changes, frustrating developers.  This is part of why the most influential dataflow analysis frameworks (e.g., IFDS~\cite{reps1995precise} and IDE~\cite{sagiv1996precise}) emphasize distributivity (the `D' in both acronyms stands for Distributive), and why distributive variants of otherwise non-distributive analyses (including constant propagation) are well-studied~\cite{grove1993interprocedural}.
Iva{\v{s}}ković et al.'s example is an important example for understanding the limits of effect quantales imposed by the distributive laws, but we feel their presence in the effect quantale axioms is justified by the fact that such counterexamples are avoided in practice.

\subsection{Parameterized Monads}
\label{sec:param_monads}
Prior to the appearance of productors and graded monads, \citet{atkey2009parameterised} proposed \emph{parameterized monads}: monads indexed by elements corresponding to state before and after the computation --- essentially computational pre- and postconditions.  Essentially, a type $M~a~b~\tau$ is a computation that, when run with an initial resource corresponding to $a$, produces a $\tau$ and resource corresponding to $b$.  
For easier comparison to other systems discussed here and our work, we will borrow \citet{tate13}'s reformulation of parameterized monads as an effectoid:
\begin{definition}[Parameterized Monads as Effectoids]
    \label{def:param_monad}
    A \emph{parameterized monad}~\cite{atkey2009parameterised} is presented as an effectoid with the following structure~\cite[Section 7]{tate13}, assuming a substate preorder $(S,\le)$:
    \begin{itemize}
        \item $\textsc{Eff}=S\times S$ (taking $(s,s')$ to be a computation transitioning from $s$ to $s'$)
        \item $\mathsf{Base}((s,s'))\;=\;s\le s'$
        \item $(s,s')\le(t,t') \;=\; t\le s\land s'\le t'$
        \item $(s,s')\fcmp(t,t')\mapsto(u,u') \; = \; u\le s\land s'\le t\land t'\le u'$
    \end{itemize}
\end{definition}
Sequential composition of parameterized monads essentially requires matching the postcondition of the first computation with the precondition of the second, but essentially incorporates Hoare's rule of consequence applied to both computations.
His proof-theoretic presentation lacks an explicit unit, but semantically the units of the parameterized monad are monadic units with identical pre- and post-conditions (modulo subsumption above).

Atkey demonstrates that parameterized monads are quite flexible, encoding examples as diverse as certain session types (pre-condition is the postcondition prefixed with the operation performed, which could be extended to history effects~\cite{skalka2008types}) and composable continuations.

There are two key differences between parameterized monads and effect quantales: a preorder on states instead of a partial join-semilattice on effects (plus distributivity), and lack of a single unit element.
A single global unit (least centric element in Tate's terminology) can be added easily to any given parameterized monad's algebra, as an element less than each centric effect (i.e., $\forall s\ldotp I\le(s,s))$.
The preorder on substates $S$ yields the above preorder on effects, though quotienting can transform both into partial orders (though not necessarily partial join-semilattices).
With similar restrictions to those we suggested for effectoids and graded monads, we can obtain an effect quantale from a parameterized monad:
\begin{lemma}[Effect Quantales from Parameterized Monads]
    \label{lem:eq_from_param_monad}
    Given a parameterized monad (as an effectoid $(S\times S,\mathsf{Base},\le,-\fcmp-\mapsto-)$) where the preorder $(S,\le)$ on substates is a partial order, and where:
            \begin{itemize}
                \item For any pair of states, there are either no common upper bounds or a least upper bound (i.e., states have a partial join $\vee$)
                \item For any pair of states, there are either no common lower bounds or a greatest lower bound (i.e., states have a partial meet $\wedge$)
\end{itemize}
    there exists an unique effect quantale where:
    \begin{itemize}
        \item $E=(S\times S)\uplus\{I\}$
        \item $(s,s')\sqcup(t,t')=(s\wedge t, s'\vee t')$ where the meet and join are both defined,
              and $I\sqcup(s,s')=(s,s')\sqcup I=(s,s')$ when $s\le s'$, otherwise undefined
        \item $(s,s')\rhd(t,t')=(s,t')$ when $s'\le t$, and is otherwise undefined, while $(s,s')\rhd I=I\rhd(s,s')=(s,s')$
        \item $I=I$
    \end{itemize}
\end{lemma}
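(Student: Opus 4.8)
The plan is to verify directly that the tuple $(E,\sqcup,\rhd,I)$ given in the statement satisfies Definition~\ref{def:eq}: that $(E,\sqcup)$ is a partial join semilattice, that $(E,\rhd,I)$ is a partial monoid, and that $\rhd$ distributes over $\sqcup$ on both sides. The essential preliminary observation, which organizes everything else, is that the join defined here induces exactly the parameterized-monad subeffecting order from Definition~\ref{def:param_monad}. Computing the induced order $x\sqsubseteq y\iff x\sqcup y=y$ on non-$I$ elements gives $(s,s')\sqsubseteq(t,t')\iff t\le s\land s'\le t'$, which is \emph{contravariant} in the precondition component and covariant in the postcondition component. Consequently the least upper bound of $(s,s')$ and $(t,t')$ must take the \emph{meet} $s\wedge t$ of preconditions (to sit below both) and the \emph{join} $s'\vee t'$ of postconditions (to sit above both) --- exactly the stated $\sqcup$. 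First I would check that $\sqcup$ is commutative and idempotent (immediate from the corresponding properties of $\wedge$ and $\vee$) and that, when defined, it genuinely computes this least upper bound; the latter is a short order-theoretic calculation using that $\wedge$ is the greatest lower bound and $\vee$ the least upper bound in $(S,\le)$.

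Next I would dispatch the partial monoid laws. The unit laws are immediate, since $I$ was adjoined precisely to act as a two-sided identity. For associativity of $\rhd$, both $((s,s')\rhd(t,t'))\rhd(u,u')$ and $(s,s')\rhd((t,t')\rhd(u,u'))$ are defined exactly when $s'\le t$ and $t'\le u$, and both evaluate to $(s,u')$, so associativity holds with matching definedness; the instances involving $I$ follow from the definition directly.

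The main work --- and the main obstacle --- is distributivity, where the two hypotheses on $(S,\le)$ (existence of partial meets and of partial joins) are exactly what is needed, not for the \emph{values} but for the \emph{definedness} conditions to align. For left distributivity, with $a=(s,s')$, $b=(t,t')$, $c=(r,r')$, both $a\rhd(b\sqcup c)$ and $(a\rhd b)\sqcup(a\rhd c)$ evaluate to $(s,\,t'\vee r')$; the subtle point is that the left side is defined iff $t\wedge r$ exists, $t'\vee r'$ exists, and $s'\le t\wedge r$, whereas the right side is defined iff $s'\le t$, $s'\le r$, and $t'\vee r'$ exists. These coincide precisely because the partial-meet hypothesis converts the common lower bound $s'$ of $t$ and $r$ into an actual meet $t\wedge r$ satisfying $s'\le t\wedge r$. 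Symmetrically, right distributivity evaluates both sides to $(s\wedge t,\,r')$, and the partial-join hypothesis converts the common upper bound $r$ of $s'$ and $t'$ into an actual join $s'\vee t'\le r$, lining up the two definedness conditions. I expect this definedness bookkeeping, rather than any computation of values, to be the crux.

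Finally, I would handle the cases involving $I$ throughout, treating $I$ as the least centric element: $I\sqsubseteq(s,s')$ holds iff $s\le s'$, i.e.\ iff $(s,s')$ is centric ($\mathsf{Base}$). The one point requiring genuine care is that $I\sqcup(s,s')$ is left undefined when $(s,s')$ is non-centric, even though such a pair may still share upper bounds with $I$. I would confirm this is consistent with the join-semilattice requirement by showing that no \emph{least} upper bound exists in that situation (two incomparable minimal upper bounds can arise), so leaving the join undefined is correct. This is also where I would pin down the partiality convention used throughout --- that a defined join is always the least upper bound, but a least upper bound need not induce a defined join when it is absent --- which is what makes the $I$ cases, and the distributivity instances mentioning $I$, go through cleanly.
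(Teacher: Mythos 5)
Your overall strategy matches the paper's proof: both verify the effect quantale axioms directly, both begin by observing that the stated join induces exactly the effectoid's subeffecting order (contravariant in the precondition component, covariant in the postcondition component), and both locate the crux in the same place --- the definedness bookkeeping for the distributivity laws, where the partial-meet hypothesis is precisely what converts the common lower bound $s'$ of the two preconditions into an actual meet (and dually, the partial-join hypothesis for the postconditions). The paper writes out only the right distributivity law and declares the left ``analogous,'' so your explicit treatment of left distributivity, of associativity with matching definedness conditions, and of the least-upper-bound property of $\sqcup$ is in fact more thorough than the published argument.

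The one claim in your proposal that does not survive scrutiny is that ``the distributivity instances mentioning $I$ \ldots go through cleanly.'' They do not. Take $S=\{0,1,2\}$ with $0\le 1\le 2$, let $a=(2,0)$ and $c=(1,0)$, and compare $a\rhd(I\sqcup c)$ with $(a\rhd I)\sqcup(a\rhd c)$. Since $(1,0)$ is not centric, $I\sqcup c$ is undefined by the stated construction, so the left side is undefined; but the right side is $(2,0)\sqcup\bigl((2,0)\rhd(1,0)\bigr)=(2,0)\sqcup(2,0)=(2,0)$, which is defined. This violates the ``one side defined iff the other is'' clause of Definition~\ref{def:eq}, and redefining $I\sqcup(s,s')$ to be the genuine least upper bound $(s\wedge s',\,s\vee s')$ does not repair it --- the two sides then disagree in value ($(2,1)$ versus $(2,0)$). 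To be fair, the paper's proof silently omits these cases as well, so this is a defect in the lemma as stated rather than something your plan introduces; but because you explicitly promised to verify those instances, carrying out your plan would have revealed that they force either a restriction of the distributivity axiom or an adjustment of the construction, rather than ``going through cleanly.''
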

\begin{proof}
    Notice that the partial join induces a partial order consistent with the effectoid:
    \[ (s,s')\sqcup(t,t')=(s\wedge t, s'\vee t')=(t,t')\Leftrightarrow s\wedge t=t~\textrm{and}~s'\vee t'=t' \Leftrightarrow t\le s~\textrm{and}~s'\le t' \Leftrightarrow (s,s')\le_{\textsc{Eff}}(t,t') \]
    Also, $(s,s')\fcmp(t,t')\mapsto(u,u')$ if and only if $(s,s')\rhd(t,t')\sqsubseteq(u,u')$, so $\rhd$ returns the least ``result'' of $-\fcmp-\mapsto-$ (so the effectoid is principalled).
    Both operators are appropriately associative (and for join, commutative).
    For the right distributivity law:
    \begin{itemize}
    \item $((s,s')\sqcup(t,t'))\rhd(u,u')=(s\wedge t, s'\vee t')\rhd(u,u')=(s\wedge t,u')$ if the meet and join used are both defined, $s'\le u$, and $t'\le u$
    \item $((s,s')\rhd(u,u'))\sqcup((t,t')\rhd(u,u')) = (s,u')\sqcup(t,u') = (s\wedge t, u')$ if the meet is defined, and $s'\le u$ and $t'\le u$
    \end{itemize}
    These are equivalent aside from the first requiring definition of the join $s'\vee t'$. But since both sides require $s'\le u$ and $t'\le u$, $u$ is a common upper bound, and by assumption there is a least upper bound for $s'$ and $t'$, which defines the partial join of those elements for the right side as well (or conversely, make the requirement in the first series of equations that the join is defined redundant).
    The left distributivity law is analogous.
\end{proof}
We believe these restrictions are similarly modest in practice like our claims for effectoids and graded monads. Adding the global unit is straightforward, and a preorder on substates can always be quotiented to obtain a partial order, making the only significant restrictions the requirement of having partial joins and meets of substates.
However, as with Katsumata, the cases Atkey considers have explicit or implicit partial joins (such as the choice operator for session types), sometimes trivially (in several cases the preorder is actually equality, which is trivially a partial order with our required properties).  So in practice this extra flexibility from using general pre-orders is also of unclear practical value.

Parameterized monads and effect quantales are technically incomparable: there are some parameterized monads inexpressible as effect quantales (per above, those lacking partial joins and meets), but there are many effect quantales inexpressible as parameterized monads.
Tate gives some reasons based on categorical semantics that effect systems like \textsc{Crit} cannot be modeled as parameterized monads: the categorical definition of parameterized monads assumes a function from pairs of states to semantics, which means there is exactly one program transitioning between any two states, which is incompatible with the existence of multiple distinct effects that can be run between two states (like $\varepsilon$ and $\mathsf{critical}$ in \textsc{Crit}).

But these mismatches also arise purely from the effect algebras, without regard to semantics.
In particular, global units have non-obvious consequences in parameterized monads:
\begin{lemma}[Global Units in Parameterized Monads Imply a Bounded Preorder and Minimal Unit]
    Taking a parameterized monad formulated as an effectoid $(S\times S,\mathsf{Base},\le,-\fcmp-\mapsto-)$, then if there exists a global unit $(s,s')$, then it is $\le$ all other effects, $s$ is an upper bound on all states, and $s'$ is a lower bound on all states.
\end{lemma}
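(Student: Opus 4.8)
The plan is to unfold the definition of a global unit in the effectoid reformulation of Definition \ref{def:param_monad} and read off the three claims by direct calculation. A global unit $I=(s,s')$ is a two-sided unit for the (relational) sequential composition, so in particular it satisfies $I\fcmp\varepsilon\mapsto\varepsilon$ and $\varepsilon\fcmp I\mapsto\varepsilon$ for every effect $\varepsilon\in S\times S$. I would take this as the working characterization of ``global unit,'' since that is the only sensible reading in the non-deterministic relational setting; it is the instance $\varepsilon'=\varepsilon$ of the effectoid \textsf{Identity} axiom witnessed by the single centric element $I$ (the choice $\varepsilon'=\varepsilon$ being legitimate by reflexivity of $\le$).

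First I would fix an arbitrary state $a\in S$ and instantiate the left-unit condition at the effect $(a,a)$, which is a legitimate element of $S\times S$. Expanding $(s,s')\fcmp(a,a)\mapsto(a,a)$ with the composition clause $u\le s\land s'\le t\land t'\le u'$, where $(t,t')=(u,u')=(a,a)$, yields $a\le s$ and $s'\le a$ (the remaining conjunct $a\le a$ is trivial). Since $a$ was arbitrary, this already shows that $s$ is an upper bound on all of $S$ and that $s'$ is a lower bound on all of $S$. For symmetry, and as a consistency check, I would also instantiate the right-unit condition $(a,a)\fcmp(s,s')\mapsto(a,a)$, which gives $s'\le a$ and $a\le s$ again; so the two bounds are in fact forced by either side alone.

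With the two bound facts in hand, the remaining claim that $(s,s')$ lies below every effect is immediate from the definition of the order: $(s,s')\le(t,t')$ abbreviates $t\le s\land s'\le t'$, and both conjuncts hold because $s$ is an upper bound (so $t\le s$) and $s'$ is a lower bound (so $s'\le t'$). This closes all three conclusions of the lemma.

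I do not expect a genuine mathematical obstacle here: the whole argument is a one-step unfolding once the $\le$ and $-\fcmp-\mapsto-$ relations of Definition \ref{def:param_monad} are substituted in. The only point requiring care is the interpretation of ``global unit,'' so I must make explicit that it denotes a pre-existing two-sided unit admitting $I\fcmp\varepsilon\mapsto\varepsilon$ for \emph{all} $\varepsilon$ (not merely for centric $\varepsilon$, and not the fresh synthetic element added in the construction discussed earlier in this section). Once that reading is stated, the conclusions---that a pre-existing global unit collapses the substate preorder to one with a greatest element $s$ and a least element $s'$, and is itself $\le$-minimal among effects---follow mechanically, and I would present them exactly as above.
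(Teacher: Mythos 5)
Your proof is correct and follows essentially the same route as the paper's: unfold the unit laws $I\fcmp\varepsilon\mapsto\varepsilon$ and $\varepsilon\fcmp I\mapsto\varepsilon$ against the composition clause of Definition \ref{def:param_monad} and read off the bounds, then derive $\le$-minimality of the unit from the definition of the order on effects. The only (immaterial) difference is that you instantiate at diagonal effects $(a,a)$ and extract both bounds from the left-unit law alone, whereas the paper instantiates at arbitrary $(u,u')$ and uses the left- and right-unit laws to bound the two components separately; both yield $s'\le a\le s$ for every state $a$.
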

\begin{proof}
Assume $(s,s')$ is a global unit.  This means that for any other effect $(u,u')$:
\[(s,s')\fcmp(u,u')\mapsto(u,u') \; \Leftrightarrow \; u\le s\land s'\le u\land u'\le u'\]
\[(u,u')\fcmp(s,s')\mapsto(u,u') \; \Leftrightarrow \; u\le u\land u'\le s\land s'\le u'\]
Thus, $\forall u,u'\ldotp u\le s \land s'\le u\land u'\le s\land s'\le u'$.
This means the preorder on $S$ has a bottom and a top (up to equivalence), making the unit $(s,s')=(\top,\bot)$. This is then the least element in the preorder order on effects (again, up to equivalence): it is less than \emph{all} effects.
\end{proof}
This is incompatible with examples like our locking effect quantale (Definition \ref{def:locking}), where taking states to be multisets of locks there is no upper bound. More generally, many reasonable notions of state have either no sensible lower bound or no sensible upper bound.  Consider the extension order in states modeled by separation algebras~\cite{calcagno2007local}: there is no greatest state, only a least (the empty state), so a system of pre- and postconditions as separation algebras cannot be expressed as a parameterized monad with global unit, even though one can exist (the effect with no footprint, where both pre- and postconditions are $\mathsf{emp}$) if sequencing could incorporate a version of framing similar to bi-abduction~\cite{calcagno2011compositional} akin to what our formulations of the locking-related effect quantales do (Definitions \ref{def:locking} and \ref{def:dlf_suenaga}).  In a sense, the inability for parameterized monads to express \emph{transformations} of states is a limitation.
So there are real effect systems expressible as effect quantales that are inexpressible as parameterized monads.
There also exist effect quantales where unit is the \emph{greatest} element of the partial order on effects:
\begin{definition}[Lower Bound Count Effects]
    The effect quantale that establishes lower bounds on the number of times an action of interest occurs is defined as:
    \begin{itemize}
        \item $E=\mathbb{N}$
        \item $x\rhd y=x+y$
        \item $x\sqcup y =\mathsf{min}(x,y)$
        \item $I=0$
    \end{itemize}
\end{definition}

\subsection{Joinads and Joinoids}
\label{sec:joinoids}
Mycroft, Orchard, and Petricek~\cite{mycroft16} further extend graded monads to \emph{graded conditional joinads}, and similar to Tate, give a class of algebraic structures (joinoids) that give rise to their semantic structures (joinads).  As their base, they take graded monads. They then further assume a ternary conditional operator $?{}:(-,-,-)$ (modeling conditionals whose branch approximation may depend on the conditional expression's effect), and parallel composition $\&$ suitable for fork-join style concurrency.

Their ternary operator is motivated by considerations of sophisticated effects, in particular control effects like backtracking search.  From their ternary operator, they derive a binary join, and therefore a partial order.  However, their required laws for the ternary operator include only a right distributivity law because effects from the conditional expression itself do not in general distribute into or out of the branches.  Thus their derived structure satisfies only the right distributivity law $(a\sqcup b)\rhd c = (a\rhd c)\sqcup(b\rhd c)$, and not, in general, the left-sided version.  They also do not require ``commutativity'' of the branch arguments. This means that joinoids, in general, do not give rise to effect quantales --- some amount of structure is not necessarily present --- and that in general they validate fewer equivalences between effects.
Their motivations justify these choices (they offer logic programming as a domain where the left distributivity and commutativity conditions would not hold), and our focus on strict sequential languages justifies our additional requirements.
As with graded monads, joinads require sequential composition to be total, yet relax the kind of ordering on the monoid.

Joinads originally arose as an extension to monads that captures a class of combinators typical of composing parallel and concurrent programs in Haskell, in particular a \emph{join} (unrelated to lattices) operator of type $M\;A\rightarrow M\;B\rightarrow M\;(A\times B)$.  This is a natural model of fork-join-style parallel execution, and gives rise to the $\&$ operator of joinoids, which appears appropriate to model the corresponding notion in systems like Nielson and Nielson's effect system for CML communication behaviors~\cite{nielson1993cml}, which is beyond the space of operations considered for effect quantales.  However, $\&$ is inadequate for modeling the unstructured parallelism (i.e., explicit thread creation and termination, or task-based parallelism) found in most concurrent programming languages, so we did not consider such composition when deriving effect quantales.  We would like to eventually extend effect quantales for unstructured concurrent programming: this is likely to include adapting ideas from concurrent program logics that join asynchronously~\cite{dodds09}, but any adequate solution should be able to induce an operation satisfying the requirements of joinoids' parallel composition.

Ultimately, any effect quantale gives rise to a joinoid, by using the effect quantale's join for both parallel composition and to induce the ternary operator outlined above (and possibly adding a point, as with Katsumata's work, to make the operations total), though collapsing parallel and alternative composition obviously fails to capture many intended concurrency semantics. Mycroft et al.\ point out Milner's suggestion of defining parallel composition as $\chi\&\chi'{=}(\chi\rhd\chi')\sqcup(\chi'\rhd\chi)$, which we could also use to induce a joinoid from an effect quantale, but this is only valid for some effect systems (e.g., it assumes each thread's effects are atomic, which is generally untrue).

\subsection{Fixed Points}
Mycroft et al.\ also give brief consideration to providing iteration operators through the existence of fixed points, noting the possibility of adding one type of fixed point categorically, which carried the undesirable side effect of requiring sequential composition to be idempotent: $\forall b\ldotp b\rhd b = b$.  This is clearly too strict, as it is violated by nearly every example from the literature we examined in Section \ref{sec:modeling}.  They take this as an indication that every operation should be explicitly provided by an algebra, rather than attempting to derive operators.

By contrast, our closure operator approach not only imposes semantics that are by construction compatible with a given sequential composition operator, but critically coincide with manual definitions for existing systems.  We have also shown that for broad classes of sequential effect systems, our construction yields the most precise possible iteration operator, so it need not be explicitly defined for the algebra (though implementations may still require an explicit definition).

\citet{mycroft16}'s results seem to inform contemporaneous work by overlapping authors~\cite{orchard2016effects}, who relate sequential effects to session types using an unnamed effect algebra with (total) join and sequencing and a separately defined (total) iteration operator.  As with joinoids, only right distributivity is assumed (though Orchard and Yoshida do assume join is commutative). For the iteration operator, they require much stronger axioms for iteration --- in our notation, they require $x^*=I\sqcup(x\rhd x^*)=I\sqcup(x^*\rhd x)$, which is equivalent to the requirements of iteration in Kleene Algebras~\cite{KOZEN1994366}; our discussion of Kleene Algebras in Section \ref{sec:kleene} explains why we feel this is still too strong a requirement, even though for principally iterable effect quantales the construction in Section \ref{sec:iteration} induces operators satisfying the stronger axioms.
In some sense, the key insight behind our resolution of the problem Mycroft et al.\ identify is the recognition that iteration should be partial.

\section{Modeling Prior Effect Systems in a Generic Framework}
\label{sec:modeling2}
This section demonstrates that we can model significant prior type systems by embedding into our core language from Section \ref{sec:soundness}.  Embedding here means a type-and-effect-preserving translation.  Our language is generic, but clearly lacks concurrency, exception handling, and other concrete computational effects.  
We demonstrate two instantiations.  First, we show how to instantiate the framework to model a significant fraction of a joint locking and atomicity effect system (without spawning new threads), mostly as a demonstration that the framework parameters permit non-trivial instantiations.
Second, we give an instantiation that captures the main constructs of Skalka et al.'s history effects, and show how to recover their soundness proof for those constructs in our framework.

\subsection{Types for Safe Locking and Atomicity}

This section develops a hybrid of Flanagan and Abadi's \emph{Types for Safe Locking}~\cite{safelocking99} and Flanagan and Qadeer's \emph{Types for Atomicity}~\cite{flanagan2003tldi}, further extended to track locks in a flow-sensitive manner (the former uses \texttt{synchronized} blocks, the latter does not track locks itself), in order to demonstrate a rich instantiation of our framework.
We model a single thread's perspective on the type system: we omit constructs for spawning new threads, but include locks and access to heap cells, whose types indicate a guarding lock.
The type system tracks the locks held at each program point, flow sensitively for explicit lock acquire and release primitives, rather than for the synchronized blocks treated in the original work.

\begin{figure}
\[
\begin{array}{l}
Q(X) = \mathcal{L}(X)\otimes\mathcal{A}\\
M \in \mathsf{LockNames}\rightharpoonup((\mathsf{ThreadID}\times\mathbb{N})~\mathsf{option})\\
H \in \mathsf{Location}\rightharpoonup\mathsf{Term}\\
\mathsf{State} = \mathsf{ThreadID}\times M\times H\\
\mathsf{StateEnv} = (\mathsf{LockNames}\cup\mathsf{Location})\rightharpoonup\mathsf{Type}\\
K(\mathsf{lock}) = \star\\
K(\mathsf{ref}) = \star\Rightarrow\star\Rightarrow\star\\
\end{array}
\begin{array}{c}
\inferrule{
    \forall l\in\mathsf{dom}(m)\ldotp \Sigma(l)=\mathsf{lock}\\\\
    \forall r\in\mathsf{dom}(h)\ldotp \epsilon;\Sigma\vdash h(r):\Sigma(r)\mid I
}{
    Q\vdash (m,h) : \Sigma
}
\end{array}
\]
\[
\begin{array}{rcl}
\delta(\mathsf{new\_lock}) &=& \mathsf{unit}\overset{B}{\rightarrow}\mathsf{lock}\\
\delta(\mathsf{acquire}) &=& \Pi x:\mathsf{lock}\xrightarrow{(\emptyset,\{x\})\otimes R}\mathsf{unit}\\
\delta(\mathsf{release}) &=& \Pi x:\mathsf{lock}\xrightarrow{(\{x\},\emptyset)\otimes L}\mathsf{unit}\\
\delta(\mathsf{alloc}) &=& \Pi x:\mathsf{lock}\overset{B}{\rightarrow}\forall \alpha::\star\overset{B}{\rightarrow} \tau \overset{B}{\rightarrow}\mathsf{ref}\;S(x)\;\tau\\
\delta(\mathsf{read}) &=& \Pi x:\mathsf{lock}\overset{B}{\rightarrow}\forall \alpha::\star\overset{B}{\rightarrow} \mathsf{ref}\;S(x)\;\tau {\xrightarrow{(\{x\},\{x\})\otimes B}} \tau\\
\delta(\mathsf{write}) &=& \Pi x:\mathsf{lock}\overset{B}{\rightarrow}\forall \alpha::\star\overset{B}{\rightarrow} \mathsf{ref}\;S(x)\;\tau \overset{B}{\rightarrow}\tau{\xrightarrow{(\{x\},\{x\})\otimes B}} \tau\\
\end{array}
\]
\[
\begin{array}{l}
\llbracket\mathsf{new\_lock}\;\_\rrbracket((i,m,h))(\Sigma) = l,(i,m[l\mapsto\mathsf{false}],h),\Sigma[l\mapsto\mathsf{lock}]~\textrm{for $l\not\in\mathsf{dom}(m)$}\\
\llbracket\mathsf{acquire}\;l\rrbracket((i,m[l\mapsto \mathsf{None}],h))(\Sigma) = (),(i,m[l\mapsto (i,1)],h),\Sigma\\
\llbracket\mathsf{acquire}\;l\rrbracket((i,m[l\mapsto (j,n)],h))(\Sigma) = (),(i,m[l\mapsto (j,n+1)],h),\Sigma~\textrm{when $j=i$}\\
\llbracket\mathsf{release}\rrbracket((i,m[l\mapsto (j,n)],h))(\Sigma) = (),(i,m[l\mapsto (j,n-1)],h),\Sigma~\textrm{when $n>1$ and $j=i$}\\
\llbracket\mathsf{release}\rrbracket((i,m[l\mapsto (j,1)],h))(\Sigma) = (),(i,m[l\mapsto \mathsf{None}],h),\Sigma~\textrm{when $j=i$}\\
\llbracket\mathsf{alloc}\;l\;\tau\;v\rrbracket((i,m,h))(\Sigma) = \ell,(i,m,h[\ell\mapsto v]),\Sigma[\ell\mapsto\mathsf{ref}\;S(l)\;\tau]~\textrm{for $\ell\not\in\mathsf{dom}(h)$}\\
\llbracket\mathsf{read\;l\;\tau\;\ell}\rrbracket((i,m,h))(\Sigma) = h(\ell),(i,m,h),\Sigma\\
\llbracket\mathsf{write\;l\;\tau\;\ell\;v}\rrbracket((i,m,h))(\Sigma) = v,(i,m,h[\ell\mapsto v]),\Sigma\\
\end{array}
\]
\caption{Parameters to model a sequential and reentrant variant of Flanagan and Abadi's \emph{Types for Safe Locking}~\cite{safelocking99} and Flanagan and Qadeer's \emph{Types for Atomicity}~\cite{flanagan2003tldi} in our framework.  We sometimes omit the locking component of effects when it is simply $(\emptyset,\emptyset)$ to improve readability.
}
\label{fig:locking_params}
\end{figure}

We define in Figure \ref{fig:locking_params} the parameters to the language framework needed to model locks, mutable heap locations, and lock-indexed reference types, and the primitives to manipulate them.
We define $T$ (new type families) by giving $K$ (kinding of type constructors, which is defined over $T$), and define the primitives $P$ as $\mathsf{LockNames}\uplus\mathsf{Location}\uplus\mathsf{dom}(\delta)$ (locks, heap locations, and primitive operations).
The state \textsf{State} consists of a thread ID, a lock heap mapping locks to the owner and a count of outstanding claims indicating how many times it has been recursively acquired (if held), and a standard mutable store.  The reference type is indexed by a lock (lifted to a singleton type).
Our framework does not include concurrency, but this models the typical single-thread step relation common to small-step formalizations of sequentially consistent shared-memory concurrency (typically coupled with an additional reduction relation that non-deterministically selects a thread to step).
Primitives include lock allocation; lock acquisition and release primitives whose effects indicate both the change in lock claims and the mover type;
allocation of data guarded by a particular lock; 
plus reads and writes, with effects requiring appropriate lock ownership.
Both acquisition and release have two cases --- one case for reentrant behaviors, and one for claiming or leaving an unowned lock.

The primitive types are largely similar, so we explain only two in detail; recall the details of movers were described in Section \ref{sec:atomicity_quantale}.
\textsf{acquire} takes one argument --- a lock --- that is then bound in the latent effect of the
type.  That effect is a product of the locking and atomicity quantales, indicating that the lock
acquisition is a \emph{right mover} ($R$), and that safe execution requires no particular lock
claims on entry, but finishes with the guarantee that the lock passed as an argument is held (we use
syntactic sugar for assumed effect constructors of appropriate arity).
The $\mathsf{read}$ primitive for well-synchronized reads is akin to a
standard dereference operator, but because it works for any reference --- which may be associated
with any lock and store values of any type --- the choice of lock and type must be passed as
arguments before the reference itself.  Given the lock, cell type, and reference, the final latent
effect indicates that the operation requires the specified lock to be held at invocation, preserves
ownership, and is a \emph{both mover} ($B$).

We give a stylized definition of the (partial) semantics function for primitives as acting on not only states but also state types, giving the monotonically increasing state type for each primitive, as required of the parameters (this essentially specifies the choice of a new larger \textsf{StateEnv} for each reduction as required for primitive preservation).  We also omit restating the dynamic effect in our $\llbracket-\rrbracket$ definition; we take it to be the final effect of the corresponding entry in $\delta$ with appropriate value substitutions made --- as required by the type system.
The definitions easily satisfy the primitive preservation property assumed by the type system (Section \ref{sec:params}), by virtue of handling reference types correctly, with the stylized definition of the semantics also giving the choice of new \textsf{StateEnv} for each case of the proof.
Letting the set of blocking primitives be $\{\mathsf{acquire},\mathsf{release}\}$, this instantiation satisfies primitive progress (Section \ref{sec:params}): any well-typed complete primitive application either reduces to a value, or is a stuck application of a blocking primitive (primitive preservation ensures that if a blocking primitive does reduce, the result is appropriately typed).

These parameters are adequate to write and type terms like the following atomic function that reads from a supplied lock-protected reference (permitting syntactic sugar for brevity):
\[
\begin{array}{@{\qquad}l}
\emptyset\vdash \lambda x\ldotp\lambda r\ldotp \mathsf{acquire}\;x; \mathsf{let}\;y=\mathsf{read}\;x\;[\mathsf{bool}]\;r\;\mathsf{in}\;(\mathsf{release}\;x; y)\\
\qquad : \left(\Pi x:\mathsf{lock}\xrightarrow{(\emptyset,\emptyset)\otimes B}
	\Pi r:\mathsf{ref}\;\mathcal{S}(x)\;\mathsf{bool}\xrightarrow{(\emptyset,\emptyset)\otimes A}\mathsf{bool}\right) \mid (\emptyset,\emptyset)\otimes B
\end{array}
\]
This embedding mostly only demonstrates the flexibility of the framework's parameter set. While tracking lock ownership for a single thread is not particularly useful (even if it may structurally resemble other problems),
Lemmas \ref{lem:onestep_preservation} and \ref{thm:one_progress} for this instantiation recover the per-thread portion of a syntactic type safety proof for a multithreaded language.

\subsection{History Effects}
Skalka et al.'s history effects~\cite{skalka2008types} use slightly less abstraction than we do for their technical machinery and soundness proof.  They include a set of constants (only atoms) with assumed singleton types, but also a primitive $\mathsf{ev}(e)$ which evaluates $e$ to a constant (enforced using classic singleton types, which classify exactly one value) --- effectively giving a family of event primitives for a set of events fixed a priori (by the set of constants in the language).
Their  language is otherwise similar to ours (a higher order functional language), though they support Hindley-Milner style inference using prenex polymorphism~\cite{Damas:1982:PTF:582153.582176} (rather than System F style explicit polymorphism as we do), and they use recursive functions rather than loops.

Figure \ref{fig:history} gives the type rules for a slightly restricted version of Skalka et al.'s $\lambda_\textrm{trace}$ language, making several simplifications for brevity.
First, we syntactically restrict the argument to \textsf{ev} to be a constant rather than using singleton types (any $\lambda_\textrm{trace}$ expression can be rewritten in such a form).
Second, we omit universal quantification.  Skalka et al.\ include quantification over singleton types, regular types, and history effects.  The second and third of these can be translated as one would normally translate from multi-kinded Hindley-Milner type schemes to multi-kinded System F. But because Skalka et al.'s system does not explicitly give kinds to type-level variables, adding this to our example would obscure the core ideas we are interested in for our translation.
The singleton types would be more complicated to embed, though also not as useful given our choice to restrict the event construct to constant literals.
Finally, Skalka et al.\ include a fixed point primitive which we omit here, but which could be encoded with a different state instantiation permitting Landin's knot.
Figure \ref{fig:history} also omits the straightforward details of boolean operators.
The types required to characterize this fragment of $\lambda_\textrm{trace}$ is a subset of our core language's (when instantiated with a history effect quantale), so we require no translation between types.

\begin{figure}\small
\begin{mathpar}
\inferrule[Var]{\Gamma(x)=\sigma}{\Gamma,\epsilon\vdash x : \sigma}
\and
\inferrule[Event]{ }{\Gamma,H;\mathsf{ev(c)}\vdash \mathsf{ev}(c) : \mathsf{unit}}
\and
\inferrule[Weaken]{\Gamma,H'\vdash e : \tau \\ H'\sqsubseteq H}{\Gamma,H\vdash e : \tau}
\\\\
\inferrule[If]{\Gamma,H_1\vdash e_1 : \mathsf{bool} \\ \Gamma,H_2\vdash e_2 : \tau \\ \Gamma,H_2\vdash e_3 : \tau}{\Gamma,H_1;H_2\vdash \mathsf{if}~e_1~\mathsf{then}~e_2~\mathsf{else}~e_3 : \tau}
\and
\inferrule[App]{\Gamma,H_1\vdash e_1 : \tau'\overset{H_3}{\rightarrow}\tau \\ \Gamma,H_2\vdash e_2 : \tau'}{\Gamma,H_1;H_2;H_3\vdash e_1 e_2 : \tau}
\\\\
\inferrule[Lambda]{\Gamma;x:\tau,H\vdash e : \tau'}{\Gamma,\epsilon\vdash\lambda x\ldotp e : \tau\overset{H}{\rightarrow}\tau'}
\and
\inferrule[Let]{\Gamma,\epsilon\vdash v : \sigma \\ \Gamma;x:\tau, H\vdash e : \tau}{\Gamma,H\vdash\mathsf{let}~x=v~\mathsf{in}~e : \tau}
\end{mathpar}
\[
\begin{array}{l}
Q(X) = \mathcal{H}(C)\\
c \in C\\
\mathsf{State} = \mathsf{seq}~C\\
K(\mathsf{event}) = \star\\
\end{array}
~~~~~~~~~~~~~~~~~~~~~~~~~~~~~~~~~~~~~~~~~~~~
\begin{array}{rcl}
\delta(c) &=& \mathsf{event}\\
\delta(\mathsf{ev}) &=& \Pi x:\mathsf{event}\xrightarrow{\mathsf{ev}[x]}\mathsf{unit}\\
\mathsf{StateEnv} &=& \{\delta\}
\end{array}\]
\[
\begin{array}{c}
\inferrule{ }{
    \vdash \overline{c} : \Sigma
}
\end{array}
\qquad
\begin{array}{l}
\llbracket\mathsf{ev}\;x\rrbracket(\overline{c})(\Sigma)=(),(\overline{c}\mathrel{++}[x]),\Sigma
\end{array}
\]
\[
\begin{array}{rclrcl}
\llparenthesis x \rrparenthesis & = & x &
\llparenthesis c \rrparenthesis & = & c\\
\llparenthesis \mathsf{ev}(c) \rrparenthesis & = & \mathsf{ev}(c) &
\llparenthesis \mathsf{if}~e_1~\mathsf{then}~e_2~\mathsf{else}~e_3 \rrparenthesis &=& \mathsf{if}~\llparenthesis{e_1}\rrparenthesis~\llparenthesis{e_2}\rrparenthesis~\llparenthesis{e_3}\rrparenthesis\\
\llparenthesis e_1 e_2 \rrparenthesis &=& \llparenthesis e_1 \rrparenthesis \llparenthesis e_2\rrparenthesis &
\llparenthesis (\lambda x\ldotp e) \rrparenthesis &=& (\lambda x\ldotp \llparenthesis e\rrparenthesis)\\
\llparenthesis \mathsf{let}~x=v~\mathsf{in}~e \rrparenthesis &=& (\lambda x\ldotp \llparenthesis e\rrparenthesis) \llparenthesis v\rrparenthesis
\end{array}
\]
\caption{Embedding Skalka et al.'s history effects~\cite{skalka2008types}.
}
\label{fig:history}
\end{figure}
Figure \ref{fig:history} also gives an instantiation of our framework and corresponding type-and-effect-preserving translation, which embeds the selected fragment of Skalka et al.'s system into our core language.
The primitives $P$ are exactly the $\mathsf{ev}$ primitive and the set of constant events $c\in C$.
The runtime state is a sequence of events from $C$ --- a dynamic accumulation of the events in the order they occur during execution.
$\delta$ gives each $c\in C$ the new type \textsf{event} (of kind $K(\mathsf{event})=\star$), and gives the \textsf{ev} primitive a simple dependent type indicating that its latent effect is the single event trace of the event provided as an argument.  Because no dynamic allocation of primitives occurs, $\mathsf{StateEnv}$ is simply the singleton set containing $\delta$, and all states (dynamic traces) are considered well-typed.  The semantics of the event primitive $\llbracket\mathsf{ev}\;c\rrbracket$ are to return unit and append the event to the current state, labeling the dynamic reduction with the singleton trace containing the emitted event (notice this is trivially consistent with the type ascribed by $\delta$).

Finally Figure \ref{fig:history} gives a term translation $\llparenthesis-\rrparenthesis$ translating the syntax of the fragment of Skalka et al.'s system we target into our core language.
Our omissions from Skalka et al.'s system are not trivial, but the details of embedding those aspects are orthogonal to the aspects of an embedding we are interested in.

For this restricted version of $\lambda_\textrm{trace}$, we can obtain the following result:
\begin{lemma}[Embedding of $\lambda_\textrm{trace}$]
    For any $\lambda_\textrm{trace}$ environment, term, type, and history effect such that $\Gamma,H\vdash e : \tau$, $\Gamma\vdash \llparenthesis e \rrparenthesis : \tau \mid H$.
\end{lemma}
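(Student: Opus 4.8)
The plan is to prove the embedding lemma by induction on the derivation of $\Gamma, H \vdash e : \tau$ in $\lambda_\textrm{trace}$, instantiating the framework as in Figure~\ref{fig:history} (so $Q = \mathcal{H}(C)$ and the state environment is fixed at $\delta$, since $\mathsf{StateEnv}=\{\delta\}$). Because we have dropped quantification, the environment and type translations are essentially the identity, so there is no type-level obligation beyond reusing $\tau$. The key bookkeeping identifications are $\epsilon = I$, ${;} = {\rhd}$, and ${|} = {\sqcup}$, together with the event constructors $\mathsf{ev}[c]$ playing the role of $E(\overline v)$. Two global observations discharge the side conditions of the core rules uniformly: first, the operators of $\mathcal{H}(C)$ are total, so every well-kinded translated effect is \textsf{NonTrivial} and the \textsf{NonTrivial} premises of \textsc{T-App}, \textsc{T-If}, and \textsc{T-Sub} hold automatically; second, history-effect types carry no value dependency, so the value-restriction premise $x\notin\mathsf{FV}(\gamma,\tau')\vee\mathsf{Value}(e_2)$ of \textsc{T-App} is trivially met and latent-effect substitutions such as $H_3[e_2/x]$ act as the identity.

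I would then handle each rule in turn. \textsc{Var} maps to \textsc{T-Var} and \textsc{Lambda} to \textsc{T-Lam}, both producing effect $I = \epsilon$; \textsc{Let} translates to an application of a $\lambda$-abstraction to the value $\llparenthesis v\rrparenthesis$, which by \textsc{T-Lam}, value typing (Lemma~\ref{lem:valtyping}), and \textsc{T-App} yields effect $I\rhd I\rhd H \equiv H$. \textsc{App} is direct: \textsc{T-App} gives $H_1\rhd H_2\rhd H_3$, matching $H_1;H_2;H_3$ once $;$ is read as $\rhd$ and non-dependence collapses the substitution. The \textsc{Event} case goes through \textsc{T-Prim} for both $\mathsf{ev}$ (type $\Pi x{:}\mathsf{event}\xrightarrow{\mathsf{ev}[x]}\mathsf{unit}$) and the constant $c$ (type $\mathsf{event}$, arity $0$), followed by \textsc{T-App}; the resulting effect $I\rhd I\rhd \mathsf{ev}[x][c/x]$ reduces to $\mathsf{ev}[c]$ via the unit laws \textsc{Eq-UnitL}/\textsc{Eq-UnitR}. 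The \textsc{If} case uses \textsc{T-If}, producing $H_1\rhd(H_2\sqcup H_2)$; here I must appeal to \textsc{Eq-$\sqcup$-Idem} to see $H_2\sqcup H_2\equiv H_2$, so the effect is $\equiv H_1\rhd H_2 = H_1;H_2$.

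Wherever a case yields an effect only \emph{equivalent} to the target $H$, I close the gap with \textsc{T-Sub}: since $\chi\equiv H$ implies $\vdash\chi\sqsubseteq H$ and subeffecting respects equivalence, subsumption lands the derivation on exactly $H$. Because the translated effects contain no effect variables, they are closed, so Proposition~\ref{prop:closed_eff_concl_Q} and the closed-effect results (Lemma~\ref{lem:closed_syneff_equiv}, Corollary~\ref{lem:closed_syneff_sub}) let me move freely between these syntactic effects and their semantic images in $\mathcal{H}(C)$. The one genuinely non-routine step is the \textsc{Weaken} case, which I expect to be the main obstacle. Skalka et al.'s side condition $H'\sqsubseteq H$ is \emph{denotational} --- it means $\llbracket H'\rrbracket\subseteq\llbracket H\rrbracket$, equivalently $H'\sqcup H\approx H$ in the quotiented quantale --- whereas \textsc{T-Sub} requires the core's \emph{syntactic} subeffecting $\vdash H'\sqsubseteq H$. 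Corollary~\ref{lem:closed_syneff_sub} supplies only the sound direction (syntactic $\Rightarrow$ semantic); here I need the converse, i.e.\ completeness of the core equational axioms for $\approx$. Since the translated fragment uses neither $\mu$ nor $-^*$, this reduces to showing that any two star-free, fixpoint-free history effects denoting the same finite language are provably equal from the available rules --- associativity, commutativity, idempotence, and unit for $\sqcup$, associativity and unit for $\rhd$, and the two distributivity laws \textsc{Eq-DistL}/\textsc{Eq-DistR}. This is precisely completeness of the equational theory of idempotent semirings without zero (finite languages under union, concatenation, and $\{\epsilon\}$), which I would invoke --- or prove by a normal-form argument reducing each term to a finite union of concatenations of events --- to convert Skalka's denotational weakening into the syntactic subeffecting judgment, after which \textsc{T-Sub} applies.
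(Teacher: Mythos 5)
Your overall strategy --- induction on the $\lambda_\textrm{trace}$ derivation, exploiting that types need no translation, reading $\epsilon,{;},{|}$ as $I,{\rhd},{\sqcup}$, and discharging the \textsf{NonTrivial} premises by totality of $\mathcal{H}(C)$'s operators --- is exactly the paper's (one-line) proof, and your case analysis correctly fills in the details the paper omits, including the use of \textsc{T-Sub} to land on exactly $H$ when a rule naturally produces only an equivalent effect.

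The one place you diverge is the \textsc{Weaken} case, which you flag as the main obstacle and resolve by invoking completeness of the equational theory of idempotent semirings for star-free, $\mu$-free history effects. That detour is sound for the fragment you describe, but it is not needed, and it slightly misreads how the framework bridges semantic and syntactic subeffecting. Definition \ref{def:histeq} takes $\mathcal{H}(C)$ to be \emph{already quotiented} by $\approx$, so Skalka et al.'s denotational condition $\llbracket H'\rrbracket\subseteq\llbracket H\rrbracket$ is literally the statement that $H'\sqcup H = H$ holds \emph{in the effect quantale}. The core system then imports such semantic identities on concrete effects by fiat: Proposition \ref{prop:closed_kinded_eff_Q} collapses each closed, non-trivial translated effect to a single effect constructor $E(\overline{v})$, and \textsc{Eq-$\sqcup$-Simp} turns the quantale-level equation $E(\overline{v})\sqcup E'(\overline{v'})=E'(\overline{v'})$ directly into the judgmental equivalence needed for \textsc{Sub-Def}. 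So the ``converse'' direction you were missing is built into the equivalence judgment for variable-free effects; no normal-form or free-semiring argument is required. This also matters for coverage: the target effect $H$ in \textsc{Weaken} may mention $\mu$ even when the subderivation's effect is star-free, in which case your completeness argument for the $\mu$-free fragment would not apply, whereas the \textsc{Eq-$\sqcup$-Simp} route goes through unchanged (given the paper's standing assumption that every quantale element has an effect-constructor notation).
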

\begin{proof}
    By induction on the $\lambda_\textrm{trace}$ derivation, taking advantage of the fact that the types require no translation.
\end{proof}

As a result of instantiating our framework in this way, we obtain two interesting results.  Of minor interest is the ``free'' addition of while loops to their core language.
However, the value of this is limited because we have done so for an instantiation of our core language without inspection of state, so only infinite loops are expressible this way.
Slightly more interesting is a new soundness proof for history effects, proving exactly the same property as Skalka et al.'s original proof. Our general type safety proof (Theorem \ref{thm:soundness}) accumulates traces in the proof, rather than in the semantics, with the $\xrightarrow{\gamma\rhd\ldots\rhd\gamma'}^*_Q$ relation. But we can still recover exactly their proof in terms of dynamic state by giving an effect quantale interpretation, as we did in Example \ref{ex:history_effect_interp}.  
Then Lemma \ref{lem:interp_safety} gives that for any well-typed expression $e$, we have $\epsilon,e\dhxrightarrow{H}\eta,e'$ implies $\eta\in\llbracket{H}\rrbracket$.

\section{Kleene Effect Systems}
\label{sec:kleene}
Readers familiar with Kleene Algebras (\emph{KA}s)~\cite{KOZEN1994366} may have noticed some similarities to effect quantales.  Both are semilattice-ordered monoids (total operations for KAs, partial for effect quantales).  Kleene Algebras also include a (finite) iteration operator with properties similar to those required of effect quantale iteration.  This section describes the relationship between these classes of systems in detail. In summary, every Kleene Algebra is a principally iterable effect quantale --- meaning Kleene Algebras can be used as effect systems ---  while effect quantales admit partiality (important for the examples in Section \ref{sec:modeling}), and have less demanding requirements on iteration when it is defined.

For completeness, we recall a definition of Kleene Algebra due to Kozen~\cite{KOZEN1994366}:
\begin{definition}[Kleene Algebra]
    \label{def:ka}
A Kleene Algebra $\mathcal{K}$ is a structure $(K,+,\cdot,{}^*,0,1)$ where:
\begin{itemize}
    \item $(K,+,0)$ is a total commutative idempotent monoid
    \item $(K,\cdot,1)$ is a total monoid
    \item $0$ is nilpotent for $\cdot$.
    \item $\cdot$ distributes over $+$ on both sides
    \item Deriving a partial order $x\le y \leftrightarrow x + y = y$, the ${}^*$ operator satisfies the following laws:
        \begin{itemize}
            \item[(a)] $1+x\cdot x^* \le x^*$
            \item[(b)] $1+x^*\cdot x \le x^*$
            \item[(c)] $b+a\cdot x\le x \rightarrow a^*\cdot b\le x$
            \item[(d)] $b+x\cdot a\le x \rightarrow b\cdot a^*\le x$
        \end{itemize}
\end{itemize}
\end{definition}
Kleene Algebras arise as a natural algebraic model of sequential computation: $\cdot$ is sequential composition (often elided, as in regular expressions), $+$ is used for alternatives / branching, $1$ is the unit / no-op program, $0$ is the program which always fails, and ${}^*$ models finite iteration.

Setting aside the iteration operator for a moment, it is immediate that every Kleene Algebra $\mathcal{K}$ gives rise to an effect quantale $EQ(\mathcal{K})$, since any idempotent semiring (Kleene Algebra) is also a partial idempotent semiring without 0 (effect quantale):
\begin{proposition}[Kleene Algebra Effect Quantales]
    \label{prop:kaeq}
For every Kleene Algebra $\mathcal{K}=(K,+,\cdot,{}^*,0,1)$, there is a corresponding effect quantale $EQ(\mathcal{K})=(K,+,\cdot,1)$.
\end{proposition}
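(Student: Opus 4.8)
The plan is to verify directly that the four-tuple $(K,+,\cdot,1)$ satisfies each clause of Definition \ref{def:eq}, observing throughout that totality is merely a degenerate special case of partiality. First I would check that $(K,+)$ is a (partial) join semilattice. By assumption $(K,+,0)$ is a total commutative idempotent monoid, so $+$ is associative (from the monoid laws), commutative, and idempotent---which are precisely the axioms of a join semilattice. Since $+$ is total it is in particular a partial binary operation, and the presence of the extra bottom element $0$ is harmless (effect quantales permit, but do not require, a least element). Crucially, the order $x \sqsubseteq y \overset{\mathsf{def}}{=} x + y = y$ induced by the semilattice coincides exactly with the order $x \le y$ derived in the Kleene Algebra (Definition \ref{def:ka}), so no reconciliation between the two orderings is needed.

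Next I would observe that $(K,\cdot,1)$ is a partial monoid: the Kleene Algebra already provides a total monoid $(K,\cdot,1)$, and every total monoid is trivially a partial one, since composition is defined on all pairs. Taking $\rhd = \cdot$ and $I = 1$ discharges this clause immediately.

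For the remaining distributivity clause, the effect quantale definition demands that $a \rhd (b \sqcup c) = (a \rhd b) \sqcup (a \rhd c)$ and $(a \sqcup b) \rhd c = (a \rhd c) \sqcup (b \rhd c)$ hold whenever either side is defined. The Kleene Algebra axioms state exactly that $\cdot$ distributes over $+$ on both sides; since all operations are total, the qualifier ``when either side is defined'' is vacuously satisfied, as both sides are always defined and the Kleene Algebra equations supply the required equalities. The iteration operator ${}^*$, the nilpotency of $0$, and the ${}^*$-unfolding and induction laws play no role here and are simply discarded, since the effect quantale structure requires none of them.

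There is no real obstacle: the proof is a straightforward unpacking of definitions. The only points that warrant explicit mention are the two structural observations---that a total commutative idempotent monoid is exactly a bounded join semilattice, and that total operations are a trivial instance of partial ones---after which every effect quantale axiom follows directly from its corresponding Kleene Algebra axiom.
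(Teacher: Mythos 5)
Your proposal is correct and takes the same approach as the paper, which simply states that the effect quantale laws follow immediately from the Kleene Algebra axioms; you have merely spelled out the direct verification (idempotent commutative monoid = join semilattice, total monoid as a degenerate partial monoid, KA distributivity supplying the distributivity clause, coincidence of the induced orders). Nothing is missing and no reconciliation beyond what you note is required.
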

\begin{proof}
    Satisfaction of the effect quantale laws follows immediately from the properties in Definition \ref{def:ka}.
\end{proof}

Effect quantales are essentially relaxations of Kleene Algebras, in three primary ways. We discuss each along with both its practical consequences, and the philosophical justification based on the fact that Kleene Algebras are intended for equational reasoning about programs, while effect systems are intended for sound bounding of program behavior.

\paragraph{Totality vs.\ Partiality}
Kleene Algebras use total operations, while effect quantales permit all operations (including iteration) to be partial.
Most examples of effect quantales from Section \ref{sec:modeling} are partial in at least one operator, so cannot be directly represented as Kleene Algebras.
Of course, as with the generic frameworks for sequential effects discussed in Section \ref{sec:semantics}, synthetic elements could be added to represent ``not defined'' in order to embed the partial functions into total functions, though it would be unpleasant to require this for most non-trivially sequential effect systems.

The relevant examples in Section \ref{sec:modeling} highlight the difference in purpose between Kleene Algebras and effect systems.  Kleene Algebras are specifically intended to support equational reasoning about (regular) programs. This use case requires knowing what all combinations do. In contrast, effect systems emphasize proving upper bounds on program behavior rather than exact characterizations.  Since their uses are typically to \emph{rule out undesirable behaviors}, once a program is known to contain undesirable behaviors there is no longer a reason for the effect system to classify the program at all --- it can simply not have a valid effect, by way of certain effect combinations being undefined.

Consider security properties.  Kleene Algebras for finite execution prefixes (i.e., the Kleene Algebra of traces, already described as an effect quantale in Section \ref{sec:finite_traces}) provide a good example.  That algebra can describe arbitrary trace sets.  But consider using this to enforce a particular policy (e.g., a linear-time safety property like a security policy) that a certain action occurs at most once during the program. This can be done by separately checking that the top-level effect of the program is a trace set that excludes the offending traces (roughly the approach taken by \citet{skalka2008types,Skalka2008}).  For better error localization, this check can be imposed uniformly on all latent effects. As a general tool for allowing arbitrary policy specifications, this is desirable.  However, effect systems are often intended for use with \emph{specific} policies, in which case there is no reason to allow effects corresponding to trace sets repeating the offending action to even exist: instead, sequencing any effect \emph{corresponding} to having executed the sensitive action, with any other such effect, can simply be undefined. Extending to iteration, this then makes iteration of such effects also undefined (there would be no subidempotent effects above such effects).  

Another example might concern tracking locks (an analogue of $\mathcal{L}(L)$, Definition \ref{def:locking}) via a Kleene Algebra of binary relations (on lock multisets). Kleene iteration is then the reflexive transitive closure of the binary relation, and always defined.  In the case of a relation intended as the effect of acquiring a lock $x$ (i.e., the relation of post-states identical to the pre-state except with an additional lock acquisition on $x$), the result of iterating this is the infinite union of relations modeling all possible numbers of acquisitions (0, 1, 2, \ldots).  This is semantically correct, but from the perspective of an effect system is not very useful: after such a loop, how many times should $x$ be released?  In fact, using this KA as an EQ is sound (Proposition \ref{prop:kaitereq}), but it permits writing large program fragments that have effects that cannot be used meaningfully in complete programs (e.g., iteration of a lock acquisition). Since any program that tries to acquire a lock an indeterminate number of times without releasing it will be problematic, it seems preferable to identify that as an invalid operation --- i.e., allow iteration on that effect to be undefined, as in $\mathcal{L}(L)$.

\paragraph{Least Elements}
\emph{None} of the sequential examples in Section \ref{sec:modeling} have a natural 0 element --- a least element according to join which is also absorbing / nilpotent for sequencing.
As with completing partial functions to total functions, a synthetic 0 could be added to an effect quantale --- but requiring every interesting sequential effect system to add a synthetic element suggests it simply should not be required.

Beyond the pragmatic matter of not wanting to add synthetic elements to most interesting examples, the lack of 0 requirement is also related to the difference in purpose.  Since effect systems are primarily used to rule out undesirable behaviors by not classifying programs with undesirable behaviors, and the 0 element of a Kleene algebra corresponds to programs that fail, effect systems generally have no use for a 0 element.

\paragraph{Weaker Iteration Axioms}

\begin{figure}
    \begin{tabular}{|c|c|c|l|}
        \hline
        & Kleene Algebra Iteration & Effect Quantale Iteration & Effect Quantale Notes\\ \hline
        \hline
        1 & $1\le a^*$ & $I\sqsubseteq a^*$ & Possibly-Empty \\ \hline
        2 & $a\le a^*$ & $a\sqsubseteq a^*$ & Extensive \\ \hline
        3 & $(a^*)^*=a^*$ & $(a^*)^*=a^*$ & Idempotent \\ \hline
        4 & $a\le b\rightarrow a^* \le b^*$ & $a\sqsubseteq b\rightarrow a^* \sqsubseteq b^*$ & Monotone \\ \hline
        5 & $a^*\cdot a^*=a^*$ & $a^*\rhd a^*= a^*$ & Lemma \ref{lem:iter_strict_idem} \\ \hline
        6 & $1+(a\cdot a^*)=a^*$ & $1\sqcup (a\rhd a^*)\sqsubseteq a^*$ & Follows from 1, 2, and 4\\ \hline
        7 & $1+(a^*\cdot a)=a^*$ & $1\sqcup (a^*\rhd a)\sqsubseteq a^*$ & Follows from 1, 2, and 4 \\ \hline
    \end{tabular}
    \caption{Properties of Kleene Algebra iteration~\cite{KOZEN1994366}, alongside analogous properties of effect quantale iteration.}
    \label{fig:compare_iter}
\end{figure}

Figure \ref{fig:compare_iter} recalls some basic properties of Kleene Algebra's iteration operator, alongside the most similar property of effect quantales.  The first 4 properties are identical to axioms of effect quantale iteration (modulo notation and partiality).
The final iteration axiom for effect quantales is weaker than the Kleene Algebra property in row 5, but Lemma \ref{lem:iter_strict_idem} strengthened that ordering to an equality.
In rows 6 and 7, properties which are equalities for Kleene Algebras are weakened to ordering in only one direction.
This is because the Kleene Algebra axioms for iteration specifically imply iteration is a specific least fixed point, while effect quantale iteration does not specifically require this minimality.

The Kleene Algebra axioms determine iteration uniquely --- as the least fixed point of $F(x)=1+x\cdot F(x)$ --- while effect quantales may admit many iteration operators with no hard requirement that any be a least fixed point. In particular, effect quantales may exist with iteration even when $F(x)=1+x\cdot F(x)$ has no unique least fixed point.
There are effect quantales with meaningful iteration, but which are not principally iterable, and where no least iteration operator in the sense of Proposition \ref{prop:maxprecise} exists.
Example \ref{ex:non-principal} defines such an effect quantale, $Q_{NP}$, where principality fails because there is an infinite descending chain of idempotent elements above unit.
There, even with the addition of a synthetic 0, the semiring structure cannot be made a Kleene Algebra because there is no least fixed point for
 $F(x)=1+(x\cdot F(x))$ and $G(X)=1+(G(x)\cdot x)$.

In general, the weaker requirements for effect quantale iteration are motivated by the fact that a sound effect system only requires sound upper bounds. This relaxation admits the possibility that an effect system may have multiple different iteration operators of interest (e.g., Proposition \ref{prop:desc_chain_iter}), and for those that are not principally iterable, may not have an optimally-precise operator.
While this paper does not directly consider implementation or usability issues, there may also be practical reasons to select less precise iteration operators, if they are easier or faster to compute or solve constraints over, or if a less precise operator is easier to explain to developers. (Though naturally, all else being equal, the most precise operator is preferable.)

Since the Kleene Algebra iteration properties directly imply the corresponding effect quantale iteration axioms, we can strengthen the relationship between Kleene Algebras and effect quantales:

\begin{proposition}[Iterable Kleene Algebra Effect Quantales]  
    \label{prop:kaitereq}
For every Kleene Algebra $\mathcal{K}=(K,+,\cdot,{}^*,0,1)$, there is a corresponding \emph{principally iterable} effect quantale $IEQ(\mathcal{K})=(K,+,\cdot,1,{}^*)$.
\end{proposition}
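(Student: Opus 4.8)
The plan is to invoke Proposition~\ref{prop:kaeq}, which already establishes that $EQ(\mathcal{K})=(K,+,\cdot,1)$ is an effect quantale, and then verify that the Kleene star ${}^*$ satisfies the five iteration axioms of Definition~\ref{def:iterable}. The bulk of the work is a direct translation of notation: the additive operation $+$ plays the role of $\sqcup$, the multiplicative operation $\cdot$ plays the role of $\rhd$, and $1$ plays the role of $I$. The first thing I would establish is that the two induced partial orders coincide: in a Kleene Algebra $x \le y$ is defined as $x + y = y$, while in $EQ(\mathcal{K})$ the subeffecting order $x \sqsubseteq y$ is defined as $x \sqcup y = y$; since $\sqcup = +$, these are literally the same relation, so $\le$ and $\sqsubseteq$ may be used interchangeably throughout.

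The second observation, and the one that discharges all partiality side-conditions, is that in a Kleene Algebra every operation --- including $\cdot$ and ${}^*$ --- is total. Consequently each of the five iteration properties has its ``when defined'' hypothesis satisfied vacuously, and each definedness obligation (such as the requirement that $(e^*)\rhd(e^*)$ be defined for foldability) is met automatically. With these two observations in hand, the remaining verification is immediate from the consequences of the Kleene Algebra axioms recalled just above the statement: $1 \le a^*$ gives possibly-empty ($I \sqsubseteq a^*$); $a \le a^*$ gives extensive; $(a^*)^* = a^*$ gives idempotent; and $a \le b \Rightarrow a^* \le b^*$ gives monotone.

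The only property requiring a word of care is foldability, which asks that $(a^*)\rhd(a^*)$ be defined and that $(a^*)\rhd(a^*) \sqsubseteq a^*$. Totality of $\cdot$ supplies definedness, and the Kleene Algebra identity $a^* \cdot a^* = a^*$ supplies even equality, whence $\sqsubseteq$ follows by reflexivity of the order. I do not anticipate a genuine obstacle: the argument is essentially a dictionary-matching exercise, and the ``hard part'' amounts to no more than being explicit that the equalities furnished by the Kleene Algebra axioms are stronger than the inequalities the iteration axioms demand, and that totality removes every definedness proviso. This is precisely why, as the surrounding text notes, a Kleene Algebra's star is a valid --- though possibly stronger-than-necessary --- effect quantale iteration operator.
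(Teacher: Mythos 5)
Your proposal is correct and follows the same route as the paper: invoke Proposition~\ref{prop:kaeq} for the effect quantale structure, then read off the five iteration axioms from the recalled consequences of the Kleene Algebra axioms ($1\le a^*$, $a\le a^*$, $(a^*)^*=a^*$, monotonicity, and $a^*\cdot a^*=a^*$), noting that totality discharges all definedness side-conditions. Your version is simply a more explicit write-up of the paper's one-line argument.
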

\begin{proof}
    By Proposition \ref{prop:kaeq} and the fact that facts derivable from the Kleene Algebra axioms~\cite{KOZEN1994366} (recalled above) imply the effect quantale axioms.
Because the iteration axioms for Kleene Algebras imply that Kleene iteration is the \emph{least} fixed point of $F(x)=1+x\cdot x^*$, and Kleene iteration is total, this means every subiterable element $x$ has a least iterable element above it (specifically, $x^*$), making this effect quantale \emph{principally iterable}.
\end{proof}

Thus \emph{every Kleene Algebra can be used as a sequential effect system with (principal) iteration}.
Given a Kleene Algebra describing the semantics of some set of primitives, our framework in Section \ref{sec:soundness} shows that those operations can be used as primitives in a higher order functional language with parametric polymorphism over behaviors, and the resulting effect system will yield (for a closed program) a Kleene Algebra expression describing the possible behaviors of the program.

\section{Related and Future Work}
\label{sec:relwork}
The closely related work is split among three major groups: generic effect systems, algebraic models of sequential computation, and concrete effect systems.

\subsection{Generic Effect Systems}
We have already discussed generic treatments of sequential effect systems in Section \ref{sec:semantics}.
We know of only three generic characterizations of traditional single-operator commutative effect systems that aim to capture the full structure common to a range of concrete effect systems.  None are extensible with new primitives.

Marino and Millstein give a generic model of a static traditional (single-operator) commutative effect system~\cite{marino09} for a simple extension of the lambda calculus.  Their formulation is motivated explicitly by the view of effects as capabilities, which pervades their formalism --- effects there are sets of capabilities, values can be tagged with sets of capabilities, and subeffecting follows from set inclusion.  They do not consider type system support for effect polymorphism (they consider the naive substitution of let bindings during type checking).  They do however also parameterize their development by an insightful choice of an \emph{adjust} operation to change the capabilities available within some evaluation context and a \emph{check} operation to check the capabilities required by some redex against those available, allowing great flexibility in how effects are managed.

Henglein et al.~\cite{henglein2005effect} give a simple expository effect system to introduce the technical machinery added to a standard typing judgment in order to track (single-operator commutative) effects.  Like Marino and Millstein they use qualifiers as a primitive to introduce effects.  Because their goals were instructional rather than technical, the calculus is only used to gradually introduce effect systems before presenting a full typed region calculus~\cite{talpin1992polymorphic}.

Rytz et al.~\cite{rytz12} offer a collection of insights for building manageable effect systems, notably the relative effect polymorphism~\cite{rytz12a} mentioned earlier in Section \ref{sec:effpoly} (inspired by anchored exceptions~\cite{vanDooren2005}) and an approach for managing the simultaneous use of multiple effect systems with modest annotation burden.  The system was given abstractly, with respect to a lattice of effects.  Toro and Tanter later implemented this as as a polymorphic extension~\cite{toro2015customizable} to Schwerter et al.'s gradual effect systems~\cite{BanadosSchwerter2014gradual}.  Their implementation is again parameterized with respect to an effect lattice.

\subsection{Algebraic Approaches to Computation}
Our effect quantales are an example of an algebraic approach to modeling sequential computation, though as discussed in Section \ref{sec:kleene} these typically emphasize exact characterization of semantics, rather than establishing bounds on behavior, which leads to stronger axioms (particularly for iteration).
In addition to Kleene Algebras~\cite{KOZEN1994366}, this includes Kleene Algebras with Tests~\cite{kozen1997kleene} (\textsc{KAT}s), action logic~\cite{pratt1990action}, and others.
Each of these examples has (total) operations for joining or sequencing behaviors, with sequencing distributing over joins.  They differ in their approaches to the range of operations modeled, from the primitives of Kleene Algebras to the extension to include blocking tests in \textsc{KAT}, or the extension to including residual actions in action logic. Much of our discussion of Kleene Algebras applies to these systems as well.

\subsection{Concrete Effect Systems}
\label{sec:concrete}
We discussed several example sequential effect systems throughout.
We often appealed to a new flow-sensitive variant of Flanagan and Abadi's \emph{Types for Safe Locking}~\cite{safelocking99} (the precursor to \textsc{RCC/Java}~\cite{rccjava00}), and Flanagan and Qadeer's \emph{Types for Atomicity}~\cite{flanagan2003tldi} (again a precursor to a full Java version~\cite{flanagan2003atomicity}).  This atomicity work is one of the best-known examples of a sequential effect system.  Coupling the atomicity structures developed there with a sequential version of lockset tracking for unstructured locking primitives gives rise to interesting effect quantales, which can be separately specified and then combined to yield a complete effect system.
\citet{skalka2008types,Skalka2008}'s work on history effects is similarly instructive, and when contrasted against the atomicity and locking effect quantales, highlights the use of effects for reasoning about internal details of program behaviors, rather than only external summaries.

One interesting point of comparison between our work and the variety of concrete locking-related effect systems is how the combination of recursive lock acquisition and substitution of lock variables is handled.
Many effect systems that track lock ownership with effects do so with \emph{traditional single-operator commutative} effect systems~\cite{rccjava00,safelocking99,Abadi2006,objtyrace99,boyapati01,boyapati02} that treat \textsf{synchronized} blocks.  In those cases, the use of \textsf{synchronized} blocks pushes the counting for recursive acquisition into the runtime semantics rather than the type system (which cannot track multiplicities with only a single commutative effect operator).
For example, in \textsc{RCC/Java}~\cite{rccjava00} the dynamic semantics permit recursive acquisition and count recursive claims in the evaluation contexts. Acquiring a lock twice nests evaluation contexts that mention the same lock, and since all counting is done dynamically, effects need only track a \emph{set} of locks held.

Taking such sets and naively trying to use them in effects can lead to problems with substitution.
Consider a variant of our locking effect quantale using sets rather than multisets, and consider
the term $f=(\lambda l_1\ldotp\lambda l_2\ldotp \mathsf{acquire}\;l_1;\mathsf{acquire}\;l_2)$, which would have type $\Pi l_1:\mathsf{lock}\overset{I}{\rightarrow}\Pi l_2:\mathsf{lock}\overset{(\emptyset,\{l_1,l_2\})}{\rightarrow}\mathsf{unit}$ (using only locking, not atomicity).
Intuitively, applying this function to the same lock $x$ twice ($f\;x\;x$) would eventually substitute the same value for $l_1$ and $l_2$, yielding an expected overall effect of $(\emptyset,\{x\})$ after type/effect-level substitution --- the number of locks acquired shrank because the set would collapse, though the underlying term would try to acquire the same lock twice.  Moreover, after reducing the second application, the resulting term would no longer by type-correct, as $(\emptyset,\{x\})\rhd(\emptyset,\{x\})$ must be undefined when using non-reentrant locks!  This is because the substitution loses information (that a lock was acquired twice) that is not duplicated dynamically.
In terms of indexed effect quantales, this hypothetical broken example 
fails to be an indexed effect quantale because the mapping induced by variable substitution cannot map all valid effects to valid effects. 

We know of two ways to deal with this interaction between substitution and effects that have some notion of multiplicity.
Our locking effect quantale (Definition \ref{def:locking}) uses multisets to avoid losing information: if a function merges two locks, their multiplicities are summed, so the total number of lock acquisitions remains the same.
The alternative is to impose constraints on what substitutions are permitted.
\citet{suenaga2008type} does this: in his system the term $f$ above would be rejected if the two locks had the same level (as they must for caller passing the same lock for both to type check), because simultaneously held locks must have distinct levels (Definition \ref{def:dlf_suenaga}).  Giving the two formal parameters distinct levels would then make the application ill-typed.  Notice however, that Suenaga's system does not form an indexed effect quantale according to Definition \ref{def:indexed_eq}.  While it is naturally monotone, the substitution of $x$ for both locks is still semantically defined, but would not yield an effect quantale homomorphism between instantiations of $\mathcal{DL}(-)$ (since the substitution would map defined compositions to undefined compositions).  Instead Suenaga's system relies crucially on the type system only requiring functorial behavior for certain classes of functions \emph{constrained by the type environment} --- intuitively it relies on the fact that the type system only performs type-preserving substitutions, and that substituting $x$ for lock variables at two distinct levels would be ill-typed, so the type system does not allow corresponding substitutions, and the interaction of the pre-indexed effect quantale with such functions is irrelevant.
This is the only example we know of that requires a more refined, type-environment-dependent domain of sensible behaviors for working with substitution.  Identifying general principles here seems worthwhile future work, but seems likely to require additional examples to avoid over-specializing. Such generalization would require the definition of the appropriate variation on indexing to be defined mutually with the relevant restrictions on substitution imposed by the particular type system.
Another approach could be employing HM(X)-style constraints~\cite{odersky1999type} to ensure functions are only called with arguments that would not lose track of information; like an approach closer to Suenaga's system, this would require refining how indexed effect quantales handle their index sets.

Many other systems that are not typically presented as effect systems can be modeled as sequential
effect systems.  Notably this includes systems with flow-sensitive additional contexts (e.g., sets
of capabilities, or systems like Suenaga's) as alluded to in Section \ref{sec:bg}, or fragments of type information in systems
that as-presented perform strong updates on the local variable contexts (e.g., the state transitions
tracked by typestate~\cite{WolffGTA11,Garcia2014FTP}, though richer systems require dynamic reflection
of typestate checks into types~\cite{Sunshine2011}, which is a richer form of dependent effects than our framework currently tracks).  Other forms of behavioral type systems have at least a close correspondence to known effect systems, which are likely to be adaptable to our framework or an extension in the future. \citet{orchard2016effects} demonstrated that the long-recognized similarity between session types~\cite{Honda2008} and Nielson and Nielson's effect system for communication in CML~\cite{nielson1993cml} could be made precise.
Resource usage type systems~\cite{igarashi2002resource} have similar notions of join and sequencing to track resource usage, but also make significant use of substructural restrictions on type contexts which are key to enforcing intended resource constraints; it may be possible to restructure these uses into a single notion of effect as we did for Suenaga's system in Section \ref{sec:dlf}, or these may be better described as a graded coeffect~\cite{gaboardi2016combining}.

Many of the systems above, as described in Section \ref{sec:modeling}, have partial sequencing and/or joins of effects.  To the best of our knowledge, no prior work has dealt with the combination of effect polymorphism and partiality in effect operations.  Concurrently with this work, \citet{jones2020partial} described an extension of Haskell with partial type constructors, which is then integrated with Haskell's support for constraint contexts. They add a new type constraint that a given type constructor is defined for a specific input, and show the presence of this constraint can generally be inferred from usage once the core type constructors (e.g., function type formation) are annotated.  This is useful for cases like Haskell's \verb|UArray|, whose second argument must be a type the compiler knows how to unbox. This is not unlike the consideration given in Section \ref{sec:effpoly} to the possibility of handling the partiality of effect operators via constraints. 
We expect doing something similar for operators rather than constructors would introduce new challenges, but could be another promising approach to explore.

\subsection{Limitations and Future Work}
There remain a few important aspects of sequential effect systems that neither we, nor related work on semantic characterizations of sequential effects, have considered.
One important example is the presence of a masking construct~\cite{lucassen88,gifford86} that
locally suppresses some effect, such as try-catch blocks or \texttt{letregion} in region calculi.
Another is serious consideration of control effects, which are alluded to in Mycroft et al.'s work~\cite{mycroft16}, but otherwise have not been directly considered in the algebraic characterizations of sequential effects.
We have taken steps in this direction in a parallel line of work~\cite{ecoop20b} addressing tagged delimited continuations, but there remains further work to be done for supporting constructs like \texttt{finally} blocks or the fact that exiting a \texttt{synchronized} block releases the corresponding lock.

Our generic language carries some additional limitations.  
It lacks sub\emph{typing}, which enhances usability of the system, but should not present any new technical difficulties, especially since we do support effect subsumption.
It also lacks support for adding new evaluation contexts, which is important
for modeling constructs like \texttt{letregion}.
Allowing this would require more sophisticated machinery for composing partial semantic
definitions~\cite{birkedal2013intensional,Delaware2013MLC,Delaware2013MMM}.

Beyond the effect-flavored variation~\cite{lucassen88,talpin1992polymorphic} of parametric polymorphism and the polymorphism arising from singleton types as we consider here, the literature contains bounded~\cite{Grossman2002Cyclone} (or more generally, constraint-based) effect polymorphism, and unusual ``lightweight'' forms of effect polymorphism~\cite{rytz12,ecoop13} with no direct parallel in traditional approaches to polymorphism.  Extending  our approach for these seems sensible and feasible.  \citet{Skalka2008} includes forms of effect polymorphism specific to objects, with some superficial resemblance to dependent object types~\cite{amin2012dependent,rompf2016type}.

Finally, we have not considered concurrency and sequential effects, beyond noting the gap between joinoids' fork-join style operator and common concurrency constructs.  
As a result we have not directly proven that our multiset-of-locks effect quantale ensures data race freedom or atomicity for a true concurrent language.

\section{Conclusions}
We have given a new algebraic structure --- effect quantales --- for effects in sequential effect systems, and shown it sufficient to implement complete effect systems, unlike previous approaches that focused on a subset of real language features.  We used them to model classic examples from the sequential effect system literature, and gave a syntactic type safety proof for the first generic sequential effect system, including an extension to cover some semantic properties.  Moreover, we give the first investigation of the generic interaction between (singleton) dependent effects and algebraic models of sequential effects, discuss some subtleties in mixing effect polymorphism with partial effect operators, and give a way to derive an appropriate iteration operator on effects for many effect quantales (recovering manually-designed operators from prior concrete systems).  We have also discussed the relationship between Kleene Algebras and effect quantales in some depth, highlighting that Kleene Algebras can in fact be used as sequential effect systems.  We believe these results form an important basis for future work designing complete sequential effect systems, and for generic effect system implementation frameworks supporting sequential effects.

\section*{Acknowledgments}
We are deeply grateful to the anonymous TOPLAS reviewers for multiple rounds of remarkably careful, thorough, and constructive feedback on both presentation and technical developments in earlier drafts of this unusually long paper. This feedback has led to substantial improvements in the current version.

Portions of this work were funded by NSF Grant \#2007582.

\bibliographystyle{ACM-Reference-Format}

\appendix
\section{Relationship to Original Effect Quantales Publication}
\label{apdx:comparison}

The original effect quantales publication~\cite{ecoop17} used slightly different definitions for several concepts.  This appendix surveys the differences and their consequences.

Broadly speaking, the primary changes are:
\begin{itemize}
    \item This paper changes the definition of effect quantale to an equivalent form that handles partiality differently, replacing the simulated partiality of the prior paper (which used a distinguished element $\top$ to represent undefined results) with actual partial functions.  As part of this, the definitions of indexed effect quantales and effect quantale homomorphisms changed to use this new definition, which better highlights key ideas.
    \item This paper gives more general criteria for inducing an iteration operator, covering more concrete systems from the literature (which appear as additional examples in this version), and further strengthens the results for iteration by identifying two broad classes of effect quantales that always satisfy the criteria (finite effect quantales and those with non-empty meets above unit), and by showing that when defined the construction gives an optimally-precise iteration operator.
    \item This paper improves the discussion of soundness.  First, it more clearly articulates throughout that our soundness results are soundness for \emph{safety} properties, because we use syntactic type safety on an inductively-defined reduction relation.  Second, it improves upon the the original paper, which established only the
    syntactic consistency between dynamic reductions and static effects (Theorem \ref{thm:soundness}) that is typical of
    syntactic type-safety proofs for other sequential effect systems: assuming an instrumented semantics that labels every reduction with the primitive effect of that step, the static effect of an expression over-approximates the sequencing of all individual steps' effects in order (for finite executions).
    This paper also gives a way to interpret individual effects as relations on pre- and post-states of execution, extending the type safety proof to show this semantic interpretation is respected by the framework (Lemma \ref{lem:interp_safety}).
    \item This paper adds discussion of using Kleene Algebras as effect quantales.
    \item This paper drops an instantiation of the parameters for a combination of race freedom and atomicity, which was originally used to give a non-semantics preserving translation from Flanagan and Qadeer's CAT language~\cite{flanagan2003tldi} to our framework. Instead we give an instantiation for \citet{skalka2008types} that preserves semantics.  The original paper failed to preserve semantics for spawning new threads; this paper drops the embedding, and better explains the relationship between our locking instantiation of the framework and proofs for a typical shared-memory concurrent language.
\end{itemize}
Additionally, Section \ref{sec:modeling2} gives a different semantics for locking primitives than the original ECOOP paper, which incorrectly gave non-reentrant semantics (a boolean flag for each lock) which satisfied primitive preservation, but not primitive progress. This version of the work correctly gives re-entrant locking semantics (i.e., counting claims) satisfying both.
And one additional minor change is the clarification to the use of the term ``commutative effect system'' --- the original paper ignores the existence of effect systems with two distinct commutative operators (e.g., Section \ref{sec:must}).

We discuss the first two items, which are significant changes rather than additions or corrections, in more detail.
In addition, Appendix \ref{apdx:quantales} gives a more complete accounting of the relationship between quantales and effect quantales.

\subsection{Real vs.\ simulated partiality}
The original effect quantales paper~\cite{ecoop17} gave a \emph{total} formulation of effect quantales --- sequencing and joins were always defined --- using a distinguished top element that was nilpotent for $\rhd$.  This meant that if any subexpression of an effect calculation yielded $\top$, so would the overall calculation, just as an effect expression with undefined subexpressions is considered undefined.  This was essentially completing our partial definition to a total one, with $\top$ acting as a sentinel element to indicate undefined results.
Switching between the two definitions does not affect the expressive power of effect quantales as long as programs whose effect simplifies to $\top$ are rejected (as in the original work).
The original paper stated an implicit side condition on each type rule that the effect was not equivalent to $\top$, which doubled for both this purpose as well as for the issues missing effect variables with partial effect operators now discussed in Section \ref{sec:soundness} (the previous paper did not highlight the dual purpose of this check).

But there are several advantages to dropping the ``synthetic partiality'' of the original definition in favor of using partial operators.

From an expository standpoint, it was problematic that many effect systems (notably most commutative effect systems) already have a greatest element, and adding a new synthetic element makes distinguishing the two mildly confusing (see the discussion of the \citet{flanagan2003tldi} atomicity example in the original paper, which distinguished $\top_{FQ}$ as the original top element and $\top$ as the synthetic error element).

From a technical standpoint, using true partiality leads to some simplifications.  The original paper used a different notion of effect quantale homomorphism, which was simultaneously too strict in some ways and too lax in others.  On the side of being overly strict, the definition of the original paper required homomorphisms to preserve top (the distinguished error element), meaning homomorphisms could not map one effect quantale into another that permitted more sequencing and join inputs to have defined outputs (i.e., it prohibited morphisms into upwards extensions of the join semilattice structure).  It also required the homomorphism to be a monoid homomorphism and semilattice homomorphism, strictly preserving $\rhd$ and $\sqcup$ results, rather than allowing the result in the codomain to be less than the original. These did not pose problems for the uses intended for this paper (and the original), but would make the original definition less useful for comparing the precision of different effect systems.  

The original definition was too lax in that it permitted mapping some non-$\top$ (non-error) elements of the domain to $\top$ in the codomain, introducing \emph{more} errors.  Untreated, this would cause problems for soundness (for example, value substitution into an indexed effect quantale using such homomorphism could lead to effects becoming undefined, breaking subject reduction).
The original paper introduced an additional class of indexed effect quantales called \emph{collapsible} indexed effect quantales, where the homomorphism assigned to a function always reflected $\top$ --- i.e., if a sequencing or join of two elements was undefined ($\top$) in the codomain quantale, it must also be undefined ($\top$) in the domain.  This is implied by the new definition of effect quantale homomorphism given as Definition \ref{def:morphism}, making separate identification of collapsible indexed effect quantales unnecessary here.

The new Definition \ref{def:morphism} also makes it clearer that an effect quantale homomorphism is essentially an embedding of one effect quantale into another (possibly more nuanced) effect quantale, both by incorporating the equivalent of collapsibility into the main definition, and by permitting the embedding of sequencing and joins to be more precise (less) than the embedding of the original result.

\subsection{Principally vs.\ Distributively Iterable Effect Quantales}
\label{sec:lax_dist}
The earlier version of this work used a different approach to constructing an iteration operator.   Like Section \ref{sec:iteration} it used closure operators (on the total version of effect quantales with synthetic top), but beyond the change in partiality it differed from the approach in Section \ref{sec:iteration} in three key respects:
\begin{itemize}
    \item It defined iteration in terms of $\mathsf{Idem}(Q)$ rather than $\mathsf{SubIdem}(Q)$.
    \item It required (what it called) \emph{iterable} (here called \emph{distributively iterable}, defined below) effect quantales' strictly idempotent elements to be closed under joins --- that the join of any two strictly idempotent elements was itself strictly idempotent. The construction of a useful iteration operator was then only claimed for these slightly more restrictive effect quantales (the construction itself was the same closure operator construction on this more restricted closure subset).
    \item It guaranteed a strong distributivity property of the (same) closure operator constructed above, that joins distribute strictly over iteration --- $\forall a,b\ldotp a^*\sqcup b^*=(a\sqcup b)^*$ --- which held because of the extra restrictions on the closure operator.
\end{itemize}
Our generalization to subidempotents was mainly to support the proofs of Propositions \ref{prop:finite_iterable} and \ref{prop:complete_iterable}, which are slightly easier to prove with the more open-ended description, though it turns out the strictly idempotent and subidempotent elements \emph{above unit} coincide (Lemma \ref{lem:iter_strict_idem}).

The second and third distinctions are more obviously meaningful, and the reason we defined \emph{principally iterable} effect quantales above.
They are more general than the \emph{iterable} effect quantales originally proposed~\cite{ecoop17}, which were too strict for \emph{behavioral}~\cite{PGL-031} effect systems like history effects~\cite{Skalka2008} or trace sets~\cite{Koskinen14LTR} that expose internal behaviors of code, rather than (informally) summaries.
Note that the prior paper defined an iterable effect quantale as one satisfying that paper's requirements for inducing an iteration operator, whereas in this paper an iterable effect quantale is simply an effect quantale with a choice of iteration operator (which always exists).
In comparison, \emph{principally} iterable effect quantales satisfy a slight relaxation of the distributivity property given above: $a^*\sqcup b^*\sqsubseteq (a\sqcup b)^*$. 
The proof of this weaker property is straightforward using monotonicity of ${-}^*$ and idempotence of joins:  $a^*\sqcup b^*\sqsubseteq (a\sqcup b)^*\sqcup(a\sqcup b)^* = (a\sqcup b)^*$.
We now call what prior work referred to as simply iterable as \emph{distributively iterable effect quantales}: principally iterable effect quantales where additionally the subidempotent elements are closed under joins.  In those effect quantales the weaker law above becomes the stronger distributive equality:
\[\begin{array}{rcl}
a^*\sqcup b^* &=& \mathsf{min}(a\uparrow\cap(I\uparrow\cap\mathsf{SubIdem}(Q))) \sqcup \mathsf{min}(b\uparrow\cap(I\uparrow\cap\mathsf{SubIdem}(Q)))\\
&=& \mathsf{min}((a\sqcup b)\uparrow\cap(I\uparrow\cap\mathsf{SubIdem}(Q)))\\
&=& (a\sqcup b)^*
\end{array}
\]

Our earlier work by chance only formally considered distributively iterable examples like the atomicity and locking quantales, as well as the product construction --- which preserves distributive iterability (as well as principal iterability).  As a result, we conjectured there that all meaningful effect quantales were (in this paper's terminology) distributively iterable.  Here we refine our conjecture: we believe all meaningful effect quantales are principally iterable, a weaker condition.
As in the initial version of this work, we cannot make the claim precise. But unlike the earlier version, we can claim a level of generality via Propositions \ref{prop:finite_iterable} and \ref{prop:complete_iterable}.

A key motivating example for switching to principally iterable effect quantales is given by the finite trace sets (Section \ref{sec:finite_traces}).  In that case, for events (not effects) $a$ and $b$, $\{a\}^*=\bigcup_{i\in\mathbb{N}} {a^i}$ (assuming $a^0=\epsilon$) and similarly for $b$.  The join of these sets contains strings with any finite number of $a$s or any finite number of $b$s, but no mixed strings.  In contrast, $(\{a\}\sqcup\{b\})^*=\{a,b\}^*$ contains all finite strings composed of $a$ and $b$, including strings containing both.
These sets are not equivalent, though the former is a subset of the latter.
Informally, \emph{behavioral} effects that expose internal behaviors of computations tend to be only principally iterable, while \emph{summarizing} effects that essentially only give a summary of externally-visible behavior (e.g., locking, atomicity) tend to be distributively iterable.  However this remains an informal claim --- a proof would require a formal distinction between these types of systems.

\subsection{Quantales vs.\ Effect Quantales}
\label{apdx:quantales}
Quantales~\cite{mulvey1986,mulvey1992quantisation} are an algebraic structure originally proposed to generalize some concepts in topology to the non-commutative case.  They later found use in models for non-commutative linear logic~\cite{yetter1990quantales} and reasoning about observations of computational processes~\cite{abramsky1993quantales}, among other uses. \citet{abramsky1993quantales} give a thorough historical account.  They are almost exactly the structure we require to model a sequential effect system, but just slightly too strong.  Here we give the original definition, which was relaxed to the definition of effect quantales (Definition \ref{def:eq}).

\begin{definition}[Quantale~\cite{mulvey1986,mulvey1992quantisation}]
A \emph{quantale} $Q=(E,\wedge,\vee,\cdot)$ is a complete lattice $(E,\wedge,\vee)$ and an associative product $\cdot$ that distributes on both sides over arbitrary (including infinite) joins:\\
\centerline{
$a\cdot \left(\bigvee b_i\right) = \bigvee (a\cdot b_i)$
and
$\left(\bigvee b_i\right)\cdot a = \bigvee (b_i\cdot a)$
}
Additionally, a quantale is called \emph{unital} if it includes an element $I$ that acts as left and right unit for the product --- $I\cdot a=a=a\cdot I$, or in other words $(E,\cdot,I)$ is a monoid.
\end{definition}
Because of the similarity to rings, the join is often referred to as the additive element, while the semigroup or monoid operation is typically referred to as the multiplicative operation.  Because the lattice is complete, it is bounded, and therefore contains both a greatest and least element.

A unital quantale is close to what we require, but just slightly too strong.  
In particular, every quantale has a least element (the join of the empty set).
The \emph{complete} lattice structure combined with the associative multiplication distributing over the joins makes all quantales residuated lattices.
In any residuated lattice with a bottom element, the bottom element is always nilpotent for multiplication~\cite[\textsection 2.2]{galatos2007residuated} --- $\bot\cdot x = \bot = x\cdot\bot$, for all $x$.
This conflicts with the use of the bottom element (if present) as the unit for composition in traditional commutative effect systems, 
and there are sequential effect systems in the literature with no natural bottom element (such as the lockset example of Section \ref{sec:locking}).
So both for natural treatment of sequential effects, and because we would additionally like to subsume traditional commutative effect systems, we require a slightly more general structure.

Our framework does not require a bottom element, nor a meet operation, though those could be useful in some contexts (e.g., for type inference in the presence of subtyping~\cite{oopsla16}).  
The need to join over empty or infinite sets is also not required by any effect system we know of, making structures requiring complete lattices (e.g., quantales) too strong.  Thus we replace the complete lattice of a standard quantale with a partial join semilattice (binary joins only), in addition to requiring the unit to exist.
\end{document}